\theoremstyle{plain}
\newtheorem{theorem}{Theorem} 
\newtheorem{lemma}{Lemma} 
\newtheorem{proposition}{Proposition} 
\newtheorem{definition}{Definition} 
\newtheorem{example}{Example} 
\newtheorem{remark}{Remark} 
\crefname{property}{property}{properties}
\crefname{lstlisting}{program}{programs}
\crefname{axiom}{}{}
\newcommand\gref[1]{\nameref{#1} \nameCref{#1}}
\renewcommand\cref\Cref
\newcommand\phantomrel[1]{\mathrel{\phantom{#1}}} 
\newcommand\reducesto\to
\newcommand\N{\mathbb{N}}
\newcommand\num[1]{\boldsymbol{#1}} 
\newcommand\tarrow\to 
\newcommand\seq\vdash 
\newcommand\monSeq\Vdash 
\newcommand\mon\mathfrak 
\newcommand\ifNotEmpty[2]{\ifthenelse{\equal{#1}{}}{}{#2}}
\newcommand\optionalSubscript[1]{\ifthenelse{\equal{#1}{}}{}{_{#1}}}
\newcommand\optionalSuperscript[1]{\ifthenelse{\equal{#1}{}}{}{^{#1}}}
\newcommand\MonadStyle\mathfrak
\newcommand\id[1]{\mathrm{id}_{#1}} 
\newcommand\Unit{\mathrm{Unit}} 
\newcommand\Bool{\mathrm{Bool}} 
\newcommand\Nat{\mathrm{Nat}} 
\newcommand\State{\mathrm{State}} 
\newcommand\Ex{\mathrm{Ex}} 
\newcommand\abstr[2]{\lambda #1^{#2}.}
\newcommand\abstrS[1]{\abstr{#1}\State}
\DeclareMathOperator\unit\ast
\newcommand\newtermconstant[3]{\newcommand{#1}[1][]{#2
\ifthenelse{\equal{##1}{}}{}{^{##1}}%
\ifthenelse{\equal{#3}{}}{}{_\textrm{#3}}}%
}
\newcommand\newtermconstantN[2]{\newcommand{#1}[2][]{#2
\ifthenelse{\equal{##1}{}}{}{^{##1}}%
\ifthenelse{\equal{##2}{}}{}{_{##2}}}%
}
\newcommand\termname[1]{\operatorname{\textsf{\small#1}}}
\newtermconstant\pair{\termname{pair}}{}
\newtermconstant\prr{\termname{pr}}{R}
\newtermconstant\prl{\termname{pr}}{L}
\newtermconstant\inr{\termname{in}}{R}
\newtermconstant\inl{\termname{in}}{L}
\newtermconstant\case{\termname{case}}{}
\newtermconstantN\totRec{\termname{crec}}
\newtermconstantN\monStarN{\operatorname{\mon{star}}}
\newtermconstantN\monRaiseN{\operatorname{\mon{raise}}}
\newtermconstant\monBind{\operatorname{\mon{bind}}}{}
\newcommand\exmerge{\termname{merge}}
\newcommand\eval{\termname{eval}}
\newcommand\query{\termname{query}}
\newcommand\throw{\termname{throw}}
\newcommand\catch{\termname{catch}}
\newcommand\assert{\termname{assert}}
\newcommand\handle{\termname{handle}}
\newtermconstant\reg{\termname{reg}}{}
\newtermconstant\exc{\termname{ex}}{}
\newtermconstant\ite{\termname{ite}}{}
\newtermconstant\true{\termname{true}}{}
\newtermconstant\false{\termname{false}}{}
\newtermconstant\dummy{\termname{dummy}}{}
\newtermconstant\zero{\termname{zero}}{}
\let\succ\undefined
\newtermconstant\succ{\termname{succ}}{}
\newcommand\interp[1]{\llbracket #1 \rrbracket}
\newcommand\newmarkedenvironment[2]{%
  \newenvironment{#1}{\noindent\textcolor{red}{*** BEGIN: {#2} ***}\\}{\textcolor{red}{\\ *** END: {#2} ***}}
}
  \newcommand\fixme[1]{}
  \newcommand\note[1]{}
  \newcommand\fixme[1]{\textcolor{red}{ (X)}\marginpar{\textcolor{red}{#1}}}
  \newcommand\note[1]{\textcolor{green}{ (X)}\marginpar{\textcolor{green}{#1}}}
\providecommand\optionalSubscript[1]{\ifthenelse{\equal{#1}{}}{}{_{#1}}}
\providecommand\optionalSuperscript[1]{\ifthenelse{\equal{#1}{}}{}{^{#1}}}
\newcommand\fa{A}
\newcommand\fb{B}
\newcommand\fc{C}
\newcommand\ltrue\top
\newcommand\lfalse\bot
\newcommand\limply\to 
\newcommand\subst[2]{[ #1 \coloneqq #2 ]}
\newcommand\quant[2]{#1 #2.\ }
\newcommand\qlambda[1]{\quant\lambda{#1}}
\newcommand\qforall{\quant\forall}
\newcommand\qexists{\quant\exists}
\tikzstyle{label} = [fill=white,inner sep=1pt]
\newcommand\IfNotEmpty[2]{\ifthenelse{\equal{#1}{}}{}{#2}}
\newcommand\PrDer[3]{
\begin{tikzpicture}[baseline=0.3cm,scale=1]
  \IfNotEmpty{#3}{ 
    \draw decorate [decoration={snake,amplitude=2pt,segment length=14pt}] {(0,0.2) --  node[above,label] {\(#3\)} (0,1.25)};
  }
  \draw (-0.5,1) -- (-0.1,0) -- (0.1,0) -- (0.5,1);
  \IfNotEmpty{#1}{ 
    \node[above,label] () at (0,0.2) {\(#1\)}; 
  } 
    
\end{tikzpicture}
}
\newcommand\PrInfBr[2]{\def\extraVskip{0pt}\noLine\UnaryInfC{\PrDer{#1}{}{#2}}\noLine\def\extraVskip{2pt}}
\newcommand\PrAss[3][]{\AxiomC{\([#2\optionalSuperscript{#1}]^{#3}\)}}
\newcommand\PrAx[2][]{\AxiomC{\(#2\optionalSuperscript{#1}\)}}
\newcommand\PrUn[2][]{\UnaryInfC{\(#2\optionalSuperscript{#1}\)}}
\newcommand\PrBin[2][]{\BinaryInfC{\(#2\optionalSuperscript{#1}\)}}
\newcommand\PrTri[2][]{\TrinaryInfC{\(#2\optionalSuperscript{#1}\)}}
\newcommand\PrLbl[2][]{\LeftLabel{\(#2\)}\ifthenelse{\equal{#1}{}}{}{\RightLabel{\(#1\)}}}
\newcommand\PrInf[1][]{\ifthenelse{\equal{#1}{}}{%
\def\extraVskip{-2pt}\noLine\UnaryInfC\vdots\noLine\def\extraVskip{2pt}}{%
\noLine\UnaryInfC{\(#1\)}}}
\newcommand\RuleName[3][]{#2\mathrm{#3}\IfNotEmpty{#1}{_\mathrm{#1}}}
\newcommand\RuleNameI[2][]{\RuleName[#1]{#2}{I}}
\newcommand\RuleNameE[2][]{\RuleName[#1]{#2}{E}}
\newcommand\PrImplyI[2][]{%
  \PrLbl[#1]{\RuleName\limply{I}}%
  \PrUn{#2}%
}
\newcommand\PrImplyE[2][]{%
  \PrLbl[#1]{\RuleName\limply{E}}%
  \PrBin{#2}%
}
\newcommand\PrExistsE[2][]{%
  \PrLbl[#1]{\RuleName\exists{E}}%
  \PrBin{#2}%
}
\newcommand\PrForallI[2][]{%
  \PrLbl[#1]{\RuleName\forall{I}}%
  \PrUn{#2}%
}
\newcommand\PrForallE[2][]{%
  \PrLbl[#1]{\RuleName\forall{E}}%
  \PrUn{#2}%
}
\newenvironment{rulelisting}[1][\quad]{
\renewcommand\newline{\\[8mm] \hline}
\newcommand\nextrule[1][]{& \ifthenelse{\equal{##1}{}}{#1}{##1} &}
\newcommand\wholeline[1]{\multicolumn{3}{|c|}{##1}}
\newcommand\header[1]{\wholeline{##1} \\ }
\renewcommand\arraystretch{1.5}
\begin{displaymath}
\begin{array}{|ccc|}
\hline
}{
\newline
\end{array}
\end{displaymath}
}
\title{Interactive Realizability, Monads and Witness Extraction (Draft)}
\author{Giovanni Birolo}
\newenvironment{legend}{\small \it}{}
\newcommand\HA{\mathsf{HA}} 
\newcommand\PA{\mathsf{PA}} 
\newcommand\IL{\mathsf{IL}} 
\newcommand\CL{\mathsf{CL}} 
\newcommand\EMG{\mathsf{EM}} 
\newcommand\EM{\EMG_1} 
\newcommand\PRA{\mathsf{PRA}} 
\begin{document}

\coverpage
\titlepage

\section*{Abstract}
In this dissertation we collect some results about ``interactive realizability'',  a realizability semantics that extends the Brouwer-Heyting-Kolmogorov interpretation to (sub-)classical logic, more precisely to first-order intuitionistic arithmetic (Heyting Arithmetic, \(\HA\)) extended by the law of the excluded middle restricted to \(\Sigma^0_1\) formulas (\(\EM\)), a system motivated by its interest in proof mining. 
These results are three interconnected works, listed below. 
\begin{itemize}
  \item 
    We describe the interactive interpretation of a classical proof involving real numbers. 
    The statement we prove is a simple but non-trivial fact about points in the real plane. 
    The proof employs \(\EM\) to deduce properties of the ordering on the real numbers, which is undecidable and thus problematic from a constructive point of view. 
  \item
    We present a new set of reductions for derivations in natural deduction that can extract witnesses from closed derivations of simply existential formulas in \(\HA+\EM\).  
    The reduction we present are inspired by the informal idea of learning by making falsifiable hypothesis and checking them, and by the interactive realizability interpretation. We extract the witnesses directly from derivations in \(\HA+\EM\) by reduction, without encoding derivations by a realizability interpretation. 
  \item
    We give a new presentation of interactive realizability with a more explicit syntax. 
    We express interactive realizers by means of an abstract framework that applies the monadic approach used in functional programming to modified realizability, in order to obtain less strict notions of realizability that are suitable to classical logic. 
    In particular we use a combination of the state and exception monads in order to capture the learning-from-mistakes nature of interactive realizers. 
\end{itemize}

\frontmatter
\tableofcontents

\chapter{Preface}

\section{Proofs and Computations}

From the beginning intuitionistic logic has been linked to the idea of computation. 
In hindsight, this is already implicit in the Brouwer-Heyting-Kolmogorov (BHK) interpretation, which is presented in terms of proofs, constructions and transformations thereof (or problems in the case of Kolmogorov). 

The connection becomes more evident with the introduction of recursive realizability by Kleene in \cite{kleene45} and, later, modified realizability by Kreisel in \cite{kreisel59}. 
Realizability semantics can be thought of as formalizations of the BHK interpretation, where the vague notions of proof, construction and transformation are replaced with the notions of computable functionals. 



The full explicitation of this connection is the Curry-Howard correspondence, \cite{howard80}, where the whole proof is seen as a program and the conclusion as the type or the specification of the program. 
While interpreting an intuitionistic proof as a computation is quite natural (in hindsight), this is not the case for classical proofs. 

A computational interpretation of a classical proof can be obtained by first translating a classical proof into an intuitionstic one by means of double-negation translation. 
This approach was used by G\"odel to prove relative consistency results for classical and intuitionistic arithmetic.  
However, the double-negation translation transforms informative statements into non-informative ones, so the computations we can extract in this way yield trivial results. 
Moreover, this approach is indirect, while proofs and computations are almost undistinguishable in intuitionstic logic. 

We quote from \cite{sorensenU06}: 
\begin{quote}
Until around 1990 there was a widespread consensus to the effect that ``there is no Curry-Howard isomorphism for classical logic.'' However, at that time Tim Griffin made a path-breaking discovery which have convinced most critics that classical logics have something to offer the Curry-Howard isomorphism.
\end{quote}

In \cite{griffin90}, Griffin extends the Curry-Howard correspondence to classical proofs, employing functional programs with first-class continuations. 
In Griffin's own words:
\begin{quote}
  The programming language Scheme contains the control construct {\sf call/cc} that allows access to the current continuation (the current control context). This, in effect, provides Scheme with first-class labels and jumps. We show that the well-known formulae-as-types correspondence, which relates a constructive proof of a formula \(\alpha\) to a program of type \(\alpha\), can be extended to a typed Idealized Scheme. What is surprising about this correspondence is that it relates classical proofs to typed programs. 
\end{quote}

After Griffin's discovery, other interpretations extending the Curry-Howard correspondence to classical logic have been put forward. 
In \cite{parigot92}, Parigot introduces the \(\lambda\mu\)-calculus, an extension of lambda calculus with an additional kind of variables for subterms. 

In \cite{krivine94}, Krivine devised a new notion of realizability for classical logic called ``classical realizability''. 
In classical realizability realizers are written in an untyped lambda calculus with save/restore operators for the execution context and they are interpreted by an abstract machine that allows the manipulation of execution contexts, represented as ``stacks'' of arguments. 

Interactive realizability is a more recent proof interpretation for classical logic and the main focus of this dissertation. 

\section{Interactive Realizability}

Introduced by Berardi and de'Liguoro in \cite{berardidL08,berardidL09}, 
interactive realizability is a technique for understanding and extracting the computational content in the case of the sub-classical logic \(\HA+\EM\) (Heyting Arithmetic extended by the law of the excluded middle restricted to \(\Sigma^0_1\) formulas).

The main inspiration sources for interactive realizability are Coquand's game theoretic semantics for classical arithmetic and Gold's idea of learning in the limit. 

Gold original interest is language learnability, for instance a child learning the grammar of a language by repeated exposure to correct sentences. 
We expect that the child will eventually learn the language and stop making mistakes when speaking. 
The interesting point is that we do not know how many sentences he needs to complete the learning, 
In \cite{gold65}, he defines what it means to learn the answer of some question from an unlimited amount of evidence and in a finite time as follows:
\begin{quote}
  The purpose of this paper is to discuss the classes of problems that can be solved by infinitely long decision procedures in the following sense: An algorithm is given which, for any problem of the class, generates an infinitely long sequence of guesses. The problem will be said to be solved in the limit if, after some finite point in the sequence, all the guesses are correct and the same (in case there is more than one correct answer). 
\end{quote}

\newcommand\elo{\(\exists\)loise}
\newcommand\abe{\(\forall\)belard}
In \cite{coquand95}, Coquand presents a novel game theoretic semantics.
As customary in game semantics, each formula defines a game for two players: \elo, trying to show that the formula is true and \abe, trying to show that it is not. 
A formula is then validated by the existence of a winning strategy for \elo. 

Coquand takes the game for intuitionistic logic and extends it to classical logic by allowing \elo\ to retract her moves: 
instead of answering to the last move made by \abe, she can change her mind on her previous moves and go back to any past position. 
Thus a new game with asymmetric backtracking is defined, where \elo\ holds the advantage and the existence of a backtracking strategy validates classical logic. 

In \cite{berardidL08,berardidL09}, Berardi and de'Liguoro 
recast Coquand's idea of backtracking strategy as a strategy for learning the truth of classical statements in the limit in Gold's sense. 
Moreover, they present their proof interpretation 
as a realizability rather than game theoretic semantics
and write backtracking strategies as learning algorithms in a simply typed \(\lambda\)-calculus with primitive recursion. 


The aim of interactive realizability as a proof interpretation for classical logic is to express the computational content of a suitable subset of classical proofs in an \emph{understandable} form. 
This is the motivation for its peculiar features, which we summarize in the following. 
\begin{itemize}
  \item
The interactive interpretation is ``faithful'' to the classical proof, meaning that the computation follows closely the original proof. 
This is possible since interactive realizability interprets the proof directly, without resorting to proof translations. 
We also avoid adding computations that are not explicitly present in the proof, for instance blind searches to realize existential statements. 

\item
A common feature of computational interpretation of classical logic is that they extract programs that manipulate the execution context, that is, they need continuations.
However, the use of continuations can make a program hard to follow. 
Interactive realizability uses the idea of learning to explain the manipulations of the execution contexts that are needed to backtrack. In particular this is accomplished by means of a knowledge state, that is increased during the learning process and that act as a guide in the exectution of the interactive interpretation. 

\item
Interactive realizability is compositional, meaning that the interactive interpretations of different parts of a single proofs can be given independently and then composed to obtain the interactive interpretation of the whole proof. 

\item
  In this dissertation we only consider proofs where the law of the excluded middle is restricted to \(\Sigma^0_1\) formulas. 
  In this case interactive realizers use simpler constructs like states and exceptions instead of continuations in order to handle the backtracking nature of the computational content of classical proofs. 
\end{itemize}

\begin{omitted}
\section{A Philosophical Detour on the Relation Between Classical and Intuitionistic Logic}


  The relative consistency of classical and intuitionistic logic show that they reason about the same world. 
  They differ because they adopt different but consistent viewpoints, as illustrated by the following example. 

  \begin{quote}
  You leave your home in the morning and travel quite far. 
  At midday you are assailed by doubts: did you closed your door before leaving?
  You do not doubt that the door is either open or closed, 
  but you find little reassurance in this. 
  You want something more to decide on your course of action, you want to know which of these is the case.
  So you need to find some way of learning this knowledge, perhaps by trying to remember or going back to check the door. 
  Discovering this knowledge could be easy: you could phone your neighbor, whom you trust, and have him check.
  On the other hand it could be impossible to know: after you left an earthquake could have destroyed you home, erasing any evidence.  
  \end{quote}

  This example show both classical and intuitionistic reasoning in a non mathematical setting. 
  It highlights the difference between the classical and intuitionistic view of disjunction and in particular of the law of the excluded middle. 
  It shows the weakness of classical logic: the fact that you left your door either open or closed is not really helpful.
  In many situations you need more information than that. 
  But it also shows the strength of classical logic: you can wonder about whether you left your door open or closed, but would you even consider the fact that it could be neither open nor closed?

  In the last part we try to sneak in the idea that 
  learning could be the bridge from intuitionistic to classical reasoning. 

  Kripke semantics

  Plato
  Schrodinger cat

  Of course the real difference between classical and intuitionistic logic becomes only apparent when working with infinite or infinitely many objects. \fixme{elaborate}


  Incompleteness theorem guarantees that there are excluded middle instance that are not valid intuitionistically. \fixme{is this true?}

Historically, constructivism and particularly intuitionism have been regarded with suspicion by the mathematical community. 
The following quotation by Hilbert 
\begin{quote}
Taking the principle of excluded middle from the mathematician would be the same, say, as proscribing the telescope to the astronomer or to the boxer the use of his fists. To prohibit existence statements and the principle of excluded middle is tantamount to relinquishing the science of mathematics altogether.
\end{quote}
The Brouwer-Hilbert controversy is the origin of the idea that

The debate became fixed about whether 
mathematics proved by classical means was correct


  \section{History}
  Glivenko proved in 1929 that a propositional formula \(\fa\) is classically provable if and only \(\lnot\lnot\fa\) is intuitionistically provable \cite{troelstra11}.
  This result was generalized by G\"odel in 1933: he showed that any formula is provable in \(\PA\) if and only if its double-negation translation, obtained by adding a \(\lnot\lnot\) in front of each subformula, is provable in \(\HA\), 
  thus proving the relative consistency of classical and intuitionistic arithmetic.
  A further improvement is the fact that \(\PA\) is a conservative extension of \(\HA\) for \(\Pi^0_2\) formulas, that is, an arithmetic function is provably total in \(\PA\) if and only if it is provably total in \(\HA\). \fixme{who proved this? friedman in 78, but he was not the first}

  Classical and intuitionistic logic differ because they answer different questions. 
  They see same world, that is, they never contradict each other:
  \begin{itemize}
    \item if \(\CL \vdash \fa\) then \(\IL \not\vdash \lnot\fa\),
    \item if \(\CL \vdash \lnot\fa\) then \(\IL \not\vdash \fa\),
  \end{itemize}
  For instance, the law of the excluded middle \(\fa \lor \lnot \fa\) is never false in \(\IL\) and for some \(\fa\) it is provably true.

  Classical logic is concerned with truth, while intuitionistic logic is concerned with information: it stands to reason that the bridge between them should be learning. 


\end{omitted}
\begin{wip}
  \section{History and Structure of this Thesis}

  My first interest was understanding inter
  \Cref{chap:monadic_framework} contains the first 
  \Cref{chap:geometric_example}
  \Cref{chap:em_reduction}

\end{wip}

\mainmatter
\chapter{Preliminaries}

In this chapter we introduce the notation and the tools we shall in the rest of the thesis. 
\begin{changed}
  Our analysis is mainly concerned with proofs in first-order arithmetic, both intuitionistic (\(\HA\)) and classical (\(\PA\)).
  We also introduce a simply typed lambda calculus, which we use to give realizability interpretations of proofs. 
\end{changed}

\section{Constructive Arithmetic in Natural Deduction}
\label{sec:prelim_logic}

In this section, we introduce Heyting Arithmetic and the axiom for the law of the excluded middle, which will be the logical setting of the whole dissertation. 
We briefly describe the language of first-order logic, the rules of minimal logic in natural deduction, the axioms and rules of arithmetic and the restricted excluded middle axiom schemes. 

\newcommand\na{n}
\newcommand\LFunc{\mathcal{F}}
\newcommand\LRel{\mathcal{R}}
\newcommand\arity[1]{^{(#1)}}
\newcommand\lfa{f}
\newcommand\lfb{g}
\newcommand\lfc{h}
\newcommand\lra{p}
\newcommand\lrb{q}
\newcommand\lrc{r}
\newcommand\LVar{\mathcal{V}}
\newcommand\lva{x}
\newcommand\lvb{y}
\newcommand\lvc{z}
\newcommand\lta{t} 
\newcommand\ltb{u} 
\newcommand\ltc{v} 
\newcommand\lafa{P} 
\newcommand\lafb{Q} 
\newcommand\lafc{R} 
\newcommand\riA\alpha
\newcommand\riB\beta

\newcommand\lzero{\num 0}
\newcommand\nproj[2]{P^{#1}_{#2}} 
\newcommand\comp{comp} 
\newcommand\prRec{rec} 

\subsection{Primitive Recursive Functions and Predicates}

\begin{changed}
  In the language of arithmetic we include symbols for all the primitive recursive functions and predicates in arithmetic. 
  We briefly recall their definition. 

  We only consider arithmetical functions, that is, functions from the natural numbers to the natural numbers, which we denote with \(\N\). 
  These functions take \(n\) arguments for some natural number \(n\) and are called \(n\)-ary.
  We use the metavariables \( \lfa\arity{n}, \lfb\arity{n}, \lfc\arity{n} \) for \(n\)-ary functions.

  Primitive recursive functions are defined by induction. 
  The basic primitive recursive functions are the following:
  \begin{description}
    \item[the constant function] the 0-ary constant function \(\num 0\),
    \item[the successor function] the 1-ary successor function \(\succ\), which returns the successor of its argument,
    \item[the projection functions] for every \(n \ge 1\) and each \(i\) with \(1 \le i \le n\), the \(n\)-ary projection function \(\nproj{n}{i}\), which returns its \(i\)-th argument: 
      \[ 
        \nproj{n}{i} (\lva_1, \dotsc, \lva_n) = \lva_i. 
      \]
  \end{description} 
  More complex primitive recursive functions can be obtained by combining basic or previously defined primitive recursive function. 
  Given the primitive recursive functions \( \lfb\arity{n} \), \( \lfc_1\arity{m},\dotsc,\lfc_k\arity{m} \) and \( \lfc\arity{k+2} \), we can define new primitive recursive functions in two ways: 
  \begin{description}
    \item[composition] the composition of \(\lfb\) with \( \lfc_1,\dotsc,\lfc_n \), i.e.\ the \(m\)-ary function: 
      \[
        \comp(\lfb, \lfc_1, \dotsc, \lfc_n) = \lfb(\lfc_1(\lva_1,\ldots,\lva_m),\ldots,\lfc_n(\lva_1,\ldots,\lva_m)),
      \] 
      is primitive recursive; 
    \item[primitive recursion] the \((k+1)\)-ary function \(\prRec\lfb\lfc\) is defined as the primitive recursion of \(\lfb\) and \(\lfc\), i.e.\ the function: 
      \begin{align*}
        \prRec\lfb\lfc (0, \lva_1, \ldots, \lva_k) &= \lfb (\lva_1, \ldots, \lva_k), \\ 
        \prRec\lfb\lfc (\succ(\lvb), \lva_1, \ldots, \lva_k) &= \lfc (\lvb, \prRec (\lvb, \lva_1, \ldots, \lva_k), \lva_1, \ldots, \lva_k), 
      \end{align*}
      is primitive recursive. 
  \end{description} 

  Now we can define primitive recursive predicates by saying that they are the predicates whose characteristic function is a primitive recursive function. 
  More precisely an \(n\)-ary predicate \(\lra\) is primitive recursive if and only if there is a primitive recursive \(\lfa\arity{n}\) such that:
  \[ \lra(\lva_1, \dotsc, \lva_n) \text{ if and only if } \lfa(\lva_1, \dotsc, \lva_n) = 1, \]
  for any \(\lva_1, \dotsc, \lva_n \in \N\). 
\end{changed}

\subsection{The Language of Arithmetic}

In this subsection we define the language of first-order arithmetic. 

Let \(\{\LFunc_n\}_{n \in \N}\) and \(\{\LRel_n\}_{n \in \N}\) be two indexed sets of non-logical symbols. 
We assume that \(\LFunc_n\) contains symbols for all and only the primitive recursive (total) functions of arity \(n\), that is, functions in \(\N^n \to \N\). 
\begin{changed}
  Compare our language with the language of Primitive Recursive Arithmetic (PRA) \cite{skolem23}.
  Since we have induction on quantified formulas, unlike in PRA, 
  in principle we only need to define zero constant, the successor function, addition and multiplication (see \cite[p. 155]{mendelson97}). 
  However, we prefer a richer language with symbols for all the recursive functions and relations, because it is simpler to use. 
\end{changed}

Similarly we assume that \(\LRel_n\) contains symbols for all and only the primitive recursive relations of arity \(n\), that is, subsets of \(\N^n\). 
We use the metavariables \( \lfa\arity{n}, \lfb\arity{n}, \lfc\arity{n} \) for function symbols and \(\lra\arity{n}, \lrb\arity{n}, \lrc\arity{n}\) for relation symbols, omitting the superscript when we do not need it. 

The \(0\)-ary symbols are called \emph{constants}. 
We assume that some standard symbols are present: 
\[
  \begin{array}{|c|c|c|c|c|}
    \LFunc_0 & \LFunc_1 & \LFunc_2 & \LRel_0 & \LRel_2 \\ 
    \hline
    \num 0 & \succ & +, \cdot & \ltrue, \lfalse & =, <, \leq \\
    \hline
  \end{array}
\]
For the sake of readability we informally write \(\num n\) instead of \(\succ^n \num 0\) and we shall use the infix notation for binary functions and relations. 

Let \(\LVar\) be an enumerable set of variable symbols.
We use the metavariables \( \lva, \lvb, \lvc \) for variable symbols. 

We use the metavariable \(\lta\) for arithmetic terms, which are defined as: 
\[ \lta \Coloneqq \lva \mid \lfa\arity{n}(\lta_1, \dotsc, \lta_n) \]
We use the metavariables \(\lafa, \lafb, \lafc\) for atomic formulas, defined as: 
\[ \lafa \Coloneqq \lra\arity{n}(\lta_1, \dotsc, \lta_n) \] 
Finally, we use the metavariables \(\fa, \fb, \fc\) for (well formed) formulas, defined as: 
\[ \fa \Coloneqq \lafa \mid \fb \land \fc \mid \fb \lor \fc \mid \fb \limply \fc \mid \qforall\lva \fb \mid \qexists\lva \fb \] 
The entire grammar is given more concisely in \cref{fig:arithmetic_language}.

We write \(\clubsuit\subst{\lva_1, \dotsc, \lva_n}{\lta_1, \dots, \lta_n}\) for the \emph{simultaneous substitution} of the variables \( \lva_1, \dotsc, \lva_n \) with the terms \(\lta_1, \dots, \lta_n\) in the expression \(\clubsuit\) (a term or a formula). 

We use a compact notation for bounded quantification on natural numbers:
\begin{align*}
  \qforall{\lva \leq \lta}. \fa &\text{ stands for } 
  \qforall\lva \lva \leq \lta \limply \fa, \\ 
  \qexists{\lva \leq \lta}. \fa &\text{ stands for } 
  \qexists\lva \lva \leq \lta \land \fa. 
\end{align*}

The language of first-order arithmetic is the language of both Heyting Arithmetic and Peano Arithmetic. 

\begin{table}
  \centering
  \caption{The language of first-order arithmetic.}
  \label{fig:arithmetic_language}
  \begin{tabular}{c|c|c|}
    & Metavariables & Definition \\ 
    \hline
    function symbols & \( \lfa\arity{n}, \lfb\arity{n}, \lfc\arity{n} \) & elements of \(\LFunc_n\) \\
    \hline
    relation symbols & \( \lra\arity{n}, \lrb\arity{n}, \lrc\arity{n} \) & elements of \(\LRel_n\) \\
    \hline
    arithmetic variables  & \( \lva, \lvb, \lvc \) & elements of \(\LVar\) \\
    \hline
    arithmetic terms & \(\lta\) & \( \lta \Coloneqq \lva \mid \lfa \arity{n}(\lta_1, \dotsc, \lta_n) \) \\ 
    \hline
    atomic formulas & \(\lafa, \lafb, \lafc\) & \(\lafa \Coloneqq 
    \lra\arity{n}(\lta_1, \dotsc, \lta_n) \) \\ 
    \hline
    formulas & \(\fa, \fb, \fc\)  &  
    \(\begin{array}{c}\fa \Coloneqq 
      \lafa \mid \fb \land \fc \mid \fb \lor \fc \mid \\
    \mid \fb \limply \fc \mid \qforall\lva \fb \mid \qexists\lva \fb \end{array} \) \\ 
    \hline
  \end{tabular} 
\end{table}

\subsection{Reduction on Arithmetic Terms}

\begin{changed}
  In this subsection we introduce a reduction on arithmetic terms. 
  Since arithmetic terms are build from recursive primitive functions, we can transform the equations defining them into reductions in a natural way. 

  A term \(\lta\) is a \emph{numeral} if it is either \(\num 0\) or \(\succ[\ltb]\) for some numeral \(\ltb\). 
  We consider numerals as the basic arithmetic terms, so we do not reduce them. 

  Consider a term \(\lta\) build by composition, for instance:
  \[ 
    \lfa (\lta_1, \dotsc, \lta_i, \dotsc, \lta_n). 
  \]
  The first option is to reduce one of the arguments: 
  for any \( 0 < i \le n \), if \(\lta_i\) reduces to \(\lta'_i\), then
  \[ 
    \lfa (\lta_1, \dotsc, \lta_i, \dotsc, \lta_n) \reducesto 
    \lfa (\lta_1, \dotsc, \lta'_i, \dotsc, \lta_n). 
  \] 
  The second option is to reduce the whole term. 
  In this case which reduction we use depends on how the primitive recursive function denoted by \(\lfa\) is defined. 
  As we said, if \(\lfa\) denotes zero or the successor function, it does not reduce.
  If \(\lfa\) denotes a projection function \(\nproj{n}{i}\), then it reduces as follows:
  \[
    \lfa (\lta_1, \dotsc, \lta_n) \reducesto \lta_i, 
  \]
  exactly as in the definition of \(\nproj{n}{i}\). 
  If \(\lfa\) denotes the composition of functions whose symbols are \( \lfb, \lfc_1, \dotsc, \lfc_n\), then it reduces as follows: 
  \[
    \comp(\lfb, \lfc_1, \dotsc \lfc_n) = \lfb(\lfc_1(\lta_1,\ldots,\lta_m),\ldots,\lfc_n(\lta_1,\ldots,\lta_m)). 
  \] 
  If \(\lfa\) the primitive recursion of two other functions whose symbols are \(\lfb\) and \(\lfc\), then \(\lfa\) reduces as follows: 
  \begin{align*}
    \lfa (\num 0, \lta_1, \ldots, \lta_k) &\reducesto \lfb (\lta_1, \ldots, \lta_k), \\ 
    \lfa (\succ(\ltb), \lta_1, \ldots, \lta_k) &\reducesto \lfc (\ltb, \prRec (\ltb, \lta_1, \ldots, \lta_k), \lta_1, \ldots, \lta_k), 
  \end{align*}
  depending on the form of the first argument. 

  We basically described how to compute primitive recursive functions. 
  This reduction is strongly normalizing and the normal form is unique. 
  In the case of closed term the normal form is a numeral. 

  We can extend this reduction from terms to formulas, by reducing the terms contained in a formula. 
  Strong normalization and uniqueness are preserved. 
\end{changed}

\subsection{Axioms and Rules of Intuitionistic Logic}

In this subsection we describe natural deduction as a notation for formal proofs and the axioms and the rules of intuitionistic logic with equality. 

A \emph{derivation} is a formal diagram that describes a proof.
We write derivations in natural deduction, that is, as labeled trees of annotated formulas, 
with the requirement that the any subtree conforms to one of a number of patterns called \emph{rules of inference} or simply \emph{rules}.
An annotated formula consists of a formula, the name of the rule it conforms to and an unique label. 
In a proof tree the same formula can and often does appear in more than one place. Similarly a rule can be applied many times. 
In order to distinguish these multiple instance, when reasoning about the structure of a derivation we speak about \emph{formula occurrences} and \emph{rule instances}. 

We draw derivations with the root at the bottom. 
The top formulas of the tree are axiom instances if they are derived from a rule with no premisses, discharged assumption instances if they are discharged by a rule instance below them or open assumption instances if they are not. 

In an elimination rule the premiss containing the logical operator being eliminated is called the \emph{major} premiss; 
the other premisses are called \emph{minor} premisses. 
We follow a common convention in writing rules and place the major premiss in the leftmost position. 
For introduction and atomic rules we consider all premisses to be major premisses. 

We call a rule \emph{atomic} when its assumptions and conclusion are all atomic formulas.
We call \emph{atomic axiom} an atomic rule without premisses. 

There are two notations for natural deduction that differ on how they represent open assumptions. 
For instance, consider the implication introduction rule written in two ways:
\[
  \PrAss\fa\riA
  \PrInf 
  \PrUn\fb 
  \PrLbl[\riA]{\RuleName\limply{I}}
  \PrUn{\fa \limply \fb} 
  \DisplayProof \qquad
  \PrAx{\Gamma, \riA : \fa \seq \fb} 
  \PrLbl[\riA]{\RuleName\limply{I}}
  \PrUn{\Gamma \seq \fa \limply \fb} 
  \DisplayProof 
\]
We say that the leftmost rule is written in Gentzen's style and the rightmost one in sequent style.
In the Gentzen's style rule the open assumption \(\fa\) is inside square brackets with a superscript label \(\riA\) to show that it is discharged by the rule also labeled with \(\riA\). 
Note that labels are essential: if we remove them we may not know which rule instance discharges an occurrence of an open assumption. 

In the sequent style rule all the open assumptions a formula depends on are in the list (metavariable \(\Gamma\)) of couples of labels and formulas that precedes the symbol \(\seq\). 
The fact that the rule discharges an open assumption \(\fa\) is clear from the fact that \(\fa\) is in the open assumption list of the premiss and is not in the list of the conclusion. 

The simplest rule is the identity rule, that shows how to use an open assumption: they say that if we assumed a formula \(\fa\) we can derive \(\fa\), whence the name. When using Gentzen's style it is usually left implicit, but it needs to be written in sequent style. We give both versions:
\[ 
  \PrAss\fa\riA
  \PrLbl{\text{Id}}
  \PrUn\fa
  \DisplayProof, 
  \qquad
  \PrAx{}
  \PrLbl{\text{Id}}
  \PrUn{\Gamma \seq \fa}
  \DisplayProof, 
\]
where \(\Gamma\) contains \(\riA : \fa\) for some label \(\riA\).

We can now present the rules of minimal logic, a subset of intuitionistic and of classical logic. 
It consists of ten rules, one introduction rule and one elimination rule for each of the five logical connectives. \fixme{define introduction and elimination}
They are listed in Gentzen's style in \cref{fig:minimal_logic_gentzen} and in sequent style in \cref{fig:minimal_logic_sequent}. 
The usual restrictions apply to rules \(\RuleName\forall{I}\) and \(\RuleName\exists{E}\): 
in \(\RuleName\forall{I}\) we assume that \(\lva\) is not free in any assumption in \(\Gamma\); 
in \(\RuleName\exists{E}\) we assume that \( \lva \equiv \lvb \) or \(\lvb\) is not free in \(\fa\) and \(\lvb\) is not free in any assumption in \(\Gamma\). 

\begin{figure}[!ht]
  \caption{Rules for minimal logic, in Gentzen's style natural deduction.}
  \label{fig:minimal_logic_gentzen}
  \[ 
    \boxed{
      \begin{array}{ccc} 
        \\ 
        \PrAx\fa 
        \PrAx\fb 
        \PrLbl{\RuleName\land{I}} 
        \PrBin{\fa \land \fb} 
        \DisplayProof & \qquad & 
        \PrAx{\fa \land \fb} 
        \PrLbl{\RuleName[R]\land{E}}
        \PrUn\fa 
        \DisplayProof \quad
        \PrAx{\fa \land \fb} 
        \PrLbl{\RuleName[L]\land{E}}
        \PrUn\fb 
        \DisplayProof 
        \\ [2em]
        \PrAx\fa 
        \PrLbl{\RuleName[R]\lor{I}} 
        \PrUn{\fa \lor \fb} 
        \DisplayProof \quad 
        \PrAx\fb 
        \PrLbl{\RuleName[L]\lor{I}} 
        \PrUn{\fa \lor \fb} 
        \DisplayProof & & 
        \PrAx{\fa \lor \fb }
        \PrAss\fa\riA
        \PrInf
        \PrUn\fc 
        \PrAss\fb\riA
        \PrInf
        \PrUn\fc 
        \PrLbl\riA{\RuleName[L]\lor{E}}
        \PrTri\fc 
        \DisplayProof 
        \\ [2em]
        \PrAss\fa\riA
        \PrInf 
        \PrUn\fb 
        \PrLbl\riA{\RuleName\limply{I}}
        \PrUn{\fa \limply \fb} 
        \DisplayProof & & 
        \PrAx{\fa \limply \fb} 
        \PrAx\fa 
        \PrLbl{\RuleName\limply{E}}
        \PrBin\fb 
        \DisplayProof 
        \\ [2em]
        \PrAx\fa 
        \PrLbl{\RuleName\forall{I}}
        \PrUn{\qforall\lva \fa} 
        \DisplayProof & & 
        \PrAx{\qforall\lva\fa} 
        \PrLbl{\RuleName\forall{E}}
        \PrUn{\fa\subst\lva{t}} 
        \DisplayProof 
        \\ [2em]
        \PrAx{\fa\subst\lva{t}} 
        \PrLbl{\RuleName\exists{I}}
        \PrUn{\qexists\lva\fa} 
        \DisplayProof & & 
        \PrAx{\qexists\lva\fa} 
        \PrAss{\fa\subst\lva\lvb}\riA
        \PrInf
        \PrUn\fc
        \PrLbl\riA{\RuleName\exists{E}}
        \PrBin\fc 
        \DisplayProof 
        \\ [1em]
      \end{array} 
    }
  \]
\end{figure}

\begin{figure}[!ht] 
  \caption{Rules for minimal logic, in sequent style natural deduction.}
  \label{fig:minimal_logic_sequent}
  \[ 
    \boxed{
      \begin{array}{ccc} 
        \\ 
        \PrAx{\Gamma \seq \fa} 
        \PrAx{\Gamma \seq \fb} 
        \PrLbl{\RuleName\land{I}} 
        \PrBin{\Gamma \seq \fa \land \fb} 
        \DisplayProof 
        & &
        \PrAx{\Gamma \seq \fa \land \fb} 
        \PrLbl{\RuleName[R]\land{E}}
        \PrUn{\Gamma \seq \fa} 
        \DisplayProof 
        \qquad 
        \PrAx{\Gamma \seq \fa \land \fb} 
        \PrLbl{\RuleName[L]\land{E}}
        \PrUn{\Gamma \seq \fb} 
        \DisplayProof 
        \\ [2em]
        \PrAx{\Gamma \seq \fa} 
        \PrLbl{\RuleName[R]\lor{I}} 
        \PrUn{\Gamma \seq \fa \lor \fb} 
        \DisplayProof \qquad 
        & & 
        \PrAx{\Gamma \seq \fb} 
        \PrLbl{\RuleName[L]\lor{I}} 
        \PrUn{\Gamma \seq \fa \lor \fb} 
        \DisplayProof 
        \\ [2em]
        \multicolumn{3}{c}{
          \PrAx{\Gamma \seq \fa \lor \fb } 
          \PrAx{\Gamma, \alpha_{k+1} : \fa \seq \fc} 
          \PrAx{\Gamma, \alpha_{k+1} : \fb \seq \fc} 
          \PrLbl{\RuleName[L]\lor{E}}
          \PrTri{\Gamma \seq \fc} 
          \DisplayProof 
        }
        \\ [2em]
        \PrAx{\Gamma, \alpha_{k+1} : \fa \seq \fb} 
        \PrLbl{\RuleName\limply{I}}
        \PrUn{\Gamma \seq \fa \limply \fb} 
        \DisplayProof 
        & & 
        \PrAx{\Gamma \seq \fa \limply \fb} 
        \PrAx{\Gamma \seq \fa} 
        \PrLbl{\RuleName\limply{E}}
        \PrBin{\Gamma \seq \fb} 
        \DisplayProof
        \\ [2em]
        \PrAx{\Gamma \seq \fa} 
        \PrLbl{\RuleName\forall{I}}
        \PrUn{\Gamma \seq \qforall\lva \fa} 
        \DisplayProof 
        & & 
        \PrAx{\Gamma \seq \qforall\lva \fa} 
        \PrLbl{\RuleName\forall{E}}
        \PrUn{\Gamma \seq \fa\subst\lva{t}} 
        \DisplayProof 
        \\ [2em]
        \PrAx{\Gamma \seq \fa\subst\lva{t}} 
        \PrLbl{\RuleName\exists{I}}
        \PrUn{\Gamma \seq \qexists\lva\fa} 
        \DisplayProof 
        & & 
        \PrAx{\Gamma \seq \qexists\lva\fa} 
        \PrAx{\Gamma, \alpha_{k+1} : \fa\subst\lva\lvb \seq \fc} 
        \PrLbl{\RuleName\exists{E}}
        \PrBin{\Gamma \seq \fc} 
        \DisplayProof 
        \\ [1em]
      \end{array} 
    }
  \]
\end{figure}

Intuitionistic logic has all the rules of minimal logic with the addition of the \emph{ex falso quodlibet} rule:
\[ 
  \PrAx\lfalse
  \PrLbl{\RuleName\lfalse{E}}
  \PrUn\fa
  \DisplayProof 
\]
which can be thought of as an elimination rule for the atomic formula \(\lfalse\). 
For technical reason we prefer to have atomic rules when possible, so instead of the \(\RuleName\lfalse{E}\) rule we consider its restricted version, where the conclusion can only be an atomic formula:
\[ 
  \PrAx\lfalse
  \PrLbl{\RuleName[0]\lfalse{E}}
  \PrUn\lafa
  \DisplayProof 
\]
Then \(\RuleName[0]\lfalse{E}\) is an atomic rule. 
The \(\RuleName\lfalse{E}\) rule is admissible given the \(\RuleName[0]\lfalse{E}\) rule. 
It is easy to prove by induction on the structure of the conclusion of the \(\RuleName\lfalse{E}\) rule. 
For instance we can prove \(\fa = \lafa \limply \lafb\) from \(\lfalse\):
\[
  \PrAx\lfalse
  \PrLbl{\RuleName[0]\lfalse{E}}
  \PrUn\lafb
  \PrLbl{\RuleName\limply{I}}
  \PrUn{\lafa \limply \lafb}
  \DisplayProof 
\]

First-order logic always assumes the existence of a binary relation symbol \(=\) and axioms and rules defining it as an equivalence relation compatible with functions and relations. 
These axioms and rules are given in Gentzen's style in \cref{fig:equality_rules_gentzen} and in sequent style in \cref{fig:equality_rules_sequent}. 
Note that they are all atomic. 

\begin{figure}[!ht] 
  \caption{Rules for the equality predicate, in Gentzen's style natural deduction.}
  \label{fig:equality_rules_gentzen}
  \[
    \boxed{
      \begin{array}{ccc} 
        \PrAx{} 
        \PrLbl{\mathrm{Refl}} 
        \PrUn{\lta = \lta} 
        \DisplayProof \qquad
        \PrAx{\lta = \ltb} 
        \PrLbl{\mathrm{Sym}} 
        \PrUn{\ltb = \lta} 
        \DisplayProof \qquad 
        \PrAx{\lta = \ltb} 
        \PrAx{\ltb = \ltc} 
        \PrLbl{\mathrm{Trans}} 
        \PrBin{\lta = \ltc} 
        \DisplayProof \\ [1em]
        \PrAx{\lta_1 = \ltb_1 \quad \dotso \quad \lta_n = \ltb_n} 
        \PrLbl{\RuleName[\LFunc]{}{Sub}}
        \PrUn{\lfa\arity{n}(\lta_1, \dotsc, \lta_n) = \lfa\arity{n}(\ltb_1, \dotsc, \ltb_n)} 
        \DisplayProof \\ [1em]
        \PrAx{\lta_1 = \ltb_1 \quad \dotso \quad \lta_n = \ltb_n} 
        \PrAx{\lra\arity{n}(\lta_1, \dotsc, \lta_n)} 
        \PrLbl{\RuleName[\LRel]{}{Sub}}
        \PrBin{\lra\arity{n}(\ltb_1, \dotsc, \ltb_n)} 
        \DisplayProof \\ 
      \end{array}
    }
  \]
\end{figure}

\begin{figure}[!ht] 
  \caption{Rules for the equality predicate, in sequent style natural deduction.}
  \label{fig:equality_rules_sequent}
  \[
    \boxed{
      \begin{array}{ccc} 
        \PrAx{} 
        \PrLbl{\mathrm{Refl}} 
        \PrUn{\Gamma \seq \lta = \lta} 
        \DisplayProof \qquad
        \PrAx{\Gamma \seq \lta = \ltb} 
        \PrLbl{\mathrm{Sym}} 
        \PrUn{\Gamma \seq \ltb = \lta} 
        \DisplayProof \qquad 
        \PrAx{\Gamma \seq \lta = \ltb} 
        \PrAx{\Gamma \seq \ltb = \ltc} 
        \PrLbl{\mathrm{Trans}} 
        \PrBin{\Gamma \seq \lta = \ltc} 
        \DisplayProof \\ \\ 
        \PrAx{\Gamma \seq \lta_1 = \ltb_1 \quad \dotso \quad \lta_n = \ltb_n} 
        \PrLbl{\RuleName[\LFunc]{}{Sub}}
        \PrUn{\Gamma \seq \lfa\arity{n}(\lta_1, \dotsc, \lta_n) = \lfa\arity{n}(\ltb_1, \dotsc, \ltb_n)} 
        \DisplayProof \\ \\ 
        \PrAx{\Gamma \seq \lta_1 = \ltb_1 \quad \dotso \quad \lta_n = \ltb_n} 
        \PrAx{\Gamma \seq \lra\arity{n}(\lta_1, \dotsc, \lta_n)} 
        \PrLbl{\RuleName[\LRel]{}{Sub}}
        \PrBin{\Gamma \seq \lra\arity{n}(\ltb_1, \dotsc, \ltb_n)} 
        \DisplayProof \\ \\ 
      \end{array}
    }
  \]
\end{figure}

\subsection{Axiom and Rules of Arithmetic}

In this subsection we present the axioms and the rules of Heyting Arithmetic (\(\HA\)).

The rules defining the functions symbols \(\succ, +\) and \(\cdot\) are in \cref{fig:arithmetic_rules}. 

\begin{figure}[!ht] 
  \caption{Axioms and rules for the successor, addition and multiplication, in Gentzen's style natural deduction.}
  \label{fig:arithmetic_rules}
  \[
    \boxed{
      \begin{array}{ccc} 
        \PrAx{\succ \lta = \num 0}
        \PrLbl{\mathrm{Zero}} 
        \PrUn\lfalse 
        \DisplayProof 
        & \qquad &
        \PrAx{\succ \lta = \succ(\ltb)}
        \PrLbl{\mathrm{Succ}} 
        \PrUn{\lta = \ltb} 
        \DisplayProof 
        \\ [2em]
        \PrAx{}
        \PrLbl{\RuleName[0]{}{Add}} 
        \PrUn{\lta + \num 0 = \lta}
        \DisplayProof 
        &&
        \PrAx{}
        \PrLbl{\RuleName[\succ]{}{Add}} 
        \PrUn{\lta + \succ(\ltb) = \succ(\lta) + \ltb}
        \DisplayProof 
        \\ [2em]
        \PrAx{}
        \PrLbl{\RuleName[0]{}{Mult}} 
        \PrUn{\lta \cdot \num 0 = \num 0}
        \DisplayProof 
        &&
        \PrAx{}
        \PrLbl{\RuleName[+]{}{Mult}} 
        \PrUn{\lta \cdot \succ(\ltb) = \lta \cdot \ltb + \lta}
        \DisplayProof 
      \end{array} 
    }
  \]
\end{figure}

\subsubsection{Induction}
The last rule we need to add to have the full \(\HA\) is induction. 
Induction can be thought of as the requirement that any natural number can be written as \(\succ(\dotso( \succ (\num 0)))\). 
In first-order logic is usually expressed by saying that, for any formula \(\fa\), if \(\fa(0)\) holds and \(\fa(x+1)\) holds whenever \(\fa(x)\) holds, then \(\fa\) holds for any natural number. 
Induction has many different but equivalent formulations and it can be written as either an axiom or a rule.
The most common formulation is the following induction axiom schema:
\[ \fa\subst\lva{\num 0} \land (\qforall\lva \fa \limply \fa\subst\lva{\succ(\lva)}) \limply \qforall\lva \fa, \] 
where \(\fa\) is any formula. 
This axiom can be written as the \emph{induction rule} in Gentzen's style:
\[ 
  \PrAx{\fa\subst\lva{\num 0}}
  \PrAss\fa\riA
  \PrInf
  \PrUn{\fa\subst\lva{\succ(\lva)}}
  \PrLbl[\riA]{\text{Ind}}
  \PrBin{\qforall\lva \fa}
  \DisplayProof,
\]
or in sequent style:
\[
  \PrAx{\Gamma \seq \fa\subst\lva{\num 0}}
  \PrAx{\Gamma, \riA : \fa \seq \fa\subst\lva{\succ(\lva)}}
  \PrLbl[\riA]{\text{Ind}}
  \PrBin{\Gamma \seq \qforall\lva \fa}
  \DisplayProof.
\]

An related axiom is \emph{complete} or \emph{course-of-values induction}, which states that if \(\fa(\lva)\) holds whenever \(\fa(\lvb)\) holds for all \(\lvb < \lva\) then \(\fa\) holds for any natural number. 
The axiom for complete induction is written as: 
\[ (\qforall{\lvb < \lva} \fa\subst\lva\lvb) \limply \qforall\lva \fa \]
While this axiom appear to be stronger than the induction axiom we just defined, it is actually equivalent. 
\begin{changed}
  This can be seen by considering the standard induction rule for the formula 
  \[ \fb \equiv \qforall{\lvb<\lva} \fa\subst\lva\lvb. \]
  For more details see \cite[p. ~213]{shoenfield67}. 
\end{changed}

We can write a rule for complete induction in Gentzen's style: 
\[ 
  \PrAss{\qforall{\lvb<\lva} \fa\subst\lva\lvb}\riA
  \PrInf
  \PrUn\fa 
  \PrLbl[\riA]{\mathrm{CInd}}
  \PrUn{\qforall\lva \fa} 
  \DisplayProof, 
\]
or in sequent style: 
\[
  \PrAx{\Gamma, \riA : \qforall{\lvb<\lva} \fa\subst\lva\lvb \seq \fa} 
  \PrLbl[\riA]{\mathrm{CInd}}
  \PrUn{\Gamma \seq \qforall\lva \fa} 
  \DisplayProof. 
\]

Now we have all the ingredients we need in order to define Heyting Arithmetic, the standard intuitionistic theory of arithmetic. 
\begin{definition}[Heyting Arithmetic]\label{def:HA}
  Heyting Arithmetic (\(\HA\)) is defined as the first-order logic theory whose language is the language of arithmetic and whose axioms and rules are the following:
  \begin{itemize}
    \item the identity rule, 
    \item the rules of minimal logic,
    \item the rule for ex falso quodlibet restricted to atomic formulas,
    \item the axiom and rules for equality,
    \item the axiom and rules for successor, addition and multiplication,
    \item any one of the axioms and rules for induction or complete induction.
  \end{itemize}
\end{definition}

\subsection{Axioms for the Law of the Excluded Middle}\label{sec:em}

In this subsection we introduce a hierarchy of axiom schemes that are restrictions of the law of the \emph{excluded middle}, taken from \cite{akamaBHK04}.
\begin{changed}
  We refer to the same work for explanations and proofs of our claims in this subsection.
\end{changed} 

We define a purely syntactical version of the usual classification for formulas in prenex normal form in arithmetic.
\begin{definition}[Syntactical Arithmetical Hierarchy]
  We define the following classes of formulas by induction on \(\na\): 
  \begin{itemize}
    \item \(\Pi^0_0\) and \(\Sigma^0_0\) are the set of the quantifier free formulas,
    \item \(\Pi^0_{\na+1}\) is the set of the formulas \(\qforall\lva \fa\) where \(\fa \in \Sigma^0_\na\), 
    \item \(\Sigma^0_{\na+1}\) is the set of the formulas \(\qforall\lva \fa\) where \(\fa \in \Pi^0_\na\).
  \end{itemize}
\end{definition}
We do not require closure for logical equivalence and thus the definition of \(\Pi^0_\na\) and \(\Sigma^0_\na\) is purely syntactical. 
\newcommand\ldual[1]{{#1}^\lnot}
Since the negation of a formula in prenex normal form is not in prenex normal form we define the following negative conversion, which we call the \emph{dual} of a formula:
\begin{align*}
  \ldual\fa \equiv \begin{cases}
    \qexists\lva \ldual\fb & \text{ if } \fa = \qforall\lva \fb \in \Pi^0_\na, \\
    \qforall\lva \ldual\fb & \text{ if } \fa = \qexists\lva \fb \in \Sigma^0_\na, \\
    \lnot\fa & \text{ if } \fa \in \Pi^0_0.
  \end{cases}
\end{align*}
Note that if \(\fa\) is a \(\Pi^0_\na\) (resp. \(\Sigma^0_\na\)) formula then \(\ldual\fa\) is a \(\Sigma^0_\na\) (resp. \(\Pi^0_\na\)) formula. 
Moreover \(\ldual\fa\) is classically equivalent to \(\lnot\fa\) and intuitionistically stronger than \(\lnot\fa\), except when \(\fa\) is a \(\Pi^0_0\) formula, in which case it is equivalent to \(\lnot\fa\). 

The law of the excluded middle says that any statement is either true or false. 
More precisely, it says that 
either a statement is true or its negation is true. 
This is intuitionistically equivalent to the fact that 
either a statement is true or its dual is true. 
We define a sequence of restricted forms of this law. 
\begin{definition}[Restricted Excluded Middle Axiom Schemas] \label{def:excluded_middle}
  For any \(n \in \Nat\), we define the limited law of the excluded middle \(\EMG_\na\) as the axiom schema:
  \[
    \tag{\(\EMG_\na\)} \ldual\fa \lor \fa, 
  \]
  where \(\fa\) is a \(\Sigma^0_\na\) formula. 
  As a limit case we define \(\EMG_\infty\) where \(\fa\) is a \(\Pi^0_\na\) formula for any \(\na\).  
\end{definition}

By adding \(\EMG_\infty\) to intuitionistic logic we get classical logic, so by adding \(\EMG_\infty\) to Heyting Arithmetic we get Peano Arithmetic, the theory of classical arithmetic.  
We can also produce many intermediate logics by adding \(\EMG_\na\) to Heyting Arithmetic, which we write as \(\HA+\EMG_\na\). 
Note that \(\EMG_0\) is true in Heyting Arithmetic, so \(\HA+\EMG_0\) is simply \(\HA\). 


\section{A Simply Typed $\lambda$-Calculus for Realizability}


\newcommand\ta{X}
\newcommand\tb{Y}
\newcommand\tc{Z}
\newcommand\tta{x} 
\newcommand\ttb{y} 
\newcommand\ttc{z} 
\newcommand\tva{x} 
\newcommand\tvb{y} 
\newcommand\tvc{z} 
\newcommand\ra{p}
\newcommand\rb{q}
\newcommand\rr{r}

In this section we introduce system \(T'\), a simply typed \(\lambda\)-calculus variant of G\"odel's system \(T\) in which we shall write our realizers.
System \(T'\) will be more convenient for our purposes in order to get a more straightforward translation of monads and related concepts from category theory. 
There are two main differences between our system \(T'\) and system \(T\). 
The first one is that we replace the boolean type with the more general sum (or co-product) type. 
The second one is that the recursion operator uses complete recursion instead of standard primitive recursion. 

We begin by defining the types. 
We shall use the metavariables \(\ta,\tb\) and \(\tc\) for types. 
We assume that we have a finite set of atomic types that includes the unit type \(\Unit\) and the type of natural numbers \(\Nat\). 
Moreover we have three type binary type constructors \(\tarrow, \times, +\). 
In other words, for any types \(\ta\) and \(\tb\) we have the arrow (or function) type \(\ta \tarrow \tb\), the product type \(\ta \times \tb\) and the sum (or co-product) type \(\ta + \tb\). 

We can now define the typed terms of the calculus. 
We assume that we have a countable set of typed term constants that includes the constructors and the destructors for the unit, natural, product and sum types (listed in \cref{fig:constant_terms}) and  
a countable set of variables of type \(\ta\) for any type \(\ta\): 
\[ \tva_0 : \ta, \dotsc, \tva_n : \ta, \dotsc. \]
\begin{figure}
  \caption{Constructors and destructors}\label{fig:constant_terms} 
  \begin{gather*} 
    \unit : \Unit, \\ 
    \pair[\ta,\tb] : \ta \tarrow \tb \tarrow \ta \times \tb, \\ 
    \begin{aligned} 
      \prl[\ta,\tb] &: \ta \times \tb \tarrow \ta, \qquad 
      &\prr[\ta,\tb] : \ta \times \tb \tarrow \tb, \\ 
      \inl[\ta,\tb] &: \ta \tarrow \ta + \tb, \qquad
      &\inr[\ta,\tb] : \tb \tarrow \ta + \tb, 
    \end{aligned} \\ 
    \case[\ta,\tb,\tc] : \ta + \tb \tarrow (\ta \tarrow \tc) \tarrow (\tb \tarrow \tc) \tarrow \tc, \\ 
    \zero : \Nat, \qquad \succ : \Nat \tarrow \Nat, \\
    \totRec[\tc]{n}  : (\Nat \tarrow (\Nat \tarrow \tc) \tarrow \tc) \tarrow \Nat \tarrow \tc. 
  \end{gather*} 
  \begin{legend}
    where \(n\) is a natural number or the symbol \(\infty\). 
    In order we have the constant constructor of type \(\Unit\), the constructor and the two destructors of the product types, the two constructors and the destructor of the sum types and the two constructors and the destructor of the natural type.
    Most of those are actually ``parametric polymorphic'' terms, that is, families of constants indexed by the types \(\ta,\tb\) and \(\tc\).
  \end{legend}
\end{figure}
We use the metavariables \(\tta, \ttb, \ttc\) for terms. 
Moreover for any two terms \(\tta : \ta\) and \(\ttb : \ta \tarrow \tb\) we have a term \(\ttb \tta : \tb\) and for any variable \(\tva : \ta\) and term \(\ttb : \tb\) we have a term \(\qlambda\tva \ttb : \ta \tarrow \tb\). 


In order to avoid a parenthesis overflow, we shall follow the usual conventions for writing terms and types. 
For terms this means that application and abstraction are respectively left and right-associative 
and that abstraction binds as many terms as possible on its right; 
for types it means that \(\times\) and \(+\) are left-associative and associate more closely than \(\tarrow\), which is right-associative. 
We also omit outer parenthesis. 
For example: 
\[ 
  \begin{array}{ccc} 
    \ta \tarrow \tb \tarrow \ta \times \tb \times \tc & \text{ stands for } & (\ta \tarrow (\tb \tarrow ((\ta \times \tb) \times \tc))), \\ 
    \abstr{x}\ta \abstr{y}\tb \abstr{z}\tc t_1 t_2 t_3 & \text{ stands for } & (\abstr\lva\ta (\abstr{y}\tb (\abstr{z}\tc ((t_1 t_2) t_3)))). 
  \end{array} 
\] 

We define some reduction relations, that is, binary relations between terms:
\begin{gather*} 
  (\abstr\lva\ta t) a \reducesto_\beta t\subst\lva{a}, \\ 
  \begin{aligned} 
    \prl[\ta,\tb] (\pair[\ta,\tb]{a}{b}) &\reducesto_\times a, 
    &\case[\ta,\tb,\tc] (\inl[\ta,\tb] a) f g &\reducesto_+ f a, \\
    \prr[\ta,\tb] (\pair[\ta,\tb]{a}{b}) &\reducesto_\times b, 
    &\case[\ta,\tb,\tc] (\inr[\ta,\tb] b) f g &\reducesto_+ g b, 
  \end{aligned} \\
  \totRec[\tc]{n} h \num m \reducesto_R \begin{cases} 
    h \num m (\totRec[\tc]{m} h) &\text{if } m < n \text{ or } n = \infty, \\ 
    \dummy[\tc] &\text{otherwise,} 
  \end{cases} 
\end{gather*} 
where \( a : \ta\), \(b : \tb\), \(c : \tc\), \(f : \ta \tarrow \tc\), \(g : \tb \tarrow \tc \) and \( h : \Nat \tarrow (\Nat \tarrow \tc ) \tarrow \tc \). 
Note that we use \(c\) as a dummy term of type \(\tc\)\footnotemark. 
\footnotetext{
  As long as the base types are inhabited, we can define an arbitrary dummy term \(\dummy[\ta]\) for any type \(\ta\):
  \begin{gather*}
    \dummy[\Unit] \equiv \unit, \qquad \dummy[\Nat] \equiv \num 0, \\ 
    \dummy[\ta \tarrow \tb] \equiv \abstr\_\ta \dummy[\tb], \quad
    \dummy[\ta \times \tb] \equiv \pair \dummy[\ta] \dummy[\tb], \quad 
    \dummy[\ta + \tb] \equiv \inl \dummy[\ta]. 
  \end{gather*}
}
Then we set the reduction relation \(\reducesto\) to be the union of \( \reducesto_\beta, \reducesto_\times, \reducesto_+\) and \(\reducesto_R\).
Also let \(\leadsto\) be the transitive and reflexive closure of \(\reducesto\). 

  We explain the reduction given for \(\totRec{}\), since it is not the standard one. The difference is due to the fact that \(\totRec{}\) is meant to realize complete induction instead of standard induction. 
In complete induction, the inductive hypothesis holds not only for the immediate predecessor of the value we are considering, but also  for all the smaller values. 

Similarly, \(\totRec{}\) allows us to recursively define a function \(f\) where the value of \(f(\num m)\) depends not only on the value of \(f(\num{m-1})\) but also on the value of \(f(\num l)\), for any \(l < m \). 
Thus, when computing \(\totRec[\tc]{n} h \num m\), instead of taking the value of \(\totRec[\tc]{n} h \num{(m-1)}\) as an argument, \(h\) takes the whole function \(\totRec[\tc]{n} h\). 
In order to avoid unbounded recursion, we add a guard \(n\) that prevents \(\totRec[\tc]{n} h\) to be computed on arguments greater or equal to \(n\). 
More precisely \(\totRec[\tc]{n} h \num m \) only reduces to \(h \num m (\totRec[\tc]{m} h) \) if \( m < n \); thus, even if \(h\) requires \(\totRec[\tc]{m} h\) to be computed on many values, the height of the computation trees is bound by \(m\)\footnote{Unlike in standard primitive recursion, where the computation always comprises \(m\) steps, in course-of-values primitive recursion the computation can actually be shorter if \(h\) ``skips'' values.}. 
Naturally, a ``good'' \(h\) should not evaluate \(\totRec[\tc]{m} h\) on values bigger than \(m\), but in any case the guard guarantees termination.
The symbol \(\infty\) acts as a dummy guard, which gets replaced with an effective one when \(\totRec[\tc]{\infty} h\) is evaluated the first time. 

We shall also need the following equivalence relations between terms: 
\begin{align*} 
  \tag{\(\alpha\)-conversion} \abstr{x}\ta t &=_\alpha \abstr{y}\ta t[x \coloneqq y], \\ 
  \tag{\(\eta\)-conversion} \abstr{x}\ta t x &=_\eta t, \\ 
  \tag{\(\times\)-conversion} \pair[\ta,\tb](\prl[\ta,\tb] c)(\prr[\ta,\tb] c) &=_\times c, \\ 
  \tag{\(+\)-conversion} \case[\ta,\tb,\ta+\tb] d \inl[\ta,\tb] \inr[\ta,\tb] &=_+ d, 
\end{align*} 
for all terms \(t\), \( c : \ta \times \tb \) and \( d : \ta + \tb \). 
Again we set the equivalence \(=\) to be the union of \( =_\alpha, =_\eta, =_\times \) and \(=_+\). 

It is easy to see that the boolean type and the related terms \(\true,\false\) and \(\ite\) in system \(T\) can be defined in system \(T'\). The reverse is also true, that is, we can define sum types and terms \(\inl,\inr\) and \(\case\) in system \(T\). 
\begin{thesis}
  We show how to interpret the boolean type and related constants of system \(T\) in our system. 
  \begin{gather*}
    \Bool \equiv \Unit + \Unit, \\ 
    \true \equiv \inl \unit \qquad 
    \false \equiv \inr \unit, \\ 
    \ite[\ta] \equiv \abstr{b}\Bool\abstr\tta\ta \abstr\tta\ta \case b (\abstr\_\Unit \tta) (\abstr\_\Unit \tta). 
  \end{gather*}
  We assume that \(\ta\) and \(\tb\) are inhabited, that is, that there exist terms \(\tta_0 : \ta \) and \(\ttb_0 :\tb\). Then we can interpret the sum type and related constants in system \(T\):
  \begin{gather*}
    \ta+\tb \equiv \Bool \times (\ta \times \tb), \\ 
    \inl[\ta,\tb] \equiv \abstr\tta\ta \pair \true (\pair \tta \ttb_0), \\
    \inr[\ta,\tb] \equiv \abstr\ttb\tb \pair \false (\pair \tta_0 \ttb), \\
    \case[\ta,\tb,\tc] \equiv 
    \abstr{b}{\Bool \times (\ta \times \tb)} \abstr{f}{\ta \tarrow \tc} \abstr{g}{\tb \tarrow \tc}
    \ite (\prl b) (f (\prl (\prr b))) (g (\prr (\prr b))).
  \end{gather*}
\end{thesis}

\begin{wip}
  We can easily generalize our recursion operator to work on arbitrary well-founded types. 
  Assume that \(<_\ta : \ta \tarrow \ta \tarrow \Bool\) is a well-founded ordering on an inhabited type \(\ta\) and choose any term \(0_\ta : \ta\) as a default value. 
  \begin{align*} 
    \totRec[\ta,\tb]\lva : (\ta \tarrow (\ta \tarrow \tb) \tarrow \tb) \tarrow \ta \tarrow \tb \\ 
    \totRec[\ta,\tb]\infty h \lva \reducesto \ite <_\ta 
    \totRec[\ta,\tb]\lva h \lvb \reducesto \ite (<_\ta \lvb \lva) (h \lvb (\totRec[\ta,\tb]\lvb) (0_\ta)
  \end{align*} 
  where \( h : \ta \tarrow \tb \). 
\end{wip}

System \(T'\) shares most of the good properties of G\"odel's system \(T\), in particular confluence, strong normalization\footnote{Strong normalization is a consequence of the explicit bound on recursion given by the subscript in the recursion constant.} and a normal form property. 
%

\chapter{A Monadic Framework for Interactive Realizability}
\newcommand\MM{\mon M}
\newcommand\IdM{\mon Id}
\newcommand\ExM{\mon Ex}
\newcommand\IR{\mon IR}

\def\monUnit{}
\def\monStar{}
\def\monMerge{}
\newcommand\setmonad[1]{%
  \def\mm##1{\lVert ##1 \rVert\ifthenelse{\equal{#1}{}}{}{_{#1}}}%
  \def\m##1{\lvert ##1 \rvert\ifthenelse{\equal{#1}{}}{}{_{#1}}}%
  \def\monTrans##1{\llbracket ##1 \rrbracket\optionalSubscript{#1}}%
  \def\monRe{\mathrel{\mon R}\ifthenelse{\equal{#1}{}}{}{_{#1}}}%
  \def\re{\mathrel{\mathtt R}\ifthenelse{\equal{#1}{}}{}{_{#1}}}%
  \def\T{T\optionalSubscript{#1}}%
  \renewcommand\monUnit[1][]{\operatorname{\mon{unit}}\optionalSubscript{#1}\optionalSuperscript{##1}}%
  \renewcommand\monStar[1][]{\operatorname{\mon{star}}\optionalSubscript{#1}\optionalSuperscript{##1}}%
  \renewcommand\monMerge[1][]{\operatorname{\mon{merge}}\optionalSubscript{#1}\optionalSuperscript{##1}}%
  \def\monSeq{\Vdash\optionalSubscript{#1}}
}

In this chapter we give a new presentation of interactive realizability with a more explicit syntax. 

Monads can be used to structure functional programs by providing a clean and modular way to include impure features in purely functional languages. 
We express interactive realizers by means of an abstract framework that applies the monadic approach used in functional programming to modified realizability, in order to obtain more ``relaxed'' realizability notions that are suitable to classical logic. 
In particular we use a combination of the state and exception monads in order to capture the learning-from-mistakes nature of interactive realizers at the syntactic level.

\section{Introduction}




\begin{omitted}
While constructive and in particular intuitionistic proofs have a computational content by design, this is not the case for proofs in classical logic. 
In general it is not possible to give a computational content to any classically provable statement: for instance, 
giving a computational interpretation of the classical law of the excluded middle that satisfies the BHK interpretation means being able to effectively decide for any statement whether it holds or not, which leads to contradictions in the case of many statements. 
On the other hand saying that classical logic has no computational content at all is unreasonable: 
for instance, given a classically provable statement we can say that its computational content is that of an intuitionistic proof of its double-negation translation. 
The problem with this approach is that the double-negation translation 
transforms informative statements into negative, non-informative ones.
Thus, relating the original statement with the computational content of its translation is not usually straightforward. 
\end{omitted}

As we have already remarked in the preface, the Curry-Howard correspondence was originally discovered for intuitionistic proofs. 
This is not coincidental: 
the type systems needed to interpret intuitionistic proofs are usually very simple and natural, as in the case of Heyting Arithmetic and System T (see \cite{girard88}). 
While classical proofs can be transformed into intuitionistic ones by means of the double-negation translation and then translated into typed programs, 
the existence of a direct correspondence was deemed unlikely until Griffin showed otherwise in \cite{griffin90}. 

Starting with Griffin's, other interpretations extending the Curry-Howard correspondence to classical logic have been put forward. 
Griffin uses a ``typed Idealized Scheme'' with the control construct {\sf call/cc}, that allows access to the current continuation. 
In \cite{parigot92}, Parigot introduces the \(\lambda\mu\)-calculus, an extension of lambda calculus with an additional kind of variables for subterms. 
In \cite{krivine94}, Krivine uses lambda calculus with a non-standard semantics, described by an abstract machine that allows the manipulation of ``stacks'', which can be thought of as execution contexts. 

All these different approaches seems to suggest that, in order to interpret classical logic, we need control operators or some syntactical equivalent thereof. 
This could be generalized in the idea that ``impure'' computational constructs are needed in order to interpret non-constructive proofs. 
Monads are a concept from category theory that has been widely used in computer science. 
In particular, they can be used to structure functional programs that mimic the effects of impure features. 

In \cite{moggi91}, Moggi advocates the use of monads as a framework to describe and study many different ``notions of computation'' in the context of categorical semantics of programming languages. 
A different take on the same idea that actually eschews category theory completely is suggested in \cite{wadler92} by Wadler: 
the definition of monad becomes purely syntactic and is used as a framework to structure functional programs by providing a clean and modular way to include impure features in purely functional languages (one noteworthy example is I/O in Haskell). 

The main idea of this work is to use monads as suggested by Wadler in order to structure programs extracted from classical proofs by interactive realizability. 
A program extracted by means interactive realizability, called interactive realizer, can be thought of as a learning process. 
It accumulate information in a knowledge state and use this knowledge in order to ``decide'' the instances of \(\EM\) used in the proof. 
Since these instances are in general undecidable, the realizer actually makes an ``educated guess'' about which side of an \(\EM\) instance is true by looking at the state. 
Such guesses can be wrong. 

This can become apparent later in the proof, when the guessed side of the \(\EM\) instance is used to deduce some decidable statement. 
If this decidable statement turns out to be false, then the guess was wrong and the proof cannot be completed. 
In this case the realizer cannot produce the evidence required for the final statement and fails. 
However failure is due to the fact that we made a wrong guess. 
We can add this information to the state, so that, using this new state, we will be able to guess the \(\EM\) instance correctly. 
At this point we discard the computation that occurred after the wrong guess and we resume from there. 
This time we guess correctly and can proceed until the end or until we fail again because we guessed incorrectly another \(\EM\) instance. 

There are three ``impure'' parts in the behavior we described: the dependency on the knowledge state, the possibility of failure to produce the intended result and the backtracking after the failure.
In this work we use a monadic approach to describe the first two parts which are peculiar to interactive realizability. 
We do not describe the third part, which is common also to the other interpretations of classical logic. 

\begin{omitted}
Interactive realizability is based on the idea of learning by trial and error. 
More precisely: the computation of an interactive realizer depends on a finite knowledge state. 
If the knowledge state contains enough information, the realizer, computed on such state, will behave according to BHK semantics and produce the intended result. 
If the state is too small the realizer will fail and yield the missing piece of knowledge instead. 
Thus we have the following learning process: we begin by computing a realizer starting from the empty state. 
If the realizer succeeds we are done, otherwise we fail and add the piece of information obtained from the failure of the realizer to the state and compute the realizer on the updated state. 
We carry on this procedure until the realizer succeeds. 
The main computational properties of interactive realizers are thus exceptions (since they can fail) and dependency on a state. 
\end{omitted}


This chapter is an account of interactive realizability where interactive realizers are encoded as \(\lambda\)-terms following the monadic approach to structuring functional programms suggested by Wadler. 
We shall prove that our presentation of interactive realizability is a sound semantics for \(\HA+\EM\). 

\subsection{Main Results}

In our presentation, interactive realizer are written in a simply typed \(\lambda\)-calculus with products, coproducts and natural numbers with course-of-value recursion, extended with some abstract terms to represent states and exceptions. 
The peculiar features of interactive realizability, namely the dependency on the knowledge state and the possibility of failure, are explicitly computed by the \(\lambda\)-terms encoding the realizers. 
Thus the computational behavior of interactive realizers is evident at the syntactic level, without the need for special semantics. 

While proving the soundness of \(\HA+\EM\) with respect to our definition of interactive realizability, we observed that the soundness of \(\HA\) did not require any assumption on the specific monad we chose to structure interactive realizers (while the soundness of \(\EM\) requires them as expected). 
Prompted by this discovery, we split the presentation in two parts. 

The former is an abstract monadic framework for producing realizability notions where the realizers are written in monadic style. 
We prove that \(\HA\) is sound with respect to any realizability semantics defined by the framework, for any monad. 

The latter is an application of this abstract framework to interactive realizability. 
We define a specific monad which we use to structure interactive realizers and a class of realizability semantics that

our definition of interactive realizability can be generalized to an abstract realizability notion where the monad is a parameter.
We call this family of realizability notions \emph{monadic realizability} and we show that all instances of monadic realizability are sound semantics for \(\HA\). 
Our interest is in concrete monadic realizability notions that can realize classical principles beyond \(\HA\). 
Another motivation is that in the proof of the soundness of \(\HA+\EM\) with respect to interactive realizability semantics, 
we take advantage of the special properties of the interactive realizability monad just to prove the soundness \(\EM\), while for the rest of \(\HA\) we just need some generic properties on the semantics which require no assumption on the monad we are considering.

\subsection{Related Works}

This work builds on the presentation of interactive realizability given in \cite{aschieriB10} by Aschieri and Berardi. 
The main contributions with respect to \cite{aschieriB10} is a more precise description of the computational behavior of interactive realizer. 
This is explained in more detail at the end of this chaper. 

Monads have first been used to describe interactive realizability by Berardi and de'Liguoro in \cite{berardidL10} and \cite{berardidL11}, where interactive realizers for \(\PRA+\EM\) are given a monadic categorical semantics following Moggi's approach. 
While our idea of using monad to describe interactive realizability was inspired by \cite{berardidL10}, our work is mostly unrelated: our use of monads follows Wadler's syntactical approach and we employ a different monad that emphasizes different aspects of interactive realizability. 

\section{Monadic Realizability} \label[section]{sec:monadic_realizability}
\setmonad\MM

This section contains the abstract part of our work. 
We describe the abstract framework of monadic realizability and show the soundness of \(\HA\) with respect to the semantics induced by a generic monad.

\subsection{The Monadic Realizability Semantics}

In this subsection we define monadic realizability. 
We state the properties that a suitable relation must satisfy in order to be called a monadic realizability relation and 
we show how such a relation induces a (monadic) realizability semantics.
Then we describe the proof decoration procedure to extract monadic realizers from proofs in \(\HA\). 
Here we are only concerned with proofs in \(\HA\), for a non-trivial example of a monadic realizability notion 
see interactive realizability in \cref{sec:monadic_interactive_realizability}. 

We start by introducing a syntactic translation of the concept of monad from category theory. 
Informally, a monad is an operator \(\T\) ``extending'' a type, with a canonical embedding from \(\ta\) to \(\T(\ta)\), a canonical way to lift\fixme{or extend?} a map from \(\ta\) to \(\T(\tb)\) to a map from \(\T(\ta)\) to \(\T(\tb)\), a canonical way of merging an element of \(\T(\ta)\) and an element of \(\T(\tb)\) into an element of \(\T(\ta \times \tb)\). We also requires some equations relating these canonical maps, equations which are often satisfied in the practice of programming. 
\begin{definition}[Syntactic Monad] \label[definition]{def:syntactic_monad} A \emph{syntactic monad} \(\MM\) is a tuple \( (\T\), \(\monUnit\), \(\monStar\), \(\monMerge) \) where \(\T\) is a type constructor, that is, a map from types to types, and, for any types \( \ta, \tb \), 
  \( \monUnit, \monStar \) and \( \monMerge \) are families (indexed by \(\ta\) and \(\tb\)) of closed terms: 
  \begin{align*} 
    \monUnit[\ta] &: \ta \tarrow \T\ta, \\ 
    \monStar[\ta,\tb] &: (\ta \tarrow \T\tb) \tarrow (\T\ta \tarrow \T\tb), \\ 
    \monMerge[\ta,\tb] &: \T\ta \tarrow \T\tb \tarrow \T(\ta \times \tb), 
  \end{align*} 
  satisfying the following properties: 
  \begin{align}
    \tag{M1} \label[property]{mon1} 
    \monStar[\ta,\ta] \monUnit[\ta] \mon\tta &\leadsto \mon\tta, \\ 
    \tag{M2} \label[property]{mon2} 
    \monStar[\ta,\tb] f (\monUnit[\ta] \tta) &\leadsto f \tta, \\ 
    \tag{M3} \label[property]{mon_merge1} 
    \monMerge[\ta,\tb] (\monUnit[\ta] \tta) (\monUnit[\ta] \ttb) &\leadsto \monUnit[\ta\times\tb] (\pair[\ta,\tb]\tta\ttb), 
  \end{align}
  for any \( \mon\tta : \T\ta \), \( f : \ta \tarrow \T\tb \), \( g : \tb \tarrow \T\tc \), \( \tta : \ta \) and \( \tta : \tb \). 
\end{definition} 
The terms \(\monUnit\) and \(\monStar\) and \cref{mon1,mon2} are a straightforward translation of the definition of Kleisli tripe in category theory, an equivalent way to describe a monad\footnotemark. 
\footnotetext{
  This part of the definition follows the one given by Wadler in \cite{wadler92}, with the difference that we replace the term \(\monBind\) with \(\monStar\), where:
  \[ \monBind[\ta,\tb] : \T\ta \tarrow (\ta \tarrow \T\tb) \tarrow \T\tb. \]
  Defining \(\monStar\) and \(\monBind\) in terms of each other is straightforward: 
  \begin{align*}
    \monBind[\ta,\tb] &\equiv 
    \abstr{\mon\tta}{\T\ta} \abstr{f}{\ta \tarrow \T\tb} \monStar f \mon\tta, \\ 
    \monStar[\ta,\tb] &\equiv 
    \abstr{f}{\ta \tarrow \T\tb} \abstr{\mon\tta}{\T\ta} \monBind \mon\tta f. 
  \end{align*}
  The term \(\monStar\) corresponds directly to the operator \(\_^*\) in the definition of Kleisli triple. 
}

Term \(\monMerge\) and \cref{mon_merge1} are connected to the definition of strong monad: \(\monMerge\) is the syntactical counterpart of the natural transformation \(\phi\), induced by the tensorial strength of the monad (see \cite{moggi91} for details). 
While \(\phi\) satisfies several other properties in \cite{moggi91}, \cref{mon_merge1} is the only one we need for our treatment. \note{\(\phi\) may be related to commutative strong monad}

\begin{example} \label{ex:identity_monad1}
  \setmonad\IdM 
  The simplest example of syntactic monad is the \emph{identity monad} \(\IdM\), defined as:  
  \begin{align*}
    \T\ta &\equiv \ta, & 
    \monUnit[\ta] &\equiv \abstr\tta\ta \tta, \\
    \monStar[\ta,\tb] &\equiv \abstr{f}{\ta \tarrow \tb} f, & 
    \monMerge[\ta,\tb] &\equiv \pair[\ta,\tb]. 
  \end{align*}
  This monad cannot describe any additional computational property besides the value a term reduces to. 
\end{example} 
\begin{example} 
  \setmonad\ExM 
  A simple but non-trivial example is the exception monad \(\ExM\).
  It describes computations which may either succeed and yield a (normal) value or fail and yield a description of the failure. 
  Consider the usual predecessor function \(\termname{pred}: \Nat \tarrow \Nat\) on the natural numbers: 
  since zero has no predecessor it is common to define \(\termname{pred} \num 0\) as zero. 
  Instead with \(\ExM\) we could have \(\termname{pred} \num 0\) fail and yield a string\footnote{assuming we had strings in our calculus} saying ``zero has no predecessor''. 

  Let \(\Ex\) be a new ground type and let \(\exmerge : \Ex \tarrow \Ex \tarrow \Ex\) be a new constant term. 
  We think terms of type \(\Ex\) as descriptions of failures and we call them \emph{exceptions}. 
  We think of \(\exmerge\) as an operation that merges the information of multiple exceptions when there are multiple failures in a computations. 
  Now we can define the syntactic monad \(\ExM\) as:
  \begin{align*}
    \T\ta &\equiv \ta + \Ex, \qquad 
    \monUnit[\ta] \equiv \abstr\tta\ta \inl[\ta,\Ex]\tta, \\
    \monStar[\ta,\tb] &\equiv 
    \abstr{f}{\ta \tarrow \tb + \Ex} \abstr\tta\ta \case[\ta,\Ex,\tb+\Ex] \tta f \inr[\tb,\Ex], \\  
    \monMerge[\ta,\tb] &\equiv 
    \abstr{\mon\tta}{\ta+\Ex} \abstr{\mon\ttb}{\tb+\Ex} \case[\ta,\Ex,(\ta\times\tb) + \Ex] \mon\tta \\
    &\phantomrel\equiv (\abstr\tta\ta \case[\tb,\Ex,(\ta\times\tb)+\Ex] \mon\ttb 
    (\abstr\ttb\tb \inl[\ta\times\tb,\Ex] (\pair[\ta,\tb] \tta \ttb)) \inr[\ta\times\tb,\Ex] ) \\
    &\phantomrel\equiv (\abstr{e_1}\Ex \case[\tb,\Ex,(\ta\times\tb)+\Ex] \mon\ttb 
    (\abstr\ttb\tb \inr[\ta\times\tb,\Ex] e_1) (\abstr{e_2}\Ex \inr[\ta\times\tb,\Ex] \exmerge e_1 e_2)) . 
  \end{align*}
  We omit the proof that \(\ExM\) is a syntactic monad. 
  \fixme{give a proof}
\end{example}

A \emph{realizability relation} is a binary relation between terms and closed formulas. 
When a term and a formula are in such a relation we shall say that the term \emph{realizes} the formula or that the term is a \emph{realizer} of the formula. 
The intended meaning is that a realizer of a formula is the computational content of a proof of the formula. 

We proceed towards the definition of a family of realizability relations, which we call \emph{monadic realizability relations}.
Any monadic realizability relation is given with respect to some monad \(\MM\) and determines a particular notion of realizability where realizers have the computational properties described by the monad. 
In the rest of this section we shall assume that \(\MM = (\T, \monUnit, \monStar, \monMerge)\) denotes any fixed syntactic monad. 

We now define the type of the monadic realizers of a formula. 
The idea is to take the standard definition of the type of intuitionistic realizers of a formula \(\fa\) and to apply \(\T\) only to the type \(\ta\) of the whole formula \(\fa\) and to the types appearing in \(\ta\) after an arrow, 
namely the types of consequents \(\fc\) of implication sub-formulas \(\fb \limply \fc\) in \(\fa\) and the types of bodies \(\fb\) of universal quantified sub-formulas \(\qforall\lva \fb\) in \(\fa\). 
This is the standard call-by-value way to treat arrow types in a monadic framework explained in \cite{wadler90}. 

\begin{definition}[Types for Monadic Realizers] \label[definition]{def:mon_types}
  We define two mappings \(\mm\cdot\) and \(\m\cdot\) from formulas to types by simultaneous recursion. 
  The first is the outer or monadic typing of a formula \(\fa\): 
  \[ \mm\fa = \T\m\fa, \] 
  and the latter is the inner typing, defined by induction on the structure of \(\fa\): 
  \begin{align*} 
    \m\lafa & = \Unit, &
    \m{\fb \land \fc} & = \m\fb \times \m\fc, \\ 
    \m{\fb \lor \fc} & = \m\fb + \m\fc, &
    \m{\qexists\lva \fb} & = \Nat \times \m\fb, \\ 
    \m{\fb \limply \fc} & = \m\fb \tarrow \mm\fc, &
    \m{\qforall\lva \fb} & = \Nat \tarrow \mm\fb, 
  \end{align*} 
  where \(\lafa\) is an atomic formula and \(\fa\) and \(\fb\) are any formulas. 
\end{definition} 
Note that, we consider \(\lfalse\) to be atomic and \(\lnot \fa\) to be a notation for \(\fa \limply \lfalse\), so the types of their realizers follow from the previous definition. 

As we defined two types for each formula \(\fa\), each formula has two possible realizers, one of type \(\m\fa\) and one of type \(\mm\fa\). 
The former will follow the BHK interpretation like an ordinary intuitionistic realizer 
while the latter will be able to take advantage of the computational properties given by the syntactic monad \(\MM\). 
A formula (in particular classical principles) may have a realizer of monadic type but no realizer of inner type. 

\begin{thesis}
\begin{remark}
  The definition of \(\mm\cdot\) and \(\m\cdot\) can be derived from the Curry-Howard correspondence between formulas and types and from a call-by-name monadic translation for types. 
  \newcommand\formType[1]{\lvert #1 \rvert}
  We define the standard interpretation \(\formType\cdot\) that maps a formula into the type of its realizers:
  \begin{align*}
    \formType\lafa &= \Unit, & 
    \formType{\fa \land \fb} &= \formType\fa \times \formType\fb, \\ 
    \formType{\fa \lor \fb} &= \formType\fa + \formType\fb, & 
    \formType{\fa \limply \fb} &= \formType\fa \tarrow \formType\fb,\\ 
    \formType{\qforall{x} \fa} &= \Nat \tarrow \formType\fa, & 
    \formType{\qexists{x} \fa} &= \Nat \times \formType\fa. 
  \end{align*}
  Next we define a translation \(\monTrans\cdot\) that lifts types to their monadic counterparts:
  \begin{align*}
    \monTrans{\ta_0} &\equiv \ta_0, &
    \monTrans{\ta\tarrow\tb} &\equiv \monTrans\ta \tarrow \T\monTrans\tb, \\
    \monTrans{\ta\times\tb} &\equiv \monTrans\ta \times \monTrans\tb, & 
    \monTrans{\ta+\tb} &\equiv \monTrans\ta + \monTrans\tb,  
  \end{align*}
  where \(\ta_0\) is a ground type. 
  The first two clauses are taken from \cite{wadler94} and the other ones are a simple extension, based on the idea that products and sums behave like ground types. 

  By composition we can define the types for the monadic realizers of a formula: 
  \[
    \m\fa \equiv \monTrans{\formType\fa}, \qquad
    \mm\fa \equiv \T\m\fa.
  \]
  Expanding the definitions we get :
  \begin{align*}
    \m\lafa &= \Unit, \\
    \m{\fa\land\fb} &= \monTrans{\formType\fa} \times \monTrans{\formType\fb} = \m\fa \times \m\fb, \\
    \m{\fa\lor\fb} &= \monTrans{\formType\fa} + \monTrans{\formType\fb} = \m\fa + \m\fb, \\
    \m{\fa\limply\fb} &= \monTrans{\formType\fa} \tarrow \T\monTrans{\formType\fb} = \m\fa \tarrow \T\m\fb, \\ 
    \m{\qforall{x} \fa} &= \monTrans\Nat \tarrow \T\monTrans{\formType\fb} = \Nat \tarrow \T\m\fa, \\ 
    \m{\qexists{x} \fa} &= \monTrans\Nat \times \monTrans{\formType\fb} = \Nat \times \m\fa. 
  \end{align*}
  This is the same translation we described in \cref{def:mon_types}. 
\end{remark}
\end{thesis}

We shall now state the requirements for a realizability relation to be a monadic realizability relation. 
A realizability relation is to be thought of as the restriction of the realizability semantics to closed formulas, 
that is, a relation between terms of \(T'\) and closed formulas which holds when a term is a realizer of the formula. 
Since a formula can have realizers of inner and outer type, in the following definition two realizability relations will appear: \(\re\) for realizers of inner type, whose definition is modeled after the BHK interpretation and \(\monRe\) for the realizers of outer type, which takes in consideration the computational properties of the monad \(\MM\). 

As a typographical convention we shall use the letters \(\rr\), \(\ra\) and \(\rb\) for terms of type \(\m\fa\). 
Similarly we shall use \(\mon\rr\), \(\mon\ra\) and \(\mon\rb\) for terms of type \(\mm\fa\). 
\begin{definition}[Monadic Realizability Relation] \label[definition]{def:monadic_realizability_relation}
  Let \(\monRe\) be a realizability relation between terms of type \(\mm\fa\) and closed formulas \(\fa\). 
  Let \(\re\) be another realizability relation between terms of type \(\m\fa\) and closed formulas \(\fa\), such that
  \begin{itemize}
    \item \label[property]{re_atomic} \( \rr \re \lafa \) iff 
      \( \rr \leadsto \unit\) and \(\lafa\) is true, 
    \item \label[property]{re_and} \( \rr \re \fb \land \fc \) iff 
      \(\prl \rr \re \fb \) and \(\prr r \re \fc \), 
    \item \label[property]{re_or} \( \rr \re \fb \lor \fc \) iff 
      \(\rr \leadsto \inl a \) and \( a \re \fb \) or 
      \(r \leadsto \inr b \) and \( b \re \fc \), 
    \item \label[property]{re_imply} \( \rr \re \fb \limply \fc \) iff 
      \( \rr \ra \monRe \fc \) for all \(\ra : \m\fb\) such that \( \ra \re \fb \), 
    \item \label[property]{re_forall} \( \rr \re \qforall{x} \fb \) iff 
      \(\rr \num n \monRe \fb\subst{x}{\num n} \) for all natural numbers \(n\), 
    \item \label[property]{re_exists} \( \rr \re \qexists{x} \fb \) iff 
      \(\prr \rr \re \fb\subst{x}{\prl \rr} \), 
  \end{itemize} 
  where \(\lafa\) is a closed atomic formula and \(\fb\) and \(\fc\) are generic formulas. 
  We consider \(\lfalse\) a closed atomic formula which is never true (for instance \(0=1\)).
  We shall say that the pair \( (\monRe, \re) \) is a \emph{monadic realizability relation} if the following properties are satisfied: 
  \begin{enumerate}[label=MR\arabic*]
    \item \label[property]{real1} 
      if \( \rr \re \fa \) then \( \monUnit \rr \monRe \fa \), 
    \item \label[property]{real2} 
      if \( \rr \re \fb \limply \fc \)
      then \( \monStar \rr \mon\ra \monRe \fc \) for all \( \mon\ra : \mm\fb \) such that \( \mon\ra \monRe \fb \), 
    \item \label[property]{real3} 
      if \( \mon\ra \monRe \fb \) and \( \mon\rb \monRe \fc \) 
      then \( \monMerge \mon\ra \mon\rb \monRe \fb \land \fc \). 
  \end{enumerate} 
  We will say that a term \(\rr\) (resp. \(\mon\rr\)) is an \emph{inner} (resp. \emph{outer} or \emph{monadic}) \emph{realizer} of a formula \(\fa\) if \(\rr : \m\fa\) (resp. \(\rr : \mm\fa\)) and \(\rr \re \fa\) (resp. \(\mon\rr \monRe \fa\)). 
\end{definition}
When defining a concrete monadic realizability relation, it is often convenient to define \(\monRe\) in terms of \(\re\) too, that is, the two relations will be defined by simultaneous recursion in terms of each other. 

Note how the properties of the relation \(\re\) resemble the clauses the definition of standard modified realizability. 
The main difference is that in the functional cases, those of implication and universal quantification, \(\re\) is not defined in terms of itself but uses \(\monRe\). 
This makes apparent our claim that the behavior of inner realizers is closely related to the BHK interpretation. 

\Cref{real1} is a constraint on the relationship between \(\monRe\) and \(\re\). 
It requires \(\monUnit\) to transform inner realizers into monadic realizers, 
which can be thought as the fact that realizers satisfying the BHK interpretation are acceptable monadic realizers. 
\Cref{real2} again links \(\re\) and \(\monRe\), this time through \(\monStar\). 
It says that, if we have a term that maps inner realizers into monadic realizers, its lifting by means of \(\monStar\) maps monadic realizers into monadic realizers. 
\Cref{real3} is a compatibility condition between \(\monMerge\) and \(\monRe\). 
These conditions are all we shall need in order to show that any monadic realizability relation determines a sound semantics for \(\HA\). 
Later we shall see how particular instances of monadic realizability 
can produce a sound semantics for more than just \(\HA\). 

\begin{example}
  \setmonad\IdM
  We continue our example with the identity monad \(\IdM\) by defining a monadic realizability relation. 
  We define \(\monRe\) and \(\re\) by simultaneous recursion, with \(\re\) defined in terms of \(\monRe\) as in \cref{def:monadic_realizability_relation} and \(\monRe\) defined as \(\re\), which makes sense since \( \mm\fa = \m\fa\). 
\end{example}

We can now define the monadic realizability semantics for a given monadic realizability relation, that is, we say when a realizer validates a sequent where a formula can be open and depend on assumptions. 
In order to do this we need a notation for a formula in a context, which we call \emph{decorated sequent}. 
A decorated sequent has the form \( \Gamma \monSeq \rr : \fa \) where \(\fa\) is a formula, \(\rr\) is a term of type \(\mm\fa\) and 
\(\Gamma\) is the context, namely, a list of assumptions written as \( \alpha_1 : \fa_1, \dotsc \alpha_k : \fa_k \) where \(\fa_1, \dotsc, \fa_k\) are formulas and \(\alpha_1, \dotsc, \alpha_k\) are proof variables that label each assumption, that is, they are variables of type \( \m{\fa_1}, \dotsc, \m{\fa_k} \). 
As we did with the syntactic monad \(\MM\), in the following we shall assume to be working with a fixed generic monadic realizability relation \(\monRe\). 
\begin{definition}[Monadic Realizability Semantics] \label[definition]{def:mon_sem}
  Consider a decorated sequent: 
  \[ \alpha_1 : \fa_1, \dotsc, \alpha_k : \fa_k \monSeq \mon\rr : \fb, \] 
  such that the free variables of \(\fb\) are \( x_1, \dotsc, x_l \) and the free variables of \(\mon\rr\) are either in \( x_1, \dotsc, x_l \) or in \( \alpha_1, \dotsc, \alpha_k \). 
  We say that the sequent is valid if and only if 
  for all natural numbers \(n_1, \dotsc, n_l\) and for all inner realizers \( \ra_1 : \m{\fa_1}, \dotsc, \ra_k : \m{\fa_k} \) such that 
  \[ \ra_1 \re \fa_1[x_1 \coloneqq \num{n}_1, \dotsc, x_l \coloneqq \num{n}_l] \qquad \dotso \qquad \ra_k \re \fa_k[x_1 \coloneqq \num{n}_1, \dotsc, x_l \coloneqq \num{n}_l], \] 
  we have that 
  \[ \mon\rr[x_1 \coloneqq \num{n}_1, \dotsc, x_l \coloneqq \num{n}_l, \alpha_1 \coloneqq \ra_1, \dotsc, \alpha_k \coloneqq \ra_k] \monRe A[x_1 \coloneqq \num{n}_1, \dotsc, x_l \coloneqq \num{n}_l]. \] 
\end{definition} 

\begin{example}\setmonad\IdM
  From \cref{def:mon_sem}, it follows that the semantics induced by the monadic realizability relation \(\monRe\) is exactly the standard semantics of modified realizability.  
\end{example}


Now that we have defined our semantics, we can illustrate the method to extract monadic realizers from proofs in \(\HA\). Later we shall show how to extend our proof extraction technique to \(\HA+\EM\). 
Since proof in \(\HA\) are constructive, the monadic realizers obtained from them behave much like their counterparts in standard modified realizability and comply with the BHK interpretation. 
In \cref{sec:monadic_interactive_realizability} we shall show how to extend the proof decoration to non constructive proofs by exhibiting a monadic realizer of \(\EM\) that truly takes advantage of monadic realizability since it does not act accordingly to the BHK interpretation. 

In order to build monadic realizers of proofs in \(\HA\) 
we need a generalization of \(\monStar\) that works for functions of more than one argument. 
We can build it using \(\monMerge\) to pack realizers together. 
Thus let
\begin{equation*} 
  \monStarN[\ta_1,\dotsc,\ta_k,\tb]{k} : (\ta_1 \tarrow \dotsb \tarrow \ta_k \tarrow \T\tb) \tarrow (\T\ta_1 \tarrow \dotsb \tarrow \T\ta_k \tarrow \T\tb), 
\end{equation*}
be a family of terms defined by induction on \( k \ge 0 \): 
\begin{gather*}
  \monStarN[\tb]{0}  \equiv \abstr{f}{\T\tb} f, \qquad 
  \monStarN[\ta,\tb]{1}  \equiv \monStar[\ta,\tb], \\ 
  \monStarN{k+2} \equiv \abstr{f}{\ta_1 \tarrow \dotsb \tarrow \ta_{k+2} \tarrow \T\tb} \abstr{x}{\T\ta_1} \abstr{y}{\T\ta_2} \monStarN{k+1} (\abstr{z}{\ta_1 \times \ta_2} f (\prl z) (\prr z) ) (\monMerge x y). 
\end{gather*} 
For instance: 
\[ \monStarN{2} \equiv \abstr{f}{\ta \tarrow \tb \tarrow \T\tc} \abstr{x}{\T\ta} \abstr{y}{\T\tb} \monStar (\abstr{z}{\ta \times \tb} f (\prl z) (\prr z)) (\monMerge x y) \] 
\begin{omitted}
  An alternative definition of \(\monStarN{k}\) that does not use \(\monMerge\): 
  \[ \monStarN{k} \equiv \abstr{f}{\ta_1 \tarrow \dotsb \tarrow \ta_k \tarrow \T\tb}
    \abstr{\mon\tta_1}{\T\ta_1} \dotsc \abstr{\mon\tta_k}{\T\ta_k} 
  \monStar (\abstr{\tta_1}\ta \monStarN{k-1} (f \tta_1) \mon\tta_2 \dotsm \mon\tta_k ) \mon \tta_1\] 
\end{omitted}

Moreover we shall need to ``raise'' the return value of a term \( f : \ta_1 \tarrow \dotsb \tarrow \ta_k \tarrow \tb \) with \(\monUnit\) before we apply \(\monStarN{k}\).
We define the family of terms \(\monRaiseN{k}\) by means of \(\monStarN{k}\), for any \(k \ge 0\):
\begin{align*}
  \monRaiseN{k} &: (\ta_1 \tarrow \dotsb \tarrow \ta_k \tarrow \tb) \tarrow (\T\ta_1 \tarrow \dotsb \tarrow \T\ta_k \tarrow \T\tb) \\ 
  \monRaiseN{k} &\equiv \abstr{f}{\ta_1 \tarrow \dotsb \tarrow \ta_k \tarrow \tc} \monStarN{k} 
  (\abstr{x_1}{\ta_1} \dotsm \abstr{x_k}{\ta_k} \monUnit (f x_1 \dotsm x_k)), 
\end{align*}

Now we can show how to extract a monadic realizer from a proof in \(\HA\). 
Let \(\mathcal{D}\) be a derivation of some formula \(\fa\) in \(\HA\), that is, a derivation ending with \( \Gamma \seq \fa \). 
We produce a decorated derivation by replacing each rule instance in \(\mathcal{D}\) with the suitable instance of the decorated version of the same rule given in \cref{fig:monadic_decorated_rules}. 
These decorated rules differ from the previous version in that they replace sequents with decorated sequents, that is, they bind a term to each formula, 
where the term bound to the conclusion of a rule is build from the terms bound to the premises. 
Thus we have defined a term by structural induction on the derivation: if the conclusion of the decorated derivation is \( \Gamma \monSeq \mon\rr : \fa \) then we set \( \mathcal{D}^* \equiv \mon\rr \). 

\begin{figure}[!ht]
  \begin{mdframed}
  \caption{\(\HA\) rules, decorated with monadic realizers.} \label{fig:monadic_decorated_rules}
  \begin{gather*}
      \renewcommand\arraystretch{2}
      \newcommand\wline[1]{\multicolumn{2}{c}{#1} \\ }
      \PrAx{} 
      \PrLbl{\text{Id}} 
      \PrUn{\Gamma \monSeq \monRaiseN{0} \lva : \fa} 
      \DisplayProof \qquad 
      \PrAx{\Gamma \monSeq \mon\rr_1 : \lafa_1} 
      \PrAx\dotso 
      \PrAx{\Gamma \monSeq \mon\rr_l : \lafa_l} 
      \PrLbl{\text{Atm}}
      \PrTri{\Gamma \monSeq \monRaiseN{l} (\abstr{\gamma_1}\Unit \dotsm \abstr{\gamma_l}\Unit \unit) \mon\rr_1 \dotsm \mon\rr_l : \lafa} 
      \DisplayProof \\ 
        \PrAx{\Gamma \monSeq \mon\rr_1 : \fa} 
        \PrAx{\Gamma \monSeq \mon\rr_2 : \fb} 
        \PrLbl{\RuleNameI\land} 
        \PrBin{\Gamma \monSeq \monRaiseN{2} \pair \mon\rr_1 \mon\rr_2 : \fa \land \fb} 
        \DisplayProof \\
        \PrAx{\Gamma \monSeq \mon\rr : \fa \land \fb} 
        \PrLbl{\RuleName[L]\land{E}}
        \PrUn{\Gamma \monSeq \monRaiseN{1} \prl \mon\rr : \fa} 
        \DisplayProof \qquad 
        \PrAx{\Gamma \monSeq \mon\rr : \fa \land \fb} 
        \PrLbl{\RuleName[R]\land{E}}
        \PrUn{\Gamma \monSeq \monRaiseN{1} \prr \mon\rr : \fb} 
        \DisplayProof \\ 
      \PrAx{\Gamma \monSeq \mon\rr_1 : \fa} 
      \PrLbl{\RuleName[R]\lor{I}} 
      \PrUn{\Gamma \monSeq \monRaiseN{1} \inl \mon\rr_1 : \fa \lor \fb} 
      \DisplayProof \qquad 
      \PrAx{\Gamma \monSeq \mon\rr_2 : \fb} 
      \PrLbl{\RuleName[L]\lor{I}} 
      \PrUn{\Gamma \monSeq \monRaiseN{1} \inr \mon\rr_2 : \fa \lor \fb} 
      \DisplayProof \\ 
        \PrAx{\Gamma \monSeq \mon\rr : \fa \lor \fb } 
        \PrAx{\Gamma, \alpha_{k+1} : \fa \monSeq \mon\ra : \fc} 
        \PrAx{\Gamma, \alpha_{k+1} : \fb \monSeq \mon\rb : \fc} 
        \PrLbl{\RuleNameE\lor} 
        \PrTri{\Gamma \monSeq \monStarN{1} (\abstr\gamma{\m\fa + \m\fb} \case \gamma (\abstr{\alpha_{k+1}}{\m\fa} \mon\ra) (\abstr{\alpha_{k+1}}{\m\fb} \mon\rb)) \mon\rr : \fc} 
        \DisplayProof \\ 
      \PrAx{\Gamma, \alpha_{k+1} : \fa \monSeq \mon\rr : \fb} 
      \PrLbl{\RuleNameI\limply}
      \PrUn{\Gamma \monSeq \monRaiseN{0} (\abstr{\alpha_{k+1}}{\m\fa} \mon\rr) : \fa \limply \fb} 
      \DisplayProof \quad 
      \PrAx{\Gamma \monSeq \mon\rr : \fa \limply \fb} 
      \PrAx{\Gamma \monSeq \mon\ra : \fa} 
      \PrLbl{\RuleNameE\limply}
      \PrBin{\Gamma \monSeq \monStarN{2} (\abstr{\gamma_1}{\m\fa \tarrow \m\fb} \abstr{\gamma_2}{\m\fa} \gamma_1 \gamma_2) \mon\rr \mon\ra : \fb} 
      \DisplayProof \\ 
      \PrAx{\Gamma \monSeq \mon\rr : \fa} 
      \PrLbl{\RuleNameI\forall}
      \PrUn{\Gamma \monSeq \monRaiseN{0} (\abstr\lva\Nat \mon\rr) : \qforall\lva \fa} 
      \DisplayProof \qquad 
      \PrAx{\Gamma \monSeq \mon\rr : \qforall\lva \fa} 
      \PrLbl{\RuleNameE\forall}
      \PrUn{\Gamma \monSeq (\monStarN{1} (\abstr\gamma{\Nat \tarrow \mm\fa} \gamma \lta)) \mon\rr : \fa\subst\lva\lta} 
      \DisplayProof \\ 
        \PrAx{\Gamma \monSeq \mon\rr : \fa\subst\lva\lta} 
        \PrLbl{\RuleNameI\exists}
        \PrUn{\Gamma \monSeq \monRaiseN{1} (\abstr\gamma{\m\fa} \pair\lta\gamma) \mon\rr : \qexists\lva \fa} 
        \DisplayProof \\ 
        \PrAx{\Gamma \monSeq \rr_1 : \qexists\lva \fa} 
        \PrAx{\Gamma, \alpha : \fa\subst\lva\lvb \monSeq \rr_2 : \fc} 
        \PrLbl{\RuleNameE\exists}
        \PrBin{\Gamma \monSeq \monStarN{1} (\abstr\gamma{\Nat \times \m\fa} (\abstr{y}\Nat \abstr\alpha{\m\fa} r_2)(\prl \gamma)(\prr \gamma)) \rr_1 : \fc} 
        \DisplayProof \\ 
        \PrAx{\Gamma, \alpha_{k+1} : \qforall\lvc \lvc<\lvb \limply \fa\subst\lva\lvc \monSeq \rr : \fa\subst\lva\lvb} 
        \PrLbl{\text{Ind}}
        \PrUn{\Gamma \monSeq \monRaiseN{0} (\totRec\infty f) : \qforall\lva \fa} 
        \DisplayProof 
    \end{gather*}
    \begin{legend}
    where all formulas in rule Atm are atomic, \(\lta\) is any term and
    \(f\) is defined as follows: 
    \[ f \equiv \abstr{y}\Nat \abstr\beta{\Nat \tarrow \T\m\fa} 
      (\abstr\alpha{\Nat \tarrow \T(\Unit \tarrow \T\m\fa)} r) 
      (\abstr{z}\Nat \monRaiseN{0} 
    (\abstr{\_}\Unit \beta z)), \] 
    with \(\beta\) not free in \(r\). 
    \end{legend}
  \end{mdframed}
\end{figure}

In \cref{fig:monadic_decorated_rules},
the rule labeled Atm shows how to decorate any atomic rule of \(\HA\). 
By definition unfolding, we may check that an atomic rule is interpreted as a kind of ``merging'' of the information associated to each premise. 
The nature of the merging depends on the monad we choose.

\begin{remark}\allowdisplaybreaks[1]
  In \cref{fig:monadic_decorated_rules}, we wrote all realizers using only \(\monRaiseN k\) and \(\monStarN k\) for the sake of consistency, 
  but note that \(\monRaiseN 0\) could have been replaced by \(\monUnit\) since it reduces to it:
  \begin{align*}
    \monRaiseN{0}
    &\equiv_{\monRaiseN{0}} \abstr{f}\tc \monStarN{0} (\monUnit f) \\ 
    & \equiv_{\monStarN{0}} \abstr{f}\tc (\abstr{f}{\T\tc} f) (\monUnit f) \\ 
    & \reducesto_\beta \abstr{f}\tc \monUnit f \\ 
    & =_\eta \monUnit 
  \end{align*}
  Moreover \(\monRaiseN 2 \pair\) reduces to \(\monMerge\):
  \begin{align*} 
    \monRaiseN{2} \pair &\equiv_{\monRaiseN 2}
    (\abstr{f}{\ta \tarrow \tb \tarrow \ta \times \tb} \monStarN{2} (\abstr\lva\ta \abstr\lvb\tb \monUnit (f \lva \lvb))) \pair \\ 
    &\leadsto_\beta \monStarN{2} (\abstr\lva\ta \abstr\lvb\tb \monUnit (\pair \lva \lvb)) \\ 
    &\equiv_{\monStarN 2} (\abstr{f}{\ta \tarrow \tb \tarrow \T(\ta \times \tb)} \abstr\lva{\T\ta} \abstr\lvb{\T\tb}  \\
    &\phantomrel{\equiv_{\monStarN 2}}
    \monStarN{k} (\abstr\lvc{\ta \times \tb} f (\prl \lvc) (\prr \lvc) ) (\monMerge \lva \lvb)) (\abstr\lva\ta \abstr\lvb\tb \monUnit \pair\lva\lvb) \\ 
    &\leadsto_\beta \abstr\lva{\T\ta} \abstr\lvb{\T\tb} \monStar (\abstr\lvc{\ta \times \tb} (\abstr\lva\ta \abstr\lvb\tb \monUnit \pair\lva\lvb) (\prl \lvc) (\prr \lvc) ) (\monMerge \lva \lvb) \\ 
    &\leadsto_\beta \abstr\lva{\T\ta} \abstr\lvb{\T\tb} \monStar (\abstr\lvc{\ta \times \tb} \monUnit \pair{\prl \lvc}{\prr \lvc}) (\monMerge \lva \lvb) \\ 
    &=_\times \abstr\lva{\T\ta} \abstr\lvb{\T\tb} \monStar (\abstr\lvc{\ta \times \tb} \monUnit \lvc) (\monMerge \lva \lvb) \\ 
    &=_\eta \abstr\lva{\T\ta} \abstr\lvb{\T\tb} \monStar \monUnit (\monMerge \lva \lvb) \\ 
    &\leadsto_{\ref{mon2}} \abstr\lva{\T\ta} \abstr\lvb{\T\tb} \monMerge \lva \lvb \\ 
    &=_\eta \monMerge, 
  \end{align*} 
  so we could replace it in \(\RuleNameI\land\). 
\end{remark}

Note how the monadic realizer of each rule is obtained by lifting the suitable term in the corresponding standard modified realizer with \(\monStarN k\) or \(\monRaiseN k\). 
These monadic realizers do not take advantages of particular monadic features (it cannot be otherwise since we have made no assumption on the syntactic monad or the monadic realizability relation). 
The main difference is that they can act as ``glue'' between ``true'' monadic realizers of non constructive axioms and rules, for instance the one we shall build in \cref{sec:monadic_interactive_realizability}. 

Here we can see that monadic realizability generalizes intuitionistic realizability: decorated rules in \cref{fig:monadic_decorated_rules} reduce to the standard decorated rules for intuitionistic modified realizability in the case of the identity monad \(\IdM\). 

\begin{omitted}
  \begin{proposition}[FIXME] 
    Assume that \( \alpha_1 : A_1, \dotsc, \alpha_k : A_k \re r : \fc \) is provable by the monadic rules and there are terms \( t_1, \dotsc, t_k \) whose types are respectively \( \mm{A_1}, \dotsc, \mm{A_k} \). 
    Then \(r[\alpha_1 \coloneqq t_1, \dotsc, \alpha_k \coloneqq t_k] \) has type \( \mm\fc \). 
  \end{proposition}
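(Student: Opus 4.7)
The plan is to proceed by structural induction on the decorated derivation witnessing $\alpha_1 : A_1, \dotsc, \alpha_k : A_k \monSeq r : \fc$. The underlying statement is essentially a type-preservation lemma for the decorated rules of \cref{fig:monadic_decorated_rules}: each such rule is designed so that, whenever its premises carry terms of the correct monadic types, the term it constructs in the conclusion has type $\mm\fc$. Once this is established, the conclusion follows by the standard substitution lemma for the simply typed $\lambda$-calculus, noting that $\alpha_i$ occurs in $r$ only at positions of type $\m{A_i}$, so substituting a term of the matching type preserves typability.

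First I would isolate a preparatory lemma that records the types of the auxiliary combinators used throughout the decoration: $\monUnit[\ta] : \ta \tarrow \T\ta$, $\monStar[\ta,\tb] : (\ta \tarrow \T\tb) \tarrow \T\ta \tarrow \T\tb$, $\monMerge[\ta,\tb] : \T\ta \tarrow \T\tb \tarrow \T(\ta \times \tb)$, and the derived families $\monStarN{k}$ and $\monRaiseN{k}$ whose types are immediate from the inductive definition. Given these, the base case Id reduces to observing that $\monRaiseN{0} \alpha_i \leadsto \monUnit \alpha_i : \T\m{A_i} = \mm{A_i}$. The case Atm and the introduction/elimination rules for $\land$, $\lor$, $\limply$, $\forall$, $\exists$ are each a mechanical verification: in every case the decorated realizer is a composition of one of $\monStarN{k}$ or $\monRaiseN{k}$ with a pure $\lambda$-term whose type signature matches exactly the types of the premise realizers dictated by \cref{def:mon_types}.

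The main step where I expect the proof to require genuine care is the induction rule, because its decoration involves the complete-recursion combinator $\totRec{\infty}$ acting on the compound higher-order term
\[
  f \equiv \abstr{y}\Nat \abstr\beta{\Nat \tarrow \T\m\fa} (\abstr\alpha{\Nat \tarrow \T(\Unit \tarrow \T\m\fa)} r)(\abstr{z}\Nat \monRaiseN{0}(\abstr{\_}\Unit \beta z)).
\]
One must unwind the types of $f$ so that $\totRec{\infty} f : \Nat \tarrow \T\m\fa$, and then check that $\monRaiseN{0}(\totRec{\infty} f)$ lands in $\T(\Nat \tarrow \T\m\fa) = \mm{\qforall\lva \fa}$. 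This requires a small but explicit computation with the typing of $\totRec{}$, of $\monRaiseN{0}$, and of the bound assumption encoding the complete-induction hypothesis. The $\RuleNameE\exists$ case also deserves attention because of the variable-binding conventions, but it is handled uniformly using the substitution lemma and the typing of $\monStarN{1}$.

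Finally, with the typing of every decorated rule established, the substitution $[\alpha_i \coloneqq t_i]$ commutes with the rule-by-rule construction of $r$: since each $\alpha_i$ appears in $r$ only at positions expecting a term of type $\m{A_i}$, substituting terms of the matching type yields a well-typed term of the expected monadic type $\mm\fc$. The only genuinely new ingredient beyond the routine induction is therefore the careful typing of the induction-rule decoration; everything else consists in reading off signatures from \cref{def:syntactic_monad} and the derived definitions of $\monStarN{k}$ and $\monRaiseN{k}$.
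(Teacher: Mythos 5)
Your overall plan --- structural induction on the decorated derivation, checking that each rule of \cref{fig:monadic_decorated_rules} produces a term of the correct monadic type from well-typed premises, with the auxiliary typings of $\monStarN{k}$ and $\monRaiseN{k}$ factored out --- is exactly what the paper intends: its own (fragmentary) proof consists of nothing but the two type-annotated terms for the $\RuleNameE\exists$ and Ind cases, precisely the two you single out as requiring genuine care.

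There is, however, one point where your argument as written would fail, and it is very likely the reason the proposition carries its FIXME tag. By \cref{def:mon_sem} the proof variables $\alpha_i$ have the \emph{inner} types $\m{A_1}, \dotsc, \m{A_k}$, and every decorated rule uses them at that type (for instance the Id rule forms $\monRaiseN{0}\,\alpha_i$, which is well typed only if $\alpha_i : \m{A_i}$, and $\RuleNameI\limply$ abstracts $\alpha_{k+1}$ at type $\m\fa$). The statement, on the other hand, supplies terms $t_i$ of the \emph{outer} types $\mm{A_i} = \T\m{A_i}$. Your closing step --- ``substituting terms of the matching type yields a well-typed term'' --- therefore does not apply, because the types do not match. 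Concretely, for $r \equiv \monRaiseN{0}\,\alpha_1$ one gets $r[\alpha_1 \coloneqq t_1] \equiv \monRaiseN{0}\,t_1 : \T\T\m{A_1}$, which is not $\mm{A_1}$. Either the hypothesis should read $t_i : \m{A_i}$ (matching the substitution actually performed in \cref{def:mon_sem} and in the soundness proof of \cref{thm:ha_soundness}), in which case your induction goes through verbatim, or the substitution of outer-typed terms must be mediated by $\monStar$ rather than carried out literally. You should flag and resolve this mismatch rather than silently assume the types agree.
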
 
  \begin{proof} 
    \begin{equation*} 
      \monStarN{1} (\overbrace{\abstr\gamma{\Nat \times \m\fa} \underbrace{(\overbrace{\abstr{y}\Nat \abstr\alpha{\m\fa} r_2}^{\Nat \tarrow \m\fa \tarrow \T\m\fc})(\overbrace{\prl \gamma}^\Nat)(\overbrace{\prr \gamma}^{\m\fa})}_{\T\m\fc}}^{\Nat \times \m\fa \tarrow \T\m\fc}) r_1 
    \end{equation*} 
    \[ f \equiv \abstr{y}\Nat \abstr\beta{\Nat \tarrow \T\m\fa} 
      (\abstr\alpha{\Nat \tarrow \T(\Unit \tarrow \T\m\fa)} r) 
      (\abstr{z}\Nat \monRaiseN{0} 
    (\abstr\_\Unit \beta z)) \] 
  \end{proof} 
\end{omitted}

\subsection{The Soundness Theorem}

In this subsection we prove that \(\HA\) is sound with respect to the monadic realizability semantics given in \cref{def:mon_sem}. 
This amounts to say that we can use proof decoration to extract, from any proof in \(\HA\), a monadic realizer that makes its conclusion valid. 
We prove this for a generic monad, which means that the soundness of \(\HA\) does not depend on the special properties of any specific monad. 
The proof only needs the simple properties we have requested in \cref{def:monadic_realizability_relation}. 

Before proving the main result, we need to show that \(\monStarN{k}\) and \(\monRaiseN{k}\) satisfy a generalization of \cref{real2}. 
\begin{proposition}[Monadic Realizability Property for \(\monStarN{k}\)] \label{thm:starN} 
  Let \( \fa_1, \dotsc, \fa_k \) and \(\fb\) be any formulas and let \( \rr : \m{\fa_1} \tarrow \dotsb \tarrow \m{\fa_k} \tarrow \mm\fb \) be a term. 
  Assume that, for all terms \( \ra_1 : \m{\fa_1}, \dotsc, \ra_k : \m{\fa_k} \) such that \( \ra_1 \re \fa_1, \dotsc, \ra_k \re \fa_k \), we have: 
  \[ \rr \ra_1 \dotsm \ra_k \monRe \fb. \] 
  Then, for all terms \( \mon\ra_1 : \mm{\fa_1}, \dotsc, \mon\ra_k : \mm{\fa_k} \) such that \( \mon\ra_1 \monRe \fa_1, \dotsc, \mon\ra_k \monRe \fa_k \), we have: 
  \[ \monStarN{k} \rr \mon\ra_1 \dotsm \mon\ra_k \monRe \fb. \] 
\end{proposition}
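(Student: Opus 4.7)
The plan is to prove the statement by induction on $k$, using \cref{real2,real3} (i.e.\ the monadic compatibility of $\monStar$ and $\monMerge$) together with the clause defining $\re$ for implication, plus the definition of $\monStarN{k}$ and, implicitly, closure of $\monRe$ under reduction (so that $\monStarN{k} r \mon\ra_1 \dotsm \mon\ra_k$ can be rewritten via its defining reductions before checking realizability).

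For the base case $k = 0$, the term $\monStarN{0} r$ reduces to $r$, and the hypothesis in the $k=0$ case is exactly $r \monRe \fb$, so the conclusion is immediate. For $k = 1$, note that the hypothesis states that for every $\ra_1 : \m{\fa_1}$ with $\ra_1 \re \fa_1$ we have $r \ra_1 \monRe \fb$; by the clause of $\re$ for implication this is exactly $r \re \fa_1 \limply \fb$. Since $\monStarN{1} \equiv \monStar$, applying \cref{real2} to $r$ yields $\monStar r \mon\ra_1 \monRe \fb$ whenever $\mon\ra_1 \monRe \fa_1$, which is the desired conclusion.

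For the inductive step, assume the statement holds for $k+1$ and prove it for $k+2$. Unfolding the definition, $\monStarN{k+2} r \mon\ra_1 \mon\ra_2 \mon\ra_3 \dotsm \mon\ra_{k+2}$ reduces to
\[
  \monStarN{k+1} r' (\monMerge \mon\ra_1 \mon\ra_2) \mon\ra_3 \dotsm \mon\ra_{k+2},
\]
where $r' \equiv \abstr{z}{\m{\fa_1} \times \m{\fa_2}} r (\prl z) (\prr z)$, which has type $\m{\fa_1 \land \fa_2} \tarrow \m{\fa_3} \tarrow \dotsb \tarrow \m{\fa_{k+2}} \tarrow \mm\fb$. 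I would then verify the hypothesis of the $(k{+}1)$-ary claim for $r'$ and the formulas $\fa_1 \land \fa_2, \fa_3, \dotsc, \fa_{k+2}$: if $c \re \fa_1 \land \fa_2$ then $\prl c \re \fa_1$ and $\prr c \re \fa_2$ by the $\re$-clause for conjunction, hence $r' c \ra_3 \dotsm \ra_{k+2}$ reduces to $r (\prl c)(\prr c) \ra_3 \dotsm \ra_{k+2}$, which realizes $\fb$ by the original assumption on $r$.

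Applying the inductive hypothesis to $r'$ then gives that $\monStarN{k+1} r' \mon c \mon\ra_3 \dotsm \mon\ra_{k+2} \monRe \fb$ whenever $\mon c \monRe \fa_1 \land \fa_2$ and $\mon\ra_i \monRe \fa_i$ for $i \geq 3$. Taking $\mon c \equiv \monMerge \mon\ra_1 \mon\ra_2$, \cref{real3} gives $\mon c \monRe \fa_1 \land \fa_2$, and we conclude $\monStarN{k+2} r \mon\ra_1 \dotsm \mon\ra_{k+2} \monRe \fb$. The main obstacle I anticipate is bookkeeping: making sure that the reduction-based rewriting of $\monStarN{k+2}$ into an instance of $\monStarN{k+1}$ is compatible with $\monRe$ (which is why we implicitly need $\monRe$ to be preserved under $\leadsto$, or to take this as part of the working convention for monadic realizability relations), and choosing the grouping of arguments so that the inductive hypothesis applies with exactly $k+1$ formulas.
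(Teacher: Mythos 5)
Your proof is correct and follows essentially the same route as the paper's: induction on \(k\), with the \(k=1\) case reduced to \cref{real2} via the implication clause for \(\re\), and the inductive step obtained by unfolding \(\monStarN{k+2}\), packaging the first two arguments with \(\monMerge\) (justified by \cref{real3}), and checking the hypothesis for the repackaged term using the conjunction clause for \(\re\). The only difference is presentational — you make explicit the identification of the hypothesis with \(\rr \re \fa_1 \limply \fb\) in the \(k=1\) case and the reliance on closure of \(\monRe\) under reduction, both of which the paper leaves implicit.
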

\begin{proof}
  By induction on \(k\). 
  For \( k = 0 \) it is trivial and for \( k = 1 \) it follows from \cref{real2} since \( \monStarN{1} \equiv \monStar \). 
  Now we just need to prove that if the statement holds for some \( k \ge 1 \), it holds for \( k+1 \) too. 

  As in the statement we assume that, for all terms \( \ra_1 : \m{\fa_1}, \dotsc, \ra_{k+1} : \m{\fa_{k+1}} \) such that \( \ra_1 \re \fa_1, \dotsc, \ra_{k+1} \re \fa_{k+1} \): 
  \[ \rr \ra_1 \dotsm \ra_{k+1} \monRe \fb, \] 
  and that \( \mon\ra_1 : \mm{\fa_1}, \dotsc, \mon\ra_{k+1} : \mm{\fa_{k+1}} \) are terms such that \( \mon\ra_1 \monRe \fa_1, \dotsc, \mon\ra_{k+1} \monRe \fa_{k+1} \). 
  We need to show that: 
  \[ \monStarN{k+1} \rr \mon\ra_1 \dotsm \mon\ra_{k+1} \monRe \fb. \] 
  Since we know by definition of \(\monStarN{k+1}\) that \( \monStarN{k+1} \rr \mon\ra_1 \dotsm \mon\ra_{k+1} \) reduces to the term: 
  \[ \monStarN{k} (\abstr{z}{\m{\fa_1} \times \m{\fa_2}} \rr (\prl z) (\prr z) ) (\monMerge \mon\ra_1 \mon\ra_2) \mon\ra_3 \dotsm \mon\ra_{k+1}, \]
  and by \cref{real3} that \( \monMerge \mon\ra_1 \mon\ra_2 \monRe \fa_1 \land \fa_2 \),
  we see that we can use the inductive hypothesis on \(k\) to conclude. 
  In order to do so we have to show that the assumption of the inductive hypothesis holds, namely that, for any \( \ra_1 : \m{\fa_1} \times \m{\fa_2} \), \( \ra_3 : \m{\fa_3}, \dotsc, \ra_k : \m{\fa_k} \) such that \( \ra_1 \re \fa_1 \land \fa_2 \), \( \ra_2 \re \fa_2, \dotsc, \ra_k \re \fa_k \) it is the case that: 
  \[ (\abstr{z}{\m{\fa_1} \times \m{\fa_2}} \rr (\prl z) (\prr z) ) \ra_1 \dotsm \ra_k \monRe \fb. \] 
  By reducing the realizer we get that this is equivalent to: 
  \[ \rr (\prl \ra_1) (\prr \ra_1) \ra_2 \dotsm \ra_k \monRe \fb, \] 
  which is true by the assumption on \(\rr\) since \(\ra_1 \re \fa_1 \land \fa_2 \) means that \( \prl \ra_1 \re \fa_1 \) and \( \prr \ra_1 \re \fa_2 \) by definition of \(\re\). 
\end{proof}
We prove a similar property for \(\monRaiseN{k}\). 
\begin{proposition}[Monadic Realizability Property for \(\monRaiseN{k}\)] \label{thm:raiseN} 
  Let \( \fa_1, \dotsc, \fa_k \) and \(\fb\) be any formulas and let \( \rr : \m{\fa_1} \tarrow \dotsb \tarrow \m{\fa_k} \tarrow \m\fb \) be a term. 
  Assume that, for all terms \( \ra_1 : \m{\fa_1}, \dotsc, \ra_k : \m{\fa_k} \) such that \( \ra_1 \re \fa_1, \dotsc, \ra_k \re \fa_k \), it is the case that: 
  \[ \rr \ra_1 \dotsm \ra_k \re \fb. \] 
  Then, for all terms \( \mon\ra_1 : \mm{\fa_1}, \dotsc, \mon\ra_k : \mm{\fa_k} \) such that \( \mon\ra_1 \monRe \fa_1, \dotsc, \mon\ra_k \monRe \fa_k \), we have that: 
  \[ \monRaiseN{k} \rr \mon\ra_1 \dotsm \mon\ra_k \monRe \fb. \] 
\end{proposition}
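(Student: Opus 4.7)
The plan is to reduce this statement to the previous proposition about $\monStarN{k}$ (\cref{thm:starN}) by unfolding the definition of $\monRaiseN{k}$ and invoking \cref{real1} to bridge the gap between inner and monadic realizers.

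First I would unfold the definition of $\monRaiseN{k}$ and observe that $\monRaiseN{k} \rr \mon\ra_1 \dotsm \mon\ra_k$ reduces to
\[
  \monStarN{k}\bigl(\abstr{x_1}{\m{\fa_1}}\dotsm\abstr{x_k}{\m{\fa_k}} \monUnit (\rr\, x_1 \dotsm x_k)\bigr) \mon\ra_1 \dotsm \mon\ra_k,
\]
so that it suffices to prove that this latter term monadically realizes $\fb$. This is exactly the shape to which \cref{thm:starN} applies, taking $\rr' \equiv \abstr{x_1}{\m{\fa_1}}\dotsm\abstr{x_k}{\m{\fa_k}} \monUnit (\rr\, x_1 \dotsm x_k)$ of type $\m{\fa_1} \tarrow \dotsb \tarrow \m{\fa_k} \tarrow \mm\fb$.

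Next I would discharge the hypothesis required by \cref{thm:starN}: given any inner realizers $\ra_1 \re \fa_1, \dotsc, \ra_k \re \fa_k$, show that $\rr' \ra_1 \dotsm \ra_k \monRe \fb$. By $k$ successive $\beta$-reductions, $\rr' \ra_1 \dotsm \ra_k$ reduces to $\monUnit (\rr\, \ra_1 \dotsm \ra_k)$. By the assumption on $\rr$ in the statement we have $\rr\, \ra_1 \dotsm \ra_k \re \fb$, and then \cref{real1} yields $\monUnit (\rr\, \ra_1 \dotsm \ra_k) \monRe \fb$, as needed. Here I am implicitly using that $\re$ and $\monRe$ respect reduction; this is either built into the definition of the relations (as is the case for the clauses of $\re$ involving $\leadsto$) or is a standard compatibility assumption we may take for granted.

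Having verified the hypothesis, \cref{thm:starN} immediately gives the conclusion
\[
  \monStarN{k} \rr' \mon\ra_1 \dotsm \mon\ra_k \monRe \fb,
\]
and since this is precisely what $\monRaiseN{k} \rr \mon\ra_1 \dotsm \mon\ra_k$ reduces to, we are done. There is no genuine obstacle here: the proposition is essentially a packaging of \cref{thm:starN} with \cref{real1}, and the only minor subtlety is the appeal to closure of the realizability relations under reduction.
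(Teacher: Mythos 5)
Your proof is correct and follows essentially the same route as the paper's: unfold \(\monRaiseN{k}\) into \(\monStarN{k}\) applied to the \(\monUnit\)-lifted function, discharge the hypothesis of \cref{thm:starN} by \(\beta\)-reducing to \(\monUnit(\rr\,\ra_1 \dotsm \ra_k)\), and conclude with \cref{real1} and the assumption on \(\rr\). The remark about closure of the realizability relations under reduction is a point the paper passes over silently, so noting it explicitly is fine.
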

\begin{proof}
  Assume that, for all terms \( \ra_1 : \m{\fa_1}, \dotsc, \ra_k : \m{\fa_k} \) such that \( \ra_1 \re \fa_1, \dotsc, \ra_k \re \fa_k \), it is the case that: 
  \[ \rr \ra_1 \dotsm \ra_k \re \fb, \] 
  and let \( \mon\ra_1 : \mm{\fa_1}, \dotsc, \mon\ra_k : \mm{\fa_k} \) be terms such that \( \mon\ra_1 \monRe \fa_1, \dotsc, \mon\ra_k \monRe \fa_k \). 
  We want to prove that: 
  \[ \monRaiseN{k} \rr \mon\ra_1 \dotsm \mon\ra_k \monRe \fb. \] 
  By definition of \(\monRaiseN{k}\) this reduces to: 
  \[ \monStarN{k} (\abstr{x_1}{\m{\fa_1}} \dotsm \abstr{x_k}{\m{A_k}} \monUnit (\rr x_1 \dotsm x_k)) \mon\ra_1 \dotsm \mon\ra_k \monRe \fb. \] 
  This follows by \cref{thm:starN} if we can show that, for any \( \ra_1 : \m{\fa_1}, \dotsc, \ra_k : \m{\fa_k} \) such that \( \ra_1 \re \fa_1, \dotsc, \ra_k \re \fa_k \), we have: 
  \[ (\abstr{x_1}{\m{\fa_1}} \dotsm \abstr{x_k}{\m{\fa_k}} \monUnit (\rr x_1 \dotsm x_k)) \ra_1 \dotsm \ra_k \monRe \fb. \]
  Reducing the realizer we get that this is equivalent to: 
  \[ \monUnit (\rr \ra_1 \dotsm \ra_k) \monRe \fb, \]
  and this follows by \cref{real1} and by assumption on \(\rr\). 
\end{proof}
Now we are ready to prove the soundness theorem. 
\begin{theorem}[Soundness of \(\HA\) with respect to the Monadic Realizability Semantics] \label{thm:ha_soundness}
  Let \(\mathcal{D}\) be a derivation of \( \Gamma \seq \fa \) in \(\HA\) and \(\monRe\) a monadic realizability relation. 
  Then \( \Gamma \monSeq \mathcal{D}^* : \fa \) is valid with respect to \(\monRe\). 
\end{theorem}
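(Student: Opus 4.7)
The plan is to proceed by structural induction on the derivation $\mathcal D$, checking case by case that each decorated rule in \cref{fig:monadic_decorated_rules} produces a valid decorated sequent when its premises are valid by the inductive hypothesis. Most cases are routine. For each introduction or elimination rule other than induction, the realizer decorating the conclusion has the shape $\monStarN{k} g \, \mon\ra_1 \dotsm \mon\ra_k$ or $\monRaiseN{k} g \, \mon\ra_1 \dotsm \mon\ra_k$, where $\mon\ra_1, \dotsc, \mon\ra_k$ are the monadic realizers of the premises and $g$ is built from the standard term formers of $T'$. Thanks to \cref{thm:starN,thm:raiseN}, it is enough to verify that $g$ maps inner realizers of the premises to outer (respectively inner) realizers of the conclusion, which reduces in each case to unfolding \cref{def:monadic_realizability_relation}. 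For example, in $\RuleNameE\limply$ we have $g \equiv \abstr{\gamma_1}{\m\fa \tarrow \mm\fb} \abstr{\gamma_2}{\m\fa} \gamma_1 \gamma_2$; given inner realizers $\ra \re \fa$ and $\rr \re \fa \limply \fb$, \cref{re_imply} yields $\rr \ra \monRe \fb$, which is exactly what \cref{thm:starN} requires.

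The identity rule and the use of an assumption are handled by \cref{real1}: a proof variable $\alpha$ stands for an inner realizer of its declared formula, so $\monUnit \alpha$, to which $\monRaiseN{0} \alpha$ reduces, is an outer realizer of that formula. The atomic rule $\mathrm{Atm}$ additionally uses the observation that atomic rules of $\HA$ preserve truth: since each premise $\lafa_i$ is true whenever $\mon\rr_i \monRe \lafa_i$ (by \cref{re_atomic} applied to the underlying inner realizers supplied to the outer ones), the conclusion $\lafa$ is true as well; hence $\unit \re \lafa$ and the result follows from \cref{thm:raiseN} applied to the constant function $\abstr{\gamma_1}\Unit \dotsm \abstr{\gamma_l}\Unit \unit$.

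The only substantial case is the induction rule. Here I have to show that $\monUnit (\totRec\infty f) \monRe \qforall\lva \fa$, which by \cref{re_forall} and \cref{real1} reduces to proving that $\totRec\infty f \, \num n \monRe \fa\subst\lva{\num n}$ for every natural number $n$. The strategy is strong induction on $n$. Using the reduction rule for $\totRec{}$, the term $\totRec\infty f \, \num n$ reduces to $f \, \num n \, (\totRec{n} f)$; the strong inductive hypothesis, combined with the fact that $\totRec{n} f \, \num m$ reduces to $\totRec\infty f \, \num m$ exactly when $m < n$, lets me conclude that (the $\monUnit$-lifting of) $\totRec{n} f$ acts as an outer realizer of $\qforall\lvc (\lvc < \num n \limply \fa\subst\lva\lvc)$. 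Feeding this realizer into the structural inductive hypothesis applied to the premise of the induction rule then yields an outer realizer of $\fa\subst\lva{\num n}$, as required.

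The main obstacle is this induction case: one has to align the guard mechanism built into the reduction of $\totRec{}$ with the realizability interpretation of the course-of-values hypothesis $\qforall\lvc (\lvc < \lva \limply \fa\subst\lva\lvc)$, and to check that the replacement of the dummy guard $\infty$ by the numerical guard $n$ at each recursive call is exactly what makes $\totRec{n} f$ a legal substitute for the assumption variable $\alpha_{k+1}$ in the premise. Every other rule reduces, via \cref{thm:starN,thm:raiseN}, to a direct unfolding of \cref{def:monadic_realizability_relation} and does not appeal to any specific feature of the monad $\MM$, which is why the soundness of $\HA$ holds at this level of generality.
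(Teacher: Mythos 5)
Your proposal is correct and follows essentially the same route as the paper's proof: a structural induction in which every rule except induction is discharged by \cref{thm:starN} or \cref{thm:raiseN} together with an unfolding of \cref{def:monadic_realizability_relation}, and the induction rule is handled by complete induction on \(n\), using the guard on \(\totRec{}\) to turn \(\totRec{n} f\) (suitably wrapped) into an inner realizer of the course-of-values hypothesis \(\qforall\lvc\, \lvc < \num n \limply \fa\subst\lva\lvc\). The only cosmetic difference is that the paper proves the induction claim for an arbitrary guard \(\omega\) with \(\omega = \infty\) or \(\omega > n\) rather than phrasing it as ``\(\totRec{n} f\, \num m\) reduces to \(\totRec\infty f\, \num m\) when \(m < n\)'' (strictly, both reduce to the common term \(f\, \num m\, (\totRec{m} f)\)), and you make explicit the truth-preservation of atomic rules that the paper leaves implicit in the Atm case.
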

The proof is long but simple, proceeding by induction on the structure of the decorated version of \(\mathcal{D}\). 
\begin{proof} 
  We proceed by induction on the structure of the decorated version of \(\mathcal{D}\), that is, we assume that the statement holds for all decorated sub-derivations of \(\mathcal{D}\) and we prove that it holds for \(\mathcal{D}\) too. 
  More precisely we have to check the soundness of each decorated rule, showing that the validity of the premises yields the validity of the conclusion. 

  We start with some general notation and observations. 
  Let \( \Gamma \equiv \alpha_1 : \fa_1, \dotsc, \alpha_k : \fa_k \) for some \(k\). 
  Following the notation in \cref{def:mon_sem}, 
  we fix natural numbers \( n_1, \dotsc, n_l \) and terms \( \rr_1 : \fa_1, \dotsc, \rr_k : \fa_k \), 
  we define abbreviations: 
  \begin{align*}
    \Omega &\equiv x_1 \coloneqq \num n_1, \dotsc, x_l \coloneqq \num n_l, \\ 
    \Sigma &\equiv \alpha_1 \coloneqq r_1, \dotsc, \alpha_k \coloneqq r_k, 
  \end{align*}
  and we assume that: 
  \[ \rr_1 \re \fa_1[\Omega] \qquad \dotso \qquad \rr_k \re \fa_k[\Omega]. \]

  Note that if some term \(t : \ta_1 \tarrow \dotsb \tarrow \ta_k \tarrow \tb \) has no free variables then \( (t a_1 \dotsm a_k)[\Omega, \Sigma] \equiv t (a_1[\Omega, \Sigma]) \dotsm (a_k[\Omega, \Sigma]) \). 
  In particular this holds if \(t\) is one of \(\monStarN{k}\), \(\monRaiseN{k}\), \(\pair\), \(\prl\), \(\prr\), \(\case\), \(\inl\), \(\inr\). 
  The same holds for formulas, so \( (\fa \star \fb)[\Omega] \equiv \fa[\Omega] \star \fb[\Omega] \) where \(\star\) is one of \( \land, \lor \) or \( \limply \). 
  Also note that \( \m{\fa[\Omega]} = \m \fa \) since \( \m \cdot \) does not depend on the terms in \(\fa\). 
  In particular the types of the proof variables in \(\Gamma\) do not change, meaning we do not need to perform substitutions in \(\Gamma\).  
  We shall take advantage of these facts without mentioning it. 

  Now we can start showing that the rules are sound. \fixme{is the order of \(\Omega\) and \(\Sigma\) relevant? do we need to substitute \(\Omega\) in realizers?}

  \begin{itemize}
    \item[Id]
      We have to prove that: 
      \[ (\monRaiseN{0} \alpha_i)[\Omega, \Sigma] \monRe \fa[\Omega], \]
      where \( \fa = \fa_i \) for some \( i \in \{1, \dotsc, k\} \). 

      By performing the substitutions, we can rewrite the realizer as \(\monRaiseN{0} \rr_i\) so we need to prove that: 
      \[ \monRaiseN{0} \rr_i \monRe \fa. \]
      This follows by \cref{thm:raiseN} since by assumption \( \rr_i \re \fa_i[\Omega] \).

    \item[Atm]
      We have to prove that: \fixme{\(l\) is already used in this theorem}
      \[ (\monRaiseN{l} (\abstr{\gamma_1}\Unit \dotsm \abstr{\gamma_l}\Unit \unit) \mon\rr_1 \dotsm \mon\rr_l)[\Omega,\Sigma] \monRe \lafa[\Omega]. \]
      By performing the substitutions, we can rewrite the realizer as: 
      \[ \monRaiseN{l} (\abstr{\gamma_1}\Unit \dotsm \abstr{\gamma_l}\Unit \unit) \mon\rr_1[\Omega,\Sigma] \dotsm \mon\rr_l[\Omega,\Sigma]. \]
      By inductive hypothesis we know that 
      \[ \mon\rr_1[\Omega,\Sigma] \monRe \lafa_1[\Omega], \dotsc, \mon\rr_l[\Omega,\Sigma] \monRe \lafa_l[\Omega], \] 
      and thus we can conclude by \cref{thm:raiseN} if we can show that: 
      \[ (\abstr{\gamma_1}\Unit \dotsm \abstr{\gamma_l}\Unit \unit) \rr_1 \dotsm \rr_l \re \lafa[\Omega], \]
      for all \(\rr_1, \dotsc, \rr_l\) that are inner realizers of \(\lafa_1, \dotsc, \lafa_l\) respectively. 
      Since 
      \[ (\abstr{\gamma_1}\Unit \dotsm \abstr{\gamma_l}\Unit \unit) \rr_1 \dotsm \rr_l, \]
      reduces to \(\unit\) and \(\unit \re \lafa[\Omega]\) by definition of \(\re\) we are done. 
  \end{itemize}
  In the following we will apply the substitutions directly without mentioning it. 
  \begin{itemize}
    \item[\(\RuleNameI\land\)] 
      We have to prove that 
      \[ \monRaiseN{2} \pair \mon\ra[\Omega,\Sigma] \mon\rb[\Omega,\Sigma] \monRe \fa[\Omega] \land \fb[\Omega], \] 
      assuming that \( \mon\ra[\Omega, \Sigma] \monRe \fa[\Omega] \) and \( \mon\rb[\Omega, \Sigma] \monRe \fa[\Omega]\). 
      This follows by \cref{thm:raiseN} since
      \[ \pair \ra \rb \re \fa \land \fb, \] 
      for all inner realizers \(\ra\) of \(\fa\) and \(\rb\) of \(\fb\), 
      by definition of \(\re\). 

    \item[\( {\RuleNameE[L]\land} \)]
      We have to prove that 
      \[ (\monRaiseN{1} \prl \mon\rr)[\Omega,\Sigma] \monRe \fa[\Omega], \] 
      assuming that 
      \[ \mon\rr[\Omega,\Sigma] \monRe \fa[\Omega] \land \fb[\Omega]. \] 
      This follows by \cref{thm:raiseN} if 
      \[ \prl \rr \re \fa[\Omega], \] 
      for any inner realizer \(\rr\) of \(\fa[\Omega] \land \fb[\Omega]\).
      This is the case because from \( \rr \re \fa \land \fb \) if and only if \( \prl \rr \re \fa \) by definition of \(\re\). 

    \item[\( {\RuleNameE[R]\land} \)]
      Very similar to the proof for \(\RuleName[L]\land{E}\). 

    \item[\( {\RuleNameI[L]\lor} \)]
      We have to show that: 
      \[ \monRaiseN{1} \inl \mon\ra[\Sigma,\Omega] \monRe \fa[\Omega] \lor \fb[\Omega], \] 
      assuming that: 
      \[ \mon \ra[\Sigma,\Omega] \monRe \fa[\Omega]. \] 
      This follows by \cref{thm:raiseN} if 
      \[ \inl \ra \re \fa[\Omega], \] 
      for any inner realizer \(\ra\) of \(\fa[\Omega]\). 
      This is the case since \( \ra \re \fa[\Omega] \) if and only if \( \inl \ra \re \fa[\Omega] \lor \fb[\Omega] \) by definition of \(\re\). 

    \item[\( {\RuleNameI[R]\lor} \)]
      Very similar to the proof for \(\RuleName[L]\lor{I}\). 

    \item[\( \RuleNameE\lor \)] 
      We have to show that: 
      \[ \monStarN{1} (\abstr\gamma{\m\fa + \m\fb} \case \gamma (\abstr\alpha{\m\fa} \mon\ra[\Omega,\Sigma]) (\abstr\beta{\m\fb} \mon\rb[\Omega,\Sigma])) \mon\rr[\Omega,\Sigma] \monRe \fc[\Omega] \] 
      assuming by inductive hypothesis that: 
      \fixme{\(\alpha\) and \(\beta\) are inconsistent notation}
      \begin{enumerate} 
        \item \( \mon\rr[\Omega,\Sigma] \monRe \fa[\Omega] \lor \fb[\Omega] \), 
        \item \( \mon\ra[\Omega,\Sigma,\alpha \coloneqq \ra]\ \monRe \fc[\Omega] \) for any inner realizer \(\ra\) of \(\fa[\Omega]\), 
        \item \( \mon\rb[\Omega,\Sigma,\beta \coloneqq \rb]\ \monRe \fc[\Omega] \) for any inner realizer \(\rb\) of \(\fb[\Omega]\). 
      \end{enumerate} 
      We can conclude by \cref{thm:starN} if we show that 
      \[
        (\abstr\gamma{\m\fa + \m\fb} \case \gamma (\abstr\alpha{\m\fa} \mon\ra[\Omega,\Sigma]) (\abstr\beta{\m\fb} \mon\rb[\Omega,\Sigma])) \rr,  
      \]
      which \(\beta\)-reduces to
      \begin{equation}\label{eq:or_elim}
        \case \rr (\abstr\alpha{\m\fa} \mon\ra[\Omega,\Sigma]) (\abstr\beta{\m\fb} \mon\rb[\Omega,\Sigma]),  
      \end{equation}
      is a monadic realizer of \(\fc[\Omega]\) 
      for any inner realizer \(\rr\) of \(\fa[\Omega] \lor \fb[\Omega]\). 

      By definition of \(\re\), we know that either 
      \(\rr \leadsto \inl\ra\) where \(\ra\) is an inner realizer of \(\fa[\Omega]\) or 
      \(\rr \leadsto \inr\rb\) where \(\rb\) is an inner realizer of \(\fb[\Omega]\). 
      Assume that we are in the first case (the second case is analogous). 
      Then \eqref{eq:or_elim} becomes: 
      \[ \case (\inl\ra) (\abstr\alpha{\m\fa} \mon\ra[\Omega,\Sigma]) (\abstr\beta{\m\fb} \mon\rb[\Omega,\Sigma]), \]
      which reduces to 
      \[ (\abstr\alpha{\m\fa} \mon\ra[\Omega,\Sigma]) \ra, \]
      and to
      \[ \mon\ra[\Omega,\Sigma,\alpha \coloneqq \ra], \]
      which is a monadic realizer of \(\fc[\Omega]\) by inductive hypothesis. 

    \item[\( \RuleNameI\limply \)]
      We have to show that: 
      \[ \monRaiseN{0} (\abstr{\alpha_{k+1}}{\m\fa} \mon\rr[\Omega,\Sigma]) \monRe \fa[\Omega] \limply \fb[\Omega], \]
      assuming that: 
      \[ \mon \rr[\Omega,\Sigma,\alpha_{k+1} \coloneqq \ra] \monRe \fb[\Omega], \]
      for any inner realizer \(\ra\) of \(\fa[\Omega] \). 
      By \cref{thm:raiseN} it is enough to show that: 
      \[ \abstr{\alpha_{k+1}}{\m\fa} \mon\rr[\Omega,\Sigma] \re \fa[\Omega] \limply \fb[\Omega]. \]
      By definition of \(\re\) this holds if and only if: 
      \[ (\abstr{\alpha_{k+1}}{\m \fa} \mon\rr[\Omega,\Sigma]) \ra \monRe \fb[\Omega], \]
      for any inner realizer \(\ra\) of \(\fa[\Omega] \). 
      Reducing we get: 
      \[ \mon\rr[\Omega,\Sigma][\alpha_{k+1} \coloneqq \ra]) \monRe \fb[\Omega], \]
      and since \( \mon\rr[\Omega,\Sigma][\alpha_{k+1} \coloneqq \ra] \equiv \mon\rr[\Omega,\Sigma,\alpha_{k+1} \coloneqq \ra] \), \fixme{are the substitutions really correct? check this also for or elimination}
      we can conclude by the inductive hypothesis. 

    \item[\( \RuleNameE\limply \)] 
      We have to show that: 
      \[ (\monStarN{2} (\abstr{\gamma_1}{\m\fa \tarrow \m\fb} \abstr{\gamma_2}{\m\fa} \gamma_1 \gamma_2) \mon\rr[\Omega,\Sigma] \mon\ra[\Omega,\Sigma]) \monRe \fb[\Omega], \] 
      assuming by inductive hypothesis that: 
      \begin{enumerate}
        \item \( \mon\rr[\Omega,\Sigma] \monRe \fa[\Omega] \limply \fb[\Omega] \),
        \item \( \mon\ra[\Omega,\Sigma] \monRe \fa[\Omega] \). 
      \end{enumerate}
      This follows by \cref{thm:raiseN} if: 
      \[ (\abstr{\gamma_1}{\m\fa \tarrow \m\fb} \abstr{\gamma_2}{\m\fa} \gamma_1 \gamma_2) \rr \ra, \]
      which \(\beta\)-reduces to 
      \[ \rr \ra, \]
      is a monadic realizer of \( \fb[\Omega]\) 
      for any inner realizers \(\rr\) and \(\ra\) of \(\fa[\Omega] \limply \fb[\Omega]\) and \(\fa[\Omega] \) respectively. 
      This follows immediately by definition of \(\re\). 
  \end{itemize}
  In the following cases we assume that \(\Omega\) does not contain a substitution for the variable \(\lva\)
  and we write it explicitly when it is needed.
  \begin{itemize}
    \item[\( \RuleNameI\forall \)]
      We have to show that: 
      \[ \monRaiseN{0} (\abstr\lva\Nat \mon\rr[\Omega,\Sigma]) \monRe \qforall\lva \fa[\Omega], \]
      assuming by inductive hypothesis that: 
      \[ \mon\rr[\Omega,\lva \coloneqq \num n,\Sigma] \monRe \fa[\Omega,\lva \coloneqq \num n], \]
      for any natural number \(n\). 
      This follows by \cref{thm:raiseN} if: 
      \[ (\abstr\lva\Nat \mon\rr[\Omega,\Sigma]) \re \qforall\lva \fa[\Omega], \]
      which by definition of \(\re\) means that:
      \[ (\abstr\lva\Nat \mon\rr[\Omega,\Sigma]) \num n  \re \fa[\Omega, \lva \coloneqq \num n], \]
      for any natural number \(n\).
      By \(\beta\)-reducing we get:
      \[ \mon\rr[\Omega,\lva \coloneqq \num n,\Sigma]  \re \fa[\Omega, \lva \coloneqq \num n], \]
      which holds by inductive hypothesis. 

    \item[\( \RuleNameE\forall \)]
      We have to show that: 
      \[ (\monStarN{1} (\abstr\gamma{\Nat \tarrow \mm\fa} \gamma (\lta[\Omega]))) \mon\rr[\Omega,\Sigma] \monRe (\fa\subst\lva\lta)[\Omega], \]
      assuming by inductive hypothesis that: 
      \[ \mon\rr[\Omega,\Sigma] \monRe \qforall\lva \fa[\Omega]. \]
      This follows by \cref{thm:starN} if: 
      \[ (\abstr\gamma{\Nat \tarrow \mm\fa} \gamma (\lta[\Omega]))) \rr \leadsto \rr (\lta[\Omega]), \]
      is a monadic realizer of \(\fa[\Omega]\), for any inner realizer \(\rr\) of \(\qforall\lva \fa[\Omega]\).
      This follows by definition of \(\re\) for \(\rr \re \qforall\lva \fa[\Omega]\), 
      since \(\lta[\Omega]\) is closed and thus reduces to a numeral. 
      \fixme{needs normalization of arithmetic terms}

    \item[\( \RuleNameI\exists \)]
      We have to show that: 
      \[ \monRaiseN{1} (\abstr\gamma{\m\fa} \pair\lta[\Omega]\gamma) \mon\rr[\Omega,\Sigma] \monRe \qexists\lva \fa[\Omega], \]
      assuming by inductive hypothesis that: 
      \[ \mon\rr[\Omega,\Sigma] \monRe \fa[\Omega,\lva\coloneqq\lta]. \] 
      This follows by \cref{thm:raiseN} if: 
      \[ (\abstr\gamma{\m\fa} \pair\lta[\Omega]\gamma) \rr \leadsto \pair\lta[\Omega]\rr \]
      is an inner realizer of \( \qexists\lva \fa[\Omega] \), for any inner realizer \(\rr\) of \( \fa[\Omega,\lva\coloneqq\lta] \). 
      This follows by definition of \(\re\). 

    \item[\( \RuleNameE\exists \)] 
      We have to show that: 
      \[ \monStarN{1} (\abstr\gamma{\Nat \times \m\fa} (\abstr\lvb\Nat \abstr\alpha{\m\fa} \mon\rr_2[\Omega,\Sigma])(\prl \gamma)(\prr \gamma)) \mon\rr_1[\Omega,\Sigma] \monRe \fc[\Omega], \]
      assuming by inductive hypothesis that: 
      \begin{enumerate}
        \item \( \mon\rr_1[\Omega,\Sigma] \monRe \qexists\lva \fa[\Omega] \),
        \item \( \mon\rr_2[\Omega,\lvb \coloneqq \num n,\Sigma, \alpha \coloneqq \rr] \monRe \fc[\Omega] \), for any natural number \(n\) and any inner realizer \(\rr\) of \(\fa[\Omega]\).
      \end{enumerate}
      This follows by \cref{thm:starN} and by the inductive hypothesis on \(\mon \rr_1\) if, for any inner realizer \(\rr_1\) of \(\qexists\lva \fa[\Omega]\): 
      \begin{align*}
        &\phantomrel\leadsto (\abstr\gamma{\Nat \times \m\fa} (\abstr\lvb\Nat \abstr\alpha{\m\fa} \mon\rr_2[\Omega,\Sigma])(\prl \gamma)(\prr \gamma)) \rr_1 \leadsto \\ 
        &\leadsto (\abstr\lvb\Nat \abstr\alpha{\m\fa} \mon\rr_2[\Omega,\Sigma])(\prl \rr_1)(\prr \rr_1) \leadsto \\ 
        &\leadsto ((\mon\rr_2[\Omega,\Sigma])\subst\lvb{\prl \rr_1})\subst\alpha{\prr \rr_1} \equiv \\ 
        &\equiv \mon\rr_2[\Omega,\lvb \coloneqq \prl \rr_1,\Sigma, \alpha \coloneqq \prr \rr_1].
      \end{align*}
      is a monadic realizer of \(\fc[\Omega]\).
      By definition of \(\re\) we have that \(\prr \rr_1 \re \fa\subst\lva{\prl \rr_1} \) and thus we can conclude by the inductive hypothesis on \(\mon \rr_2\). 
    \item[Ind] 
      We have to show that: 
      \[ (\monRaiseN{0} (\totRec\infty f))[\Omega, \Sigma] \monRe (\qforall\lva \fa)[\Omega], \] 
      assuming that, for all naturals numbers \(n\) and for all \( \ra : \Nat \tarrow T(\Unit \tarrow T\m\fa) \) such that \( \ra \re \qforall\lvc \lvc < \num n \limply \fa[\lva \coloneqq \lvc] \): 
    \[ \rr[\Omega, \lvb \coloneqq \num n, \Sigma, \alpha_{k+1}] \coloneqq \ra] \monRe \fa[\lva \coloneqq \lvb][\Omega, \lvb \coloneqq \num n]. \] 
    Note that \( \fa[\lva \coloneqq \lvb][\Omega, \lvb \coloneqq \num n] \) is just \( \fa[\Omega, \lva \coloneqq \num n] \). 
    By \cref{thm:raiseN} we get the conclusion if \( \totRec\infty f[\Omega,\Sigma] \re \qforall\lva A[\Omega] \), which by definition of \(\re\) means that 
    \[ \totRec\infty f[\Omega,\Sigma] \num n \monRe A[\Omega, \lva \coloneqq \num n] \] 
    for any natural number \(n\). 
    In order to show this we shall prove that for any natural number \(n\) and any \(\omega \in \mathbb{N} \cup \{\infty\}\) such that either \( \omega = \infty \) or \( \omega > n \), we have:
    \[ \totRec\omega f[\Omega,\Sigma] \num n \monRe A[\Omega, \lva \coloneqq \num n]. \] 
    We proceed by complete induction on \(n\), so we assume that the statement holds for all natural numbers \(m\) such that \( m < n \). 
    We begin by reducing the realizer (in the first step we use the assumption on \(\omega\): 
    \begin{align*} 
      \totRec\omega f[\Omega, \Sigma] \num n 
      &\leadsto f[\Omega, \Sigma] \num n (\totRec{n} f[\Omega, \Sigma]) \\ 
      &\leadsto (\abstr\alpha{} \mon\rr[\Omega, \lvb \coloneqq \num n]) (\abstr\lvc\Nat \monRaiseN{0} (\abstr{\_}\Unit \totRec{n} f[\Omega, \Sigma] \lvc)) \\ 
      &\leadsto \mon\rr[\Omega, \lvb \coloneqq \num n, \Sigma, \alpha \coloneqq \abstr\lvc\Nat \monRaiseN{0} (\abstr{\_}\Unit \totRec{n} f[\Omega, \Sigma] \lvc)] 
    \end{align*} 
    Then we have to show that: 
    \[ \mon\rr[\Omega, \lvb \coloneqq \num n, \Sigma, \alpha \coloneqq \abstr\lvc\Nat \monRaiseN{0} (\abstr{\_}\Unit \totRec{n} f[\Omega, \Sigma] \lvc)] \monRe A[\Omega, \lva \coloneqq \num n]. \] 
    This follows from the inductive hypothesis on the premise of the complete induction rule if we can show that: 
    \[ \abstr\lvc\Nat \monRaiseN{0} (\abstr{\_}\Unit \totRec{n} f[\Omega, \Sigma] \lvc) \re \qforall\lvc \lvc < \num n \limply \fa[\lva \coloneqq \lvc]. \] 
    By definition of \(\re\) this is the case if: 
    \[ \monRaiseN{0} (\abstr{\_}\Unit \totRec{n} f[\Omega, \Sigma] \num m) \monRe \num m < \num n \limply \fa[\lva \coloneqq \num m], \]
    for all  natural numbers \(m\). By \cref{real1} this follows from: 
    \[ \abstr{\_}\Unit \totRec{n} f[\Omega, \Sigma] \num m \re \num m < \num n \limply \fa[\lva \coloneqq \num m]. \]
    Again by definition of \(\re\) this is equivalent to showing that for any \( u : \Unit \) such that \( u \re \num m < \num n \) we have: 
    \[ \totRec{n} f[\Omega, \Sigma] \num m \re \fa[\lva \coloneqq \num m]. \]
    Note that, since \( u : \Unit \), \fixme{by normal form theorem}
    \( u \leadsto \unit \), so there are two possible cases: 
    either \( m < n \) is true and then \( u \re \num m < \re \num n \) for any \( u : \Unit \) or \( m < n \) is false and no \( u : \Unit \) can realize \( \num m < \num n \). 
    In both cases the statement holds: 
    in the former case by inductive hypothesis on \(m\) and in the latter case trivially since the universal quantification on \(u\) is empty. 
\end{itemize}
\end{proof} 

\Cref{thm:ha_soundness} entails that any specific monadic realizability notion is a sound semantics for at least \(\HA\). 
Later, when we prove that \(\HA+\EM\) is sound with respect to interactive realizability semantics, 
we will only need to show that \(\EM\) is sound since the soundness of \(\HA\) derives from \cref{thm:ha_soundness}. 

\begin{wip} 
  While the intended purpose of a monadic realizability notion is realizing proofs employing non constructive axioms and rules, it is important to show
  that, when applied to a constructive proof in \(\HA\), monadic realizers retain the desirable properties we expect from standard modified realizers of constructive proofs. 
  In particular we show that monadic realizers yield effective methods to produce witnesses for existential statements and to decide disjunctions. 

  \begin{proposition}[Monadic Property for \(\monStarN k\)]
    Let \( f : \ta_1 \tarrow \dotsb \tarrow \ta_{k} \tarrow T\tc \) such that for all \( a_1 : \ta_1, \dotsc, a_k : \ta_k \) exists \( b : \tc\) such that \( f a_1 \dotsm a_k \leadsto \monUnit b \).
    Then for all \( a_1 : \ta_1, \dotsc, a_k : \ta_k \) there exists \( b : \tc \) such that 
    \[ \monStarN k f (\monUnit \tta_1) \dotsm (\monUnit \tta_k) \leadsto \monUnit b. \] 
  \end{proposition}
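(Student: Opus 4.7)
The plan is to prove the proposition by induction on the arity $k$, using the two base cases $k=0$ and $k=1$ (because the recursive definition of $\monStarN{k}$ goes from $k+2$ down to $k+1$) and relying on the monad equations \cref{mon2} and \cref{mon_merge1}.

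For the base case $k = 0$, unfolding gives $\monStarN{0} f \equiv (\abstr{f}{T\tc} f) f \reducesto_\beta f$, so the conclusion is exactly the hypothesis: by assumption $f \leadsto \monUnit b$ for some $b : \tc$. For $k = 1$ we have $\monStarN{1} \equiv \monStar$, and \cref{mon2} gives $\monStar f (\monUnit \tta_1) \leadsto f \tta_1$, which by hypothesis reduces to $\monUnit b$ for some $b$.

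For the inductive step, assume the statement for arity $k+1$ and consider the term
\[
\monStarN{k+2} f (\monUnit \tta_1) (\monUnit \tta_2)(\monUnit \tta_3) \dotsm (\monUnit \tta_{k+2}).
\]
By the defining reduction of $\monStarN{k+2}$ this $\beta$-reduces to
\[
\monStarN{k+1} (\abstr{z}{\ta_1 \times \ta_2} f (\prl z) (\prr z)) (\monMerge (\monUnit \tta_1) (\monUnit \tta_2)) (\monUnit \tta_3) \dotsm (\monUnit \tta_{k+2}),
\]
and by \cref{mon_merge1} we have $\monMerge (\monUnit \tta_1) (\monUnit \tta_2) \leadsto \monUnit (\pair \tta_1 \tta_2)$. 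Setting $g \equiv \abstr{z}{\ta_1 \times \ta_2} f (\prl z)(\prr z)$, we verify the hypothesis needed to apply the inductive hypothesis to $g$ with arity $k+1$: for any $z : \ta_1 \times \ta_2$ and $a_3, \dotsc, a_{k+2}$, we have $g\, z\, a_3 \dotsm a_{k+2} \leadsto f (\prl z)(\prr z) a_3 \dotsm a_{k+2}$, which by the assumption on $f$ reduces to $\monUnit b$ for some $b : \tc$. Applying the inductive hypothesis to $g$ with arguments $\monUnit (\pair \tta_1 \tta_2), \monUnit \tta_3, \dotsc, \monUnit \tta_{k+2}$ yields the required $b$.

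The proof is essentially bookkeeping: the only substantive moves are the uses of \cref{mon2} (to strip $\monUnit$ inside a single $\monStar$) and \cref{mon_merge1} (to fuse two $\monUnit$'s under a $\monMerge$). The mildly delicate point is being careful about the recursion shape in the definition of $\monStarN{}$ — since $\monStarN{k+2}$ reduces to $\monStarN{k+1}$ rather than to $\monStarN{k}$, one should either state the induction as strong induction on $k$ or handle $k=0$ and $k=1$ explicitly as base cases; once that is set up, the step goes through mechanically with confluence/reduction tracking as the only routine care required.
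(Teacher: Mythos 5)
Your proof is correct and follows essentially the same route as the paper's: both handle $k=0$ and $k=1$ as base cases (the latter via \cref{mon2}), and both perform the inductive step by unfolding $\monStarN{k+2}$, fusing the first two arguments with \cref{mon_merge1}, and applying the inductive hypothesis to $\abstr{z}{\ta_1\times\ta_2} f (\prl z)(\prr z)$. Your remark about the recursion shape requiring two base cases is a point the paper handles implicitly but does not comment on.
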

  \begin{proof}
    We proceed by induction on \(k\). For \(k = 0\) it is trivial and for \(k = 1\) 
    \begin{align*}
      \monStarN{1} f (\monUnit \tta) 
      &\equiv \monStar f (\monUnit \tta) \\ 
      &\leadsto f \tta && \text{by \cref{mon2}} \\ 
      &\leadsto \monUnit b && \text{for some \(b:\tc\), by hypothesis on } f.
    \end{align*}
    For \( k > 1 \) we have:
    \begin{align*}
      \monStarN{k} &f (\monUnit \tta_1) \dotsm (\monUnit \tta_k) \\ 
                   &\equiv (\abstr{\tta_1}{T\ta_1} \abstr{\tta_2}{T\ta_2} \monStarN{k-1} (\abstr{z}{\ta_1 \times \ta_2} f (\prl z) (\prr z) ) (\monMerge \tta_1 \tta_2)) (\monUnit \tta_1) \dotsm (\monUnit \tta_k), \\ 
                   &\leadsto \monStarN{k-1} (\abstr{z}{\ta_1 \times \ta_2} f (\prl z) (\prr z) ) (\monMerge (\monUnit \tta_1) (\monUnit \tta_2)) (\monUnit \tta_3) \dotsm (\monUnit \tta_k), \\ 
                   &\leadsto\monStarN{k-1} (\abstr{z}{\ta_1 \times \ta_2} f (\prl z) (\prr z) ) (\monUnit \pair \tta_1 \tta_2) (\monUnit \tta_3) \dotsm (\monUnit \tta_k) && \text{by \cref{mon3}}. \\ 
    \end{align*}
    The inductive hypothesis on \(k-1\) gives the conclusion if we can prove that for all \(z : \ta_1 \times \ta_2, \tta_3 : \ta_3, \dotsc, \tta_k : \ta_k\) there exists \( b : \tc \) such that: 
    \[ (\abstr{z}{\ta_1 \times \ta_2} f (\prl z) (\prr z)) z \tta_3 \dotsm \tta_k \leadsto \monUnit b. \]
    Reducing we get
    \begin{align*}
      &(\abstr{z}{\ta_1 \times \ta_2} f (\prl z) (\prr z)) z \tta_3 \dotsm \tta_k \\ 
      \reducesto_\beta &f (\prl z) (\prr z)  \tta_3 \dotsm \tta_k \\ 
      \leadsto & \monUnit b && \text{ by hypothesis on } f, 
    \end{align*}
    and we can conclude.
  \end{proof}
\end{wip}

\section{Monadic Interactive Realizability} \label{sec:monadic_interactive_realizability} 

\setmonad\IR

\note{Idea: use Ex as the type of false. Maybe exception realizability is for intuitionistic logic. Then prove that only provable formulas have non exceptional realizers (a kind of second validity theorem)}
In this section we define interactive realizability as a particular notion of monadic realizability. 
Thus we show that monadic realizability may realize a sub-classical principle, in this case excluded middle restricted to semi-decidable statements. 

\subsection{A Syntactic Monad for Interactive Realizability}
In order to describe the computational properties of interactive realizability (see \cite{aschieriB10}) we need to define a suitable monad. 
As we said, interactive realizability is based on the idea of learning by trial and error. 
We express the idea of trial and error with an exception monad: a term of intended type \(\ta\) has actual type \(\ta + \Ex\), where \(\Ex\) is the type of exceptions, so that a computation may either return its intended value or an exception. 
The learning part, which is described by the dependency on a knowledge state, fits with a part of the side-effects monad (see \cite{moggi91} for more details): a term of intended type \(\ta\) has actual type \(\State \tarrow \ta\), where \(\State\) is the type of knowledge states, so that the value of a computation may change with the state. 
The syntactic monad we are about to define for interactive realizability combines these two monads. 

We need to extend system \(T'\) with two base types \(\State\) and \(\Ex\) and a term constant that ``merges'' two exceptions into one:
\[ \exmerge : \Ex \tarrow \Ex \tarrow \Ex. \]
We shall avoid defining a specific syntax for terms of type \(\State\) and \(\Ex\). 
\fixme{next sentence is badly phrased}
Instead we exhibit their intended interpretation and, using this interpretation as a guide, we shall require some properties on reductions involving them. 

We write \(\LRel_k\) for the set of symbols of the \(k\)-ary predicates in \(\HA\). 
The intended interpretation of a (knowledge) state \(s\) is a partial function
\[ \interp s : \left( \bigcup_{k=0}^\infty \LRel_{k+1} \times \N^k \right) \rightharpoonup \N, \] 
that sends a \(k+1\)-ary predicate symbol \(P\) and a \(k\)-tuple of parameters \( m_1, \dotsc, m_k \in \N \) to a witness for \(\qexists{x} P (\num m_1, \dotsc, \num m_k, x) \).
We interpret the fact that a state \(s\) is undefined for some \(P,m_1,\dotsc,m_k\) as a lack of knowledge about a suitable witness. 
This is either due to the state being incomplete, meaning that there exists a suitable witness \(m\) we could use to extend the state by setting \( s(P,(m_1,\dotsc,m_k)) = m\), or to the fact that there are no suitable witness, meaning that \( \qforall{x} \lnot P(\num m_1, \dotsc, \num m_k, x) \) holds\footnote{Here we are using \(\EM\) at the metalevel in order to explain the possible situations. Using a principle at the metalevel in order to justify the same principle in the logic is a common practice. In our treatment this is not problematic because we never claim to be able to effectively decide which situation we are in.}. 
We require that \(s\) satisfies two properties. 
The first is for \(s\) to be \emph{sound}, meaning that its values are actually witnesses. More precisely: 
\[ \interp s (P, (m_1, \dotsc, m_k)) = m \text{ entails } P(\num m_1, \dotsc, \num m_k, \num m). \]
The second is that \(s\) is \emph{finite}, namely that the domain of \(s\) (the set of values \(s\) is defined on) is finite. 
This because we want a knowledge state to encode a finite quantity of information. 
Let \(\interp\State\), the set of all finite sound states, be the intended interpretation of the type \(\State\). 
Recall that there is a canonical partial order on states given by the extension relation: we write \(s_1 \leq s_2\) and read ``\(s_2\) \emph{extends} \(s_1\)'' if and only if \(s_2\) is defined whenever \(s_1\) is and with the same value. 

An exception \(e : \Ex \) is produced when we instantiate an assumption of the form \( \qforall{x} \allowbreak \lnot P(\num m_1, \dotsc, \allowbreak \num m_k, x) \) with some \(m\) such that \( \lnot P(\num m_1, \dotsc, \num m_k, \num m) \) does not actually hold (remember that we proceed by trial and error, in particular we may assume things that are actually false). 
This means that \(m\) is a witness for \(\qexists{x} P (\num m_1, \dotsc, \num m_k, x) \), in particular it could be used to extend the knowledge state on values where it was previously undefined. 
The role of exceptions is to encode information about the discovery of new witnesses: since we use this information to extend states the intended interpretation of an exception \(e\) is as a partial function: 
\[ \interp e : \interp\State \rightharpoonup \interp\State. \]
Since \(e\) extends states we require that \( s \leq e(s) \). 
We interpret an exception as a partial function because an exception \(e\) may fail to extend some state \(s\). 
The reason is that \(e\) may contain information about a witness \(m'\) for an existential statement \(\qexists{x} P (\num m_1, \dotsc, \num m_k, x) \) on which \(s\) is already defined as \(m\). 
Note that an existential formula can have more that one witness so two cases may arise: 
either \(m = m'\), meaning that the information of \(e\) is already part of \(s\) or \(m \neq m'\) so that the information of \(e\) is incompatible with the information of the state. 
In the first case \( e(s) = s \), while in the second case \( e(s) \) is not defined. 

Before defining the syntactic monad \(\IR\) for interactive realizability, we need to introduce some terminology on exceptions and states. 
\begin{definition}[Terminology on Exceptions and States]
  We say that a term of type \(\ta + \Ex\) is either a \emph{regular} value \(a\) if it reduces to \(\inl a\) for some term \(a : \ta\) or an \emph{exceptional} value if it reduces to \(\inr e\) for some term \(e : \Ex\). 
  We say that a term of type \(\State \tarrow \ta\) is a \emph{state function}. 
  Finally we say that an exception \(e\) \emph{properly extends} \(s\) if \(e(s)\) is defined and \( s < e(s) \). 
\end{definition}

Note that different exceptions might be used to extend a knowledge state in incompatible ways, that is, by sending the same predicate symbol and the same tuple of parameters into different witnesses. 
The role of the \(\exmerge\) function is to put together the information from two exceptions into a single  exception. 
This means that \(\exmerge\) cannot simply put together all the information from its argument: if such information contains more that one distinct witness for the same existential statement it must choose one in some arbitrary way, for instance the leftmost or the minimum witness.
Many choices for \(\exmerge\) are possible, provided that they satisfy the following property:
\begin{equation}\label[property]{exmerge_prop} \tag{EX}
  \left. \begin{array}{r} 
    e_1 \text{ properly extends } s \\ 
e_2 \text{ properly extends } s \end{array} \right\} 
\text{ entails that } \exmerge e_1 e_2 \text{ properly extends } s,
\end{equation}
for any state \(s\) and exceptions \(e_1, e_2\).
Simple choices for \(\exmerge\) are the projections, always selecting the first or the second argument, or any combination of them using an arbitrary criterion to select which value to return. 
Of course, in general there is no need for \(\exmerge e_1 e_2\) to be \(e_1\) or \(e_2\). 

Before the definition we give an informal description of \(\IR\). 
The monad \(\IR\) maps a type \(\ta\) to \(\State \tarrow (\ta + \Ex)\), that is, values of type \(\ta\) are lifted to state functions that can throw exceptions. 
The term \(\monUnit\) maps a value \(a : \ta\) to a constant state function that returns the regular value \(a\). 
If \(f : \ta \tarrow \T\tb\) then \(\monStar f\) is a function with two arguments, a state \(s\) and a state function \(\mon a : \T\ta\). 
It evaluates \(\mon a\) on \(s\): if this results in a regular value \(a : \ta\) it applies \(f\) to \(a\), otherwise it propagates the exceptional value. 
Lastly, if \(\mon a : \T\ta\) and \(\mon b : \T\tb\) are two state functions, then \(\monMerge \mon a \mon b\) is a state function that evaluates its arguments on its state argument:
when both arguments are regular values it returns their pair, otherwise it propagates the exception(s), using \(\exmerge\) if both arguments are exceptional values. 

We are now ready to give the formal definition of \(\IR\). 
\begin{definition}[Interactive Realizability Monad] \label{def:ir_monad}
  Let \(\IR\) be the tuple \( (\T\), \(\monUnit\), \(\monStar\), \(\monMerge)\), where 
  \begin{align*} 
    \T \ta 
    &= \State \tarrow (\ta + \Ex), \\ 
    \monUnit[\ta] 
    &\equiv \abstr\tta\ta \abstrS{\_} \inl[\ta,\Ex] \tta, \\ 
    \monStar[\ta,\tb] 
    &\equiv \abstr{f}{\ta \tarrow \T\tb} \abstr{\mon\tta}{\T\ta} 
    \abstrS{s} \case[\ta,\Ex,\tb+\Ex] (\mon\tta s) 
    (\abstr\tta\ta f \tta s) \inr[\tb,\Ex], \\  
    \monMerge[\ta,\tb] 
    &\equiv \abstr{\mon\tta}{\T\ta} \abstr{\mon\ttb}{\T\tb} \abstrS{s} \case[\ta,\Ex,(\ta \times \tb)+\Ex] (\mon \tta s) \\ 
    &\phantomrel\equiv (\abstr\tta\ta \case[\tb,\Ex,(\ta \times \tb)+\Ex] (\mon\ttb s) (\abstr\ttb\tb \inl[\ta \times \tb,\Ex] (\pair \tta \ttb)) \inr[\ta \times \tb,\Ex]) \\ 
    &\phantomrel\equiv (\abstr{e_1}\Ex \case[\tb,\Ex,(\ta \times \tb)+\Ex] (\mon\ttb s) (\abstr\_\tb \inr[\ta \times \tb,\Ex] e_1) (\abstr{e_2}\Ex \inr[\ta \times \tb,\Ex] (\exmerge e_1 e_2))),
  \end{align*} 
  for some \(\exmerge\) satisfying \cref{exmerge_prop}. 
\end{definition} 

The term \(\monUnit[\ta]\) takes a value \(a : \ta\) and produces a constant state function that returns the regular (as opposed to exceptional) value \(a\). 
The term \(\monStar[\ta,\tb]\) takes a function \(f : \ta \tarrow \T\tb\) and returns a function \(f'\) which lifts the domain of \(f\) to \(\T\ta\). 
The state function returned by \(f' \) when applied to some \(\mon a : \T\ta\) behaves as follows: it evaluates \(\mon a\) on the state and if \(\mon a s\) is a regular value \(a : \ta\) it returns \(f a\); otherwise if \(\mon a s\) is an exception it simply propagates the exception. 
\begin{thesis}
  \[ 
    \begin{array}{c|c} 
      \mon a s & \mon b s \\ 
      \hline
      a & f a \\ 
      e & e \\ 
    \end{array} 
  \] 
\end{thesis}
The term \(\monMerge[\ta,\tb]\) takes two state functions \(\mon a : \T\ta\) and \(\mon b : \T\tb\) and returns a state function \( \mon c : \T(\ta \times \tb)\). 
When both arguments are regular values it returns their pair, otherwise it propagates the exception(s), using \(\exmerge\) if both arguments are exceptional. 
\begin{thesis}
  \[ \begin{array}{cc|c} 
      \mon a s & \mon b s & \mon c s \\ 
      \hline
      a & b & \pair a b \\ 
      e_1 & b & e_1 \\ 
      a & e_2 & e_2 \\ 
      e_1 & e_2 & \exmerge e_1 e_2 
  \end{array} \] 
\end{thesis}

We still need to check that \cref{def:ir_monad} is correct and that \(\IR\) really is a syntactic monad. 
\begin{proposition}[The Syntactic Monad \(\IR\)]
  \(\IR\) is a syntactic monad. 
\end{proposition}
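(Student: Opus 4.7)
The plan is to verify each of the three monad laws \cref{mon1,mon2,mon_merge1} separately, by unfolding the definitions of $\monUnit$, $\monStar$, $\monMerge$ from \cref{def:ir_monad} and performing the appropriate $\beta$-reductions, followed by the sum reductions $\reducesto_+$ and the conversions $=_+$, $=_\eta$ where needed. Since none of the laws involve $\exmerge$ applied to both arguments being exceptional (both $\monUnit \tta$ and $\monUnit \ttb$ always produce regular values), the property \cref{exmerge_prop} is not used here; it will only be relevant later, when showing soundness of $\EM$.

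First I will check \cref{mon2}, which is the simplest. Starting from $\monStar f (\monUnit \tta)$, two $\beta$-reductions give $\abstrS{s} \case (\monUnit \tta\, s) (\abstr\tta\ta f \tta s) \inr$. Since $\monUnit \tta\, s \reducesto \inl \tta$, the outer $\case$ $\reducesto_+$ to $f \tta s$, so the term reduces to $\abstrS{s} f \tta s$, which is $=_\eta f \tta$. Next, for \cref{mon1}, I unfold $\monStar \monUnit \mon\tta$ to $\abstrS{s} \case (\mon\tta s) (\abstr\tta\ta \monUnit \tta s) \inr$; reducing $\monUnit \tta s$ inside the abstraction yields $\abstrS{s} \case (\mon\tta s) (\abstr\tta\ta \inl \tta) \inr$, which by $=_+$ becomes $\abstrS{s} \mon\tta s$ and by $=_\eta$ becomes $\mon\tta$.

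The verification of \cref{mon_merge1} is purely computational: unfolding $\monMerge (\monUnit \tta) (\monUnit \ttb)$ and $\beta$-reducing three times gives
\[
  \abstrS{s} \case (\monUnit \tta\, s) (\abstr\tta\ta \case (\monUnit \ttb\, s)(\abstr\ttb\tb \inl\pair\tta\ttb) \inr) \text{(exception branch)}.
\]
Reducing $\monUnit \tta\, s$ and $\monUnit \ttb\, s$ to $\inl \tta$ and $\inl \ttb$ selects the regular-value branches on both sides, so the two $\case$ terms $\reducesto_+$ to $\inl \pair\tta\ttb$, leaving $\abstrS{s} \inl \pair\tta\ttb$, which is exactly $\monUnit (\pair\tta\ttb)$. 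Here $\exmerge$ never appears because the exceptional branches on both arguments are unreachable.

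The only conceptual subtlety is that \cref{mon1,mon2} are stated using $\leadsto$, the transitive-reflexive closure of $\reducesto$, yet the reductions above naturally terminate one $\eta$-step and one $+$-step short of the stated right-hand side. I will address this by reading $\leadsto$ modulo the equivalence $=$ introduced together with the reduction (so that $=_\eta$ and $=_+$ are allowed), which is the standard interpretation for monadic equations in a call-by-value $\lambda$-calculus; this is the only step that requires any care, the rest being a purely mechanical unfolding.
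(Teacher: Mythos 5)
Your proposal is correct and takes essentially the same route as the paper's own proof, which verifies \cref{mon1,mon2,mon_merge1} by exactly the same unfolding of $\monUnit$, $\monStar$ and $\monMerge$, followed by $\beta$-reductions, sum reductions and $\eta$/$+$-conversions, with $\exmerge$ never reached. The subtlety you flag about \cref{mon1,mon2} being stated with $\leadsto$ while the derivation needs the conversions $=_\eta$ and $=_+$ is present in the paper's proof as well, which freely mixes reduction steps with conversion steps in establishing the three laws.
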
 
\begin{proof}\allowdisplaybreaks[1]
  We just need to check that \(\monUnit\), \(\monStar\) and \(\monMerge\) satisfy all the properties in \cref{def:syntactic_monad}. 
  This amounts to perform some reductions. 
  \begin{itemize}
    \item[\ref{mon1}]
      Given any \(\mon\tta : \T\ta\), we have:
      \begin{align*}
        &\monStar[\ta,\ta] \monUnit[\ta] \mon\tta \\
        &\equiv 
        (\abstr{f}{\ta \tarrow \T\ta} \abstr{\mon \tta}{\T\ta} \abstrS{s} \case[\ta,\Ex,\ta+\Ex](\mon \tta s) (\abstr\tta\ta f \tta s) \inr[\ta,\Ex])
        \monUnit[\ta] \mon\tta \\ 
        &\reducesto_\beta 
        (\abstr{\mon\tta}{\T\ta} \abstrS{s} \case[\ta,\Ex,\ta+\Ex](\mon \tta s) (\abstr\tta\ta \monUnit[\ta] \tta s) \inr[\ta,\Ex])
        \mon\tta \\ 
        &\reducesto_\beta 
        \abstrS{s} \case[\ta,\Ex,\ta+\Ex](\mon\tta s) 
        (\abstr\tta\ta \monUnit[\ta] \tta s) 
        \inr[\ta,\Ex] \\ 
        &\equiv
        \abstrS{s} \case[\ta,\Ex,\ta+\Ex](\mon\tta s) 
        (\abstr\tta\ta (\abstr\tta\ta \abstrS{\_} \inl[\ta,\Ex] \tta) \tta s) 
        \inr[\ta,\Ex] \\ 
        &\reducesto_\beta 
        \abstrS{s} \case[\ta,\Ex,\ta+\Ex](\mon\tta s) 
        (\abstr\tta\ta (\abstrS{\_} \inl[\ta,\Ex] \tta) s) 
        \inr[\ta,\Ex] \\ 
        &\reducesto_\beta 
        \abstrS{s} \case[\ta,\Ex,\ta+\Ex](\mon\tta s) 
        (\abstr\tta\ta \inl[\ta,\Ex] \tta) 
        \inr[\ta,\Ex] \\ 
        &=_\eta
        \abstrS{s} \case[\ta,\Ex,\ta+\Ex](\mon\tta s) 
        \inl[\ta,\Ex] \inr[\ta,\Ex] \\ 
        &=_\times
        \abstrS{s} \mon\tta s \\
        &=_\eta
        \mon\tta, 
      \end{align*}
      as required by \cref{mon1}.
    \item[\ref{mon2}] 
      Given any \( f : \ta \tarrow \T\tb \) and \(\tta : \ta\), we have:
      \begin{align*}
        &\monStar[\ta,\tb] f (\monUnit[\ta] \tta) \\ 
        &\equiv 
        (\abstr{f}{\ta \tarrow \T\tb} \abstr{\mon\tta}{\T\ta} \abstrS{s} \case[\ta,\Ex,\tb+\Ex](\mon\tta s) (\abstr\tta\ta f \tta s) \inr[\tb,\Ex]) f 
        (\monUnit[\ta] \tta) \\ 
        &\reducesto_\beta 
        (\abstr{\mon\tta}{\T\ta} \abstrS{s} \case[\ta,\Ex,\tb+\Ex](\mon\tta s) (\abstr\tta\ta f \tta s) \inr[\tb,\Ex])
        (\monUnit[\ta] \tta) \\ 
        &\reducesto_\beta 
        \abstrS{s} \case[\ta,\Ex,\tb+\Ex] 
        (\monUnit[\ta] \tta s) 
        (\abstr\tta\ta f \tta s) 
        \inr[\tb,\Ex] \\ 
        &\equiv
        \abstrS{s} \case[\ta,\Ex,\tb+\Ex] 
        ((\abstr\tta\ta \abstrS{\_} \inl[\ta,\Ex] \tta) \tta s) 
        (\abstr\tta\ta f \tta s) 
        \inr[\tb,\Ex] \\ 
        &\reducesto_\beta 
        \abstrS{s} \case[\ta,\Ex,\tb+\Ex] 
        (\inl[\ta,\Ex] \tta) 
        (\abstr\tta\ta f \tta s) 
        \inr[\tb,\Ex] \\ 
        &\reducesto_\times 
        \abstrS{s} (\abstr\tta\ta f \tta s) \tta \\
        &\reducesto_\beta 
        \abstrS{s} f \tta s \\
        &=_\eta
        f \tta,
      \end{align*}
      as required by \cref{mon2}.


    \item[\ref{mon_merge1}] 
      Given any \(\tta : \ta \) and \( \ttb : \tb\), we have:
      \begin{align*}
        \monMerge &(\monUnit \tta) (\monUnit \ttb) \\
                  &\equiv (\abstr{\mon \tta}{\T\ta} \abstr{\mon \ttb}{\T\tb} \abstrS{s} \case[\ta,\Ex,(\ta \times \tb)+\Ex] (\mon \tta s)  \\ 
                  &\phantomrel\equiv \quad (\abstr\tta\ta \case[\tb,\Ex,(\ta \times \tb)+\Ex] (\mon \ttb s) (\abstr\ttb\tb \inl[\ta \times \tb,\Ex] (\pair\tta\ttb)) \inr[\ta \times \tb,\Ex]) \\ 
                  &\phantomrel\equiv \quad 
        (\abstr{e_1}\Ex \case[\tb,\Ex,(\ta \times \tb)+\Ex] (\mon\ttb s) (\abstr\tta\tb \inr[\ta \times \tb,\Ex] e_1)(\abstr{e_2}\Ex \inr[\ta \times \tb,\Ex] (\exmerge e_1 e_2) )) ) \\ 
        &\phantomrel\equiv \quad 
        (\monUnit \tta) (\monUnit \ttb) \\  
        &\reducesto_\beta \abstrS s \case[\ta,\Ex,(\ta \times \tb)+\Ex] (\monUnit \tta s)  \\ 
        &\phantomrel{\reducesto_\beta} (\abstr\tta\ta \case[\tb,\Ex,(\ta \times \tb)+\Ex] (\monUnit \ttb s) (\abstr\ttb\tb \inl[\ta \times \tb,\Ex] (\pair\tta\ttb)) \inr[\ta \times \tb,\Ex]) (\dotso) \\ 
        &\equiv \abstrS s \case[\ta,\Ex,(\ta \times \tb)+\Ex] ((\abstr\tta\ta \abstrS{\_} \inl[\ta,\Ex] \tta) \tta s)  \\ 
        &\phantomrel{\reducesto_\beta} (\abstr\tta\ta \case[\tb,\Ex,(\ta \times \tb)+\Ex] (\monUnit \ttb s) (\abstr\ttb\tb \inl[\ta \times \tb,\Ex] (\pair\tta\ttb)) \inr[\ta \times \tb,\Ex]) (\dotso) \\
        &\reducesto_\beta \abstrS s \case[\ta,\Ex,(\ta \times \tb)+\Ex] (\inl[\ta,\Ex] \tta) \\ 
        &\phantomrel{\reducesto_\beta} (\abstr\tta\ta \case[\tb,\Ex,(\ta \times \tb)+\Ex] (\monUnit \ttb s) (\abstr\ttb\tb \inl[\ta \times \tb,\Ex] (\pair\tta\ttb)) \inr[\ta \times \tb,\Ex]) (\dotso) \\
        &\reducesto_+ \abstrS s \case[\tb,\Ex,(\ta \times \tb)+\Ex] (\monUnit \ttb s) (\abstr\ttb\tb \inl[\ta \times \tb,\Ex] (\pair\tta\ttb)) \inr[\ta \times \tb,\Ex] \\
        &\equiv \abstrS s \case[\tb,\Ex,(\ta \times \tb)+\Ex] ((\abstr\ttb\tb \abstrS{\_} \inl[\tb,\Ex] \ttb) \ttb s) (\abstr\ttb\tb \inl[\ta \times \tb,\Ex] (\pair \tta \ttb)) \inr[\ta \times \tb,\Ex] \\
        &\reducesto_\beta \abstrS s \case[\tb,\Ex,(\ta \times \tb)+\Ex] (\inl[\tb,\Ex] \ttb) (\abstr\ttb\tb \inl[\ta \times \tb,\Ex] (\pair \tta \ttb)) \inr[\ta \times \tb,\Ex] \\
        &\reducesto_+ \abstrS s (\abstr\ttb\tb \inl[\ta \times \tb,\Ex] (\pair\tta\ttb)) \ttb \\
        &\reducesto_\beta \abstrS s \inl[\ta \times \tb,\Ex] (\pair \tta \ttb) \\ 
        &\equiv \monUnit[\ta \times \tb] (\pair \tta \ttb), 
      \end{align*}
      as required by \cref{mon_merge1}.
  \end{itemize}
\end{proof}


\subsection{The Interactive Realizability Semantics}

We now define a family of monadic realizability relations, one for each state \(s\), requiring that a realizer, applied to a knowledge state \(s\), 
either realizes a formula in the sense of the BHK semantics or can extend \(s\) with new knowledge.
\begin{definition}[Interactive Realizability Relation] 
  Let \(s\) be a state, \( \mon\rr : \mm\fa \) be a term and \(\fa\) a closed formula. 
  We define two realizability relations \(\monRe^s\) and \(\re^s\) by simultaneous induction on the structure of \(\fa\): 
  \begin{itemize}
    \item \(\mon\rr \monRe^s \fa \) if and only if we have that \(\mon\rr s\) is either a regular value \(\rr\) such that \(\rr \re^s \fa\) or an exceptional value \(e\) such that \(e\) properly extends \(s\), 
    \item \(\re^s\) is defined in terms of \(\monRe^s\) by the clauses in \cref{def:monadic_realizability_relation}. 
  \end{itemize}
  We say that \(\mon\rr\) (resp. \(\rr\)) is a \emph{monadic} (resp. \emph{inner}) \emph{interactive realizer} of \(\fa\) with respect to \(s\) when \(\mon\rr : \mm\fa\) (resp. \(\rr : \m\fa\)) and \(\mon\rr \monRe^s \fa\) (resp. \(\rr \re^s \fa\)).
\end{definition}

In order to show that any interactive realizability relations with respect to a state is a monadic realizability relation we need to verify that is satisfies the required properties. 
\begin{proposition}[The Monadic Realizability Relation \(\monRe^s\)] 
  For any state \(s\), \(\monRe^s\) is a monadic realizability relation. 
\end{proposition}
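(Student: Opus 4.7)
The plan is to unfold the definitions and verify each of the three properties \cref{real1,real2,real3} in turn by a short case analysis on the possible behaviors of the realizers when applied to the state $s$. Throughout, the two salient facts we exploit are (i) by the definition of $\monRe^s$, a monadic realizer $\mon{r} \monRe^s A$ if and only if $\mon{r}\,s$ reduces to either $\inl r$ with $r \re^s A$ or $\inr e$ with $e$ properly extending $s$, and (ii) the postulated \cref{exmerge_prop} on $\exmerge$.

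For \cref{real1}, I would compute that $\monUnit r$ applied to $s$ $\beta$-reduces to $\inl r$, regardless of $s$. Since by hypothesis $r \re^s A$, this yields a regular value realizing $A$, so $\monUnit r \monRe^s A$. For \cref{real2}, assume $r \re^s B \limply C$ and $\mon{a} \monRe^s B$, and reduce $\monStar r \mon{a}\,s$ to $\case (\mon{a}\,s)\,(\abstr{x}{} r\,x\,s)\,\inr$. The hypothesis on $\mon a$ gives two cases: if $\mon{a}\,s \leadsto \inl a$ with $a \re^s B$, we reduce further to $r\,a\,s$; but from $r \re^s B \limply C$ we know $r\,a \monRe^s C$, so $r\,a\,s$ has exactly the right shape (regular realizer of $C$, or exception properly extending $s$). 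If instead $\mon{a}\,s \leadsto \inr e$ with $e$ properly extending $s$, we reduce to $\inr e$, so $\monStar r \mon{a}$ is an exceptional monadic realizer of $C$.

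For \cref{real3}, assume $\mon{a} \monRe^s B$ and $\mon{b} \monRe^s C$ and do a four-way case analysis on the reducts of $\mon a\,s$ and $\mon b\,s$. When both are regular values $a,b$ with $a \re^s B$ and $b \re^s C$, $\monMerge \mon{a} \mon{b}\,s$ reduces to $\inl (\pair a b)$, and $\pair a b \re^s B \land C$ because $\prl (\pair a b) \leadsto a \re^s B$ and $\prr (\pair a b) \leadsto b \re^s C$. In the two mixed cases we reduce to $\inr e_1$ or $\inr e_2$, where the surviving exception still properly extends $s$ by hypothesis. In the doubly-exceptional case the reduct is $\inr(\exmerge e_1 e_2)$; here \cref{exmerge_prop} is precisely the property we need to conclude that $\exmerge e_1 e_2$ also properly extends $s$.

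The argument is essentially a long but mechanical reduction plus case analysis, so the only place any real content is needed is the doubly-exceptional subcase of \cref{real3}: this is where the whole definition of $\IR$ is calibrated so that \cref{exmerge_prop} exactly matches \cref{real3}. I would therefore organize the proof so that this last subcase is stated last and singled out as the point where the design choice for $\exmerge$ pays off; the rest is bookkeeping with $\beta$-, $+$-, and $\eta$-reductions analogous to those used in verifying that $\IR$ is a syntactic monad.
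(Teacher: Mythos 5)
Your proposal is correct and follows essentially the same route as the paper's own proof: verify \cref{real1} by reducing \(\monUnit \rr\, s\) to a regular value, verify \cref{real2} by a two-way case analysis on \(\mon\ra\, s\) after unfolding \(\monStar\), and verify \cref{real3} by the four-way case analysis in which \cref{exmerge_prop} handles the doubly-exceptional case. Your closing observation that the doubly-exceptional subcase is the only point of real content, with \(\exmerge\) calibrated exactly for it, matches the role that property plays in the paper's argument.
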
 
\begin{proof} 
  Let \(s\) be any state. 
  We have to show that \(\monRe^s\) satisfies the properties in \cref{def:monadic_realizability_relation}. 

  \begin{description}
    \item[\ref{real1}]
      We begin with \cref{real1}, namely, for any inner interactive realizer \(\rr\) of a formula \(\fa\) with respect to \(s\), we show that: 
      \[ \monUnit \rr \monRe^s \fa. \] 
      By unfolding the definition of \(\monUnit\) we have that:
      \begin{align*}
        \monUnit \rr s &\leadsto (\abstr{\_}\State \inl \rr) s \\ 
                       &\leadsto \inl \rr, 
      \end{align*}
      thus, by definition of \(\monRe^s\), we have to check that \( \rr \re^s \fa \), which holds by assumption. 

    \item[\ref{real2}]
      In order to show \cref{real2}, for any formulas \(\fa\) and \(\fb\), we take an inner interactive realizer \(\rr\) of \(\fa \limply \fb\) with respect to \(s\), 
      that is, a term \( \rr : \m\fa \tarrow \mm\fb \) such that \( \rr \ra \) is a monadic interactive realizer of \(\fb\) with respect to \(s\), for any inner interactive realizer \(\ra\) of \(\fa\) with respect to \(s\). 
      Then we have to show that, given a monadic interactive realizer \( \mon\ra \) of \(\fa\) with respect to \(s\), we have: 
      \[ \monStar \rr \mon\ra \monRe^s \fb. \]
      By definition of \(\monRe^s\) we apply \(s\) to the realizer and 
      by unfolding the definition of \(\monStar\) and reducing we get:
      \begin{equation} \label{eq:mr2r} 
        \monStar \rr \mon\ra s \leadsto 
        \case (\mon\ra s) (\abstr\tta{\m\fa} \rr \tta s) \inr. 
      \end{equation}
      Since \(\mon\ra \monRe^s \fa \), we know that \(\mon\ra s\) reduces to either a regular value \(\inl \ra\), for some inner realizer \(\ra\) of \(\fa\) with respect to \(s\), or an exceptional value \(\inr e\), for some exception \(e\) that properly extends \(s\). 
      \begin{itemize}
        \item In the former case, \eqref{eq:mr2r} reduces to \( \rr \ra s \).
          By the assumptions we made on \(\rr\) and \(\ra\), 
          \(\rr \ra\) is a monadic interactive realizer of \(\fb\) with respect to \(s\), 
          and thus \(\rr \ra s\) reduces to either 
          a regular value which is an inner interactive realizer of \(\fb\) with respect to \(s\) or 
          an exceptional value which properly extends \(s\). 
          Thus \(\monStar \rr \mon\ra\) is a monadic interactive realizer of \(\fb\) with respect to \(s\) as required.
        \item In the latter case, \eqref{eq:mr2r} reduces to \(\inr e\). 
          Since \(e\) properly extends \(s\), \(\monStar \rr \mon\ra\) is again a monadic interactive realizer of \(\fb\) with respect to \(s\) as required.
      \end{itemize}

    \item[\ref{real3}]
      Finally we have to show \cref{real3}. 
      We assume that \(\mon\ra\) and \(\mon\rb\) are monadic interactive realizers of \(\fa\) and \(\fb\) respectively, both with respect to \(s\).
      Then we have to show that:
      \[ \monMerge \mon\ra \mon\rb \monRe^s \fa \land \fb. \] 
      By definition of \(\monRe^s\), this means we have to show that 
      \[ \monMerge \mon\ra \mon\rb s \]  
      reduces to either a regular value which is an inner interactive realizers 
      Since \(\mon\ra\) and \(\mon\rb\) are monadic interactive realizers, \(\mon\ra s\) and \(\mon\rb s\) either reduce to regular values \(\inl \ra\) and \(\inl \rb\), where \(\ra\) and \(\rb\) are inner interactive realizers of respectively \(\fa\) and \(\fb\) with respect to \(s\), or to exceptional values \(\inr e_1\) and \(\inr e_2\), where \(e_1\) and \(e_2\) properly extend \(s\).
      By unfolding the definition of \(\monMerge\) and reducing we get:
      \begin{equation} \label{eq:mr3r}
        \begin{split}
          \monMerge \mon\ra \mon\rb s 
          &\leadsto 
          \case (\mon\ra s) (\abstr\tta{\m\fa} \case (\mon\rb s) (\abstr\ttb{\m\fb} \inl (\pair \tta \ttb)) \inr) \\ 
          &\phantomrel\leadsto 
          (\abstr{e_1}\Ex \case (\mon\rb s) (\abstr\_{\m\fb} \inr e_1)(\abstr{e_2}\Ex \inr (\exmerge e_1 e_2) ))
        \end{split}
      \end{equation}
      We distinguish four cases depending on how \(\mon\ra s\) and \(\mon\rb s\) reduce:
      \begin{description}
        \item[\( \mon\ra s \leadsto \inl \ra \) and \( \mon\rb s \leadsto \inl \rb \)]
          In this case \eqref{eq:mr3r} reduces as follows:
          \begin{align*}
            \monMerge \mon\ra \mon\rb s 
            &\leadsto \case (\mon\rb s) (\abstr\ttb{\m\fb} \inl (\pair \ra \ttb)) \inr \\ 
            &\leadsto \inl (\pair \ra \rb). 
          \end{align*}
          Since it is a regular value, we have to show that \(\pair \ra \rb \re^s \fa \land \fb \). 
          This follows by definition of \(\re^s\) and from the assumption that  \(\ra \re^s \fa\) and \(\rb \re^s \fb\). 
        \item[\( \mon\ra s \leadsto \inl \ra \) and \( \mon\rb s \leadsto \inr e_2 \)]
          In this case \eqref{eq:mr3r} reduces as follows:
          \begin{align*}
            \monMerge \mon\ra \mon\rb s 
            &\leadsto \case (\mon\rb s) (\abstr\ttb{\m\fb} \inl (\pair \ra \ttb)) \inr \\ 
            &\leadsto \inr e_2. 
          \end{align*}
          Since it is an exception value, we have to show that \(e_2\) properly extends \(s\). 
          This follows by the assumption that \(\rb \re^s \fb\). 
        \item[\( \mon\ra s \leadsto \inr e_1 \) and \( \mon\rb s \leadsto \inl \rb \)]
          In this case \eqref{eq:mr3r} reduces as follows:
          \begin{align*}
            \monMerge \mon\ra \mon\rb s 
            &\leadsto \case (\mon\rb s) (\abstr\_{\m\fb} \inr e_1) (\abstr{e_2}\Ex \inr (\exmerge e_1 e_2)) \\ 
            &\leadsto \inr e_1
          \end{align*}
          Since it is an exception value, we have to show that \(e_1\) properly extends \(s\). 
          This follows by the assumption that \(\ra \re^s \fa\). 
        \item[\( \mon\ra s \leadsto \inr e_1 \) and \( \mon\rb s \leadsto \inr e_2 \)]
          \begin{align*}
            \monMerge \mon\ra \mon\rb s 
            &\leadsto \case (\mon\rb s) (\abstr\_{\m\fb} \inr e_1) (\abstr{e_2}\Ex \inr (\exmerge e_1 e_2)) \\ 
            &\leadsto \inr (\exmerge e_1 e_2)
          \end{align*}
          Since it is an exception value, we have to show that \( \exmerge e_1 e_2 \) properly extends \(s\). 
          By \cref{exmerge_prop}, this happens whenever both \(e_1\) and \(e_2\) properly extends \(s\). 
          This is the case by the assumption that \(\ra \re^s \fa\) and \(\rb \re^s \fb\). \qedhere
      \end{description}
  \end{description}
\end{proof} 

Following \cref{def:mon_sem}, for each state \(s\), the monadic realizability relation \(\monRe^s\) induces a monadic realization semantics, which realizes \(\HA\) by \cref{thm:ha_soundness}.  
We employ this family of semantics indexed by a state in order to define another one, which does not depend on a state. 
\begin{definition}[Interactive Realizability Semantics] \label{def:interactive_realizability_semantics}
  We say that the decorated sequent \(\Gamma \monSeq \mon\rr : \fa\) is valid if and only if it is valid with respect to the semantics induced by each \(\monRe^s\) for every state \(s\). 
\end{definition} 
We shall show how we can realize \(\EM\) in this semantics.

\subsection{Realizing the Excluded Middle Axiom}
Interactive realizability aims at producing a realizer of the \(\EM\) axiom, a weakened form of the excluded middle restricted to \(\Sigma^0_1\) formulas. 
A generic instance of \(\EM\) is written as:
\[ 
  \EM(\lafa, t_1, \dotsc, t_k) \equiv (\qforall\lvb \lafa(t_1, \dotsc, t_k, \lvb)) \lor (\qexists\lvb \lnot \lafa(t_1, \dotsc, t_k, \lvb)). 
\]
for any \(k+1\)-ary relation \(\lafa\) and arithmetic terms \(t_1, \dotsc, t_k\). 
We call \emph{universal} (resp. \emph{existential}) \emph{disjunct} the first (resp. the second) disjunct of \(\EM(\lafa, t_1, \dotsc, t_k)\). 
For more information on \(\EM\) see \cite{akamaBHK04}. 

The main hurdle we have to overcome in order to build a realizer of \( \EM(\lafa, t_1, \dotsc, t_k) \) is that, by the well-known undecidability of the halting problem, there is no total recursive function that can choose which one of the disjuncts holds. 
Moreover, if the realizer chooses the existential disjunct, it should also be able to provide a witness. 

As we said before terms of type \(\State\) contain knowledge about witnesses of \(\Sigma^0_1\) formulas. 
In order to query a state \(s\) for a witness \(n\) of 
\(\qexists\lvb \lafa(\num n_1, \dotsc, \num n_k, \lvb)\) for some natural numbers \(n_1, \dotsc, n_k\), we need to extend system \(T'\) with the family of term constants:
\[ \query_\lafa : \State \tarrow \underbrace{\Nat \tarrow \dotsb \tarrow \Nat}_k \tarrow \Unit + \Nat. \] 
indexed by \(\lafa \in \LRel_{k+1} \) (and implicitly by \(k \ge 0\)). 
The value of \( \query_\lafa s \num n_1 \dotsm \num n_k \) should be either \(\unit\) if the \(s\) contains no information about such an \(n\) or a numeral \(\num n\) such that \( \interp\lafa(n_1, \dotsc, n_k, n) \) is true. 
More formally we require that \(\query_\lafa\) satisfies the following syntactic property: 
\begin{equation} \tag{IR1} \label[property]{query_prop}
  \query_\lafa s \num n_1 \dotsm \num n_k \leadsto \inr \num n \text{ entails that } \lafa (\num n_1, \dotsc, \num n_k, \num n) \text{ holds}
\end{equation} 
for all natural numbers \( n_1, \dotsc, n_k \). 
This amounts to require that state do not answer with wrong witnesses and it follows immediately from the intended interpretation if we suitably define \( \query_\lafa s \num n_1 \dotsm \num n_k \) using \( \interp s (\lafa, (n_1, \dotsc, n_k)) \). 

An interactive realizer \(\mon\rr_\lafa\) of \(\EM(\lafa)\) will behave as follows. 
When it needs to choose one of the disjuncts it queries the state. 
If the state answer with a witness, \(\mon\rr_\lafa\) reduces to a realizer \(\mon\rr_\exists\) of the existential disjunct containing the witness given by the state. 
Otherwise we can only assume (since we do not know any witness) that the universal disjunct holds and thus \(\mon\rr_\lafa\) reduces to a realizer \(\mon\rr_\forall\) of the universal disjunct. 
This assumption may be wrong if the state is not big enough. 
When \(\mon\rr_\forall\) is evaluated on numerals (this correspond to the fact that an instance \(\lafa(\num n_1, \dotsc, \num n_k, \num n)\) of the universal disjunct assumption is used in the proof), \(\mon\rr_\forall\) checks whether the instance holds. 
If this is not the case the realizer made a wrong assumption and \(\mon\rr_\forall\) reduces to an exceptional value, with the effect of halting the regular reduction and returning the exceptional value. 
For this we need to extend the system \(T'\) with the last family of terms:
\[ \eval_\lafa : \underbrace{\Nat \tarrow \dotsb \tarrow \Nat}_k \tarrow \Nat \tarrow \Unit + \Ex, \] 
again indexed by \(\lafa \in \LRel_k\). 
We shall need \(\eval_\lafa\) to satisfy the following property: 
\[ \tag{IR2} \label[property]{eval_prop}
  \eval_\lafa \num n_1 \dotsm \num n_k \num n \leadsto \inl \unit \text{ entails that } \lafa (\num n_1, \dotsc, \num n_k, \num n) \text{ does not hold}, 
\] 
for all natural numbers \(n_1, \dotsc, n_k, n \). 
This guarantees that if the universal disjunct instance does not hold \(\eval_\lafa\) reduces to an exceptional value. 
Thus an interactive realizer which uses a false instance of an universal assumption cannot reduce to a regular value. 

The last property we need is that for any state \(s\) and natural numbers \(n_1, \dotsc, n_k\), 
\[ \tag{IR3} \label[property]{query_eval_prop} 
  \left. \begin{array}{r}
    \query_\lafa s \num n_1 \dotsm \num n_k \leadsto \inl \unit \\
    \quad \eval \num n_1 \dotsm \num n_k \leadsto \inr e 
\end{array} \right\}
\text{ entails that \(e\) properly extends \(s\).} 
\]
This condition guarantees that we have no ``lazy'' realizers that throw exceptions encoding witnesses that are already in the state. 



\newtermconstant\emReal{\termname{em}}
Now we can define a realizer for \(\EM(\lafa, t_1, \dotsc, t_k)\) as follows:
\begin{align*}
  \mon\emReal(\lafa, t_1, \dotsc, t_k)
  \equiv \abstr{s}\State \inl 
  (\case &(\query_\lafa s t_1 \dotsm t_k) \\
         &(\abstr\_\Unit \inl (\abstr\lvb\Nat \abstr\_\State \eval_\lafa t_1 \dotsm t_k \lvb)) \\
         &(\abstr\lvb\Nat \inr (\pair \lvb \monUnit))). 
\end{align*} 
Of course we need to check that our definition is correct. 
\begin{proposition}[Interactive Realizer for \(\EM\)] \label{thm:realizer_for_em}
  Given any \(\EM\) instance \(\EM(\lafa, t_1, \dotsc, t_k)\), the decorated sequent:
  \begin{equation} \label{eq:em_seq} 
    \alpha_1 : \fa_1, \dotsc, \alpha_l : \fa_l \monSeq \mon\emReal(\lafa, t_1, \dotsc, t_k) : \EM(\lafa, t_1, \dotsc, t_k), 
  \end{equation}
  is valid with respect to the interactive realizability semantics given in \cref{def:interactive_realizability_semantics}. 
\end{proposition}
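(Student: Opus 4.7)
The plan is to unfold the definition of validity for the decorated sequent in the sense of \cref{def:interactive_realizability_semantics}. Since $\mon\emReal(\lafa, t_1, \dotsc, t_k)$ contains none of the proof variables $\alpha_1, \dotsc, \alpha_l$, the hypotheses play no role; after substituting numerals for the arithmetic free variables, each $t_i$ reduces to some numeral $\num{n_i}$ by normalization of arithmetic terms. The task therefore reduces to showing that, for every state $s$ and all natural numbers $n_1, \dotsc, n_k$, the closed term $\mon\emReal(\lafa, \num{n_1},\dotsc,\num{n_k})$ monadic-realizes the closed formula $\EM(\lafa, \num{n_1},\dotsc,\num{n_k})$ with respect to the relation $\monRe^s$.

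Fix such $s$ and $n_1, \dotsc, n_k$. Because the realizer begins with $\abstr{s}\State \inl(\cdots)$, applying it to $s$ always produces a regular value, so by the definition of $\monRe^s$ it suffices to verify that the content of the outer $\inl$ is an inner realizer of $\EM(\lafa,\num{n_1},\dotsc,\num{n_k})$ with respect to $s$. By the normal-form property of $T'$, the term $\query_\lafa s \num{n_1}\dotsm \num{n_k}$ reduces to either $\inl \unit$ or $\inr \num n$ for some natural number $n$, which gives two cases.

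In the first case the whole realizer reduces to $\inl(\inl R_\forall)$, where $R_\forall \equiv \abstr\lvb\Nat \abstr\_\State \eval_\lafa \num{n_1}\dotsm\num{n_k} \lvb$; by the clauses of $\re^s$ for $\lor$ and $\forall$, it suffices to check that, for each numeral $\num m$, the term $R_\forall \num m$ monadic-realizes the body of the universal disjunct with respect to $s$. Evaluating $R_\forall \num m$ at $s$ reduces to $\eval_\lafa \num{n_1}\dotsm\num{n_k} \num m$, which normalizes to either $\inl \unit$ or $\inr e$: property IR2 handles the regular-value branch, while property IR3 --- whose hypothesis on $\query$ is exactly the condition defining this case --- guarantees that in the exceptional branch $e$ properly extends $s$, as $\monRe^s$ requires. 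In the second case the realizer reduces to $\inl(\inr(\pair{\num n}{\monUnit}))$, and by the clause for $\exists$ it remains to show that $\monUnit$ realizes the body of the existential disjunct at witness $\num n$; this combines property IR1 (which pins down the truth value of the relevant atomic instance) with the observation that $\monUnit$ canonically lifts the trivial $\unit$-typed inner realizer to a monadic one, in the spirit of \cref{real1}.

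The main obstacle is the first case, where the bookkeeping of states needs care: the subterm $R_\forall \num m$ carries its own $\State$ argument, and the use of IR3 crucially combines the earlier failure of $\query$ on $s$ with the later invocation of $\eval$ to conclude that the thrown exception genuinely extends the state $s$ originally fixed by the semantics. The properties IR1--IR3 are exactly the minimum needed to make these two cases go through, which justifies having required all three on $\query$ and $\eval$.
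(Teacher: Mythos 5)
Your argument follows the paper's proof essentially step for step: fix an arbitrary state \(s\), observe that the realizer ignores the proof variables and always returns a regular value, case-split on whether \(\query_\lafa\) answers \(\inl \unit\) or \(\inr \num{n}\), and discharge the universal branch via \cref{eval_prop} and \cref{query_eval_prop} and the existential branch via \cref{query_prop}, exactly as the paper does. The only step to phrase more carefully is the existential case: the paper's justification there is pure vacuity --- \cref{query_prop} forces the atomic antecedent of \(\lnot\lafa(t_1,\dotsc,t_k,\num{n})\) to be false and hence unrealizable, so the implication clause of \(\re^s\) holds with nothing left to check --- rather than any lifting property of \(\monUnit\) in the spirit of \cref{real1}, since \(\unit\) is never an inner realizer of \(\lfalse\) and one could not establish \(\monUnit u \monRe^s \lfalse\) directly for an actual \(u\).
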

\begin{proof}
  Let \(\mon\rr\) and \(\fa\) stand for \(\mon\emReal(\lafa, t_1, \dotsc, t_k)\) and \(\EM(\lafa, t_1, \dotsc, t_k)\) in the following proof. 
  By \cref{def:interactive_realizability_semantics}, we have to prove that \eqref{eq:em_seq} is valid with respect to the semantics induced by \(\monRe^s\) for any given state \(s\). 

  Let the free (arithmetic) variables of \(\fa\) be \(\lva_1, \dotsc, \lva_m\) and let \(\Omega \equiv \lva_1 \coloneqq \num n_1\), \(\dotsc\), \(\lva_m \coloneqq \num n_m\) be a substitution for them. 
  Let \(\Sigma\) be a substitution for the assumption variables in \(\Gamma\). 
  Note that the only free variables in \(\mon\rr\) are arithmetic, thus \(\mon\rr[\Sigma]\) is the same as \(\mon\rr\).

  Thus we have to prove that 
  \[ \mon\rr[\Sigma,\Omega] \monRe^s \fa[\Omega]. \] 
  By definition of \(\monRe^s\), we apply \(s\) and reduce:
  \begin{align*}
    \mon\rr[\Sigma,\Omega] s \leadsto 
    \inl (\case &(\query_\lafa s t_1[\Omega] \dotsm t_k[\Omega]) \\
                &(\abstr\_\Unit \inl (\abstr\lvb\Nat \abstr\_\State \eval_\lafa t_1[\Omega] \dotsm t_k[\Omega] \lvb)) \\
                &(\abstr\lvb\Nat \inr (\pair \lvb \monUnit))), 
  \end{align*}
  and since \(\mon\rr[\Sigma,\Omega] s\) is a regular value, \(\mon\rr[\Sigma,\Omega]\) is a monadic realizer of \(\fa\) if and only if:
  \begin{equation} \label{eq:emr1}
    \begin{split}
      \case &(\query_\lafa s t_1[\Omega] \dotsm t_k[\Omega]) \\
            &(\abstr\_\Unit \inl (\abstr\lvb\Nat \abstr\_\State \eval_\lafa t_1[\Omega] \dotsm t_k[\Omega] \lvb)) \\
            &(\abstr\lvb\Nat \inr (\pair \lvb \monUnit)). 
    \end{split} 
  \end{equation}
  is an inner realizer for \(\fa\). 
  \( \query_\lafa s t_1[\Omega] \dotsm t_k[\Omega] \) reduces either to \(\inl \unit\) or to \(\inr \num n\) for some natural number \(n\). 
  We distinguish the two cases. 
  \begin{itemize}
    \item[\(\inl \unit\)]
      In the first case \eqref{eq:emr1} reduces to:
      \[
        \inl (\abstr\lvb\Nat \abstr\_\State \eval_\lafa t_1[\Omega] \dotsm t_k[\Omega] \lvb). 
      \]
      By definition of \(\re^s\), this is an inner realizer for \(\fa\) if  and only if:
      \[ 
        \rr_\forall \equiv \abstr\lvb\Nat \abstr\_\State \eval_\lafa t_1[\Omega] \dotsm t_k[\Omega] \lvb, 
      \]
      is an inner realizer for \(\qforall\lvb \lafa(t_1[\Omega], \dotsc, t_k[\Omega], \lvb)\). 
      Again by definition of \(\re^s\), this is the case if and only if 
      \[ 
        \rr_\forall \num n \monRe^s \lafa (t_1[\Omega], \dotsc, t_k[\Omega], \num n), 
      \] 
      for any natural number \(n\).
      Following the definition of \(\monRe^s\), we apply \(s\) to \(\rr_\forall \num n\) and reduce:
      \[ \rr_\forall \num n s \leadsto \eval_\lafa t_1[\Omega] \dotsm t_k[\Omega] \num n \]
      Then \( \rr_\forall \num n s \) reduces either to \(\inl \unit\) or to \(\inr e\), for some exception \(e\). 
      \begin{itemize}
        \item[\(\inl \unit\)]
          In the first case, we have to check that: 
          \[ \unit \re^s \lafa (t_1[\Omega], \dotsc, t_k[\Omega], \num n) \]
          By definition of \(\re^s\), this is the case if and only if \(\lafa (t_1[\Omega], \dotsc, t_k[\Omega], \num n) \) and this follows from \cref{eval_prop}. 
        \item[\(\inr e\)]
          In the second case, by definition of \(\monRe^s\), 
          we have to check that \(e\) properly extends \(s\) and 
          this follows from \cref{query_eval_prop}. 
      \end{itemize}
    \item[\(\inr \num n\)]
      In this case, \eqref{eq:emr1} reduces to:
      \[ \inr (\pair \num n \monUnit). \]
      By definition of \(\re^s\), this is an inner realizer for \(\fa\) if and only if 
      \[ \pair \num n \monUnit \]
      is an inner realizer for 
      \[ \qexists\lvb \lnot \lafa(t_1[\Omega], \dotsc, t_k[\Omega], \lvb). \] 
      Again by definition of \(\re^s\), this is the case if and only if 
      \[ \monUnit \re^s \lnot \lafa(t_1[\Omega], \dotsc, t_k[\Omega], \num n). \]
      Since \(\lnot \lafa(t_1[\Omega], \dotsc, t_k[\Omega], \num n) \) is defined as \( \lafa(t_1[\Omega], \dotsc, t_k[\Omega], \num n) \limply \lfalse \), 
      again by definition of \(\re^s\), we have to show that:
      \[ \monUnit u \monRe^s \lfalse, \] 
      for any inner realizer \(u\) of \( \lafa (t_1[\Omega], \dotsm, t_k[\Omega], \num n) \). 
      However, by \cref{query_prop}, \( \lafa (t_1[\Omega]\), \(\dotsm\), \(t_k[\Omega]\), \(\num n) \) does not hold, so there is no such \(u\). 
      Thus
      \[ \monUnit u \monRe^s \lfalse \] 
      holds vacuously. 
  \end{itemize}
\end{proof}

\begin{omitted} 

  Then we just need to write the realizer \(t^P\) of the universal closure of \(F^P_\lor\), that is, \(\EM(P)\): 
  \[ \monRaiseN{0} (\abstr{x_1}\Nat \dotsm \monRaiseN{0} (\abstr{x_k}\Nat t^P_\lor (x_{1}, \dotsc, x_k))) \monRe \qforall{x_1 \dotso x_k} F^P_\lor(x_1, \dotsc, x_k). \]

  Note that the type for an interactive realizer of \(\EM\) is: 
  \[ \mm{\EM(P)} = \T (\Nat \tarrow \dotsb \tarrow \T(\Nat \tarrow \T((\Nat \times \m{P}) + (\Nat \tarrow \T\m{\lnot P})))), \] 
  and since \( \m{P} = \m{\lnot P} = \Unit \), 
  \[ \mm{\EM(P)} = \T (\Nat \tarrow \dotsb \tarrow \T(\Nat \tarrow \T((\Nat \times \Unit) + (\Nat \tarrow \T\Unit)))). \] 
\end{omitted}

Then we can extend our proof decoration for \(\HA\) (see \cref{fig:monadic_decorated_rules}) with the new axiom rule: 
\[ 
  \PrAx{}
  \PrLbl\EM 
  \PrUn{\Gamma \monSeq \mon\emReal(\lafa, t_1, \dotsc, t_k) : \EM(\lafa, t_1, \dotsc, t_k)}
  \DisplayProof 
\]
and show that interactive realizability realizes the whole \(\HA+\EM\). 
\begin{theorem}[Soundness of \(\HA+\EM\) with respect to Interactive Realizability Semantics] 
  Let \(\mathcal{D}\) be a derivation of \( \Gamma \seq \fa \) in \(\HA+\EM\). Then \( \Gamma \monSeq \mathcal{D}^* : \fa \), where \(\mathcal{D}^*\) is the term obtained by decorating \(\mathcal{D}\), is valid with respect to the interactive realizability semantics. 
\end{theorem}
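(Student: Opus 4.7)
The plan is to proceed by induction on the structure of the decorated derivation $\mathcal{D}$, reducing the problem to facts that have already been established. Recall that, by \cref{def:interactive_realizability_semantics}, the decorated sequent $\Gamma \monSeq \mathcal{D}^* : \fa$ is valid with respect to the interactive realizability semantics if and only if it is valid with respect to the monadic realizability semantics induced by $\monRe^s$ for every state $s$. So I would fix an arbitrary state $s$ at the start and prove validity with respect to $\monRe^s$, since $s$ plays no special role in the argument.

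First I would split the rules of $\HA+\EM$ into two groups: the rules of $\HA$ itself and the axiom rule $\EM$ newly added to the proof decoration. For the rules of $\HA$, the work has essentially already been done: the soundness theorem \cref{thm:ha_soundness} states that any derivation in $\HA$ yields a valid decorated sequent with respect to any monadic realizability relation. Combined with the fact that $\monRe^s$ is a monadic realizability relation (proved just above), this handles every rule of $\HA$ simultaneously when it occurs as the last step of $\mathcal{D}$: we simply apply the inductive hypothesis to the immediate subderivations to obtain validity of the premises with respect to $\monRe^s$, and then apply the corresponding clause from the proof of \cref{thm:ha_soundness} to get validity of the conclusion.

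The only genuinely new case is the $\EM$ rule. Here the premises are empty and the conclusion is the decorated sequent
\[ \Gamma \monSeq \mon\emReal(\lafa, t_1, \dotsc, t_k) : \EM(\lafa, t_1, \dotsc, t_k), \]
whose validity with respect to $\monRe^s$, for every state $s$, is exactly the content of \cref{thm:realizer_for_em}. So this case is discharged by direct citation.

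The argument is therefore essentially a bookkeeping combination of the two ingredients developed earlier, and there is no genuine obstacle: the only subtle point is to observe that \cref{thm:ha_soundness} is uniform in the choice of monadic realizability relation, so that we may apply it to each $\monRe^s$ separately without additional work, and that \cref{thm:realizer_for_em} likewise delivers validity with respect to every $\monRe^s$ at once. The universal quantification over states in \cref{def:interactive_realizability_semantics} is thus handled ``for free'' by both building blocks.
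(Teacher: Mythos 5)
Your proposal is correct and matches the paper's own proof: the paper likewise fixes an arbitrary state \(s\), proceeds by induction on the decorated derivation, discharges all \(\HA\) rules by appealing to \cref{thm:ha_soundness} together with the fact that \(\monRe^s\) is a monadic realizability relation, and handles the \(\EM\) axiom by citing \cref{thm:realizer_for_em}. No differences worth noting.
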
 
\begin{proof} 
  By definition of interactive realizability semantics, we have to prove that \(\Gamma \monSeq \mathcal{D}^* : \fa \) is valid with respect to the monadic realizability semantics induced by \(\monRe^s\) for any state \(s\). 
  So we fix a generic state \(s\) and proceed by induction on the structure of the decorated version of \(\mathcal{D}\), exactly as in \cref{thm:ha_soundness}, that is, we prove that each rule whose premisses are valid has a valid conclusion. 
  Since \(\monRe^s\) is a monadic realizability relation, this has already been shown in the proof of \cref{thm:ha_soundness} for all the rules in \(\HA\). 
  We only need to check the \(\EM\) axiom, but we have already done this in \cref{thm:realizer_for_em}. 
\end{proof} 

\section{Conclusions} 

As we mentioned in the introduction, 
interactive realizability describes a learning by trial-and-error process. 
In our presentation we focused on the evaluation of interactive realizers, which corresponds to the trial-and-error part and is but a single step in the learning process.  
For the sake of completeness, we briefly describe the learning process itself. 

We can interpret an interactive realizer \(\mon\rr\) of a formula \(\fa\) as a function \(f\) from states to states. 
Recall that the intended interpretation of a term \(e : \Ex \) is a function that extends states. 
Then we can define \(f\) by means of \(\mon\rr\) as follows:
\[
  f (s) = \begin{cases}
    \interp{e} (s) \qquad & \text{if } \mon\rr \leadsto \inr e, \\ 
    s \qquad & \text{if } \mon\rr \leadsto \inl t \text{ for some t.}
    \end{cases}
\]
Note that by definition of \(\monRe\) we know that in the first case \(\interp{e} s\) properly extends \(s\). 
We can think of \(f\) as a learning function: we start from a knowledge state and try to prove \(\fa\) with \(\mon\rr\). If we fail, we learn some information that was not present in the state and we use it to extend the state. If we succeed then we do not learn anything and we return the input state. 
Thus note that the fixed points of \(f\) are exactly the states containing enough information to prove \(\fa\).

By composing \(f\) with itself we obtain a learning process: we start from some state (for instance the empty one) and we apply \(f\) repeatedly. 
If in this repeated application eventually produces a fixed point, the learning process ends, since we have the required information to prove \(\fa\). 
Otherwise we build an infinite sequence of ever increasing knowledge states whose information is never enough to prove \(\fa\). 
The fact that the learning process described by interactive realizability ends is proved in Theorem 2.15 of \cite{aschieriB10}. 

We wish to point out one of the main differences between our presentation of interactive realizability and the one given in \cite{aschieriB10}. 
In \cite{aschieriB10}, the formula-as-types correspondence is closer to the standard one. 
Exceptions are allowed only at the level of atomic formulas and \(\exmerge\) is only used in atomic rules. 
For instance a realizer for a conjunction \(\fa \land \fb\) could normalize to \(\pair{e_1}{e_2}\). 
In this case, the failure of the realizer is not apparent (at least at the top level) and it is not clear which one of \(e_1\) or \(e_2\) we are supposed to extend the state with. 
In our version exceptions are allowed at the top level of any formula and they ``climb'' upwards whenever possible by means of \(\exmerge\). 


\chapter{A Witness Extraction Technique by Proof Normalization}

\newcommand\foA{\mathfrak{a}}
\newcommand\foB{\mathfrak{b}}
\newcommand\foC{\mathfrak{c}}
\newcommand\brA\zeta
\newcommand\brB\eta
\newcommand\brC\eta
\newcommand\fd{D}
\newcommand\riC\gamma
\newcommand\brApp[2]{#1,#2}
\newcommand\derA\Pi
\newcommand\derB\Sigma

\newcommand\foa{\mathfrak{a}}
\newcommand\fob{\mathfrak{b}}
\newcommand\foc{\mathfrak{c}}
\newcommand\ri[1]{\ifthenelse{\equal{#1}\foA}\riA{%
    \ifthenelse{\equal{#1}\foB}\riB{%
      \ifthenelse{\equal{#1}\foC}\riC{}%
    }%
  }%
}
\newcommand\f[1]{\ifthenelse{\equal{#1}\foA}\fa{%
    \ifthenelse{\equal{#1}\foB}\fb{%
      \ifthenelse{\equal{#1}\foC}\fc{|{#1}|}%
    }%
  }%
}
\newcommand\der[1]{\check{#1}}

\newcommand\RuleNameEM[1][1]{\mathrm{EM}\optionalSubscript{#1}}
\newcommand\RuleNameIndE{\mathrm{Ind}}
\newcommand\RuleNameAtomI{\RuleName{\mathcal A}{I}}
\newcommand\RuleNameAtomE{\RuleName{\mathcal A}{E}}

\newcommand\RedNameProp[1]{\RuleName{{#1}}{-red}}
\newcommand\RedNamePerm[2][]{\RuleName{\ifthenelse{\equal{#1}{}}{#2}{{#2}{/}{#1}}}{-perm}}
\newcommand\RedNameSimpl[1]{\RuleName{{#1}}{-simpl}}

\newcommand\RedNameInd{\RedNameProp\RuleNameIndE}
\newcommand\RedNameWitness{\RedNameProp{\text{Wit}}}

We present a new set of reductions for derivations in natural deduction that can extract witnesses from closed derivations of simply existential formulas in Heyting Arithmetic (\(\HA\)) plus the law of the excluded middle restricted to simply existential formulas (\(\EM\)). 

The reduction we present are inspired by the informal idea of learning by making falsifiable hypothesis and checking them, and by the interactive realizability interpretation. We extract the witnesses directly from derivations in \(\HA+\EM\) by reduction, without encoding derivations by a realizability interpretation.

\section{Introduction}


In proof theory there are reductions that express the computational interpretation we give to logical connectives, quantifiers and, in the case of arithmetic, induction.
Proofs in intuitionistic logic are shown to produce a witness for existential statements:
any proof can be reduced to normal form, in which no more reductions are possible,  
and in a normal proof of an existential statement a witness always appears in a predictable location.
We want to obtain the same result for proofs of semi-decidable statements in intuitionistic logic augmented with \(\EM\) and reduction rules inspired by a trial-and-error interpretation. 

We work in Heyting Arithmetic (\(\HA\)) extended with \(\EM\), which is weaker than classical arithmetic but strong enough to prove non-trivial non-constructive results: 
for instance the fact that every function \(f : \N \to \N\) has a minimum. 
By modifying the standard reductions for Heyting Arithmetic (see \cite{prawitz71}), we show that normal proofs of existential statements in \(\HA+\EM\) produce a witness\footnote{Under suitable assumptions on the proof.}, as they do in the intuitionistic case. 

The fact that classical arithmetic is a conservative extension of \(\HA\) for \(\Pi^0_2\) statements is well known and the fact that we can extract witnesses from classical proofs of \(\Sigma^0_1\) statements follows immediately. 
However proofs of these results usually employ the G\"odel-Gentzen negative translation combined with variants of Kreisel's modified realizability semantics or Friedman's translation. 
Here, by purely proof theoretical means, we prove a slightly weaker result without resorting to negative translations and 
using reductions justified in terms of Interactive Realizability. 

An important remark is that in this chapter we do not prove strong normalization, but a just a result on the form of normal proofs.
A formal type theoretic version of the reductions given in the following and strong normalization proof could not be included in this dissertation for reasons of time, but it will appear in \cite{aschieriBB13}. 
In this section we prove that, \emph{if} we have normalization, then all derivations of simply existential statements compute a witness by a method we describe as trial-and-error. 

\section{A Formal System for Intuitionistic Arithmetic}

\begin{omitted}
  The language of \(\HA\) is a first-order language with connectives \(\land, \lor, \limply\) and quantifiers \(\forall\) and \(\exists\).
  As usual we say that a formula is \emph{closed} if it has no free variables and that a rule is \emph{atomic} if its premisses and its conclusion can only be atomic formulas\footnote{An instance of the induction rule can have atomic premisses and conclusion, but this is not required in general, so the induction rule is not atomic.}.

  The language of \(\HA\) includes: 
  \begin{itemize}
    \item variables \(\lva, \lvb, \lvc, \dotsc\) for natural numbers,
    \item function symbols for all the primitive recursive functions,
    \item the equality binary predicate symbol \(=\), 
    \item arithmetical terms (metavariables \(\lta, \ltb, \ltc\)), atomic formulas (\(\lafa\)) and formulas (\(\fa, \fb, \fc\)), defined inductively as usual.
  \end{itemize}
  In particular we assume that we have the function symbols \(\num 0, \succ \).

  A \emph{numeral} is any term of the form \(\succ^n(\num 0)\), for some \(n \in \N\). 
  We assume having a set of algebraic reduction rules for primitive recursive functions. 
  If, for some \(\lfb : \N^n \to \N\) and \(\lfc : \N^{n+2} \to \N\), \(\lfa\) denotes the primitive recursive \((n{+}1)\)-ary function defined by the equations:
  \begin{align*}
    \lfa(0, \lva_1, \dotsc, \lva_n) &= \lfb(\lva_1, \dotsc, \lva_n), \\ 
    \lfa(\succ(\lva), \lva_1, \dotsc, \lva_n) &= \lfc(\lva, \lfa(\lva, \lva_1, \dotsc, \lva_n), \lva_1, \dotsc, \lva_n),
  \end{align*}
  then we add the reductions:
  \begin{align*}
    \lfa(0, \lva_1, \dotsc, \lva_n) &\reducesto \lfb(\lva_1, \dotsc, \lva_n), \\ 
    \lfa(\succ(\lva), \lva_1, \dotsc, \lva_n) &\reducesto \lfc(\lva, \lfa(\lva, \lva_1, \dotsc, \lva_n), \lva_1, \dotsc, \lva_n).
  \end{align*}
  This reduction system is strongly normalizing and has an unique normal form. 
  Thus any closed normal term is a numeral. 
  We reduce terms inside formulas and we consider two formulas equal when they have the same normal form. 

  We assume that we have a recursive set of atomic rules that is correct (that is, it derives true statements only) and contains:
  \begin{itemize}
    \item the first-order axioms and rules for equality, 
    \item the compatibility rule for equality and functions,
    \item the \emph{Ex Falso Quodlibet} rule for atomic formulas \(\lafa\):
      \[ 
        \PrAx\lfalse 
        \PrLbl{\RuleName\lfalse{E}} 
        \PrUn\lafa 
        \DisplayProof 
      \] 
      where \(\lfalse\) denotes the \(0\)-ary relation that never holds; 
    \item the rules for zero and the successor functions:
      \[
        \PrAx{\succ(\lva) = \num 0}
        \PrUn\lfalse
        \DisplayProof
        \qquad
        \PrAx{\succ(\lva) = \succ(\lvb)}
        \PrUn{\lva = \lvb}
        \DisplayProof
      \]
  \end{itemize}

  The non-atomic inference rules are just those of minimal first-order logic, that is, one elimination and one introduction rule for each of the logical connectives and quantifiers and the induction rule, which we define later. 
  Moreover we consider \(\lnot \fa\) syntactic sugar for \(\fa \limply \lfalse\).

  Note that minimal logic extended with the Ex Falso Quodlibet rule for atomic formulas yields full intuitionistic logic. 
\end{omitted}

As usual we work in \(\HA+\EM\), Heyting Arithmetic extended with the law of the excluded middle for \(\Sigma^0_1\) formulas. 
The full description is in \cref{sec:prelim_logic}.

Since our reduction technique could conceivably be used in other first-order theories, 
we isolate some general assumptions on atomic formulas and rules that we need for our results to hold:
\begin{itemize}
  \item closed atomic formulas are decidable,
  \item any true closed atomic formula has an atomic derivation,
  \item atomic rules do not discharge assumptions,
  \item atomic rules do not bind term variables\footnote{The precise meaning of this will be made precise later.}. 
\end{itemize}
The first two assumption are very reasonable in a constructive setting such as arithmetic where we expect to have decidability at least for atomic formulas\footnote{However they may very well fail in set theory, for instance with the inclusion predicate.}. 
The other two seems also reasonable for any first-order theory. 
These assumption are reasonable in a constructive setting and they are satisfied in \(\HA\).

We assumed that any true closed atomic formula has an atomic derivation. 
For convenience we add atomic rules for proving them in one step. 
Let \(\lafa\) be a closed atomic formula. 
If \(\lafa\) is true then we add the atomic axiom:
\[
  \PrAx{}
  \PrLbl\RuleNameAtomI
  \PrUn\lafa
  \DisplayProof
\]
Otherwise if \(\lafa\) is false we add the atomic rule:
\[
  \PrAx\lafa
  \PrLbl\RuleNameAtomE
  \PrUn\bot
  \DisplayProof
\]

In order to work on the structure of derivations we need suitable notation and terminology. 
We represent derivations as upward growing trees of formulas and we make a distinction between a formula (resp. rule) and its occurrences (resp. instances) in a derivation.

A formula can occur more than once in a derivation. 
While these occurrences are clearly distinct in a tree-like representation, 
in order to avoid confusion when referring to them in the text special care must be taken. 
Thus we make a distinction between formulas and \emph{formula occurrences}, or simply occurrences, which we label with \(\foA, \foB, \foC\). 
In a derivation, formula occurrences are arranged following the patterns given by the inference rules. 
As with formulas, we distinguish between rules and \emph{rule instances}, or simply instances, which we label with \(\riA,\riB, \riC\). 
We write a derivation \(\derA\) as follows:
\[
  \PrAx{}
  \PrInf[\derA_1]
  \PrUn[\foB_1]\fb
  \PrAss\fc\riB
  \PrInf[\derA_2]
  \PrUn[\foB_2]\fb
  \PrLbl[\riA]{\text{rulename}}
  \PrBin[\foA]\fa
  \DisplayProof 
\]
The only occurrence of the formula \(\fa\) is labeled \(\foA\), while \(\fb\) occurs two times, with distinct labels \(\foB_1\) and \(\foB_2\). 
\(\foA\) is the \emph{conclusion} of an instance, labeled \(\riA\), of an inference rule named \(\text{rulename}\). 
\(\foA_1\) and \(\foA_2\) are the \emph{premisses} of \(\riA\). 
We also say that \(\foA\) is the \emph{conclusion} of the whole derivation \(\derA\). 
With \(\derA_1\) and \(\derA_2\) we denote two \emph{subderivations} (as in subtree) of \(\derA\). 
We distinguish subderivations by their conclusion, so we say that \(\derA_i\) is the (sub)derivation of \(\foA_i\) for \(i=1,2\).
By writing \(\PrAss\fc\riB \DisplayProof\) in square brackets above \(\derA_2\), 
we make explicit that \(\derA_2\) may contain occurrences of the assumption \(\fc\), which is discharged by some undisplayed rule instance \(\riB\).

We define \emph{assumptions} and \emph{open assumptions} as usual in natural deduction, see \cite{troelstra00}, page 23.

\section{The Standard Reductions}

In this section we introduce the standard reductions we need for proofs in natural deduction. 

Reductive proof theory stems from the following observation: 
there are derivations that are more complex than they need to be because they have unnecessary detours.
When this occurs, we can produce simpler and more direct derivations with the same conclusion by simple structural manipulations called \emph{reductions}.

In standard reductive proof theory for natural deduction, several reductions are introduced: proper reductions, permutative reductions, immediate simplifications and a reduction for the induction rule (see \cite{prawitz71}).
A derivation is said to be \emph{fully normal} when none of these reductions can be performed on it. 
For our purposes fully normal derivations are not required, so we introduce only the proper reductions and the induction reduction. 

In an instance of an elimination rule, the premiss containing the connective or quantifier that is being eliminated is called the \emph{major} premiss; the other premisses are called the \emph{minor} premisses. 
We always display the major premiss in the leftmost position. 

\subsection{Proper Reductions} 
Consider a derivation in which a formula occurrence \(\foA\) is both the conclusion of an introduction rule instance \(\riA\) and the major premiss of an elimination rule instance \(\riB\). 
Then we can derive the conclusion of \(\riB\) directly by removing \(\riA\) and \(\riB\) and rearranging the derivations of the premisses of \(\riA\) and of the minor premisses of \(\riB\) (if any).
Note that \(\riA\) and \(\riB\) must be instances of an introduction rule and an elimination rule for the same logical connective, since the formula introduced by \(\riA\) is the same formula eliminated by \(\riB\). 
Therefore for each logical connective we have a different type of \emph{proper reduction}. 
They are listed in \Cref{fig:proper_reductions}. 

\begin{figure}[!ht]
  \caption{The proper reductions.}
  \label{fig:proper_reductions}
  \begin{rulelisting}[\reducesto]
    \header{\RedNameProp\land} 
    \PrAx{}\PrInf[\derA_1]
    \PrUn{\fa_1}
    \PrAx{}\PrInf[\derA_2]
    \PrUn{\fa_2}
    \PrLbl[\riA]{\RuleName\land{I}}
    \PrBin{\fa_1 \land \fa_2}
    \PrLbl[\riB]{\RuleName\land{E}}
    \PrUn{\fa_i}
    \DisplayProof 
    \nextrule[\xrightarrow{\RedNameProp\land}]
    \PrAx{}\PrInf[\derA_i]
    \PrUn{\fa_i}
    \DisplayProof
    \newline
    \header{\RedNameProp\lor} 
    \PrAx{}\PrInf[\derA]
    \PrUn{\fa_i}
    \PrLbl{\RuleName\lor{I}}
    \PrUn{\fa_1 \lor \fa_2}
    \PrAss{\fa_1}\riA
    \PrInf[\derA_1]
    \PrUn\fb
    \PrAss{\fa_2}\riA
    \PrInf[\derA_2]
    \PrUn\fb
    \PrLbl[\riA]{\RuleName\lor{E}}
    \PrTri\fb
    \DisplayProof 
    \nextrule[\xrightarrow{\RedNameProp\lor}]
    \PrAx{}\PrInf[\derA]
    \PrUn{\fa_i}
    \PrInf[\derA_i]
    \PrUn\fb
    \DisplayProof
    \newline
    \header{\RedNameProp\limply} 
    \PrAss\fa\riA
    \PrInf[\derA_1]
    \PrUn\fb
    \PrLbl[\riA]{\RuleName\limply{I}}
    \PrUn{\fa \limply \fb}
    \PrAx{}
    \PrInf[\derA_2]
    \PrUn\fa
    \PrLbl{\RuleName\limply{E}}
    \PrBin\fb
    \DisplayProof 
    \nextrule[\xrightarrow{\RedNameProp\limply}]
    \PrAx{}\PrInf[\derA_2]
    \PrUn\fa
    \PrInf[\derA_1]
    \PrUn\fb
    \DisplayProof
    \newline
    \header{\RedNameProp\forall} 
    \PrAx{}
    \PrInf[\derA]
    \PrLbl{\RuleName\forall{I}}
    \PrUn{\qforall\lva\fa}
    \PrLbl{\RuleName\forall{E}}
    \PrUn{\fa\subst\lva\lta}
    \DisplayProof 
    \nextrule[\xrightarrow{\RedNameProp\forall}]
    \PrAx{}
    \PrInf[\derA\subst\lva\lta]
    \PrUn{\fa\subst\lva\lta}
    \DisplayProof
    \newline
    \header{\RedNameProp\exists}
    \PrAx{}
    \PrInf[\derA_1]
    \PrUn{\fa\subst\lva\lta}
    \PrLbl{\RuleName\exists{I}}
    \PrUn{\qexists\lva\fa}
    \PrAss{\fa\subst\lva\lvb}\riA
    \PrInf[\derA_2]
    \PrUn\fb
    \PrLbl[\riA]{\RuleName\exists{E}}
    \PrBin\fb
    \DisplayProof 
    \nextrule[\xrightarrow{\RedNameProp\exists}]
    \PrAx{}
    \PrInf[\derA_1]
    \PrUn{\fa\subst\lva\lta}
    \PrInf[\derA_2\subst\lvb\lta]
    \PrUn\fb
    \DisplayProof
  \end{rulelisting}
\end{figure}

\subsection{Induction Reduction}
Consider the induction rule schema \(\RuleNameIndE\) in the following form:
\[ 
  \PrAx{}
  \PrInf[\derA_1]
  \PrUn{\fa\subst\lva{\num 0}}
  \PrAss\fa\riA
  \PrInf[\derA_2]
  \PrUn{\fa\subst\lva{\succ(\lva)}}
  \PrLbl[\riA]\RuleNameIndE
  \PrBin{\fa\subst\lva\lta}
  \DisplayProof
\] 
We call \(\lta\) the \emph{main term} of the induction. 
\fixme{relate this to the induction definition given in prelims}
An instance \(\riA\) of the \(\RuleNameIndE\) rule can be reduced when the main term \(\lta\) in its conclusion \(\fa\subst\lva\lta\) is either \(\num 0\) or \(\succ(\ltb)\) for some term \(\ltb\). 
Then if \(\lta = \num 0\) we can reduce \(\riA\) to: 
\[ 
  \PrAx{} 
  \PrInf[\derA_1] 
  \PrUn{\fa\subst\lva{\num 0}} 
  \DisplayProof 
\]
and if \(\lta = \succ(\ltb)\) as: 
\[ 
  \PrAx{}
  \PrInf[\derA_1]
  \PrUn{\fa\subst\lva{\num 0}}
  \PrAss\fa\riB
  \PrInf[\derA_2]
  \PrUn{\fa\subst\lva{\succ(\lva)}}
  \PrLbl[\riB]\RuleNameIndE
  \PrBin{\fa\subst\lva\ltb}
  \PrInf[\derA_2\subst\lva\ltb]
  \PrUn{\fa\subst\lva{\succ(\ltb)}}
  \DisplayProof
\]
We call this conditional reduction \(\RedNameInd\). 
It is easy to see that this reduction is ``unraveling'' the induction. 
When \(\ltb\) is a numeral \(\num n\), that is, a term of the form \(\succ^n\), we can apply the \(\RedNameInd\) reduction repeatedly (\(n\) times) until we remove all occurrences of the \(\RuleNameIndE\) rule and get:
\[
  \PrAx{}
  \PrInf[\derA_1]
  \PrUn{\fa\subst\lva{\num 0}}
  \PrInf[\derA_2\subst\lva{\num 0}]
  \PrUn{\fa\subst\lva{\num 1}}
  \PrInf[\derA_2\subst\lva{\num 1}]
  \PrUn{\fa\subst\lva{\num 2}}
  \PrInf
  \PrUn{\fa\subst\lva{\num n}}
  \DisplayProof
\]



\section{The Witness Extracting Reductions}

In this section we introduce an inference rule that is equivalent to 
the restricted excluded middle axiom schema \(\EM\) defined in \Cref{def:excluded_middle}
and two reductions involving this new rule. 
The first one, the \(\RedNameWitness\) reduction, is inspired by Interactive Realizability and it will be instrumental in converting classical derivations into constructive ones.
The second one is a permutative reduction and is needed later for technical reasons. 

\subsection{The $\EM$ Rule}
\begin{omitted}
  The excluded middle axiom schema is the following:
  \[ 
    \PrAx{}
    \PrLbl\EMG
    \PrUn{\fa \lor \lnot \fa} 
    \DisplayProof
  \]
  where \(\fa\) is any formula. 
  In intuitionistic logic with \(\EMG\) we can derive the double negation elimination. 
  This shows that \(\HA\) extended with \(\EMG\) is equivalent to classical (Peano) arithmetic PA.

  We restrict the \(\EMG\) axiom to \(\Sigma^0_1\) formulas
  and we rewrite it in an equivalent way, as in \cite{akamaBHK04}:
  \[ 
    \PrAx{}
    \PrLbl\EM
    \PrUn{\qforall\lva \lafa \lor \qexists\lva \lnot\lafa }
    \DisplayProof
  \]
  where \(\lafa\) is an atomic formula. 
\end{omitted}

For convenience we replace the \(\EM\) axiom schema with the equivalent \(\EM\) rule:
\[
  \PrAss{\qforall\lva\lafa}\riA
  \PrInf
  \PrUn\fa
  \PrAss{\lnot\lafa\subst\lva\lvb}\riA
  \PrInf
  \PrUn\fa
  \PrLbl[\riA]\EM
  \PrBin\fa
  \DisplayProof
\]
where the variable \(\lvb\) does not occur in \(\fa\) nor in any open assumption that \(\fa\) depends on except occurrences of the assumption \(\lnot \lafa\subst\lva\lvb\) (as in the \(\RuleName\exists{E}\) rule). 


The \(\EM\) rule is derived by an \(\RuleName\lor{E}\) rule instance, whose major premiss is an instance of the \(\EM\) axiom and whose rightmost assumption is the major premiss of an \(\RuleName\exists{E}\) instance: 
\[
  \PrAx{}
  \PrLbl\EM
  \PrUn{(\qforall\lva \fa) \lor (\qexists\lva \lnot\fa)}
  \PrAss{\qforall\lva \fa}\riA
  \PrInf
  \PrUn\fc
  \PrAss{\qexists\lva \lnot\fa}\riA
  \PrAss{\lnot\fa\subst\lva\lvb}\riB
  \PrInf
  \PrUn\fc
  \PrLbl[\riB]{\RuleName\exists{E}}
  \PrBin\fc
  \PrLbl[\riA]{\RuleName\lor{E}}
  \PrTri\fc
  \DisplayProof
\]
On the other hand, the \(\EM\) axiom can be derived from the \(\EM\) rule by two \(\RuleName\lor{I}\) instances:
\[
  \PrAss{\qforall\lva \fa}\riA
  \PrLbl{\RuleName\lor{I}}
  \PrUn{(\qforall\lva \fa) \lor (\qexists\lva \lnot \fa)}
  \PrAss{\lnot\fa\subst\lva\lvb}\riA
  \PrLbl{\RuleName\exists{I}}
  \PrUn{\qexists\lva \lnot \fa}
  \PrLbl{\RuleName\lor{I}}
  \PrUn{(\qforall\lva \fa) \lor (\qexists\lva \lnot \fa)}
  \PrLbl[\riA]\EM
  \PrBin{(\qforall\lva \fa) \lor (\qexists\lva \lnot \fa)}
  \DisplayProof
\]

In the following we refer to the assumption \(\qforall\lva\lafa\) in the derivation of the leftmost premiss of the \(\EM\) rule as the \emph{universal assumption} and to the assumption \(\lnot\lafa\subst\lva\lvb\) in the derivation of the rightmost premiss as the \emph{existential assumption}.

We can also write the \(\EM\) rule in sequent style as: 
\[
  \PrAx{\Gamma, \riA : \qforall\lva \lafa \seq \fa}
  \PrAx{\Gamma, \riA : \qexists\lva \lnot \lafa \seq \fa}
  \PrLbl[\riA]\EM
  \PrBin{\Gamma \seq \fa}
  \DisplayProof
\]

The universal assumption \(\qforall\lva\lafa\) is a \(\Pi^0_1\) formula and thus \emph{negatively decidable}, meaning that a finite piece of evidence is enough to prove it false: 
a counterexample, a natural number \(m\) such that \(\lafa\subst\lva{\num m}\) does not hold. 
Moreover, if we know that it is false, then a counterexample exists and we can find it in a finite time, in the worst case by means of a blind search through all the natural numbers. 

On the other hand, in order to prove the universal assumption, we need a possibly infinite evidence, namely, we may need to check \(\lafa\subst\lva{\num m}\) for all natural numbers \(m\) and this cannot be effectively done (at least when we have no information on \(\lafa\)). 

The existential assumption \(\lnot\lafa\subst\lva\lvb\) is not actually a existential formula. 
However it is easy to see that it takes the place of the assumption discharged by the \(\RuleName\exists{E}\) rule. 

We say that we can prove the existential assumption true by showing a witness, namely a number \(m\) such that \(\lnot\lafa\subst\lva{\num m}\). 
Thus the existential assumption behaves as if it were \emph{positively decidable}: 
when it is true, we have a terminating algorithm to find the finite evidence needed to prove it. 
However, when it false, we have no way to effectively decide if it is false. 

Note that a counterexample \(m\) for the universal assumption \(\qforall\lva\lafa\) is a witness for the existential assumption since in that case \(\lnot\lafa\subst\lva{\num m}\) holds.


\subsection{Witness Reduction}

Consider a derivation \(\derA\) ending with an instance \(\riA\) of the \(\EM\) rule for the atomic formula \(\lafa\):
\[
  \PrAss{\qforall\lva\lafa}\riA
  \PrInf[\derA_1]
  \PrUn\fa
  \PrAss{\lnot\lafa\subst\lva\lvb}\riA
  \PrInf[\derA_2]
  \PrUn\fa
  \PrLbl[\riA]\EM
  \PrBin\fa
  \DisplayProof
\]

A priori we do not know any counterexample to the universal assumption (we do not even know whether it holds or not), so we begin by looking at how the assumption is used in \(\derA_1\). 
In \(\derA_1\), consider all the instances \(\riB_1, \dotsc, \riB_n\) of the \(\RuleName\forall{E}\) rule whose premiss is an occurrence of the universal assumption \(\qforall\lva\lafa\) and whose conclusion is the occurrence of a closed (atomic) formula:
\[ 
  \PrAss{\qforall\lva\lafa}\riA
  \PrLbl[\riB_1]{\RuleName\forall{E}}
  \PrUn{\lafa\subst\lva{\lta_1}}
  \PrInf
  \DisplayProof \quad \dotso \quad
  \PrAss{\qforall\lva\lafa}\riA
  \PrLbl[\riB_n]{\RuleName\forall{E}}
  \PrUn{\lafa\subst\lva{\lta_n}}
  \PrInf
  \DisplayProof 
\]
These represent the concrete instances of the universal assumption that are used to derive \(\fa\) in \(\derA_1\). 
Since the conclusions of \( \riB_1, \dotsc, \riB_n\) are closed atomic formulas they are decidable. 
Therefore we can derive the true concrete instances directly with the atomic axiom \(\RuleNameAtomI\) instead of deducing them from the universal assumption. 
We distinguish two cases.
\begin{itemize}
  \item
    If \(\lafa\subst\lva{\lta_i}\) is true for all \(i\) we replace each \(\riB_i\) with the atomic axiom for \(\lafa\subst\lva{\lta_i}\):
    \[ 
      \PrAss{\qforall\lva\lafa}\riA
      \PrLbl[\riB_i]{\RuleName\forall{E}}
      \PrUn{\lafa\subst\lva{\lta_i}}
      \PrInf
      \DisplayProof \quad \leadsto \quad 
      \PrAx{}
      \PrLbl\RuleNameAtomI
      \PrUn{\lafa\subst\lva{\lta_i}}
      \PrInf
      \DisplayProof
    \] 
    We call this new derivation \(\derA_1'\). 

    Now two situations are possible: either \(\derA_1\) needs the universal assumption only to deduce the concrete instances \(\riB_1, \dotsc, \riB_n\) or not. 

    \begin{itemize}
      \item The first case happens when \(\derA_1'\) contains no more occurrences of the universal assumption discharged by \(\riA\),
        that is, the universal assumption only occurs in \(\derA_1\) as the premiss of \(\riB_1, \dotsc, \riB_n\). 
        In this case \(\derA_1'\) is a self-contained derivation of \(\fa\) and we can replace the whole \(\derA\) with \(\derA_1'\).
      \item Otherwise \(\derA_1'\) still contains some occurrence  of the universal assumption.
        Then \(\derA_1'\) does need the universal assumption itself and not just some concrete instances of it. 
        In this case we can only replace \(\derA_1\) with \(\derA_1'\) in \(\derA\), but we cannot eliminate the \(\EM\) rule instance \(\riA\) from the derivation. 
    \end{itemize}

  \item 
    Otherwise there is some \(i\) such that \(\lafa\subst\lva{\lta_i}\) is false. 
    Thus the universal assumption itself is false, since we have found the counterexample \(\lta_i\). 
    Moreover \(\lta_i\) is a witness for the existential assumption, meaning that 
    we can replace \(\lvb\) with \(\lta_i\) in \(\derA_2\) and all the occurrences of the assumption \(\lnot\lafa\subst\lva\lvb\) with a derivation of \(\lnot\lafa\subst\lva\lta\):  
    \[ 
      \PrAss{\lnot\lafa\subst\lva{\lta_i}}\riA
      \PrInf 
      \DisplayProof \quad \leadsto \quad 
      \PrAss{\lafa\subst\lva{\lta_i}}{\riB_i'} 
      \PrLbl\RuleNameAtomE 
      \PrUn\bot 
      \PrLbl[\riB_i']{\RuleName\limply{I}} 
      \PrUn{\lnot\lafa\subst\lva{\lta_i}} 
      \PrInf 
      \DisplayProof 
    \] 
    We call this new derivation \(\derA_2'\). 

    Note that in this case we replace all the occurrences of the existential assumption in \(\derA_2\) and thus \(\derA_2'\) is self-contained derivation of \(\fa\). 
    Therefore we can replace \(\derA\) with \(\derA_2'\).
\end{itemize}
We call this reduction \(\RedNameWitness\).

The gist of the \(\RedNameWitness\) reduction is that we look for counterexamples to the universal assumption in \(\derA_1\). 
If we do not find one then we have checked that all the concrete instances of the universal assumption hold.
Moreover if \(\derA_1\) uses the universal assumption exclusively to deduce these concrete instances, then we get a direct derivation of \(\fa\) without using the \(\EM\) rule.
On the other hand if we find a counterexample then we know that we can put it in \(\derA_2\) and get another direct derivation of \(\fa\).

In some sense we have a procedure to decide which one of the subderivation of the \(\EM\) rule is the effective one, 
Note that this procedure fails when we do not find counterexamples to the universal assumption but we cannot completely eliminate its occurrences from \(\derA_1\). 
Our main result can be thought of as the proof that, when the conclusion of a derivation is simply existential, this 
``failure'' of the procedure
does not happen. 
The whole reduction is summarized in \Cref{fig:em_reduction}.

\begin{figure}[!ht]
  \caption{The \(\RedNameWitness\) reduction possible outcomes.}
  \label{fig:em_reduction}
  \begin{rulelisting}
    \header\RedNameWitness
    \hline
    \wholeline{
      \begin{tikzpicture}
        \node (forall 1) at (0,2) [] {
          \PrAss{\qforall\lva\lafa}\riA
          \PrLbl[\riB_1]{\RuleName\forall{E}}
          \PrUn{\lafa\subst\lva{\lta_1}}
          \DisplayProof
        };
        \node (forall n) at (3,2) [] {
          \PrAss{\qforall\lva\lafa}\riA
          \PrLbl[\riB_n]{\RuleName\forall{E}}
          \PrUn{\lafa\subst\lva{\lta_n}}
          \DisplayProof
        };
        \node (forall x) at (5.5,2) [] {
          \PrAss{\qforall\lva\lafa}\riA
          \DisplayProof
        };
        \node (der1) at (0.93,0.25) [inner sep=0pt] {}; 

        \node (root) at (4,0) [inner sep=0pt] {
          \PrAx{}
          \PrInf[\derA_1]
          \PrUn\fa
          \PrAx{\hspace{4cm}}
          \PrAss{\lafa\subst\lva\lvb}\riA
          \PrInf[\derA_2]
          \PrUn\fa
          \PrLbl[\riA]\EM
          \PrTri\fa
          \DisplayProof
        };
        \draw [dotted] (forall 1) -- (forall n);
        \draw [dotted] (forall n) -- (forall x);
        \draw [dotted] (forall 1) -- (der1);
        \draw [dotted] (forall n) -- (der1);
        \draw [dotted] (forall x) -- (der1);
      \end{tikzpicture}
    } 
    \\
    \header{\text{An derivation ending with an \(\EM\) rule instance reduces to:}}
    \hline
    \wholeline{
      \begin{tikzpicture}
        \node (forall 1) at (0,2) [] {
          \PrAx{}
          \PrLbl\RuleNameAtomI
          \PrUn{\lafa\subst\lva{\lta_1}}
          \DisplayProof
        };
        \node (forall n) at (3,2) [] {
          \PrAx{}
          \PrLbl\RuleNameAtomI
          \PrUn{\lafa\subst\lva{\lta_n}}
          \DisplayProof
        };
        \node (forall x) at (5.5,2) [] {
          \PrAss{\qforall\lva\lafa}\riA
          \DisplayProof
        };
        \node (der1) at (0.93,0.25) [inner sep=0pt] {}; 

        \node (root) at (4,0) [] {
          \PrAx{}
          \PrInf[\derA_1]
          \PrUn\fa
          \PrAx{\hspace{4cm}}
          \PrAss{\lafa\subst\lva\lvb}\riA
          \PrInf[\derA_2]
          \PrUn\fa
          \PrLbl[\riA]\EM
          \PrTri\fa
          \DisplayProof
        };
        \draw [dotted] (forall 1) -- (forall n);
        \draw [dotted] (forall n) -- (forall x);
        \draw [dotted] (forall 1) -- (der1);
        \draw [dotted] (forall n) -- (der1);
        \draw [dotted] (forall x) -- (der1);
      \end{tikzpicture}
    }
    \\
  \header{\begin{tabular}{c}when all \(\lafa\subst\lva{\lta_i}\) hold and some occurrences of the universal assumption remain.\end{tabular}}
    \hline
    \wholeline{
      \begin{tikzpicture}
        \node (forall 1) at (0,2) [] {
          \PrAx{}
          \PrLbl\RuleNameAtomI
          \PrUn{\lafa\subst\lva{\lta_1}}
          \DisplayProof
        };
        \node (forall n) at (3,2) [] {
          \PrAx{}
          \PrLbl\RuleNameAtomI
          \PrUn{\lafa\subst\lva{\lta_n}}
          \DisplayProof
        };
        \node (der1) at (0.93,0.25) [inner sep=0pt] {}; 

        \node (root) at (1,0) [] {
          \PrAx{}
          \PrInf[\derA_1]
          \PrUn\fa
          \DisplayProof
        };
        \draw [dotted] (forall 1) -- (forall n);
        \draw [dotted] (forall 1) -- (root);
        \draw [dotted] (forall n) -- (root);
      \end{tikzpicture}
    }
    \\
  \header{\begin{tabular}{c}when all \(\lafa\subst\lva{\lta_i}\) hold and no occurrence of the universal assumption remains.\end{tabular}}
    \hline
    \wholeline{
      \PrAss{\lafa\subst\lva{\lta_i}}{\riB'} 
      \PrLbl\RuleNameAtomE 
      \PrUn\bot 
      \PrLbl[\riB']{\RuleName\limply{I}} 
      \PrUn{\lnot\lafa\subst\lva{\lta_i}} 
      \PrInf[\derA_2]
      \PrUn\fa
      \DisplayProof
    } \\ 
    \wholeline{\text{when some \(\lafa\subst\lva{\lta_i}\) does not hold.}}
  \end{rulelisting}
\end{figure}

%
%
%

\subsection{Permutative Reduction for $\EM$}

The permutative reduction for \(\EM\) is defined in the same way as the permutative reduction for the \(\RuleName\lor{E}\) rule, 
that is, 
when the conclusion of a \(\EM\) rule instance is the major premiss of an elimination rule instance \(\RuleName\ast{E}\):
\[
  \PrAss{\qforall\lva\lafa}\riC
  \PrInf[\derA_1]
  \PrUn\fa
  \PrAss{\lnot\lafa\subst\lva\lvb}\riC
  \PrInf[\derA_2]
  \PrUn\fa
  \PrLbl[\riC]\EM
  \PrBin\fa
  \PrAx{}
  \PrInf[\bar\derA]
  \PrLbl{\RuleName\ast{E}}
  \PrBin\fb
  \DisplayProof
\]
reduces to:
\[
  \PrAss{\qforall\lva\lafa}\riA
  \PrInf[\derA_1]
  \PrUn\fa
  \PrAx{\bar\derA}
  \PrLbl{\RuleName\ast{E}}
  \PrBin\fb
  \PrAss{\lnot\lafa\subst\lva\lvb}\riA
  \PrInf[\derA_2]
  \PrUn\fa
  \PrAx{\bar\derA}
  \PrLbl{\RuleName\ast{E}}
  \PrBin\fb
  \PrLbl[\riA]\EM
  \PrBin\fb
  \DisplayProof
\]
where \(\bar\derA\) stands for the derivations of the remaining minor premisses of \(\riB\) if any. 
We denote this reduction as \(\RedNamePerm\EM\). 
More explicitly, we can define a permutative reduction for each elimination rule, see \Cref{fig:em_perms_connectives} and \Cref{fig:em_perms_quantifiers}. 
\begin{figure}[!ht]
  \caption{The permutative reductions of the \(\EM\) rule with the \(\RuleName\land{E}, \RuleName\lor{E}\) and \(\RuleName\limply{E}\) rules.}
  \label{fig:em_perms_connectives}
  \begin{rulelisting}[\reducesto]
    \header{\RedNamePerm[\land]\EM}
    \PrAss{\qforall\lva\lafa}\riA
    \PrInf[\derA_1]
    \PrUn{\fa_1 \land \fa_2}
    \PrAss{\lnot\lafa\subst\lva\lvb}\riA
    \PrInf[\derA_2]
    \PrUn{\fa_1 \land \fa_2}
    \PrLbl[\riA]\EM
    \PrBin{\fa_1 \land \fa_2}
    \PrLbl{\RuleName\land{E}}
    \PrUn{\fa_i}
    \DisplayProof
    \nextrule
    \PrAss{\qforall\lva\lafa}\riA
    \PrInf[\derA_1]
    \PrUn{\fa_1 \land \fa_2}
    \PrLbl{\RuleName\land{E}}
    \PrUn{\fa_i}
    \PrAss{\lnot\lafa\subst\lva\lvb}\riA
    \PrInf[\derA_2]
    \PrUn{\fa_1 \land \fa_2}
    \PrLbl{\RuleName\land{E}}
    \PrUn{\fa_i}
    \PrLbl[\riA]\EM
    \PrBin{\fa_i}
    \DisplayProof
    \newline
    \header{\RedNamePerm[\lor]\EM}
    \wholeline{
      \PrAss{\qforall\lva\lafa}\riB
      \PrInf[\derB_1]
      \PrUn{\fa_1 \lor \fa_2}
      \PrAss{\lnot\lafa\subst\lva\lvb}\riB
      \PrInf[\derB_2]
      \PrUn{\fa_1 \lor \fa_2}
      \PrLbl[\riB]\EM
      \PrBin{\fa_1 \lor \fa_2}
      \PrAss{\fa_1}\riA
      \PrInf[\derA_1]
      \PrUn\fb
      \PrAss{\fa_2}\riA
      \PrInf[\derA_2]
      \PrUn\fb
      \PrLbl[\riA]{\RuleName\lor{E}}
      \PrTri\fb
      \DisplayProof
    } \\ 
    \nextrule[\downarrow] \\ 
    \wholeline{
      \PrAss{\qforall\lva\lafa}\riB
      \PrInf[\derB_1]
      \PrUn{\fa_1 \lor \fa_2}
      \PrAss{\fa_1}\riA
      \PrInf[\derA_1]
      \PrUn\fb
      \PrAss{\fa_2}\riA
      \PrInf[\derA_2]
      \PrUn\fb
      \PrLbl[\riA]{\RuleName\lor{E}}
      \PrTri\fb
      \PrAss{\lnot\lafa\subst\lva\lvb}\riB
      \PrInf[\derB_2]
      \PrUn{\fa_1 \lor \fa_2}
      \PrAss{\fa_1}\riA
      \PrInf[\derA_1]
      \PrUn\fb
      \PrAss{\fa_2}\riA
      \PrInf[\derA_2]
      \PrUn\fb
      \PrLbl[\riA]{\RuleName\lor{E}}
      \PrTri\fb
      \PrLbl[\riB]\EM
      \PrBin\fb
      \DisplayProof
    } 
    \newline
    \header{\RedNamePerm[\limply]\EM}
    \wholeline{
      \PrAss{\qforall\lva\lafa}\riA
      \PrInf[\derA_1]
      \PrUn{\fa_1 \limply \fa_2}
      \PrAss{\lnot\lafa\subst\lva\lvb}\riA
      \PrInf[\derA_2]
      \PrUn{\fa_1 \limply \fa_2}
      \PrLbl[\riA]\EM
      \PrBin{\fa_1 \limply \fa_2}
      \PrAx{}
      \PrInf[\derB]
      \PrUn{\fa_1}
      \PrLbl{\RuleName\land{E}}
      \PrBin{\fa_2}
      \DisplayProof
    } \\ 
    \nextrule[\downarrow] \\ 
    \wholeline{
      \PrAss{\qforall\lva\lafa}\riA
      \PrInf[\derA_1]
      \PrUn{\fa_1 \limply \fa_2}
      \PrAx{}
      \PrInf[\derB]
      \PrUn{\fa_1}
      \PrLbl{\RuleName\limply{E}}
      \PrBin{\fa_2}
      \PrAss{\lnot\lafa\subst\lva\lvb}\riA
      \PrInf[\derA_2]
      \PrUn{\fa_1 \limply \fa_2}
      \PrAx{}
      \PrInf[\derB]
      \PrUn{\fa_1}
      \PrLbl{\RuleName\limply{E}}
      \PrBin{\fa_2}
      \PrLbl[\riA]\EM
      \PrBin{\fa_2}
      \DisplayProof
    }
  \end{rulelisting}
\end{figure}
\begin{figure}[!ht]
  \caption{The permutative reductions of the \(\EM\) rule with the \(\RuleName\forall{E}\) and \(\RuleName\exists{E}\) rules.}
  \label{fig:em_perms_quantifiers}
  \begin{rulelisting}[\leadsto]
    \header{\RedNamePerm[\forall]\EM}
    \PrAss{\qforall\lva\lafa}\riA
    \PrInf[\derA_1]
    \PrUn{\qforall\lva \fa}
    \PrAss{\lnot\lafa\subst\lva\lvb}\riA
    \PrInf[\derA_2]
    \PrUn{\qforall\lva \fa}
    \PrLbl[\riA]\EM
    \PrBin{\qforall\lva \fa}
    \PrLbl{\RuleName\forall{E}}
    \PrUn{\fa\subst\lva\lta}
    \DisplayProof
    \nextrule
    \PrAss{\qforall\lva\lafa}\riA
    \PrInf[\derA_1]
    \PrUn{\qforall\lva \fa}
    \PrLbl{\RuleName\forall{E}}
    \PrUn{\fa\subst\lva\lta}
    \PrAss{\lnot\lafa\subst\lva\lvb}\riA
    \PrInf[\derA_2]
    \PrUn{\qforall\lva \fa}
    \PrLbl{\RuleName\forall{E}}
    \PrUn{\fa\subst\lva\lta}
    \PrLbl[\riA]\EM
    \PrBin{\fa\subst\lva\lta}
    \DisplayProof
    \newline
    \header{\RedNamePerm[\forall]\EM}
    \wholeline{
      \PrAss{\qforall\lva\lafa}\riA
      \PrInf[\derA_1]
      \PrUn{\qexists\lva \fa}
      \PrAss{\lnot\lafa\subst\lva\lvb}\riA
      \PrInf[\derA_2]
      \PrUn{\qexists\lva \fa}
      \PrLbl[\riA]\EM
      \PrBin{\qexists\lva \fa}
      \PrAss{\fa\subst\lva\lvb}
      \PrInf[\derB]
      \PrUn\fb
      \PrLbl{\RuleName\exists{E}}
      \PrBin\fb
      \DisplayProof
    } \\ 
    \nextrule[\downarrow] \\ 
    \wholeline{
      \PrAss{\qforall\lva\lafa}\riA
      \PrInf[\derA_1]
      \PrUn{\qexists\lva \fa}
      \PrAss{\fa\subst\lva\lvb}
      \PrInf[\derB]
      \PrUn\fb
      \PrLbl{\RuleName\exists{E}}
      \PrBin\fb
      \PrAss{\lnot\lafa\subst\lva\lvb}\riA
      \PrInf[\derA_2]
      \PrUn{\qexists\lva \fa}
      \PrAss{\fa\subst\lva\lvb}\riB
      \PrInf[\derB]
      \PrUn\fb
      \PrLbl[\riB]{\RuleName\exists{E}}
      \PrBin\fb
      \PrLbl[\riA]\EM
      \PrBin\fb
      \DisplayProof
    }
  \end{rulelisting}
\end{figure}

This reduction moves elimination rule instances from ``outside'' or ``below'' to ``inside'' or ``above'' an \(\EM\) rule instance. 
This is useful because an \(\EM\) rule instance may happen in between an introduction rule instance and an elimination rule instance, preventing a proper reduction from taking place.


In the following we concentrate on proving a result about the form of normal proofs. 
We do not prove here that the reduction process converges. 
In order to do it, the most natural way would be encode our proofs into proof terms in a suitable calculus and show that such calculus is strongly normalizing. 
This has been done in \cite{aschieriBB13}, so we just state the following
\begin{theorem}[Strong Normalization of \(\HA+\EM\)] \label{thm:em_strong_normalization}
All proofs of \(\HA+\EM\) are strongly normalizing under the reductions we described in this section. 
\end{theorem}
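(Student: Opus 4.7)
The plan is to follow the standard Tait--Girard reducibility candidates technique, adapted to a proof-term calculus that encodes the rules of \(\HA+\EM\) together with the reductions of this section. First I would introduce a term assignment: the minimal-logic rules get the usual simply typed \(\lambda\)-calculus constructors/destructors, induction gets a course-of-values recursor (as in system \(T'\)), and the \(\EM\) rule is represented by a new binary constant \(\mathrm{em}_\lafa\) that takes two abstractions, one over a proof-variable of type \(\qforall\lva\lafa\) and one over a proof-variable of type \(\lnot\lafa\subst\lva\lvb\). Proper and permutative reductions then become the usual \(\beta\)- and commuting-conversions on this constant, while the witness reduction \(\RedNameWitness\) becomes a rewrite rule that fires when the first branch has been reduced enough to expose its concrete instantiations of the universal assumption.

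Next I would define a family of reducibility sets \(\mathrm{RED}_\fa\), one per formula \(\fa\), using the standard inductive clauses on the structure of \(\fa\), and prove the three core properties (CR1 closure under reduction, CR2 membership of neutral terms whose reducts are in \(\mathrm{RED}_\fa\), CR3 strong normalization). The adequacy lemma would then state that, for every derivation \(\derA\) of \(\Gamma \seq \fa\) and every reducible substitution \(\sigma\) for the assumption variables in \(\Gamma\), the term \(\derA[\sigma]\) lies in \(\mathrm{RED}_\fa\). The cases for minimal-logic rules and equality/atomic rules are standard; induction is handled by the usual course-of-values trick, reducing to the fact that \(\totRec{}\) preserves reducibility.

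The nontrivial case is \(\mathrm{em}_\lafa\). Here I would argue as follows: given reducible arguments \(\brA_1, \brA_2\) for the two branches, the only new redexes created by applying \(\mathrm{em}_\lafa\) to them are either a permutative \(\RedNamePerm\EM\) step, which pushes the enclosing eliminator inside both branches and whose SN follows from the induction hypothesis applied to smaller types, or a \(\RedNameWitness\) step, which either discards one branch outright (immediate) or replaces the occurrences of the universal assumption by applications of \(\RuleNameAtomI\); in the latter case the resulting term lies in \(\mathrm{RED}_\fa\) because \(\brA_1\) was already reducible at \(\fa\) under the reducible substitution sending the universal-assumption variable to the reducible term \(\lambda \lva.\, \mathrm{em}_\lafa\,\dots\) (the self-referential step is avoided by an outer induction on the size of the proof plus the number of \(\EM\)-redexes). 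Strong normalization of the whole system then follows from adequacy applied to the identity substitution.

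The main obstacle I expect is precisely the interaction between \(\RedNameWitness\) and the permutative reduction \(\RedNamePerm\EM\): the permutation can duplicate the two branches, which could in principle create new \(\RedNameWitness\) redexes and threaten termination. To handle this one needs a measure on derivations that strictly decreases under every reduction; the natural choice is a lexicographic pair (maximal cut-formula complexity, total size), together with the observation that \(\RedNamePerm\EM\) strictly decreases the complexity of the formula immediately below the \(\EM\)-instance, while \(\RedNameWitness\) either removes an \(\EM\)-instance or strictly reduces the number of occurrences of its discharged universal assumption. Making this measure-based argument compatible with the reducibility candidates (so that it survives the substitutions appearing in the adequacy proof) is the technical heart of the forthcoming work \cite{aschieriBB13} alluded to above.
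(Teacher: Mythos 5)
You should first be aware that the dissertation does not actually prove this theorem: the surrounding text states explicitly that a strong normalization proof ``could not be included in this dissertation for reasons of time'' and defers entirely to \cite{aschieriBB13}. So there is no in-paper argument to match; what the chapter establishes on its own is only the shape of normal derivations (\cref{thm:witness_extraction}), taking normalization as given.

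Your outline is a reasonable reconstruction of how the deferred proof is organized --- the cited work does introduce a proof-term calculus with a binary constant for the \(\EM\) rule and reductions mirroring the witness and permutative reductions --- but there is a genuine gap exactly where you locate ``the technical heart,'' and it is not merely a matter of detail. The lexicographic measure you propose (maximal cut-formula complexity, total size) does not decrease under the permutative reduction for \(\EM\): that reduction duplicates the eliminator and its minor-premiss subderivations into both branches, so total size strictly increases, and the duplicated eliminations act on occurrences of the \emph{same} formula as before, so the maximal cut complexity need not drop either --- only the cut at the \(\EM\) conclusion is resolved, while new cuts of equal complexity can appear inside each branch. Likewise, your ``outer induction on the size of the proof plus the number of \(\EM\)-redexes'' is not well-founded: when the witness reduction finds a counterexample it substitutes a derivation for every occurrence of the existential assumption in the right-hand branch, and subsequent permutations can multiply \(\EM\) instances, so neither component of your measure is monotone along reduction sequences. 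Breaking this circularity is precisely what requires the machinery of \cite{aschieriBB13}, where the reducibility predicate attached to the \(\EM\) constant is defined by quantifying over the possible outcomes of the witness test rather than by a syntactic measure on derivations. Your plan identifies the right ingredients, but the termination argument you give for the crucial case does not go through as stated.
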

A proof can be found in \cite{aschieriBB13}.

\section{Witness Extraction}

In this section we prove the witness extraction theorem, that shows how we can extract witnesses from suitable classical derivations in \(\HA+\EM\), as we can do for intuitionistic derivations in \(\HA\). 

In order to state and prove our results we need to keep track of free term variables in a derivations, 
since both the \(\RedNameInd\) and the \(\RedNameWitness\) reductions can only be performed when certain terms and formulas are closed. 

We need to define when a variable is free in a derivation. 
\begin{definition}[Free term variables]\label{def:free_term_variable}
  We say that a rule instance \(\riA\) \emph{binds} a term variable that occurs free in the derivation \(\derA\) of a premiss of \(\riA\) in the following cases:
  \begin{itemize}
    \item \(\riA\) is an instance of the \(\RuleName\forall{I}\) rule and binds the variable \(\lva\) in the formula occurrences in the derivation of its premiss: 
      \[
        \PrAx{}
        \PrInf[\derA]
        \PrUn\fa
        \PrLbl[\riA]{\RuleName\forall{I}}
        \PrUn{\qforall\lva \fa}
        \DisplayProof
      \]
    \item \(\riA\) is an instance of the \(\RuleName\exists{E}\) rule and binds the variable \(\lvb\) in the formula occurrences in the derivation of its rightmost premiss:
      \[
        \PrAx{\qexists\lva \fa}
        \PrAss{\fa\subst\lva\lvb}\riA
        \PrInf[\derA]
        \PrUn\fb
        \PrLbl[\riA]{\RuleName\exists{E}}
        \PrBin\fb
        \DisplayProof
      \]
    \item \(\riA\) is an instance of the \(\RuleNameIndE\) rule and binds the variable \(\lva\) in the formula occurrences in the derivation of its rightmost premiss:
      \[
        \PrAx{\fa\subst\lva{\num 0}}
        \PrAss\fa\riA
        \PrInf[\derA]
        \PrUn{\fa\subst\lva{\succ(\lva)}}
        \PrLbl[\riA]\RuleNameIndE
        \PrBin{\fa\subst\lva\lta}
        \DisplayProof
      \]
    \item \(\riA\) is an instance of the \(\EM\) rule and binds the variable \(\lvb\) in the formula occurrences in the derivation of its rightmost premiss: 
      \[
        \PrAss{\qforall\lva \lafa}\riA
        \PrInf
        \PrUn\fb
        \PrAss{\lnot\lafa\subst\lva\lvb}\riA
        \PrInf[\derA]
        \PrUn\fb
        \PrLbl[\riA]\EM
        \PrBin\fb
        \DisplayProof
      \]
  \end{itemize}
  We say that a term variable occurrence is \emph{free in a derivation} when the term variable occurs free in a formula occurrence in the derivation and is not bound by any rule instance. 
  A derivation is \emph{closed} if it has no free term variable nor open assumption. 
\end{definition}

Note that no reduction introduces free term variables in a derivation. 

Since a derivation is a tree, it makes sense to give the definition of branch. 
Principal branches are branches of a derivation that contains only major premisses of elimination and \(\EM\) rule instances. 
\begin{definition}[Principal branch]
  A \emph{branch} in a derivation \(\derA\) is a sequence of formula occurrences \(\foA_0, \dotsc, \foA_n\) in \(\derA\) such that:
  \begin{itemize}
    \item \(\foA_0\) is a top formula occurrence, that is, 
      \(\foA_0\) is either an assumption or the conclusion of an atomic axiom; 
    \item \(\foA_i\) and \(\foA_{i+1}\) are respectively a premiss and the conclusion of the same rule instance \(\riA_{i+1}\), for all \(0 \leq i < n\);
    \item \(\foA_n\) is the conclusion of \(\derA\).
  \end{itemize}
  A branch is \emph{principal} if, for all \(0 \leq i < n\) such that \(\riA_i\) is an elimination or \(\EM\) rule instance, \(\foA_i\) is the major (leftmost) premiss of \(\riA_i\). 
\end{definition}
We use the variables \(\brA, \brB\) for branches. 

In order to study the properties of normal proofs we only need to consider the structure of principal branches. 
A head-cut is the lowest point of a principal branch where a reduction is possible. 
\begin{definition}[Head-cut]
  The \emph{head-cut} of a principal branch \(\brA = \foA_1, \dotsc, \foA_n\) is the formula occurrence \(\foA_i\) with the maximum index \(i\) such that one of the following holds:
  \begin{itemize}
    \item 
      \(\foA_i\) is the conclusion of an elimination rule instance \(\riA_i\), 
      \(\foA_{i-1}\) is the major premiss of \(\riA_i\) and the conclusion of an introduction rule instance \(\riA_{i-1}\);
      when \(\riA_{i-1}\) is a \(\RuleName\land{I}\) rule instance 
      we also require that 
      \(\foA_{i-2}\) is an occurrence of the same formula as \(\foA_i\) (proper reductions);
    \item 
      \(\foA_i\) is the conclusion of an \(\RuleNameIndE\) rule instance \(\riA_i\), 
      whose main term is either \(\num 0\) or \(\succ(\ltb)\) for some term \(\ltb\) (\(\RedNameInd\) reduction); 
    \item 
      \(\foA_i\) is the conclusion of an \(\EM\) rule instance \(\riA\) 
      and either
      \(\foA_{i-1}\) is derived without using the assumption discharged by \(\riA\)
      or
      \(\foA_0\) is an occurrence of the universal assumption discharged by \(\riA\)
      and 
      \(\foA_1\) is the occurrence of a closed atomic formula (\(\RedNameWitness\) reduction);
    \item
      \(\foA_i\) is the conclusion of an elimination rule instance \(\riA\) 
      and 
      \(\foA_{i-1}\) is the conclusion of an \(\EM\) rule instance 
      (\(\RedNamePerm\EM\) reductions). 
  \end{itemize}
  If such an \(i\) exists we say that there is a head-cut along the branch \(\brA\). 
\end{definition}
This definition is the result of a analysis of the conditions that must be met in order to perform one of the reductions we have listed. 
In particular note how, in the condition given for the \(\RedNameWitness\) reduction, the fact that \(\foA_1\) is atomic implies that \(\foA_1\) is the conclusion of a \(\RuleName\forall{E}\) rule instance, as we assumed in defining \(\RedNameWitness\). 

We shall show that, with suitable assumptions,
we can perform the \(\RedNameWitness\) reduction as needed in order to extract a witness from a derivation. 
One of these assumptions is that the conclusion of the derivation is ``simple'' enough, as we define next.

\begin{definition}[Simple Formulas]\label{def:simple_formula}
  We say that a formula is \emph{simply existential} (resp. \emph{universal}) when it is \(\qexists\lva \lafa\) (resp. \(\qforall\lva \lafa\)) for some atomic formula \(\lafa\).

  We say that a formula is \emph{simple} when it is closed and atomic or simply existential. 
\end{definition}

In the following we consider the \(\EM\) and \(\RuleNameIndE\) rules to be neither elimination nor introduction rules and we give them special treatment. 

As we shall show later, 
principal branches beginning with an open assumption have particular structure in normal derivations: they begin with a sequence of elimination rule instances, followed by atomic and \(\EM\) rule instances and they end with introduction and \(\EM\) rule instances. Any of these parts may be missing.
\begin{definition}[Open normal form]
  \label{def:open_normal_form}
  A principal branch \(\foA_0, \dotsc, \foA_n\) is said to be in \emph{open normal form} when
  there exist three natural numbers \(n_E, n_A\) and \(n_I\)  such that \(n_E + n_A n_I = n \) and:
  \begin{itemize}
    \item 
      \(\foA_0\) is the occurrence of an open assumption in \(\Pi\),
    \item 
      \(\foA_i\) is the conclusion of an elimination rule instance 
      for \( 0 < i \leq n_E \), 
    \item 
      \(\foA_i\) is the conclusion of an atomic or \(\EM\) rule instance 
      for \( n_E < i \leq n_E + n_A \), 
    \item 
      \(\foA_{n_E + n_A + 1}\) is the conclusion of an introduction rule instance\footnotemark, 
    \item 
      \(\foA_i\) is the conclusion of an introduction or \(\EM\) rule instance 
      for \( n_E + n_A < i \leq n \), 
  \end{itemize}
  \(n_E, n_A\) and \(n_I\) are the number of elimination, atomic or \(\EM\), introduction or \(\EM\) rule instances, respectively.
  \footnotetext{Since \(\EM\) rule instances can appear intermingled with both atomic and introduction rule instances, in the definition we require that \(\foA_{n_E + n_A + 1}\) be the conclusion of an introduction rule, so that \(n_A\) and  \(n_I\) are uniquely determined. } 
\end{definition}

We can now prove our main result: 
closed normal derivations of simply existential formulas in \(\HA+\EM\) can be reduced to derivations ending with an introduction rule instance. 
Derivations in \(\HA\) have a similar property. 
The theorem we are going to prove holds for derivations that are concrete enough, namely they are:
self-contained (without open assumptions), concrete (without open term variables) and with an effective conclusion (a simply existential formula). 
The proof is split into several lemmas.

In the first lemma we show that, in a derivation of a simply existential with no free term variables, a simply universal assumption is followed by a closed atomic formula. 
This will be used later to prove that we can perform the \(\RedNameWitness\) reduction on universal assumption of an \(\EM\) rule instance. 
\begin{lemma}\label{thm:forall_elim_closed}
  Let \(\brA = \foA_0, \dotsc, \foA_n\) be a principal branch in open normal form in a derivation \(\derA\) in \(\HA+\EM\), with \(n_E, n_A\) and \(n_I\) defined as in \Cref{def:open_normal_form}.
  Let \(\fa_0, \dotsc, \fa_n\) be the formulas \(\foA_0, \dotsc, \foA_n\) are occurrences of. 
  Then the following statements hold: 
  \begin{enumerate}
    \item\label{thm:subformula_intro_branch_a}
      \(\fa_i\) is a non-atomic subformula of \(\fa_n\) for all \(n_E + n_A < i \leq n\);
    \item\label{thm:subformula_intro_branch}
      if some \(\foA_i\) is the conclusion of an introduction rule instance, 
      then \(\fa_i\) is a subformula of \(\fa_n\);
  \end{enumerate}
  Moreover assume that \(\fa_n\) is a simple formula. Then:
  \begin{enumerate}[resume]
    \item\label{thm:free_var_branch}
      if a term variable \(\lva\) is free in some \(\fa_i\), 
      then \(\lva\) is free in \(\derA\);
    \item\label{thm:forall_elim_branch}
      if some \(\fa_i\) is simply universal,
      then \(\foA_i\) is the premiss of a \(\RuleName\forall{E}\) rule instance;
    \item\label{thm:cut_branch}
      if \(\derA\) has no free term variables and \(\fa_i\) is simply universal,
      then \(\fa_{i+1}\) is a closed atomic formula. 
  \end{enumerate}
\end{lemma}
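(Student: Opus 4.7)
The plan is to prove the five clauses in order so that each can be invoked in the next. For (\ref{thm:subformula_intro_branch_a}) I will do reverse induction on $i$ from $n$ down to $n_E + n_A + 1$: the base case $i = n$ is just $\fa_n$ being a subformula of itself, and in the inductive step $\riA_{i+1}$ is, by the open normal form, either an introduction rule or an instance of $\EM$. For each of the five introduction rules the premiss is directly a subformula of the conclusion, while for $\EM$ the premiss coincides with the conclusion, so subformulahood propagates; non-atomicity propagates by the same argument, since the first rule in the top segment, $\riA_{n_E + n_A + 1}$, is forced to be an introduction and therefore has a non-atomic conclusion. Clause (\ref{thm:subformula_intro_branch}) is then immediate, because by open normal form an introduction rule can only appear in the top segment, so its conclusion falls under (\ref{thm:subformula_intro_branch_a}).

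For (\ref{thm:free_var_branch}) the key observation is that on a principal branch the only binding rule that can act on formulas \emph{on} the branch is $\RuleName\forall{I}$: indeed $\RuleName\exists{E}$, $\EM$ and $\RuleNameIndE$ all bind in their rightmost premiss, while a principal branch traverses only the major (leftmost) premiss of elimination and $\EM$ instances, and $\RuleNameIndE$ never occurs in an open normal form. The crucial use of the simplicity of $\fa_n$ is that an instance of $\RuleName\forall{I}$ at some position $k$ on the branch would produce $\fa_k = \qforall\lva \fa_{k-1}$, which by (\ref{thm:subformula_intro_branch_a}) would have to be a subformula of $\fa_n$; but a simple $\fa_n$ contains no universal quantifier, so no such instance exists, and \cref{def:free_term_variable} then yields the conclusion.

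For (\ref{thm:forall_elim_branch}), suppose $\fa_i$ is simply universal. I will rule out all positions $i \ge n_E$. If $i$ is in the top segment, (\ref{thm:subformula_intro_branch_a}) forces $\fa_i$ to be a subformula of the simple $\fa_n$, which contains no $\forall$, a contradiction. If $i$ is in the middle segment, $\riA_{i+1}$ is atomic or $\EM$: the atomic case contradicts the non-atomicity of $\fa_i$, while in the $\EM$ case $\fa_{i+1} = \fa_i$ is still simply universal and a finite induction pushes $i$ upward until it exits the middle segment and falls into the top-segment contradiction. Hence $i < n_E$, so $\riA_{i+1}$ is an elimination with $\foA_i$ as its major premiss, and because $\fa_i$ has $\forall$ as its main connective this elimination must be $\RuleName\forall{E}$.

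Clause (\ref{thm:cut_branch}) then combines (\ref{thm:forall_elim_branch}) and (\ref{thm:free_var_branch}): by (\ref{thm:forall_elim_branch}), $\riA_{i+1}$ is an instance of $\RuleName\forall{E}$, so $\fa_{i+1}$ has the form $\lafa'[\lvc \coloneqq \lta]$ for some term $\lta$ and is in particular atomic; any term variable free in $\fa_{i+1}$ would, by (\ref{thm:free_var_branch}), be free in $\derA$ as well, contradicting the hypothesis that $\derA$ is closed. The most delicate point is the bookkeeping in (\ref{thm:forall_elim_branch}) to handle a possibly long run of $\EM$ rules in the middle segment that keeps the conclusion simply universal, but the iterated-$\EM$ argument sketched above reduces this to the clean contradiction provided by (\ref{thm:subformula_intro_branch_a}).
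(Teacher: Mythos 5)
Your proof is correct and follows essentially the same route as the paper's: a subformula/non-atomicity argument along the top segment of the branch (your reverse induction on \(i\) is the paper's induction on \(n_I\) in different clothing), the exclusion of \(\RuleName\forall{I}\) via simplicity of \(\fa_n\) for (\ref{thm:free_var_branch}), the elimination of the atomic/introduction/\(\EM\) alternatives (iterating past runs of \(\EM\)) for (\ref{thm:forall_elim_branch}), and the combination of (\ref{thm:forall_elim_branch}) and (\ref{thm:free_var_branch}) for (\ref{thm:cut_branch}). The only slip is the boundary case \(i = n_E + n_A\) in (\ref{thm:forall_elim_branch}), where \(\riA_{i+1}\) may be an introduction rule rather than atomic or \(\EM\); this is disposed of exactly as in your top-segment case, so nothing essential is missing.
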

\begin{proof} 
  (\ref{thm:subformula_intro_branch}) follows immediately from  (\ref{thm:subformula_intro_branch_a}). 
  We need (\ref{thm:subformula_intro_branch}) to prove (\ref{thm:free_var_branch}) and (\ref{thm:forall_elim_branch}). 
  Then, by (\ref{thm:free_var_branch}) and (\ref{thm:forall_elim_branch}), we prove (\ref{thm:cut_branch}). 
  Here are the proofs. 
  \begin{enumerate}
    \item
      We proceed by induction on \(n_I\). 
      \begin{itemize}
        \item 
          If \(n_I = 0\), the thesis holds vacuously. 
        \item
          If \(n_I = 1\), we need to prove the statement just for \(i = n_E + n_A + 1 = n\) and thus \(\foA_n\) is the conclusion of an introduction rule instance by \Cref{def:open_normal_form}. 
          This means that \(\fa_n\) is not atomic. 
          Obviously it is also a subformula of itself so we are done. 
        \item
          Otherwise, let \(n_I > 1 \). 

          Consider the subderivation \(\derA'\) of \(\derA\) ending with \(\foA_{n-1}\) and its principal branch \(\brA' = \foA_0, \dotsc, \foA_{n - 1}\). 
          \(\brA'\) is in open normal form in \(\derA'\), with \(n_E' = n_E, n_A' = n_A\) and \(n_I' = n_I - 1\).
          Then, by inductive hypothesis, 
          for all \(n_E + n_A < i \leq n-1\),
          \(\fa_i\) is a non-atomic subformula of \(\fa_{n-1}\). 

          By \Cref{def:open_normal_form},
          \(\derA\) ends with an introduction or \(\EM\) rule instance \(\riA\). 
          In both cases \(\fa_{n-1}\) is a subformula of \(\fa_n\), since \(\foA_{n-1}\) is the premiss of \(\riA\) and \(\foA_n\) is its conclusion. 

          Thus for all \(n_E + n_A < i \leq n\), \(\fa_i\) is a subformula of \(\fa_n\). 
          Moreover since \(\fa_{n-1}\) is non-atomic then \(\fa_n\) is too. 
      \end{itemize}
    \item 
      If \(\foA_i\) is the conclusion of an introduction rule instance then \(n_E + n_A < i \leq n \) by \Cref{def:open_normal_form}. 
      Thus we conclude by (\ref{thm:subformula_intro_branch_a}). 
    \item 
      We show that \(\lva\) is not bound by any rule instance and thus is free in \(\derA\).
      The only rule that binds a variable above a major premiss, 
      and thus the only rule that can bind a variable in a principal branch, 
      is the \(\RuleName\forall{I}\) rule.
      Now assume that a \(\RuleName\forall{I}\) rule instance occur along \(\brA\) with conclusion \(\foA_j\). 
      By the previous statement (\ref{thm:subformula_intro_branch}), 
      \(\fa_j\) is a subformula of \(\fa_n\). 
      This yields a contradiction because by assumption \(\fa_n\) is simple and \(\fa_j\) is universally quantified since \(\foA_j\) is the conclusion of a \(\RuleName\forall{I}\) rule instance. 
    \item 
      We show that \(\foA_i\) is the premiss of a \(\RuleName\forall{E}\) rule instance because no other alternative is possible. 
      \begin{itemize}
        \item 
          \(\foA_i\) cannot be the premiss of an atomic rule instance, since we assumed that \(\fa_i\) is simply universal and thus not atomic.
        \item 
          \(\foA_i\) cannot be the premiss of an introduction rule instance, since in that case \(\fa_{i+1}\) is a subformula of \(\fa_n\) by (\ref{thm:subformula_intro_branch}). 
          Therefore a simply universal formula \(\fa_i\) is a subformula of a simple formula \(\fa_n\), which is a contradiction. 
        \item 
          Finally \(\foA_i\) cannot be the premiss of an \(\EM\) rule instance. 
          More precisely assume that \(\foA_i\) is followed by exactly \(j > 0\) instances of the \(\EM\) rule. 
          Then \(\foA_{i+j}\) is the conclusion of the last \(\EM\) rule instance \(\riA\) and \(\fa_{i+j}\) is the same formula as \(\fa_i\), in particular \(\fa_{i+j}\) is simply universal. 
          By definition of open normal form, \(\EM\) rule instances can only be followed by introduction, atomic or \(\EM\) rule instances. 
          Since we assumed that there are exactly \(j\) instances of the \(\EM\) rule, \(\foA_{i+j}\) is the premiss of either an atomic or introduction rule instance. 
          Then we are in one of the previous cases and we have a contradiction.
      \end{itemize}
      Then \(\foA_i\) can only be the premiss of an elimination rule instance and, 
      being \(\fa_i\) simply universal, 
      it must be an instance of the \(\RuleName\forall{E}\) rule. 
    \item 
      By (\ref{thm:forall_elim_branch}) we known that \(\foA_{i+1}\) is the conclusion of a \(\RuleName\forall{E}\) rule instance whose premiss is simply universal.
      Therefore \(\fa_{i+1}\) is an atomic formula.
      If \(\fa_{i+1}\) has a free term variable, \(\derA\) has too by (\ref{thm:free_var_branch}). 
      Since we assumed that \(\derA\) has no free term variable, \(\fa_{i+1}\) must be closed.
      Therefore \(\fa_{i+1}\) is a closed atomic formula. \qed
  \end{enumerate}
\end{proof}

In the following lemma we show how we can apply the \(\RedNameWitness\) reduction. 
\begin{lemma}[\(\EM\) reduction]\label{thm:em_open_branch}
  Let \(\derA\) be a derivation in \(\HA+\EM\) with no free term variables. 
  Assume that \(\derA\) ends with an \(\EM\) rule instance \(\riA\) whose conclusion is an occurrence of a simple formula. 
  Assume that the derivation \(\derA'\) of the leftmost premiss of \(\riA\) has a principal branch \(\brA\) in open normal form.
  Then at least one of the following occurs:
  \begin{enumerate}
    \item \(\derA\) has a head-cut or a non-normal term along a principal branch,
    \item \(\derA\) has a principal branch in open normal form.
  \end{enumerate}
\end{lemma}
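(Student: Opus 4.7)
The plan is to perform a case analysis on the top formula occurrence $\foA_0$ of $\brA$. Since $\brA$ is in open normal form, $\foA_0$ is an open assumption of $\derA'$. Passing from $\derA'$ to $\derA$, this occurrence is either discharged by $\riA$, in which case it must be an occurrence of the universal assumption $\qforall\lva\lafa$, or it remains open in $\derA$. I will show that the first case yields conclusion (1) via the $\RedNameWitness$ head-cut clause, while the second case allows extending $\brA$ by one step to a principal branch of $\derA$ still in open normal form, giving conclusion (2).

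In the first case, I first observe that $\derA'$ has no free term variables. The only rule instance in $\derA$ lying below $\derA'$ is $\riA$, and by \Cref{def:free_term_variable} the rule $\riA$ binds $\lvb$ only in the derivation of its rightmost premiss, not in $\derA'$. Hence any term variable free in a formula of $\derA'$ must already be bound within $\derA'$, so $\derA'$ has no free term variables. The conclusion of $\brA$ is $\fa$, which is simple by hypothesis, and $\fa_0 = \qforall\lva\lafa$ is simply universal. \Cref{thm:forall_elim_closed}(5), applied to $\brA$ viewed in $\derA'$, then yields that $\fa_1$ is a closed atomic formula. Now extend $\brA$ by appending $\foA_{n+1}$, the conclusion of $\riA$; this is a principal branch of $\derA$ because $\foA_n$ is the leftmost (major) premiss of $\riA$. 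Its last occurrence is the conclusion of an $\EM$ rule instance, $\foA_0$ is an occurrence of the universal assumption discharged by $\riA$, and $\foA_1$ is a closed atomic formula, so the head-cut definition is satisfied via the $\RedNameWitness$ clause, and alternative (1) holds.

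In the second case, $\foA_0$ is not discharged by $\riA$ and hence remains an open assumption of $\derA$. The sequence $\foA_0, \dotsc, \foA_n, \foA_{n+1}$, with $\foA_{n+1}$ the conclusion of $\riA$, is again a principal branch of $\derA$. To verify it is in open normal form, note that $\foA_{n+1}$ is the conclusion of an $\EM$ rule instance: if the witnesses $(n_E, n_A, n_I)$ for $\brA$ satisfy $n_I = 0$, the new formula fits into the atomic-or-$\EM$ section with parameters $(n_E, n_A + 1, 0)$; otherwise $n_I > 0$ and it fits into the introduction-or-$\EM$ section with parameters $(n_E, n_A, n_I + 1)$. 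In both cases the extended branch satisfies \Cref{def:open_normal_form}, so alternative (2) holds. The only substantive step is the invocation of \Cref{thm:forall_elim_closed} in the first case; everything else is a careful reading of the definitions of head-cut and open normal form, so no serious obstacle is expected.
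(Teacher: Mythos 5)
Your proposal is correct and follows essentially the same route as the paper: case split on whether \(\foA_0\) is discharged by \(\riA\), extend \(\brA\) by the conclusion of \(\riA\) to get a principal branch of \(\derA\), invoke \Cref{thm:forall_elim_closed}(\ref{thm:cut_branch}) in the discharged case to obtain a closed atomic \(\fa_1\) and hence a \(\RedNameWitness\) head-cut, and observe open normal form is preserved in the undischarged case. Your version is in fact slightly more careful than the paper's on two points it leaves implicit — that \(\derA'\) inherits the absence of free term variables, and the explicit bookkeeping of \((n_E, n_A, n_I)\) for the extended branch.
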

\begin{proof} 
  Let \(\brA = \foA_0, \dotsc, \foA_n\) and let \(\fa_0, \dotsc, \fa_n\) be the formulas \(\foA_0, \dotsc, \foA_n\) are occurrences of. 
  Let \(\foA\) be the conclusion of \(\derA\) and of the \(\EM\) rule instance \(\riA\):
  \[
    \PrAss[\foA_0]{\fa_0}{}
    \PrInfBr{\derA'}\brA
    \PrUn[\foA_n]{\fa_n}
    \PrAx{}
    \PrInf
    \PrUn{\fa_n}
    \PrLbl[\riA]\EM
    \PrBin[\foA]{\fa_n}
    \DisplayProof
  \]
  Note that we can extend \(\brA\) to \(\brB = \foA_0, \dotsc, \foA_n, \foA\) and \(\brB\) is a principal branch of \(\derA\).

  If \(\foA_0\) is not discharged by \(\riA\), then \(\brB\) is a principal branch in open normal form and thus we get the statement. 
  Otherwise, \(\foA_0\) is discharged by \(\riA\), meaning that \(\fa_0\) is the universal assumption of the \(\EM\) instance. 
  We can apply (\ref{thm:cut_branch}) of \Cref{thm:forall_elim_closed} to \(\brB\), since \(\foA_0\) is simply universal, \(\foA\) is simply existential and \(\derA\) has no free term variables. 

  Then \(\foA_1\) is a closed atomic formula and 
  we can perform the \(\RedNameWitness\) reduction, that is,
  there is a head-cut along the principal branch \(\brB\) of \(\derA\) and we can conclude. \qed

\end{proof}

The following lemma shows how to handle \(\RuleNameIndE\) rule instances. 
\begin{lemma}[Induction normalization]\label{thm:induction}
  Let \(\derA\) be a derivation in \(\HA+\EM\) ending with an \(\RuleNameIndE\) rule instance. 
  Then at least one of the following holds:
  \begin{enumerate}
    \item \(\derA\) has a head-cut or a non-normal term along a principal branch, 
    \item \(\derA\) contains a free term variable. 
  \end{enumerate}
\end{lemma}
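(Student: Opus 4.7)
The plan is to perform a case analysis on the main term $\lta$ of the $\RuleNameIndE$ rule instance at the root of $\derA$, since the applicability of the $\RedNameInd$ reduction depends entirely on the shape of $\lta$. First I would observe that $\lta$ occurs in the conclusion $\fa\subst\lva\lta$ of $\derA$, hence in every principal branch of $\derA$, since each principal branch terminates at the conclusion of the whole derivation.

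If $\lta$ is not in normal form with respect to the arithmetic term reduction introduced in the preliminaries, then $\lta$ contains a redex, which is a non-normal term occurring in the conclusion of $\derA$ and thus along any principal branch; this matches the first alternative. If instead $\lta$ is in normal form and has the shape $\num 0$ or $\succ(\ltb)$ for some term $\ltb$, then by the second clause in the definition of head-cut the conclusion of the $\RuleNameIndE$ instance is itself a head-cut along every principal branch of $\derA$, and again the first alternative holds.

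The remaining case is when $\lta$ is normal but is neither $\num 0$ nor of the form $\succ(\ltb)$. Here I would invoke the normal form property for arithmetic terms stated in the preliminaries, namely that closed normal terms are numerals. Since $\lta$ is normal but not a numeral, $\lta$ cannot be closed, so some variable $\lvb$ occurs free in $\lta$. Consequently $\lvb$ occurs free in $\fa\subst\lva\lta$, the conclusion of $\derA$. The rules that bind term variables listed in \Cref{def:free_term_variable} only bind variables in strict subderivations and never in their own conclusion, so $\lvb$ cannot be bound anywhere in $\derA$ and is therefore free in $\derA$ by \Cref{def:free_term_variable}, giving the second alternative.

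The main obstacle is not conceptual but one of making sure that the three cases on $\lta$ partition exhaustively and that each case is correctly matched to the desired alternative; in particular one must verify that, in the last case, the free variable of $\lta$ genuinely survives as a free variable of the whole derivation, which in turn relies on the simple fact that no rule can discharge a variable appearing free in its own conclusion.
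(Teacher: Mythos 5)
Your proposal is correct and follows essentially the same route as the paper: case split on the main term \(\lta\) of the final \(\RuleNameIndE\) instance (non-normal term, numeral shape giving a \(\RedNameInd\) head-cut, or otherwise a free variable surviving into the whole derivation), with the only difference being the order in which the cases are taken. The paper checks for a free variable first and then concludes the remaining term is closed and normal, whereas you derive non-closedness from the normal form property at the end; the two arrangements are logically interchangeable.
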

\begin{proof}
  Let \(\riA\) be the \(\RuleNameIndE\) rule instance \(\derA\) ends with
  and
  let its conclusion be an occurrence \(\foA\) of some formula \(\fa\subst\lva\lta\).
  If \(\foA\) has a free term variable \(\lva\) then \(\lva\) is free in \(\derA\) too and we are done. 
  If \(\lta\) is not normal then all principal branches\footnote{Since all branches of \(\derA\) end with its conclusion.} in \(\derA\) have a non-normal term. 
  Otherwise \(\lta\) is a closed normal term and thus 
  it is either \(\num 0\) or \( \succ(\ltb)\) for some term \(\ltb\) and e we can apply the \(\RedNameInd\) reduction, meaning that any principal branch of \(\derA\) has a head cut. \qed
\end{proof}

The following lemma can be thought of as a weak result on the structure of derivations. 
\begin{lemma}[Structure of Normal Form]\label{thm:em_reduction}
  Let \(\derA\) be a derivation in \(\HA+\EM\). Then at least one of the following holds:
  \begin{enumerate}
    \item \(\derA\) has a head-cut or a non-normal term along a principal branch; 
    \item \(\derA\) contains a free term variable; 
    \item \(\derA\) has a principal branch in open normal form; 
    \item \(\derA\) ends with an introduction rule instance;
    \item \(\derA\) is atomic (only atomic formulas occur in \(\derA\)); 
    \item \(\derA\) ends with an \(\EM\) instance and its conclusion is not simple.
  \end{enumerate}
\end{lemma}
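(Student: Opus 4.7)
The plan is to proceed by structural induction on $\derA$, doing a case analysis on the last rule instance $\riA$ whose conclusion is $\foA$, and then combining the inductive hypothesis (applied, where needed, to the subderivation ending at the major premiss of $\riA$) with the two auxiliary lemmas \Cref{thm:em_open_branch} (EM reduction) and \Cref{thm:induction} (induction normalization) that have already been established.

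The easy cases are handled directly: if $\riA$ is an introduction rule instance we are in case 4; if $\derA$ consists only of atomic formula occurrences we are in case 5; if $\riA$ is an $\EM$ instance whose conclusion is not simple we are in case 6. If $\riA$ is an $\RuleNameIndE$ instance, \Cref{thm:induction} immediately places us in case 1 or case 2. If $\riA$ is an $\EM$ instance whose conclusion \emph{is} simple, we call the inductive hypothesis on the derivation $\derA'$ of its leftmost premiss; if that derivation has a head-cut or non-normal term, a free term variable, or ends with an introduction rule, those properties propagate to $\derA$ (an introduction rule above the leftmost premiss of $\EM$ gives a principal branch of $\derA$ in open normal form of length~$1$ on the $\EM$ side, so case 3 holds), and if $\derA'$ has a principal branch in open normal form we apply \Cref{thm:em_open_branch} to land in case 1 or case 3.

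The main case, and the expected obstacle, is when $\riA$ is an elimination rule instance; let $\foA_n$ be its major premiss and $\derA'$ its subderivation. We apply the inductive hypothesis to $\derA'$. If $\derA'$ has a head-cut, a non-normal term, or a free term variable, the same holds for $\derA$. If $\derA'$ ends with an introduction rule instance $\riB$, then the pair $\riB,\riA$ constitutes a head-cut at $\foA$ (using for $\RuleNameE\land$ the fact that the two premisses of $\riB$ give the subformula required in the definition of head-cut), placing us in case 1. If $\derA'$ ends with an $\EM$ instance, we have a head-cut of the permutative kind $\RedNamePerm\EM$, again case 1. If $\derA'$ is atomic, then $\foA_n$ is atomic and cannot be the major premiss of an elimination rule, contradicting the case. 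Finally, if $\derA'$ has a principal branch $\brA' = \foA_0,\dots,\foA_n$ in open normal form, we extend it by $\foA$ to obtain a principal branch $\brA$ of $\derA$; since an elimination rule instance is appended, the extended branch still satisfies the layout of \Cref{def:open_normal_form} (an initial block of eliminations followed by atomic/$\EM$ and then introduction/$\EM$ blocks), possibly after re-reading the indices: in particular, if the atomic/$\EM$ or introduction/$\EM$ blocks of $\brA'$ are non-empty, then appending the elimination $\riA$ destroys open normal form and instead produces one of the head-cut patterns (an introduction immediately followed by an elimination, or an $\EM$ followed by an elimination), giving case~1; otherwise $\brA'$ consisted purely of eliminations above an open assumption, and $\brA$ is in open normal form with $n_A = n_I = 0$, giving case~3.

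The subtle point is thus the interaction between the inductive hypothesis and the appended elimination in the last case: we must read off the position of $\foA_n$ within the open-normal-form stratification of $\brA'$ to decide whether a head-cut of one of the forms in the definition is created, or whether the elimination simply extends the initial elimination block. Once this bookkeeping is carried out, the enumeration of cases is exhaustive and the lemma follows.
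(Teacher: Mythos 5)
Your overall strategy is the same as the paper's (structural induction on \(\derA\), case analysis on the last rule instance, delegating to \Cref{thm:induction} and \Cref{thm:em_open_branch}), but there is a genuine gap in the case where \(\riA\) is an \(\EM\) instance with a \emph{simple} conclusion and the subderivation \(\derA'\) of its leftmost premiss ends with an introduction rule instance \(\riB\). Your parenthetical justification --- that an introduction rule above the leftmost premiss of \(\EM\) ``gives a principal branch of \(\derA\) in open normal form'' --- is not correct: by \Cref{def:open_normal_form} an open normal form branch must \emph{begin} with an occurrence of an open assumption, and the fact that \(\derA'\) ends with an introduction tells you nothing about what sits at the top of its principal branches. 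This case is in fact the most delicate part of the argument. Since the conclusion is simple and, being the conclusion of an introduction, non-atomic, it must be \(\qexists\lva\lafa\) and \(\riB\) must be a \(\RuleName\exists{I}\) instance; one then has to apply the inductive hypothesis a \emph{second} time, to the subderivation \(\derA''\) of the premiss of \(\riB\), and run through all six outcomes again. The decisive sub-case is \(\derA''\) atomic: the universal assumption discharged by \(\riA\) is not atomic, hence cannot occur in \(\derA''\), so the \(\RedNameWitness\) reduction applies and yields a head-cut. Without this nested analysis the enumeration of cases is not exhaustive and the lemma does not follow.

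A secondary omission: your case split on the last rule never treats an \emph{atomic} rule instance whose subderivations are not all atomic (a derivation ending with, say, a \(\mathrm{Trans}\) or substitution rule need not consist only of atomic formulas, e.g.\ when a premiss is the conclusion of an \(\EM\) instance with atomic conclusion). The paper handles this via the ATOM row of its case table --- in particular the entry where \(\foA_n\) is the conclusion of an \(\EM\) instance requires invoking part (1) of \Cref{thm:forall_elim_closed} to rule out a non-atomic formula being the premiss of an atomic rule. Your treatment of the elimination case is otherwise essentially the paper's, modulo the minor imprecision that an atomic rule instance followed by an elimination is impossible (the major premiss of an elimination is never atomic) rather than a head-cut.
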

\begin{proof} 
  The proof is by induction on the structure of the derivation \(\derA\), that is, we assume that the statement holds for all subderivations of \(\derA\) and we prove that it holds for the whole derivation. 

  Let \(\riA\) be the last rule instance in \(\derA\). 
  If \(\riA\) in an introduction rule instance the statement is satisfied and we are done. 

  If \(\riA\) is an \(\RuleNameIndE\) rule instance then we get the statement by applying \Cref{thm:induction} to \(\derA\).

  Then we only need to understand what happens when \(\riA\) is an elimination, an atomic or an \(\EM\) rule instance. 
  Note that the only case in which \(\riA\) has no premisses is when \(\riA\) is an atomic axiom.
  If this happens then \(\derA\) is atomic (it is just the conclusion of \(\riA\)) and the statement is satisfied.

  Otherwise \(\riA\) has one or more (when \(\riA\) is atomic) major premisses. 
  Let \(\derA'\) be the derivation of any one of the major premisses of \(\riA\).
  Any principal branch \(\brA\) of \(\derA'\) can be extended to a branch \(\brB\) of \(\derA\), by appending the conclusion of \(\derA\). 
  \(\brB\) is principal too because \(\derA'\) is the subderivation of a major premiss of \(\riA\). 
  We shall use this fact often in the following. 


  By inductive hypothesis \(\derA'\) satisfies the statement, 
  so we proceed by considering all the possible cases. 
  \begin{enumerate}
    \item\label{case:cut} 
      \(\derA'\) has  a head-cut or a non-normal term along a principal branch \(\brA\). 
      As we noted \(\brA\) can be extended to a principal branch of \(\derA\) with the same head-cut or non-normal term,
      so \(\derA\) satisfies the statement and we are done. 
    \item\label{case:free_var} 
      \(\derA'\) contains a free term variable. 

      There are four rules that can bind term variables: the \(\RuleName\forall{I}, \RuleName\exists{E}, \RuleNameIndE\) and \(\EM\) rules. 
      Since the cases when \(\riA\) is an introduction or \(\RuleNameIndE\) rule instance have been taken care of already and 
      since the \(\RuleName\exists{E}\) and \(\EM\) rules can only bind term variables in the derivation of its minor premiss, 
      any free term variable in \(\derA'\) is free in \(\derA\) too. 
      Thus \(\derA\) satisfies the statement. 

    \item\label{case:open_ass} 
      \(\derA'\) has a principal branch \(\brA = \foA_0, \dotsc, \foA_n\) in open normal form. 
      Let \(\brB\) be the principal branch of \(\derA\) extending \(\brA\). 
      \[
        \PrAss[\foA_0]{\fa_0}{}
        \PrInfBr{\derA'}\brA
        \PrUn[\foA_n]{\fa_n}{}
        \PrAx\dotso
        \PrLbl[\riA]{\EM|\RuleName\ast{E}|\mathcal{A}}
        \PrBin\fa
        \DisplayProof
      \]
      Note that elimination rule instances do not discharge assumptions in their leftmost subderivation and atomic rule instances do not discharge assumptions in general. 

      Then, when \(\riA\) is either an elimination or atomic rule instance, 
      the assumption \(\foA_0\) is still open in \(\derA\) and 
      we have the following cases depending on how which rule \(\foA_n\) is the conclusion of. 
      Note that since \(\brA\) is in open normal form, \(\foA_n\) cannot be an \(\RuleNameIndE\) instance. Thus we have the following cases: 
      \begin{center}
        \begin{tabular}{cc|c|c|c|c|c|}
          \cline{3-6}
          & &\multicolumn{4}{|c|}{\(\foA_n\) is the conclusion of} \\ 
          \cline{3-6}
          & & ELIM & ATOM & INTRO & \(\EM\) \\ 
          \hline
          \multicolumn{1}{|c|}{\multirow{2}{*}{\(\riA\)}} & ELIM & EXT & NO & CUT & PERM \\
          \cline{2-6}
          \multicolumn{1}{|c|}{} & ATOM & EXT & EXT & NO & NO/EXT \\
          \hline
        \end{tabular}
      \end{center}

      \begin{itemize}
        \item[EXT] 
          \(\brB\) begins with the open assumption \(\foA_0\) followed by elimination and atomic rule instances, so \(\derA\) satisfies the statement; 
        \item[NO] 
          this is never the case since, in a principal branch, 
          an elimination (resp. atomic) rule instance cannot follow an atomic (resp. introduction) rule instance, because
          the major premiss (resp. conclusion) of an elimination (resp. introduction) rule instance is not atomic and thus cannot be the conclusion (resp. premiss) of an atomic rule instance;
        \item[CUT] 
          \(\riA\) is an elimination rule instance and its major premiss is the conclusion of an introduction rule instance, 
          thus \(\brB\) ends with a head-cut and again \(\derA\) satisfies the statement;
        \item[PERM] 
          when a major premiss of \(\riA\) is the conclusion of an \(\EM\) rule instance we can apply the \(\RedNamePerm\EM\) reduction, thus \(\brB\) ends with a head-cut and \(\derA\) satisfies the statement;
        \item[NO/EXT] 
          we have two cases depending on the \(n_I\) of \(\brA\):
          \begin{itemize}
            \item[\(n_I > 0\)] 
              then, by (\ref{thm:subformula_intro_branch_a}) of \Cref{thm:forall_elim_closed}, we have that \(\foA_n\) is an occurrence of a non atomic formula and thus \(\riA\) cannot be an atomic rule instance;
            \item[\(n_I = 0\)]
              in this case \(\brB\) is in open normal form and thus \(\derA\) satisfies the statement.
          \end{itemize}


      \end{itemize}
      On the other hand, if \(\riA\) is an \(\EM\) rule instance then we can apply \Cref{thm:em_open_branch} to \(\derA\) (since we can safely assume that \(\derA\) contains no free term variables) and we get the statement. 
    \item \(\derA'\) ends with an introduction rule instance \(\riB\).
      Then the conclusion \(\foB\) of \(\riB\) cannot be atomic and since it is a premiss of \(\riA\), \(\riA\) cannot be atomic either. 
      Therefore \(\riA\) must be either an elimination or an \(\EM\) rule instance.

      If \(\riA\) is an elimination then there is a head-cut along a principal branch going through \(\foB\) so \(\derA\) satisfies the statement. 
      \[
        \PrAx{}
        \PrInf
        \PrLbl[\riB]{\RuleName{?}{I}}
        \PrUn[\foB]\fb
        \PrAx\dotso
        \PrLbl[\riA]{\RuleName{?}{E}}
        \PrBin[\foA]\fa
        \DisplayProof
      \]
      If \(\riA\) is an \(\EM\) rule instance then \(\foA\) and \(\foB\) are both occurrences of the formula \(\fa\).
      If \(\fa\) is not simple then \(\derA\) satisfies the statement. 
      Otherwise we assume that \(\fa\) is simple: 
      since \(\fa\) cannot be atomic 
      (it occurs as the conclusion of the introduction rule instance \(\riB\))
      \(\fa\) must be \(\qexists\lva \lafa\) for some atomic formula \(\lafa\)
      and
      \(\riB\) must be an \(\RuleName\exists{I}\) rule instance. 
      Then we are in the following situation:
      \[ 
        \PrAx{}
        \PrInf[\derA'']
        \PrUn\fa
        \PrLbl[\riB]{\RuleName\exists{I}}
        \PrUn[\foB]{\qexists\lva\lafa}
        \PrAx{}
        \PrInf
        \PrUn{\qexists\lva\lafa}
        \PrLbl[\riA]\EM
        \PrBin[\foA]{\qexists\lva\lafa}
        \DisplayProof
      \]
      where \(\derA''\) is the derivation of the premiss of \(\riB\). 
      Again note that principal branches of \(\derA''\) extend to principal branches of \(\derA'\) by appending \(\foB\). 

      By inductive hypothesis \(\derA''\) satisfies the statement, 
      so we proceed by considering all the possible cases. 
      \begin{enumerate}
        \item 
          \(\derA''\) has a head-cut or a non-normal term along a principal branch.
          Then \(\derA'\) does too and we are in the previously solved case labeled \ref{case:cut}. 
        \item \(\derA''\) contains a free term variable. 
          Since \(\RuleName\exists{I}\) does not bind free term variables, 
          \(\derA'\) does too and we are in the previously solved case labeled \ref{case:free_var}. 
        \item
          \(\derA''\) has a principal branch \(\brA\) in open normal form. 
          Since \(\RuleName\exists{I}\) does not discharge open assumptions,
          \(\derA'\) does too and we are in the previously solved case labeled \ref{case:open_ass}. 
        \item 
          \(\derA''\) ends with an introduction rule instance. 
          This cannot happen because \(\foB\) is an atomic formula occurrence. 
        \item 
          \(\derA''\) is atomic. 
          The assumption discharged by \(\riA\) from its leftmost subderivation is not atomic, 
          thus it cannot occur in \(\derA''\) since \(\derA''\) is atomic. 
          Therefore we can apply the \(\RedNameWitness\) reduction meaning that 
          there is a head-cut at the end of the principal branches of \(\derA'\)
          and we are in the previously solved case labeled \ref{case:cut}. 
        \item \(\derA''\) ends with an \(\EM\) instance and its conclusion is not simple.
          This cannot happen because we assumed that \(\fa\) is simple and \(\foB\) is an occurrence of \(\fa\).
      \end{enumerate}

    \item \(\derA'\) is atomic. 
      In this case one of major premisses of \(\riA\) is atomic, so \(\riA\) cannot be an elimination rule instance and must be either an \(\EM\) or atomic rule instance. 
      \begin{itemize}
        \item
          If \(\riA\) is an \(\EM\) rule instance then it is redundant: 
          the assumption discharged by \(\riA\) from its leftmost subderivation is not atomic, 
          thus it cannot occur in \(\derA'\) since \(\derA'\) is atomic whose premisses are atomic formula occurrences. 
          Therefore we can apply the \(\RedNameWitness\) reduction to \(\riA\), meaning that 
          there is a head-cut at the end of the principal branches of \(\derA\) and thus \(\derA\) satisfies the statement. 
        \item
          Otherwise, if \(\riA\) is an atomic rule instance, consider the other subderivations of its major premisses. 
          If they are all atomic then \(\derA\) is atomic too and it satisfies the statement. 
          Otherwise there is a major premiss of \(\riA\) with a non atomic derivation \(\derA''\). 
          Then one of the other cases applies with \(\derA''\) in place of \(\derA'\). 
      \end{itemize}

    \item \(\derA'\) ends with an \(\EM\) rule instance \(\riB\) and its conclusion is not simple.
      \(\riA\) cannot be an atomic rule instance since one of its premisses is the conclusion of \(\riB\) which is not simple and thus not atomic. 
      If \(\riA\) is an elimination rule instance we can apply the \(\RedNamePerm\EM\) reduction to \(\riA\) and \(\riB\). 
      Thus there is a head-cut at the end of the principal branches of \(\derA\), 
      and \(\derA\) satisfies the statement. 
      Otherwise, if \(\riA\) is an \(\EM\) rule instance then the conclusions of \(\derA\) and \(\derA'\) are occurrences of the same non-simple formula. 
      Therefore \(\derA\) again satisfies the statement. 
  \end{enumerate}
  Since we exhausted all the possible cases we are done. 
\end{proof}

Our main theorem is now an easy corollary of the previous lemma.
\begin{theorem}[Witness Extraction] \label{thm:witness_extraction}
  Let \(\derA\) be a derivation of a simple formula \(\fa\) in \(\HA+\EM\). Assume that:
  \begin{enumerate}
    \item \(\derA\) has no principal branch with a head-cut or a non-normal term; 
    \item \(\derA\) contains no free term variable; 
    \item \(\derA\) has no open assumptions;
  \end{enumerate}
  Then \(\derA\) is either atomic or ends with a \(\RuleName\exists{I}\) rule instance. 
  In particular, if \(\derA\) is closed, normal and \(\fa\) is simply existential then \(\derA\) ends with an introduction. 
\end{theorem}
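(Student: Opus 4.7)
The plan is to obtain the result as a direct corollary of the structural \cref{thm:em_reduction}, simply using the three hypotheses and the simplicity of \(\fa\) to eliminate all but the desired cases. Since \cref{thm:em_reduction} enumerates six possibilities for any derivation, the argument reduces to a case analysis showing that only cases (4) and (5) of that lemma remain once the hypotheses are taken into account.

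First I would apply \cref{thm:em_reduction} to \(\derA\). Hypothesis (1) immediately rules out case (1), that \(\derA\) has a head-cut or non-normal term along a principal branch; hypothesis (2) rules out case (2), that \(\derA\) contains a free term variable; and hypothesis (3) rules out case (3), since a principal branch in open normal form must, by \cref{def:open_normal_form}, begin with an open assumption, of which there are none. Case (6) is excluded because it would require the conclusion of \(\derA\) to be non-simple, whereas by assumption \(\fa\) is simple. This leaves only cases (4) and (5): either \(\derA\) ends with an introduction rule instance, or \(\derA\) is atomic.

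For the second, stronger statement, I would observe that closedness and normality of \(\derA\) are precisely the three hypotheses of the first part, so the same case analysis applies. If moreover \(\fa\) is simply existential, then \(\fa\) is not atomic, so \(\derA\) itself cannot be atomic (its conclusion is a non-atomic occurrence of \(\fa\)), ruling out case (5). Hence \(\derA\) ends with an introduction rule instance, and since the introduced formula is \(\fa \equiv \qexists\lva \lafa\), the only applicable introduction rule is \(\RuleName\exists{I}\). The witness for \(\fa\) is then read off as the term \(\lta\) instantiating the bound variable in that final \(\RuleName\exists{I}\) instance.

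There is no real obstacle here: the entire content of the theorem has already been absorbed into \cref{thm:em_reduction}, together with the supporting \cref{thm:forall_elim_closed,thm:em_open_branch,thm:induction}. The hard work lies in those lemmas — in particular, in showing that an \(\EM\) instance whose leftmost premiss ends in an open-normal-form branch can always be reduced via \(\RedNameWitness\) whenever the overall conclusion is simple and there are no free term variables. At this stage the witness extraction theorem is essentially a bookkeeping corollary.
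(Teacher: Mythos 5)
Your proposal is correct and follows essentially the same route as the paper: both derive the theorem as an immediate corollary of \cref{thm:em_reduction}, using the three hypotheses and the simplicity of \(\fa\) to discard all cases except ``ends with an introduction'' and ``atomic,'' and then using the shape of \(\fa\) to decide between them and identify the final rule as \(\RuleName\exists{I}\). Your version is if anything slightly more explicit than the paper's about which hypothesis eliminates which case, but the content is identical.
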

\begin{proof}
  The hypotheses rule out most of the cases considered by \Cref{thm:em_reduction}. 
  The only possible cases are:
  \begin{enumerate}
    \item \(\derA\) ends with an introduction rule instance,
    \item \(\derA\) is atomic.
  \end{enumerate}
  Since \(\fa\) is simple it is either an atomic or existentially quantified formula. 
  If \(\fa\) is atomic, \(\derA\) cannot end with an introduction rule instance and thus \(\derA\) must be atomic.
  Otherwise, if \(\fa\) is existentially quantified, \(\derA\) cannot end with an atomic rule instance and thus 
  \(\derA\) must end with an introduction rule instance which can only be  a \(\RuleName\exists{I}\) rule instance. 
\end{proof}
By \cref{thm:em_strong_normalization}, 
the reduction of a derivation halts after a finite number of steps and produces a derivation without head-cuts.
Then, \cref{thm:witness_extraction} shows that our proof reduction can extract a witness from the derivation of a closed formula \(\qexists\lva \lafa\), which can be found in the premise of the \(\RuleName\exists{I}\) rule instance at the end of the normalized derivation, by the definition of the \(\RuleName\exists{I}\) rule.


\begin{omitted}
  \newcommand\pathA{\mathfrak{a}}
  \newcommand\pathB{\mathfrak{b}}
  \begin{definition}[Path]
    A \emph{path} in a derivation \(\Pi\) is a sequence \(\foA_1, \dotsc, \foA_n\) of formula occurrences in \(\Pi\) such that:
    \begin{itemize}
      \item \(\foA_1\) is the conclusion of an atomic axiom instance or an assumption that is not discharged by a \(\RuleName\lor{E}\) or \(\RuleName\exists{E}\) rule instance; 
      \item for all \(i < n\), \(\foA_i\) and \(\foA_{i+1}\) are respectively a premise and the conclusion of the same rule instance, except when \(\foA_i\) is the major premise of an instance \(\riA\) of a \(\RuleName\lor{E}\) or \(\RuleName\exists{E}\) rule: 
        in this case \(\foA_{i+1}\) is an occurrence of an assumption discharged by the rule instance;
      \item \(\foA_n\) is either the conclusion of \(\Pi\) or the minor premise of a \(\RuleName\limply{E}\) and no \(\foA_i\) with \(i < n\) is such a minor premiss.
    \end{itemize}
  \end{definition} 

  \begin{definition}
    A path \(\foA_1, \dotsc, \foA_n\) of occurrences of formulas \(\fa_1, \dotsc,\fa_n\)
    \begin{itemize}
      \item 
        there are \( 1 \leq m_1 \leq m_2 \leq n\) such that 
        \begin{itemize}
          \item 
            \(\foA_i\) is the conclusion of an elimination rule instance
            for all \( 1 < i \leq m_1\),
          \item 
            \(\foA_i\) is the conclusion of an atomic rule instance 
            for all \( m_1 < i \leq m_2\),
          \item 
            \(\foA_i\) is the conclusion of an introduction rule instance 
            for all \( m_2 < i \leq n\),
        \end{itemize}
      \item 
        has the subformula property when \(\fa_i\) is a subformula

    \end{itemize}
  \end{definition}

  \begin{theorem}
    Let \(\derA\) be a derivation in \(\HA\). 
    Then all paths in \(\derA\) have are normal or 
    \(\derA\) is reducible.
  \end{theorem}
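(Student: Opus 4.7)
The plan is to prove the contrapositive: if $\derA$ admits no proper reduction and no induction reduction, then every path in $\derA$ has the three-segment structure (eliminations, then atomic rules, then introductions). Fix an arbitrary path $\foA_1,\dots,\foA_n$ with underlying formulas $\fa_1,\dots,\fa_n$. Define $m_2$ as the largest index $i \le n$ such that $\foA_i$ is not the conclusion of an introduction rule, and $m_1$ as the largest index $i \le m_2$ such that $\foA_i$ is the conclusion of an elimination rule (taking $m_1 = 0$ or $m_2 = 0$ if the relevant set is empty). The goal is to show that for $i \le m_1$ the conclusions are eliminations and that for $m_1 < i \le m_2$ they are atomic-rule conclusions; by the choice of $m_2$, the tail already consists of introductions.

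The first main step is to rule out an introduction followed later along the path by an elimination. Suppose, towards a contradiction, that $\foA_i$ is the conclusion of an introduction and $\foA_j$ is the conclusion of an elimination with $j>i$ minimal. By minimality, $\foA_{j-1}$ is the major premiss of the elimination instance $\riA_j$. Tracking the path between $\foA_i$ and $\foA_{j-1}$, every step must be either a premiss-to-conclusion transition of a non-elimination rule or the $\RuleName\lor{E}$/$\RuleName\exists{E}$ jump into a discharged assumption; the latter are themselves elimination rule instances and so are excluded by minimality of $j$. This forces $\foA_{j-1}$ itself to be the conclusion of an introduction on the same connective that $\riA_j$ eliminates, which is a classical cut and triggers the corresponding proper reduction of \cref{fig:proper_reductions}, contradicting non-reducibility.

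The second main step handles the atomic segment using pure shape constraints. Atomic rules have atomic premisses and conclusions, so an atomic conclusion cannot be a major premiss of an elimination (whose major premiss carries the eliminated connective at the head), and the conclusion of an introduction is never atomic. Combined with the previous step, these constraints force atomic rule conclusions to occupy precisely the middle segment. Instances of $\RuleNameIndE$ require separate care: if the main term of such an instance along the path reduces to a numeral, $\RedNameInd$ fires and $\derA$ is reducible; otherwise the main term contains a free variable and the instance can be classed together with atomic or elimination steps, fitting into the segmentation.

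The main obstacle will be the careful handling of the path-jump clause at $\RuleName\lor{E}$ and $\RuleName\exists{E}$ instances, where the formula along the path abruptly changes from $\fb \lor \fc$ (or $\qexists\lva\fb$) to a discharged assumption $\fb$, $\fc$ (or $\fb\subst\lva\lvb$) inside a minor subderivation. The connective-matching argument used to expose the cut at $\riA_j$ must be shown to still apply after such jumps, presumably by a secondary induction on the nesting depth of these rules: any introduction living above a jumped-into discharged assumption that is later eliminated should surface a proper reduction either on the current path or on a companion path through the major premiss of the $\RuleName\lor{E}$/$\RuleName\exists{E}$ instance. Once this lemma is established, the dichotomy follows immediately.
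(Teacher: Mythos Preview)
Your approach differs from the paper's. The paper argues by structural induction on the derivation $\derA$: for an arbitrary path, if it does not end at the conclusion of $\derA$ it lies entirely inside a proper subderivation and the inductive hypothesis applies; otherwise one case-splits on the last rule instance $\riA$ and on what concludes its relevant premiss. You instead fix a path and try to carve out the three segments directly by locating indices $m_1,m_2$.

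There is a concrete gap in your first main step. You assert that, with $j$ minimal, $\foA_{j-1}$ is the \emph{major} premiss of the elimination instance $\riA_j$. This holds for $\RuleName\land{E}$, $\RuleName\forall{E}$ and $\RuleName\limply{E}$ (a minor premiss of $\RuleName\limply{E}$ would terminate the path), but it fails exactly when $\riA_j$ is $\RuleName\lor{E}$ or $\RuleName\exists{E}$: by the path clause, the step from the major premiss of such an instance goes to a discharged assumption, not to the conclusion, so if $\foA_j$ is the conclusion of $\riA_j$ then $\foA_{j-1}$ is necessarily a \emph{minor} premiss. In that configuration your connective-matching argument does not fire: the major premiss of $\riA_j$ is some earlier $\foA_k$ (the jump point), not $\foA_{j-1}$, and the relevant reduction is either a proper one at $\foA_k$ (when $\foA_k$ is an introduction conclusion) or a permutative one at the rule below $\foA_j$ (when $\foA_j$ is in turn a major premiss). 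Your ``main obstacle'' paragraph talks about tracking information through jumps, but it does not actually isolate this case; your sentence ``the latter are themselves elimination rule instances and so are excluded by minimality of $j$'' is also off, since the formula reached after a jump is a discharged assumption, not an elimination conclusion, so minimality of $j$ places no constraint on where jumps occur.

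The paper's bottom-up induction sidesteps this: when the last rule is an elimination, one looks directly at what concludes its major premiss, and the $\RuleName\lor{E}$/$\RuleName\exists{E}$ case yields the permutative reduction immediately rather than through an analysis of path positions. Your secondary induction on nesting depth could presumably be made to work, but it is where essentially all the content of the proof lies, and it is left entirely open in your proposal.
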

  \begin{proof}
    By induction on the structure of \(\Pi\). 

    Consider any path \(\pathA = \foA_1, \dotsc, \foA_n\) of \(\Pi\).
    If \(\foA_n\) is not the conclusion of \(\Pi\) then \(\pathA\) is contained in a subderivation \(\derB\) of \(\derA\). 
    By inductive hypothesis either \(\pathA\) is normal or \(\derB\) is reducible and then so is \(\derA\) and we are done.

    Otherwise \(\foA_n\) is the conclusion of \(\Pi\). 
    Let \(\riA\) be the rule instance \(\foA_n\) is the conclusion of. 
    If \(\riA\) is an atomic axiom then \(n = 1\) and \(\pathA\) is normal.
    Otherwise we can assume that \(n > 1\). 
    Let \(\riB\) be the rule instance \(\foA_{n-1}\) is the conclusion of and let \(\derB\) the derivation of \(\foA_{n-1}\). 
    Let \(\pathB\) be the path \(\foA_1, \dotsc, \foA_{n-1}\) of \(\derB\). 
    \[
      \PrAx\dotso
      \PrAx\dotso
      \PrLbl[\riB]{}
      \PrUn{\fa_{n-1}}
      \PrAx\dotso
      \PrLbl[\riA]{}
      \PrTri{\fa_n}
      \DisplayProof
    \]
    We consider various cases depending on which rule \(\riA\) is an instance of.
    \begin{description}
      \item[atomic] 
        If \(\riA\) is an atomic rule then 
      \item[introduction]
        If \(\riA\) is an introduction rule instance 
      \item[\(\riA\) is atomic]
      \item[\(\riA\) is atomic]
    \end{description}
  \end{proof}

\end{omitted}

\chapter{Interpreting a Geometric Example with Interactive Realizability}

\newcommand\Q{\mathbb Q}
\newcommand\R{\mathbb R}

\newcommand\nb{m}
\newcommand\ia{i}
\newcommand\ib{j}
\newcommand\ic{k}
\renewcommand\id{l}

\newcommand\pa{P}
\newcommand\pb{Q}
\newcommand\pc{R}
\newcommand\pd{S}

\newcommand\pointAt[4]{%
  \coordinate (#1) at (#2);
  \draw[fill] (#1) circle (0.05);
  \node[#4] at (#1) {\(#3\)};
}
\newcommand\lineBetween[2]{%
  \draw [shorten >= -1cm, shorten <=-1cm] (#1)--(#2);
}

In this chapter we show how to extract a monotonic learning algorithm from a classical proof of a geometric statement by interpreting the proof by means of interactive realizability. 

The statement is about the existence of a convex angle including a finite collections of points in the real plane and it is related to the existence of a convex hull. 
We define real numbers as Cauchy sequences of rational numbers, 
therefore equality and ordering are not decidable. 
While the proof looks superficially constructive, 
it employs classical reasoning to handle undecidable comparisons between real numbers, 
making the underlying algorithm non-effective. 

The interactive realizability interpretation transform the non-effective linear algorithm described by the proof into an effective one that uses backtracking to learn from its mistakes. 
The effective algorithm exhibit a ``smart'' behavior, performing comparisons only up to the precision required to prove the final statement. 
This behavior is not explicitly planned but arises from the interactive interpretation of comparisons between Cauchy sequences. 

\section{Introduction}



We study the computational content of the proof of the following geometric statement. 
\begin{theorem}[Convex Angle]
  We have a finite set of at least three points in the real plane \(\R^2\) such that no three points are on the same line. 
  Then there exist distinct points \(\pa, \pb\) and \(\pc\) such that: 
  \begin{itemize}
    \item all other points \(\pd\) are inside \(\widehat{\pb\pa\pc}\), 
    \item the angle \(\widehat{\pb\pa\pc}\) is convex, that is, less than \(\pi\). 
  \end{itemize} 
\end{theorem}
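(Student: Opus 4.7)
The plan is to give a classical proof by first selecting \(\pa\) as an extremal point of the set (a point with minimum \(y\)-coordinate, with ties broken by minimum \(x\)) and then choosing \(\pb\) and \(\pc\) as the angular extremes, as seen from \(\pa\), among the remaining points. When \(\pa\) is extremal in this sense, every other point lies in the closed upper half-plane through \(\pa\), so the angles formed at \(\pa\) with the positive \(x\)-axis all lie in \([0,\pi]\); taking \(\pb\) with the smallest such angle and \(\pc\) with the largest yields an angle \(\widehat{\pb\pa\pc}\) containing every remaining point, and the no-three-collinear hypothesis then prevents the degenerate case in which \(\pb,\pa,\pc\) are collinear, ensuring convexity.

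First I would fix notation and express "\(\pd\) is inside \(\widehat{\pb\pa\pc}\)" as a conjunction of two signed-area (cross product) inequalities with respect to the oriented rays from \(\pa\) through \(\pb\) and through \(\pc\). Then I would establish, by induction on the cardinality of the set, the existence of the lexicographic minimum on \((y,x)\); this is exactly where \(\EM\) enters the proof, because pairwise comparison of Cauchy sequences is not constructively decidable, so each inductive step appeals to excluded middle on an atomic comparison between real numbers. The same inductive argument, applied now to the angular order as seen from \(\pa\), produces \(\pb\) and \(\pc\). With \(\pa,\pb,\pc\) in hand, containment of each remaining \(\pd\) in \(\widehat{\pb\pa\pc}\) is immediate from the extremality of \(\pb\) and \(\pc\), and convexity follows from the half-plane property together with non-collinearity of \(\pa,\pb,\pc\).

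The hard part will be the minimum-selection steps: comparisons of the form \(y_{\pd_1}<y_{\pd_2}\) and the analogous angular comparisons are \(\Sigma^0_1\) instances on Cauchy-sequence data and are not constructively decidable, so even though the combinatorial skeleton of the argument is elementary, the proof is intrinsically classical. The interactive realizability interpretation will therefore have to turn the plain "pick the minimum, then pick the extremes" strategy into a learning algorithm that, at each comparison, guesses an inequality on the basis of the current approximation of the relevant Cauchy sequences and records a refinement in the knowledge state whenever a later computation falsifies such a guess, triggering backtracking to the point of the faulty guess. Showing that the resulting procedure performs comparisons only up to the precision actually required by the final statement, and that it does so in a monotonic way consistent with the soundness theorem for \(\EM\), is the main conceptual work.
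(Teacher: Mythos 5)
Your proof is classically correct, but it takes a genuinely different route from the one in the paper. Both proofs start the same way: \(\pa\) is obtained as a (candidate) lowest point via the Least Element Lemma, and this is where \(\EM\) enters. After that you diverge. You propose two further extremal selections — the angular minimum and maximum, as seen from \(\pa\), among the remaining points — and then get containment of every \(\pd\) for free from extremality. The paper instead runs a single sweep that maintains two candidates \(\pb'\), \(\pc'\), updates a candidate whenever some \(\pd\) falls on the wrong side of \(\pa\pb'\) or \(\pa\pc'\), and invokes the Three Points Lemma both to show that an update does not break the invariant for previously processed points and to exclude the case where \(\pd\) is simultaneously to the right of \(\pa\pb'\) and to the left of \(\pa\pc'\). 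What your approach buys is a cleaner combinatorial skeleton (three independent minimum computations, each an instance of the same lemma); what it hides is that the angular order on points in the upper half-plane must be shown to be a transitive total order, and proving that transitivity is essentially the same cross-product computation the paper isolates as the Three Points Lemma — so the algebraic content does not disappear, it just moves. Two further cautions. First, your lexicographic tie-break on \((y,x)\) is both unnecessary (a non-unique least element with respect to \(\leR\) suffices, since the contradiction in the paper only needs \(y_\pa \leR y_\pd\) against a derived \(y_\pd \ltR y_\pa\)) and problematic in this setting, because equality of reals encoded as Cauchy sequences is \(\Pi^0_1\) and deciding a tie is itself a non-effective step you would then have to interpret. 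Second, the comparisons you call "\(\Sigma^0_1\) instances" are really instances of totality \(\lrea \leR \lreb \lor \lreb \ltR \lrea\), i.e.\ a \(\Pi^0_1\) disjunct against its \(\Sigma^0_1\) dual; that is exactly the shape of \(\EM\) the paper's interactive interpretation is designed for, so your concluding plan for the realizability reading is consistent with the paper's.
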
 
\begin{figure}[!ht] 
  \begin{center}
  \begin{tikzpicture}[>=latex] 
    \pointAt{A}{0,0}\pa{below}
    \pointAt{B}{45:2}\pb{right}
    \lineBetween{A}{B}
    \pointAt{C}{140:1.5}\pc{left}
    \lineBetween{A}{C}
    \pointAt{A}{0,0}\pa{below}
    \pointAt{D}{60:1.5}{}{}
    \pointAt{D}{70:2}{}{}
    \pointAt{D}{80:1.5}{}{}
    \pointAt{D}{90:1}{}{}
    \pointAt{D}{100:1.5}{}{}
    \pointAt{D}{110:2}{}{}
    \pointAt{D}{120:1}{}{}
  \end{tikzpicture}
  \end{center}
\end{figure}

We choose this particular statement because we have a proof of it that looks algorithmic and can be easily visualized. 
The \gref{thm:bounding_angle} be thought of as weakened version of the existence of the convex hull of a finite set of points. 

As we said proof we choose as example looks constructive, using only decidability of ordering over real numbers. 
However, it is well known that there is no effective ordering on the real numbers. 
In our encoding of the real numbers, totality of the ordering on the recursive reals is equivalent to \(\EM\). 
Since the proof needs the ordering to be total, it needs \(\EM\). 
Due to the low logical complexity of excluded middle which is used, the proof may be interpreted with a simple case of interactive realizability.

We show how interactive realizability can be applied and what it can tell us about the computational content of the proof. 
What we get is an algorithm that, instead of comparing real numbers, makes an arbitrary guess about which one is smaller.
If later it becomes apparent that the guess is wrong the algorithm retracts the choice it made since it can now make an informed decision about that particular comparison. 
Then the algorithm performs comparisons only when needed and only up to the required precision. 

Thus we see how a simple classical proof which performs comparisons between real numbers is interpreted as a learning algorithm which uses ``educated guesses'' 
in order to avoid non effective operations. 
This non-trivial behavior is not explicit in the classical proof, but follows from the definition of ordering on Cauchy sequences by means of the interactive realizability interpretation. 

In this chapter, our main goal is to showcase interactive realizability and the backtracking algorithms it produces through a non-trivial example. 
For this reason, we chose to present interactive realizability as a proof interpretation technique rather than as a realizability semantics, in order to concentrate on the example and its computational interpretation without being bogged down in technical details.  

Note that interactive realizability is by no means the only approach to extract a computational interpretation from our proof. 
It should also be noted that while our proof is classical, it can be seen that our statement admits an intuitionistic proof by the conservativity results in \cite{berger02}. 

\section{Real Numbers}

In this section we present our treatment of real numbers in Heyting Arithmetic. 

\newcommand\eqQ{=_\Q}
\newcommand\ltQ{<_\Q}
\newcommand\gtQ{>_\Q}
\newcommand\leQ{\le_\Q}
\newcommand\geQ{\ge_\Q}
\newcommand\Qplus{+_\Q}
\newcommand\Qminus{-_\Q}
\newcommand\Qprod{\cdot_\Q}
\newcommand\Qzero{0_\Q}
\newcommand\lqa{q}
\newcommand\lqb{p}
There are many ways of encoding integer and rational numbers in \(\HA\) and defining primitive recursive operations and predicates on them. 
In the following we assume that we have any such encoding and that we have decidable equality \(\eqQ\) and ordering \(\ltQ, \leQ\) and effective operations \(\Qplus\, \Qprod\).
We use the variables \(\lqa\) and \(\lqb\) for rationals. 


\subsection{Cauchy Sequences}
There are many equivalent ways of defining the real numbers from the rational numbers. 
The most known are the definition of the reals as equivalence classes of Cauchy sequences and as Dedekind cuts.
We follow the first approach. 

\newcommand\lrea{r}
\newcommand\lreb{s}
\newcommand\lrec{t}
\newcommand\lpa{k}
\newcommand\lpb{l}
A sequence of rationals \( \lrea : \N \to \Q \) is a \emph{Cauchy sequence} if the following holds:
\begin{equation} \label{eq:classical_cauchy}
  \qforall\lpa \qexists{\lpa_0} \qforall{\lpa_1, \lpa_2}
  | \lrea(\lpa_0 + \lpa_2) - \lrea(\lpa_0 + \lpa_1) | < \frac{1}{2^\lpa}. 
\end{equation}
While this sequence approximates a real number, it can do so very slowly. \note{we could cite Cauchy sequences with modulus of convergence}
By means of classical reasoning, 
we can show that, from any Cauchy sequence, we can extract a fast-converging monotone sub-sequence. 
For this reason, instead of general Cauchy sequences, we can consider sequences of nested intervals with rational extremes whose length decreases exponentially. 
An interval is determined by its extremes, so we represent a sequence of intervals as a couple of sequences of rationals \(\lrea^-, \lrea^+\), representing the lower and higher extremes of the intervals respectively.
Then we require that \(\lrea^-\) is increasing and \(\lrea^+\) is decreasing (since the intervals are nested), that \(\lrea^-(\lpa)\) is lesser than or equal to \(\lrea^+(\lpa)\) (since they are the lower and higher extremes of a same interval) and their difference is smaller than \(2^{-\lpa}\). 
More precisely we say that \(\lrea^-\) and \(\lrea^+\) represent a real number when they satisfy the following condition, written as a \(\Pi^0_1\) formula:
\begin{equation} \label{eq:nested_cauchy}
  \begin{split}
    \qforall\lpa &(\lrea^-(\lpa) \leQ \lrea^+(\lpa)) \land (\lrea^-(\lpa) \leQ \lrea^-(\lpa+1)) \land \\ 
                 &\land (\lrea^+(\lpa) \geQ \lrea^+(\lpa+1)) \land (\lrea^+(\lpa) \Qminus \lrea^-(\lpa) \leQ 2^{(-\lpa)}).
  \end{split}
\end{equation} 
While the choice of the specific definition of real number is somewhat arbitrary, it is significant because it affects the logical properties (in particular the degree of undecidability) of the ordering on the reals. 

\subsection{Order Predicate}
\newcommand\op{\text{OP}}
\newcommand\RuleNameOP[1]{\op\text{-{#1}}}

Now we can define an ``order predicate'' \(\op(\lrea, \lreb, \lpa)\), which can be thought of as a family of strict partial orders on the real numbers indexed by natural number \(\lpa\). 
More precisely, it is a formula that determines when the sequence of nested intervals \(\lrea\) is strictly lesser than \(\lreb\), at precision \(\lpa\). 
This happens when, at \(\lpa\), the higher extreme of an interval is strictly greater than the lower extreme of the other. Then, from that point forward, the intervals will be forever disjoint, since we they are nested sequences. 
This allows us to write the order predicate as the formula:
\begin{equation} \label{eq:nested_op}
  \op(\lrea, \lreb, \lpa) \equiv \lrea^+(\lpa) \ltQ \lreb^-(\lpa), 
\end{equation} 
which is decidable in \(\lrea\) and \(\lreb\). 
Note that the definition of \(\op\) depends on that of real number. If we had used the classical definition of Cauchy sequence the order predicate would be the following \(\Pi_1^0\) formula:
\begin{equation} \label{eq:cauchy_op}
  \op'(\lrea, \lreb, \lpa) \equiv \qforall{\lpb} \lpb \ge \lpa \limply \lrea(\lpb) \ltQ \lrea(\lpb).  
\end{equation} 
This is very significant for our purposes: 
the order predicate in \eqref{eq:nested_op} is decidable in \(\lrea\) and \(\lreb\) (since the order on the rationals is), while in \eqref{eq:cauchy_op} it is only \emph{negatively decidable}. 
This means that we have an effective method to decide \eqref{eq:cauchy_op} when it is false, but not when it is true. \fixme{explain better}

We need \(\op\) to satisfy some properties, written as rules in \cref{fig:op_properties}. 
\begin{figure}[!ht] 
  \caption{Rules for \(\op\).}\label{fig:op_properties}
  \renewcommand\arraystretch{2}
  \[ 
    \begin{array}{|cc|}
      \hline
      \PrAx{\op(\lrea, \lreb, \lpa)}
      \PrLbl{\RuleNameOP{mon}}
      \PrUn{\op(\lrea, \lreb, \lpa+1)}
      \DisplayProof
      &
      \PrAx{\op(\lrea, \lrea, \lpa)}
      \PrLbl{\RuleNameOP{irrefl}}
      \PrUn\lfalse
      \DisplayProof
      \\ 
      \PrAx{\op(\lrea, \lreb, \lpa)}
      \PrAx{\op(\lreb, \lrea, \lpb)}
      \PrLbl{\RuleNameOP{asym}}
      \PrBin\lfalse
      \DisplayProof \quad
      &
      \PrAx{\op(\lrea, \lreb, \lpa)}
      \PrAx{\op(\lreb, \lrec, \lpb)}
      \PrLbl{\RuleNameOP{trans}}
      \PrBin{\op(\lrea, \lrec, \max(\lpa,\lpb))}
      \DisplayProof \\[2mm]
      \hline
    \end{array}
  \]
\end{figure}
The \(\RuleNameOP{mon}\) rule expresses a monotonicity property: when an comparison at a given precision can distinguish two approximations, then comparisons at greater precision should too. 
The other rules correspond to the standard axioms for a strict partial order: irreflexivity, asymmetry and transitivity. \note{asymmetry may be useless, maybe we should remove it}

We verify that our definition of \(\op\) satisfies these properties. 
\begin{lemma} \label{thm:op_properties}
  The order predicate \(\op\) defined by \eqref{eq:nested_op} satisfies the properties given in \cref{fig:op_properties}.
\end{lemma}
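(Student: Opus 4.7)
The plan is to unfold the definition \(\op(\lrea, \lreb, \lpa) \equiv \lrea^+(\lpa) \ltQ \lreb^-(\lpa)\) in each case and deduce each property from the monotonicity conditions on \(\lrea^-, \lrea^+\) encoded in \eqref{eq:nested_cauchy}, treating the four rules in order of increasing reliance on each other.

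For \(\RuleNameOP{mon}\), I would combine three inequalities at step \(\lpa+1\): from \eqref{eq:nested_cauchy} we have \(\lrea^+(\lpa+1) \leQ \lrea^+(\lpa)\) and \(\lreb^-(\lpa) \leQ \lreb^-(\lpa+1)\), and the hypothesis gives \(\lrea^+(\lpa) \ltQ \lreb^-(\lpa)\); chaining these yields \(\lrea^+(\lpa+1) \ltQ \lreb^-(\lpa+1)\). For \(\RuleNameOP{irrefl}\), the hypothesis says \(\lrea^+(\lpa) \ltQ \lrea^-(\lpa)\), which directly contradicts the clause \(\lrea^-(\lpa) \leQ \lrea^+(\lpa)\) of \eqref{eq:nested_cauchy}.

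The trickiest step will be \(\RuleNameOP{asym}\), because the two hypotheses involve different precisions \(\lpa\) and \(\lpb\), and \eqref{eq:nested_cauchy} only compares extremes \emph{within} the same sequence. The plan is to use \(\RuleNameOP{mon}\) (iterated) to lift both hypotheses to the common precision \(\lpc = \max(\lpa, \lpb)\), yielding \(\lrea^+(\lpc) \ltQ \lreb^-(\lpc)\) and \(\lreb^+(\lpc) \ltQ \lrea^-(\lpc)\). Then using the within-interval bounds \(\lrea^-(\lpc) \leQ \lrea^+(\lpc)\) and \(\lreb^-(\lpc) \leQ \lreb^+(\lpc)\) from \eqref{eq:nested_cauchy}, I would chain all four inequalities into \(\lrea^-(\lpc) \ltQ \lrea^-(\lpc)\), which is absurd by irreflexivity of \(\ltQ\) on the rationals.

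Finally, \(\RuleNameOP{trans}\) reduces to the same idea: lift both hypotheses to \(\lpc = \max(\lpa,\lpb)\) via \(\RuleNameOP{mon}\), getting \(\lrea^+(\lpc) \ltQ \lreb^-(\lpc)\) and \(\lreb^+(\lpc) \ltQ \lrec^-(\lpc)\), then insert \(\lreb^-(\lpc) \leQ \lreb^+(\lpc)\) in the middle to conclude \(\lrea^+(\lpc) \ltQ \lrec^-(\lpc)\), i.e.\ \(\op(\lrea, \lrec, \lpc)\). The whole proof is thus a purely decidable rational-arithmetic manipulation, with no appeal to \(\EM\); this is precisely what makes \(\op\) the ``well-behaved'' notion to take as primitive for the interactive interpretation in the rest of the chapter.
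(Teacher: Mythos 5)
Your proof is correct, and for \(\RuleNameOP{mon}\) and \(\RuleNameOP{irrefl}\) it coincides with the paper's: the same three-link chain for monotonicity, and the same direct clash between \(\lrea^+(\lpa) \ltQ \lrea^-(\lpa)\) and the interval condition of \eqref{eq:nested_cauchy} for irreflexivity. For the other two rules you take a genuinely different route. The paper does not prove \(\RuleNameOP{asym}\) from the definition at all: it observes that asymmetry is \emph{derivable} from \(\RuleNameOP{trans}\) followed by \(\RuleNameOP{irrefl}\) (apply transitivity to the two hypotheses to get \(\op(\lrea,\lrea,\max(\lpa,\lpb))\), then irreflexivity gives \(\lfalse\)), which shows the rule is redundant as an axiom; your direct four-link chain at the common precision \(\max(\lpa,\lpb)\) is equally valid but proves more than is needed. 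For \(\RuleNameOP{trans}\) the paper splits into cases according to whether \(\max(\lpa,\lpb)\) is \(\lpa\) or \(\lpb\) and writes out a five-link chain mixing the two precisions directly, whereas you first normalise both hypotheses to the common precision by iterating \(\RuleNameOP{mon}\) and then need only a three-link chain. Your normalisation trick is arguably cleaner and avoids the case split (at the cost of a small induction on the precision gap hidden in ``iterated monotonicity''), while the paper's version keeps each rule's proof independent of the others except where it deliberately exploits a dependency to discharge asymmetry for free. Both arguments are purely decidable rational arithmetic, as you note.
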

\begin{proof}
  We show that the properties follow directly from the definition of \(\op\) as \eqref{eq:nested_op} and from our representation of real number as sequences of nested intervals \eqref{eq:nested_cauchy}. 
  \begin{description}
    \item[Monotonicity]
      We want to prove that 
      \begin{align*}
        \op(\lrea, \lreb, \lpa+1) &\equiv \lrea^+(\lpa+1) \ltQ \lreb^-(\lpa+1),
        \intertext{assuming that: }
        \op(\lrea, \lreb, \lpa) &\equiv \lrea^+(\lpa) \ltQ \lreb^-(\lpa).
      \end{align*}
      This follows by applying the transitive property of the order on the rationals to the following chain of inequalities: 
      \begin{align*}
        \lrea^+(\lpa+1) 
        &\leQ \lrea^+(\lpa) && \text{since \(\lrea^+\) is weakly decreasing}, \\
        &\ltQ \lreb^-(\lpa) && \text{by assumption}, \\ 
        &\leQ \lreb^-(\lpa+1) && \text{since \(\lreb^+\) is weakly increasing}. 
      \end{align*}
    \item[Reflexivity]
      We have to prove that 
      \[
        \op(\lrea, \lrea, \lpa) \equiv \lrea^+(\lpa) \ltQ \lrea^-(\lpa),
      \]
      yields a contradiction. 
      This is a consequence of the fact that \(\lrea^-(\lpa)\) and \(\lrea^+(\lpa)\) are respectively the lower and higher extremes of the same interval. 
    \item[Asymmetry]
      The \(\RuleNameOP{asym}\) is actually derivable by monotonicity and transitivity: 
      \[
        \PrAx{\op(\lrea, \lreb, \lpa)}
        \PrAx{\op(\lreb, \lrea, \lpb)}
        \PrLbl{\RuleNameOP{trans}}
        \PrBin{\op(\lrea, \lrea, \lpa)}
        \PrLbl{\RuleNameOP{irrefl}}
        \PrUn\lfalse
        \DisplayProof
      \]
    \item[Transitivity]
      We have to prove that 
      \begin{align*}
        \op(\lrea, \lrec, \lpa) &\equiv \lrea^+(\lpa) \ltQ \lrec^-(\max(\lpa,\lpb)),
        \intertext{follows from the assumptions:}
        \op(\lrea, \lreb, \lpa) &\equiv \lrea^+(\lpa) \ltQ \lreb^-(\lpa), \\
        \op(\lreb, \lrec, \lpb) &\equiv \lreb^+(\lpb) \ltQ \lrec^-(\lpb).
      \end{align*}
      We have two cases depending on whether \(\max(\lpa,\lpb)\) is \(\lpa\) or \(\lpb\).
      Since the two cases are very similar, we only show the proof of the first. 
      Thus we assume that \(\max(\lpa,\lpb) = \lpa \), which means that \(\lpa \ge \lpa\). 

      Again this follows applying the transitive property of the order on the rationals to the following chain of inequalities: \note{we could add a nice picture}
      \begin{align*} 
        \lrea^+(\lpa) &\ltQ \lreb^-(\lpa) 
        && \text{ by the first assumption}, \\
        &\leQ \lreb^+(\lpa) 
        && \text{ since \( [\lreb^-(\lpa)\), \(\lreb^-(\lpa)] \) is an interval}, \\ 
        &\leQ \lreb^+(\lpb) 
        && \text{ since \(\lreb^+\) is weakly decreasing and \(\lpa \ge \lpb\)}, \\ 
        &\ltQ \lrec^-(\lpb) 
        && \text{ by the second assumption}, \\
        &\leQ \lrec^-(\lpa) 
        && \text{ since \(\lrec^-\) is weakly increasing and \(\lpa \ge \lpb\)}.
      \end{align*}
      Thus, we have:
      \[
        \lrea^+(\lpa) \ltQ \lrec^-(\max(\lpa,\lpb)). \qedhere
      \]
  \end{description} 
\end{proof}

\subsection{Order and Equality on the Real Numbers}

\newcommand\eqR{=_\R}
\newcommand\neqR{\neq_\R}
\newcommand\ltR{<_\R}
\newcommand\gtR{>_\R}
\newcommand\leR{\le_\R}
\newcommand\geR{\ge_\R}

We can now defined order and equality on the reals.
It is noteworthy that, while we define order and equality in terms of \(\op\),  we never use the definition of \(\op\) itself in proving their properties. 
We only need the properties of \(\op\) we proved in \cref{thm:op_properties}, 
thus we could proceed in the same way even if we had defined \(\op\) differently, as long as \cref{thm:op_properties} holds. 


They are defined as follows:
\begin{align*}
  \lrea \ltR \lreb &\equiv \qexists\lpa \op(\lrea, \lreb, \lpa), \\
  \lrea \leR \lreb &\equiv \qforall\lpa \lnot \op(\lreb, \lrea, \lpa), \\ 
  \lrea \neqR \lreb &\equiv \qexists\lpa \op(\lrea, \lreb, \lpa) \lor \op(\lreb, \lrea, \lpa), \\
  \lrea \eqR \lreb &\equiv \qforall\lpa \lnot \op(\lrea, \lreb, \lpa) \land \lnot \op(\lreb, \lrea, \lpa). 
\end{align*}
Note that \(\ltR\) and \(\neqR\) are \(\Sigma^0_1\) formulas and \(\leR\) and \(\eqR\) are \(\Pi^0_1\) formulas. 
Moreover \(\leR\) and \(\eqR\) are the dual formulas of \(\ltR\) and \(\neqR\) respectively, as defined in \cref{sec:em}. 

In order to prove the \gref{thm:least_element}, which is needed in the proof of the \gref{thm:bounding_angle}, 
we need to show some of the properties of the order \(\leR\). 
\begin{lemma}[Reflexivity, Semi-Transitivity and Totality of \(\leR\)]
  The following properties hold:
  \begin{align}
    \tag{reflexivity}\label[property]{prop:Rrefl} 
    \lrea \leR \lrea \\ 
    \tag{semi-transitivity}\label[property]{prop:Rtrans} 
    \lrea \ltR \lreb \land \lreb \leR \lrec \limply \lrea \leR \lrec, \\
    \tag{totality}\label[property]{prop:Rtot}
    \lrea \leR \lreb \lor \lreb \ltR \lrea. 
  \end{align}
\end{lemma}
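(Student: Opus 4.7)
My plan is to handle the three properties in order of increasing logical strength, unfolding the definitions of $\ltR$ and $\leR$ in terms of $\op$ and then invoking the properties of $\op$ established in \cref{thm:op_properties}. Since the definitions treat $\ltR$ as a $\Sigma^0_1$ statement and $\leR$ as its $\Pi^0_1$ dual, the work amounts to shuffling quantifiers and negations and applying the rules for $\op$ at appropriate precisions.

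For \cref{prop:Rrefl}, unfolding gives the goal $\qforall\lpa \lnot \op(\lrea, \lrea, \lpa)$. This follows immediately from \(\RuleNameOP{irrefl}\): fix an arbitrary $\lpa$, assume $\op(\lrea, \lrea, \lpa)$, and derive $\lfalse$. For \cref{prop:Rtrans}, I would assume $\qexists\lpa \op(\lrea, \lreb, \lpa)$ and $\qforall\lpb \lnot \op(\lrec, \lreb, \lpb)$, and prove $\qforall\lpc \lnot \op(\lrec, \lrea, \lpc)$. Using $\RuleName\exists{E}$ I obtain a witness $\lpa$ with $\op(\lrea, \lreb, \lpa)$; then for arbitrary $\lpc$, assuming $\op(\lrec, \lrea, \lpc)$ for contradiction, transitivity yields $\op(\lrec, \lreb, \max(\lpc, \lpa))$, which I instantiate against the universal hypothesis to derive $\lfalse$.

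The main obstacle is \cref{prop:Rtot}, since this is exactly where the comparison of real numbers becomes undecidable: the two disjuncts are $\qforall\lpa \lnot \op(\lreb, \lrea, \lpa)$ and $\qexists\lpa \op(\lreb, \lrea, \lpa)$, so the statement is intrinsically of the form handled by $\EM$. My plan is to invoke $\EM$ on the atomic formula $\op(\lreb, \lrea, \lpa)$, which is a $\Sigma^0_1$ formula in $\lrea, \lreb$. The axiom gives:
\[
(\qforall\lpa \lnot \op(\lreb, \lrea, \lpa)) \lor (\qexists\lpa \op(\lreb, \lrea, \lpa)),
\]
where I use that the dual of the atomic formula $\op(\lreb, \lrea, \lpa)$ is just $\lnot \op(\lreb, \lrea, \lpa)$ (as noted in \cref{sec:em}). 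The left disjunct is exactly $\lrea \leR \lreb$ and the right disjunct is exactly $\lreb \ltR \lrea$, so two applications of $\RuleName\lor{I}$ from the corresponding case of $\RuleName\lor{E}$ close the goal.

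This structure makes totality the only point where classical reasoning enters, which is what will later justify interpreting the overall proof via interactive realizability: an interactive realizer of this lemma will guess the first disjunct (making educated assumptions that comparisons never resolve) and backtrack to the witness-producing existential disjunct whenever a concrete $\op(\lreb, \lrea, \num\lpa)$ turns out to hold. No other step needs more than minimal logic together with the decidable/atomic properties of $\op$ collected in \cref{thm:op_properties}.
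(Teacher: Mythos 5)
Your proposal is correct and follows essentially the same route as the paper: reflexivity from \(\RuleNameOP{irrefl}\), semi-transitivity by extracting the existential witness, applying \(\RuleNameOP{trans}\), and instantiating the universal hypothesis at the maximum of the two precisions, and totality recognized as a direct instance of \(\EM\) for the \(\Sigma^0_1\) formula \(\qexists\lpa\,\op(\lreb,\lrea,\lpa)\) and its dual. The only cosmetic difference is that your closing \(\RuleName\lor{E}\)/\(\RuleName\lor{I}\) step for totality is unnecessary, since the \(\EM\) instance is already syntactically identical to the goal.
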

\begin{proof}
  The first two properties follows from the corresponding properties of \(\op\). 
  The last is a classical tautology. 
  \begin{itemize}
    \item
      In order to prove reflexivity we have to show that:
      \[
        \lrea \leR \lrea \equiv \qforall\lpa \lnot \op(\lrea, \lrea, \lpa). 
      \]
      This follows by the \RuleNameOP{irrefl} rule: 
      \[
        \PrAss{\op(\lrea, \lrea, \lpa)}\riA
        \PrLbl{\RuleNameOP{irrefl}}
        \PrUn\lfalse
        \PrImplyI[\riA]{\lnot \op(\lrea, \lrea, \lpa)}
        \PrForallI{\qforall\lpa \lnot \op(\lrea, \lrea, \lpa)}
        \DisplayProof
      \]
    \item
      In order to prove this transitive property for mixed \(\ltR\) and \(\leR\) we have to show that:
      \begin{align*}
        \lrea \leR \lrec &\equiv \qforall\lpa \lnot \op(\lrec, \lrea, \lpa), \\ 
        \intertext{assuming that:}
        \lrea \ltR \lreb &\equiv \qexists\lpa \op(\lrea, \lreb, \lpa), \\
        \lreb \leR \lrec &\equiv \qforall\lpa \lnot \op(\lrec, \lreb, \lpa).
      \end{align*}
      This follows by means of the \RuleNameOP{trans} rule: 
      \[
        \PrAx{\qexists\lpa \op(\lrea, \lreb, \lpa)}
        \PrAx{\qforall\lpa \lnot \op(\lrec, \lreb, \lpa)}
        \PrForallE{\lnot \op(\lrec, \lreb, \max(\lpa,\lpb))}
        \PrAss{\op(\lrec, \lrea, \lpa)}{1}
        \PrAss{\op(\lrea, \lreb, \bar\lpa)}{2}
        \PrLbl{\RuleNameOP{trans}}
        \PrBin{\op(\lrec, \lreb, \max(\lpa,\lpb))}
        \PrLbl{\RuleNameOP{asym}}
        \PrImplyE\lfalse
        \PrExistsE[2]\lfalse
        \PrImplyI[1]{\lnot \op(\lrec, \lrea, \lpa)}
        \PrForallI{\qforall\lpa \lnot \op(\lrec, \lrea, \lpa)}
        \DisplayProof
      \]
    \item
      We have to show that:
      \[ \lrea \leR \lreb \lor \lreb \ltR \lrea \equiv
        \qforall\lpa \lnot \op(\lreb, \lrea, \lpa) \lor \qexists\lpa \op(\lrea, \lreb, \lpa),
      \]
      which is an instance of \(\EM\) when \(\lrea\) and \(\lreb\) denote recursive real numbers. \qedhere
  \end{itemize} 
\end{proof}
The proof is constructive apart from the last point, where we show that totality is actually an instance of \(\EM\). 
Note that only the reflexivity property is stated in the standard way, while transitivity and totality are written in non-standard forms.  
We chose these forms for two reasons: they are easier to prove and they are the exact form we need in the proof of the \gref{thm:least_element}. \note{i could not find an constructive proof of the standard transitivity}

\subsection{Variables for Real Numbers}
Until now we have used \(\lrea, \lreb\) and \(\lrec\) as metavariables for real numbers in an informal way. 
However, since we are working in the first-order language of arithmetic, our variables range only on natural numbers and not on functions. 
For our example we only need to address a finite but arbitrary number of real numbers, that is, we only need a countable quantity of them. 
Thus we can assume that we have a countable set of function symbols indexed by the natural numbers. 
These function symbols represent sequences of rational numbers satisfying some notion of convergence. 

In the case we are considering, where real numbers are represented with sequences of nested intervals satisfying the convergence condition \eqref{eq:nested_cauchy}, we proceed as follows.  
We assume that we have two sequences of indexed function symbols:
\[ 
  \lfa^+_0, \dotsc, \lfa^+_n, \dotsc 
  \quad \text{and} \quad 
  \lfa^-_0, \dotsc, \lfa^-_n, \dotsc 
\] 
such that, for any index \(n\), \(\lfa^+_n\) and \(\lfa^-_n\) satisfy the convergence condition \eqref{eq:nested_cauchy}.
Then we can formally define the order predicate as:
\[
  \op(\ia, \ib, \lpa) \equiv \lfa^+_\ia(\lpa) \ltQ \lfa^-_\ib(\lpa), 
\]
where \(\ia\) and \(\ib\) are metavariables for arithmetic terms. 
Thus, when we write \( \ia < \ib \), we mean that \(\ia\) is smaller than \(\lreb\) as indexes, that is, as natural numbers; 
on the other hand, when we write \( \ia \ltR \ib \), we mean that the real number indexed by \(\ia\) is smaller than the one indexed by \(\ib\). 

However this notation, while formally correct, is hard to read:
\( \ia \le \ib \) and \( \ia \leR \ib \) look confusingly similar while having unrelated meaning. 
In order to avoid confusion and hurting the eyes of mathematicians, we sugar coat our syntax. 
We write \(\lrea_\ia\) instead of \(\ia\) when thinking of \(\ia\) as a real number. 
For instance we write \(\lrea_\ia \leR \lrea_\ib \) instead of \(\ia \leR \ib\). 
The last one is a much more intuitive than the unsugared version.

\subsection{The Least Element Lemma}

Now we can reason about finite sets of real numbers as sets of indexes. 
In the next lemma, we shall work with the sets of real numbers indexed by initial segments of the natural numbers. 
We show the existence of a least element in each of these sets. 
The least element is actually a minimum, that is, the unique least element of the set. 
However, in order to prove the \gref{thm:bounding_angle}
we do not need to show its uniqueness, just its existence. 
\begin{lemma}[Least Element] \label{thm:least_element}
  For any \(\na\), the real numbers \(\lrea_0, \dotsc, \lrea_\na\) have a least element with respect to \(\leR\). 
  More precisely: 
  \footnote{
    We use the standard compact notation for bounded quantifications: 
    \begin{gather*}
      \qforall{\ib \le \na} \fa \text{ stands for } \qforall\ib \ib \le \na \limply \fa, \\ 
      \qexists{\ib \le \na} \fa \text{ stands for } \qexists\ib \ib \le \na \land \fa. 
    \end{gather*}
  }
  \[
    \qforall\na \qexists{\ia \le \na} \qforall{\ib \le \na} \lrea_\ia \leR \lrea_\ib. 
  \]
\end{lemma}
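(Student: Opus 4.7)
The plan is to prove the statement by induction on $\na$. The base case $\na = 0$ is immediate: taking $\ia = 0$, the conclusion reduces to $\lrea_0 \leR \lrea_0$, which holds by \cref{prop:Rrefl}. The inductive step is where the classical reasoning enters.

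For the inductive step, I assume the existence of a least element among $\lrea_0, \dotsc, \lrea_\na$, i.e. an index $\ia \le \na$ such that $\lrea_\ia \leR \lrea_\ib$ for every $\ib \le \na$, and I must produce a least element among $\lrea_0, \dotsc, \lrea_{\na+1}$. I would compare the current least element $\lrea_\ia$ with the new value $\lrea_{\na+1}$ using \cref{prop:Rtot}: either $\lrea_\ia \leR \lrea_{\na+1}$ or $\lrea_{\na+1} \ltR \lrea_\ia$. In the first case, $\ia$ is still a least element for the enlarged set, as the inductive hypothesis already handles $\ib \le \na$ and the new case $\ib = \na+1$ is the disjunct we obtained. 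In the second case, I claim that $\na+1$ is the witness: for $\ib = \na+1$ we use \cref{prop:Rrefl}, while for $\ib \le \na$ we combine the strict inequality $\lrea_{\na+1} \ltR \lrea_\ia$ with $\lrea_\ia \leR \lrea_\ib$ from the inductive hypothesis, and conclude $\lrea_{\na+1} \leR \lrea_\ib$ by \cref{prop:Rtrans}.

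The existential statement at the inductive step is obtained by an $\RuleName\lor{E}$ on the instance of totality, which itself is an instance of $\EM$ since $\lrea_\ia \leR \lrea_{\na+1}$ is $\Pi^0_1$ and $\lrea_{\na+1} \ltR \lrea_\ia$ is its $\Sigma^0_1$ dual. The main non-triviality is precisely this: although the proof looks like a linear scan for a minimum, it is not constructive because the comparison $\lrea_\ia \leR \lrea_{\na+1}$ is not decidable for arbitrary Cauchy sequences. The interactive realizability interpretation developed in the preceding chapters is what will later allow us to extract an effective algorithm from this classical induction, replacing the undecidable comparison with an educated guess that may be retracted as more precision becomes available.
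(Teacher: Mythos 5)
Your proof is correct and is essentially identical to the paper's: induction on \(\na\), base case by reflexivity, and inductive step splitting on the totality instance \(\lrea_{\bar\ia} \leR \lrea_{\na+1} \lor \lrea_{\na+1} \ltR \lrea_{\bar\ia}\), using reflexivity and semi-transitivity exactly as the paper does. Your closing remark that totality is the sole \(\EM\) instance and the source of non-constructivity also matches the paper's discussion.
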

\begin{proof}
  We proceed by induction on \(\na\). 
  \begin{description}
    \item[Zero case] 
      In the base case \(\na = 0\) and we have to prove that: 
      \[
        \qexists{\ia \le 0} \qforall{\ib \le 0} \lrea_\ia \leR \lrea_\ib. 
      \]
      Both \(\ia\) and \(\ib\) can only be \(0\); 
      thus we just have to check the condition \( \lrea_0 \leR \lrea_0 \), 
      which holds by reflexivity of \(\leR\). 
    \item[Successor case]
      In the inductive case we have to prove that: 
      \[
        \qexists{\ia \le \na+1} \qforall{\ib \le \na+1} \lrea_\ia \leR \lrea_\ib, 
      \]
      from the inductive hypothesis:
      \[
        \qexists{\ia \le \na} \qforall{\ib \le \na} \lrea_\ia \leR \lrea_\ib. 
      \]
      By the inductive hypothesis, let \( \bar\ia \le \na \) be the index of the least element in \(\lrea_0, \dotsc, \lrea_\na\). 
      By totality of \(\leR\) we have two cases.
      \begin{description}
        \item[\( \lrea_{\bar\ia} \leR \lrea_{\na+1} \)]
          Then \(\bar\ia\) is the index of a least element in \( \lrea_0, \dotsc, \lrea_{\na+1} \),
          since \( \lrea_{\bar\ia} \leR \lrea_\ib \) when \(\ib = \na+1\) (since we are considering this case) and when \(\ib \le \na\) by inductive hypothesis. 
        \item[\( \lrea_{\na+1} \ltR \lrea_{\bar\ia} \)]
          Then \(\na+1\) is the index of a least element in \(\lrea_0, \dotsc, \lrea_{\na+1} \), 
          since \( \lrea_{\na+1} \leR \lrea_\ib\) when \(\ib = \na+1\) by reflexivity of \(\leR\) and when \(\ib \le \na\) by transitivity of \(\ltR\) and \(\leR\), since: 
          \[
            \lrea_{\na+1} \ltR \lrea_{\bar\ia} \leR \lrea_\ib
          \]
          by inductive hypothesis. \qedhere
      \end{description}
  \end{description}
\end{proof}

\begin{omitted}
  \begin{landscape}
    \small
    \begin{gather*} 
      \PrAx{\eqref{zero}}
      \PrUn{\qexists{\lrea \leq 0} \qforall{\lreb \leq 0} \lrea \leR \lreb}
      \PrAss{\qexists{\lrea \leq \na} \qforall{\lreb \leq \na} \lrea \leR \lreb}{1}
      \PrAx{\lrea' \leR \lreb \lor \neg \lrea' \leR \lreb}
      \PrAx{\eqref{oldmin}}
      \PrUn{\qexists{\lrea \leq \na+1} \qforall{\lreb \leq \na+1} \lrea \leR \lreb}
      \PrAx{\eqref{newmin}}
      \PrUn{\qexists{\lrea \leq \na+1} \qforall{\lreb \leq \na+1} \lrea \leR \lreb}
      \PrLbl[3]{\RuleNameE\lor}
      \PrTri{\qexists{\lrea \leq \na+1} \qforall{\lreb \leq \na+1} \lrea \leR \lreb}
      \PrLbl[2]{\RuleNameE\exists} 
      \PrBin{\qexists{\lrea \leq \na+1} \qforall{\lreb \leq \na+1} \lrea \leR \lreb}
      \PrLbl[1]{Ind} 
      \PrBin{\qforall\na \qexists{\lrea \leq \na} \qforall{\lreb \leq \na} \lrea \leR \lreb}
      \DisplayProof \\[5mm] 
      \label{zero} \tag{zero}
      \PrAx{0 \leq 0} 
      \PrAx{0 \leR 0}
      \PrAss{\lreb \leq 0}{4}
      \PrUn{\lreb = 0}
      \PrBin{0 \leR \lreb}
      \PrLbl[4]{\RuleNameI\limply}
      \PrUn{\lreb \leq 0 \limply 0 \leR \lreb}
      \PrLbl{\RuleNameI\forall}
      \PrUn{\qforall{\lreb \leq 0} 0 \leR \lreb}
      \PrLbl{\RuleNameI\land}
      \PrBin{0 \leq 0 \land (\qforall{\lreb \leq 0} 0 \leR \lreb)}
      \PrLbl{\RuleNameI\exists}
      \PrUn{\qexists{\lrea \leq 0} \qforall{\lreb \leq 0} \lrea \leR \lreb}
      \DisplayProof \\[5mm] 
      \label{oldmin} \tag{oldmin}
      \PrAss{\lrea' \leq \na-1 \land (\qforall{\lreb \leq \na-1} \lrea' \leR \lreb)}{2}
      \PrLbl{\RuleNameE\land}
      \PrUn{\lrea' \leq \na-1}
      \PrUn{\lrea' \leq \na}
      \PrAx{\lreb \leq \na-1 \lor \lreb = \na}
      \PrAss{\lrea' \leq \na \land (\qforall{\lreb \leq \na - 1} \lrea' \leR \lreb)}{2}
      \PrLbl{\RuleNameE\land}
      \PrUn{\qforall{\lreb \leq \na-1} \lrea' \leR \lreb}
      \PrLbl{\RuleNameE\forall}
      \PrUn{\lreb \leq \na-1 \limply \lrea' \leR \lreb}
      \PrAss{\lreb \leq \na-1}{5}
      \PrLbl{\RuleNameE\limply}
      \PrBin{\lrea' \leR \lreb}
      \PrAss{\lreb = \na}{5}
      \PrAss{i' \leR \na}{3}
      \PrBin{\lrea' \leR \lreb}
      \PrLbl[5]{\RuleNameE\lor}
      \PrTri{\lrea' \leR \lreb}
      \PrLbl[4]{\RuleNameI\limply}
      \PrUn{\lreb \leq \na \limply \lrea' \leR \lreb}
      \PrLbl{\RuleNameI\forall}
      \PrUn{\qforall{\lreb \leq \na} \lrea' \leR \lreb}
      \PrLbl{\RuleNameI\land}
      \PrBin{\lrea' \leq \na \land (\qforall{\lreb \leq \na} \lrea' \leR \lreb)}
      \PrLbl{\RuleNameI\exists}
      \PrUn{\qexists{\lrea \leq \na} \qforall{\lreb \leq \na} \lrea \leR \lreb}
      \DisplayProof \\[5mm]
      \label{newmin} \tag{newmin}
      \PrAx{\na \leq \na \hspace{-5cm}}
      \PrAx{\lreb \leq \na-1 \lor \lreb = \na}
      \PrAss{\na \leR \lrea'}{3}
      \PrUn{\na \leR \lrea'}
      \PrAss{\lrea' \leq \na \land (\qforall{\lreb \leq \na - 1} \lrea' \leR \lreb)}{2}
      \PrLbl{\RuleNameE\land}
      \PrUn{\qforall{\lreb \leq \na-1} \lrea' \leR \lreb)}
      \PrAss{\lreb \leq \na-1}{5}
      \PrLbl{\RuleNameE\limply}
      \PrBin{\lrea' \leR \lreb}
      \PrBin{\na \leR \lreb}
      \PrAss{\lreb = \na}{5}
      \PrAx{\na \leR \na}
      \PrBin{\na \leR \lreb}
      \PrLbl[5]{\RuleNameE\lor}
      \PrTri{\na \leR \lreb}
      \PrLbl[4]{\RuleNameI\limply}
      \PrUn{\lreb \leq \na \limply \na \leR \lreb}
      \PrUn{\qforall{\lreb \leq \na} \na \leR \lreb}
      \PrLbl{\RuleNameI\land}
      \PrBin{\na \leq \na \land (\qforall{\lreb \leq \na} \na \leR \lreb)}
      \PrLbl{\RuleNameI\exists}
      \PrUn{\qexists{\lrea \leq \na} \qforall{\lreb \leq \na} \lrea \leR \lreb}
      \DisplayProof 
    \end{gather*}
  \end{landscape}

  \newcommand\name[1]{{\texttt{\bf #1 }}}
  \newcommand\dname[1]{\texttt{ #1 }}
  \begin{align*}
    \dname{min} 0 &= \lrea_0, \\ 
    \dname{min} \na+1 &= \name{if} \dname{min} \na \leR \lrea_{\na+1} \\
                         &\qquad \name{then} \dname{min} \na, \\
                         &\qquad \name{else} \lrea_{\na+1}. 
  \end{align*}

\end{omitted}

\newcommand\prog[1]{``\lstinline"#1"''}
The proof looks constructive: its computational interpretation is the usual algorithm that finds the least element in a vector, by a simple recursion or by looping on its elements.
We can write it as a recursive function \prog{rmin} in Haskell: 
\begin{lstlisting}[caption=The Least Element Program,label=prog:min]
rmin 0   = 0
rmin n   = if rle (rmin (n-1)) n
  then rmin (n-1)
  else n
\end{lstlisting}
where \prog{rle} is a boolean function that stands for \(\leR\), that is, it compares the reals indexed by its arguments. 
The problem is that this is not a good program, because we are unable to write \prog{rle} as a terminating program.
The closest approximation would be the following unfounded recursion:
\begin{lstlisting}[caption=The Lesser or Equal Program,label=prog:rle]
rle i j        = rle_urec 0 i j
rle_urec k i j = if op j i k  
  then False
  else rle_urec (k+1) i j
\end{lstlisting}
where \prog{op} is a total boolean function that stands for the order predicate \(\op\). 
We can assume that  \prog{op} terminates for any input since \(\op\) is decidable. 
The problem is that \(\leR\) is total only classically. 
More precisely, totality is an instance of \(\EM\) because \(\leR\) is a \(\Pi^0_1\) formula and thus negatively decidable.
This can bee seen concretely in the program for \prog{rle}: 
\prog{rle i j} only halts (returning \prog{False}) if \prog{op j i k} is true for some \(k\), that is, if and only if \(\lrea_\ia \leq \lrea_\ib\) is false. 
On the other hand, when \(\lrea_\ia \leq \lrea_\ib\) is true there is no such \(k\) and the evaluation of \prog{rle i j} will never halt. 
Note that \prog{True} does not even occur in the program, 
so its is clear that \prog{rle i j} never returns \prog{True}. 
This is the general behavior of an algorithm that computes a negatively decidable predicate: when the predicate is false it halt with the correct answer and when the predicate is true it does not halt. 

For positively decidable predicates we have the dual behavior. 
For instance, in the case of \(\ltR\) which is defined by a \(\Sigma^0_1\) formula and thus positively decidable, the decision procedure can be written as: 
\begin{lstlisting}[caption=The Lesser Than Program,label=prog:rlt] 
rlt i j        = rlt_urec 0 i j
rlt_urec k i j = if op i j k  
  then True
  else rlt_urec (k+1) i j
\end{lstlisting}
The program is very similar to the previous one, the only noteworthy changes are the order of the argument given to \prog{op} and the fact that the only possible return value is \prog{True} instead of \prog{False}. 
It only halts (returning \prog{True}) if \prog{op i j k} is true for some \(k\), that is, when \(\lrea_\ia \leq \lrea_\ib\) is true. 

\begin{remark} \label{rmk:program_shorter_than_proof}
  Note how \cref{prog:min} is much shorter than the proof of the \gref{thm:least_element}. 
  This difference would be even bigger if the proof was written in a completely formal language, for instance in a proof assistant. 
  The reason for this discrepancy is that the proof contains both the algorithm written in \cref{prog:min} and the evidence for the correctness of the algorithm. 
  This last part is missing from \cref{prog:min}, thus explaining the difference in length. 
\end{remark}

\section{The Interactive Interpretation of the Least Element Lemma} 

We have seen why the naive way of extracting a program from proofs fails in the case of the \gref{thm:least_element}. 
Now we give the interactive interpretation of the \gref{thm:least_element}. 
\begin{changed}
  Since we are working in \(\HA+\EM\), any proof can be thought of as a constructive proof with open assumptions, that are the instances of \(\EM\) that are used in the proof.
  The interactive realizability interpretation follows the standard BHK interpretation for the constructive parts, so we will concentrate on the interpretation of the \(\EM\) instances. 
\end{changed}

The only instances of \(\EM\) in the proof are those used to deduce the totality property:
\begin{equation} \label{eq:em_tot_instance} 
  \lrea_\ia \leR \lrea_\ib \lor \lrea_\ib \ltR \lrea_\ia. 
\end{equation}
The left disjunct, which we call the \emph{universal disjunct}, is \(\Pi^0_1\) and negatively decidable, while the right one, the \emph{existential disjunct} is \(\Sigma^0_1\) and positively decidable. 
Moreover universal disjunct and negation of the existential disjunct are classically equivalent. 
We say that a formula is \emph{concrete} when it is closed and all its arithmetical terms are normal. 

In order to motivate the interactive realizability interpretation we show why a naive attempt to give a computational behavior an \(\EM\) instance fails. 
This can be seen concretely, by recalling the semi-effective procedures \cref{prog:rle,prog:rlt} that decide the disjuncts of the instances of \(\EM\) representing the totality, but the general argument is the same. 
The universal disjunct is negatively decidable, that is, its deciding program halts if and only if it is false; the existential disjunct is positively decidable, that is, its deciding program halts if and only if it is true. 
What happens if we run these two decision programs in parallel? 
Can one give the answer whenever the other fails?
In recursion theory, this method is used for instance to prove that if the complement of a recursively enumerable set is recursively enumerable then the set is recursive. 
Unfortunately this does not work in our case. We have two scenarios:
\begin{itemize}
  \item 
    If the existential disjunct is true, its decision procedure halts and returns true. 
    Since in this case the universal disjunct is false, its decision procedure also halts and returns false. 
  \item 
    If the universal disjunct is true, its decision procedure does not halt. 
    Since in this case the existential disjunct is false, its decision procedure does not halt either. 
\end{itemize}
The problem lies in the fact that the disjuncts are dual and 
their decision procedures describe basically the same algorithm with minor variations. In particular they halt or fail to halt on the same inputs. 
This is evident when considering the programs given in \cref{prog:rle,prog:rlt}. 

Interactive realizability proposes a way to side-step the problem evidenced above. 
This is possible since it is not true that the computational interpretation of a proof using instances of \(\EM\) necessarily needs to decide these instances. 
Consider the case of totality of the order on the real numbers. 
The universal disjunct is: 
\[ 
  \lrea_\ia \leR \lrea_\ib \equiv \qforall\lpa \lnot \op(\lrea_\ib, \lrea_\ia, \lpa). 
\] 
Being an universally quantified statement, it proves infinite instances \(\lnot \op(\lrea_\ib, \lrea_\ia, \lpa)\), one for each natural number \(\lpa\). 
A proof that uses totality may need all this infinite information or (for example, when proving a simply existential statement) may only need a finite quantity of these instances. 
In the second case, we can avoid the problem of effectively deciding the \(\EM\) instance. 
We only need to decide those instances that are actually used in the proof. 
This is possible, since each instance is decidable (being a quantifier free formula) and we assumed there is a finite quantity of them. 
Interactive realizability takes advantage of this fact and gives a procedure to determine which instances of the universal disjunct are needed and to iteratively decide them. 

The interactive interpretation is a ``relaxation'' of the BHK interpretation. 
In the BHK interpretation the decision of a disjunction effectively selects a true disjunct, in the interactive case instead of a decision we have a sort of ``educated guess''. 
Therefore, while \(\EM\) cannot be realized by the BHK interpretation since there is no effective procedure to decide it, the interactive interpretation can because it yields a weaker semantics, which produces a sure result only when the goal is simply existential. 

Interactive realizability revolves around the concept of \emph{knowledge state}. 
A knowledge state, or simply state, is a finite object that stores information about the \(\EM\) instances we use in the proof. 
The purpose of this information is help us decide the \(\EM\) instances, that is, help us in choosing which disjunct holds. 
Moreover, whenever the state chooses the existential disjunct, it should also produce a witness, like in the BHK interpretation. 

We can represent a state as a finite partial function\footnote{By finite partial function, we mean a partial function whose domain (the set of elements where it is defined) is finite.} that maps a concrete instance of \(\EM\) into a witness of its existential disjunct. 
Such a function decides or guesses a concrete instance \(\fa\) of \(\EM\): if it is undefined on \(\fa\), then we choose the universal disjunct; if it is defined we chose the existential disjunct with the returned witness. 
We are only interested to the instances appearing in the proof, namely, those of the form \eqref{eq:em_tot_instance} when \(\ia, \ib\) are numerals. 
Thus an instance is determined by two natural numbers; since witnesses are natural numbers too, a state can be concretely defined as a finite partial function from \(\N \times \N\) to \(\N\). 

For instance, consider the case of the \(\EM\) instances used in the proof of the \gref{thm:least_element}. 
When we have to decide \eqref{eq:em_tot_instance}, we check the state on the pair \((\ia,\ib)\). 
At first, let us assume that the state is undefined on \((\ia,\ib)\). 
This means we have no knowledge about the universal disjunct \(\lrea_\ia \leR \lrea_\ib\).
Since we cannot effectively check that the universal disjunct holds, 
we make an educated guess and assume that \(\lrea_\ia \leR \lrea_\ib\) is true. 
Clearly this assumption could very well be wrong, which may or may not become apparent later in the proof. 
Keeping track of this assumption, we carry on with the proof. 
Every time we use this assumption to prove a decidable instance of its we check if the instance holds. 
More concretely, if later in the proof we use the assumption \(\lrea_\ia \leR \lrea_\ib\) to deduce that \(\lnot \op (\ib, \ia, \lpa)\) for some \(\lpa\), we check that \(\lnot \op (\ib, \ia, \lpa)\) holds. 
If this is the case, we carry on with the proof: \(\lrea_\ia \leR \lrea_\ib\) could still be false, but at least the particular instance we are using is true. 
If this is not the case, we have found a counterexample to the assumption \(\lrea_\ia \leR \lrea_\ib\): 
being negatively decidable, the counterexample is enough to effectively decide that it is false. 
Therefore we stop following the proof because we have chosen the wrong disjunct in the \(\EM\) instance \eqref{eq:em_tot_instance}. 

Moreover, a counterexample to \(\lrea_\ia \leR \lrea_\ib\) is a natural number \(\lpa\) such that \(\op (\ib, \ia, \lpa)\). 
Therefore \(\lpa\) is a witness for the existential disjunct \(\lrea_\ib \ltR \lrea_\ia\). 
We can use this new knowledge to add \((\ia,\ib)\) to the domain of the state with value \(\lpa\). Remember that we assumed the state to be undefined on \((\ia,\ib)\), which is why we assumed the universal disjunct to be true in the first place. 

At this point, we forget what we did after guessing (wrongly) that the universal disjunct was true and start again, 
More precisely, we need to backtrack to a computation state \emph{before} we decided the \(\EM\) instance in question and repeat our decision with the extended state. 
Since the extended state is defined on \((\ia,\ib)\) and yields \(\lpa\), this time we decide the \(\EM\) instance differently: we choose the existential disjunct \(\lrea_\ib \ltR \lrea_\ia\) with \(\lpa\) as witness. 
Now we are sure that our choice is the correct one and not a guess, since we have effectively decided that the existential disjunct holds (we can since it is positively decidable). 

The exact point we need to backtrack to is not relevant, as long as it is before the decision of the \(\EM\) instance. 
A simple choice would be the very beginning, in which case we do not need to keep track of where we decided the \(\EM\) instance. 
A more efficient choice is right before the decision point, so that we do not need to repeat the computations before it, which do not change. 

In order for the interactive interpretation to produce correct results,
we need to assume that the state is sound, that is, when it is defined, the witness it yields is actually a witness. 
More formally, a state \(s\) is sound if, for any pair \((\ia,\ib)\), we have that \(\op(\ib, \ia, s (\ia,\ib))\) holds. 
This assumption is not problematic: the empty state, namely the state that is always undefined, satisfies it vacuously. 
Moreover, note that in the interactive interpretation we outlined above, we only extend a state with an actual witness. 
In other words, the extension preserves the soundness property. 
\fixme{remark that we work in HA+EM, thus we already know how to interpret everything except EM}

To summarize, the general procedure is the following: 
\begin{enumerate} 
  \item we start from any sound state (usually the empty state),
  \item\label{step2} we follow the proof choosing any \(\EM\) instance according to the state,
  \item if we discover that we wrongly assumed the universal disjunct of an \(\EM\) instance: 
    \begin{enumerate}
      \item we extend the state with the counterexample we found,
      \item we backtrack to a point before the \(\EM\) instance we guessed wrong,
      \item we proceed as in step \ref{step2},
    \end{enumerate}
  \item if we never discover that we wrongly assumed an universal disjunct we carry on until the end of the proof and we are done. 
\end{enumerate}

Interactive realizability can be thought as a ``smart'', albeit ``partial'', decision algorithm for negatively decidable statements. 
This can be seen comparing it with the naive algorithm given in \cref{prog:rle}. 
It is partial because a real decision is impossible, so it only considers a finite number of instances, unlike the unbounded recursion employed by \cref{prog:rle}. 
It is smart because it does not perform a blind search, trying in order all the natural numbers. 
Instead it uses the proof itself to find the counterexamples. 
There is a reasonable expectation that the ideas underlying the proof provide a more focused way of selecting counterexamples than a blind search (this of course depends on the proof itself). 

Until now we considered a single instance of the \(\EM\) axiom, but little changes if there is more than one. We will return to this point later. 
In the proof of the \gref{thm:least_element}, one instance of \(\EM\) is used for each inductive step in the proof. 
When we interpret the proof with the empty state, for each of these instances we assume that the universal disjunct holds. 
Therefore the proof is interpreted as follows. 
In the base step we choose \(\lrea_0\). 
In the first inductive step, we have to decide the \(\EM\) instance: 
\[
  \lrea_0 \leR \lrea_1 \lor \lrea_1 \ltR \lrea_0. 
\] 
Since the state is empty, we assume that \(\lrea_0 \leR \lrea_1\). 
Thus we keep \(\lrea_0\) as the least element of \(\lrea_0, \lrea_1\). 
In the second inductive step, we have to decide the \(\EM\) instance: 
\[
  \lrea_0 \leR \lrea_2 \lor \lrea_2 \ltR \lrea_0. 
\] 
Since the state is empty, we again assume that \(\lrea_0 \leR \lrea_2\). 
Thus we keep \(\lrea_0\) as the least element of \(\lrea_0, \lrea_1, \lrea_2\). 
At the end of the proof, we have assumed the following universal disjuncts:
\begin{equation}\label{eq:univ_instances}
  \lrea_0 \leR \lrea_1, \lrea_0 \leR \lrea_2, \dotsc, \lrea_0 \leR \lrea_\na. 
\end{equation}
Under these assumptions, we have found that the least element is \(\lrea_0\). 
Rather disappointing, isn't it?

The reason for this is that the universal disjuncts \(\lrea_\ia \leR \lrea_\ib\) are never instanced, so we have neither opportunity or reason to falsify one of them. 
However this may change if the \gref{thm:least_element} is used inside a bigger proof. 
This will happen later in the proof of the \gref{thm:bounding_angle}. 
In this case the outer proof might instance these assumptions and discover them wrong, in which case we have to backtrack to the proof of the \gref{thm:least_element}. 

Let us see how the \gref{thm:least_element} behaves when its conclusion is used to deduce decidable instances. 
Assume that \(\na = 5\). 
If the state is empty, then the \gref{thm:least_element} tells us that \(\lrea_0\) is a least element. 
This means that \(\lrea_0 \leR \lrea_\ia\) for any \(\ia\).
Imagine that we use the \gref{thm:least_element} in a bigger proof to prove that \(\lrea_0 \leR \lrea_3\). 
This is one of the \(\EM\) instances we assumed in \eqref{eq:univ_instances}. 
Moreover, imagine that, using this assumption, we discover that \(\lrea_0 \leR \lrea_3\) does not hold at precision \(33\). 
Then we have to extend the domain of the state to \((0,3)\) with value \(33\). 
At this point we backtrack, say at the beginning of the proof of the \gref{thm:least_element}. 

We again start from \(\lrea_0\) and proceed like before. 
The first and second inductive steps again select \(\lrea_0\) as the least element, assuming that \(\lrea_0 \leR \lrea_1\) and \(\lrea_0 \leR \lrea_2\). 
Things change at the third inductive step when we have to decide \(\lrea_0 \leR \lrea_3 \lor \lrea_3 \ltR \lrea_0\). 
Since know the state has a relevant witness, this time we choose the existential disjunct with witness \(33\), thus selecting \(\lrea_3\) as the new least element. 
In the next inductive steps we again assume the universal disjuncts \(\lrea_3 \leR \lrea_4\) and \(\lrea_3 \leR \lrea_5\), since the state has no information on them. 
Thus our the least element is \(\lrea_3\). 
A summary of our decisions is represented in \cref{fig:least3}.
\begin{figure}[!ht]
  \caption{A graph representing the result of the least element computation. Full arrows represent information provided by the state, dotted arrows ``guessed'' information the state knows nothing about.} \label{fig:least3}
  \begin{center}
    \begin{tikzpicture}
      \node (r3) at (0,0) {\(\lrea_3\)};
      \node (r0) at (2,1) {\(\lrea_0\)};
      \node (r4) at (2,0) {\(\lrea_4\)};
      \node (r5) at (2,-1) {\(\lrea_5\)};
      \node (r1) at (4,1) {\(\lrea_1\)};
      \node (r2) at (4,0) {\(\lrea_2\)};
      \draw [->] (r3) -- (r0);
      \draw [dotted,->] (r3) -- (r4);
      \draw [dotted,->] (r3) -- (r5);
      \draw [dotted,->] (r0) -- (r1);
      \draw [dotted,->] (r0) -- (r2);
    \end{tikzpicture}
  \end{center}
\end{figure}
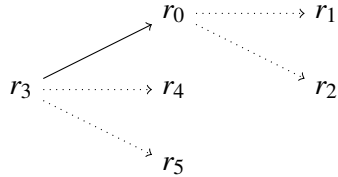
Imagine that we were to discover a counterexample to \(\lrea_3 \leR \lrea_2\), say at precision \(25\). 
This statement is not one of the universal disjuncts that we assumed. By looking at the proof or at \cref{fig:least3}, we can see that it has been deduced by the semi-transitivity property from \(\lrea_3 \ltR \lrea_0\) and \(\lrea_0 \leR \lrea_2\). 
The first is the existential disjunct for which we found a witness, so we are sure that it holds. 
Thus the the wrong assumption is \(\lrea_0 \leR \lrea_2\). 
By checking the proof of semi-transitivity we can see that the counterexample for \(\lrea_0 \leR \lrea_2\) is \(\max(25,33)\), thus \(33\) again. 
We extend the state accordingly and repeat the least element computation, which results in new least element \(\lrea_2\).
In \cref{fig:interactive_least_element} we summarize the iterations we saw until now and add some more, as an example. 
\newcommand\stack[2][c]{{
    \renewcommand\arraystretch{0.6}
    \begin{array}{#1} 
      #2 
    \end{array}
}}
\begin{figure}[!ht]
  \caption{An example of evaluations of the interactive interpretation of the \gref{thm:least_element} with state extensions.} \label{fig:interactive_least_element}
  \[
    \begin{array}{|c|c|c|c|c|c|}
      \mathbf{Iter} &
      \mathbf{State} &
      \mathbf{Least\ element} & 
      \mathbf{Used} & 
      \stack{\mathbf{Deduced} \\ \mathbf{from}} & 
      \mathbf{Discovered} \\ \hline
      1st & & 
      \begin{tikzpicture}[scale=0.7,baseline=0]
        \node (r0) at (0,0) {\(\lrea_0\)};
        \node (r1) at (2,0) {\(\lrea_{1,\dotsc,5}\)};
        \draw [dotted] (r0) -- (r1);
      \end{tikzpicture}
      & 
      \lrea_0 \leR \lrea_3 & \lrea_0 \leR \lrea_3 & \lrea_3 \ltR \lrea_0 
      \\ \hline
      2nd & \lrea_3 \ltR \lrea_0
      &
      \begin{tikzpicture}[scale=0.7,baseline=-0.5]
        \node (r3) at (0,0) {\(\lrea_3\)};
        \node (r0) at (2,0.5) {\(\lrea_0\)};
        \node (r1) at (4,0.5) {\(\lrea_{1,2}\)};
        \node (r4) at (2,-0.5) {\(\lrea_{4,5}\)};
        \draw (r3) -- (r0);
        \draw [dotted] (r0) -- (r1);
        \draw [dotted] (r3) -- (r4);
      \end{tikzpicture}
      & \lrea_3 \leR \lrea_2 
      & \stack{\lrea_3 \ltR \lrea_0 \\ \lrea_0 \leR \lrea_2} 
      & \lrea_2 \ltR \lrea_0 
      \\ \hline
      3rd & \stack{\lrea_2 \ltR \lrea_0 \\ \lrea_3 \ltR \lrea_0} 
          &\begin{tikzpicture}[scale=0.7,baseline=-0.5]
      \node (r2) at (0,0) {\(\lrea_2\)};
      \node (r0) at (2,0.5) {\(\lrea_0\)};
      \node (r1) at (4,0.5) {\(\lrea_1\)};
      \node (r3) at (2,-0.5) {\(\lrea_{3,4,5}\)};
      \draw (r2) -- (r0);
      \draw [dotted] (r0) -- (r1);
      \draw [dotted] (r2) -- (r3);
    \end{tikzpicture}
    & \lrea_2 \leR \lrea_3 
    & \lrea_2 \leR \lrea_3 
    & \lrea_3 \ltR \lrea_2
    \\ \hline
    4th & \stack{\lrea_2 \ltR \lrea_0 \\ \lrea_3 \ltR \lrea_0 \\ \lrea_3 \ltR \lrea_2} 
        & \begin{tikzpicture}[scale=0.6,baseline=-0.5]
    \node (r3) at (-2,0) {\(\lrea_3\)};
    \node (r2) at (0,0.5) {\(\lrea_2\)};
    \node (r0) at (2,0.5) {\(\lrea_0\)};
    \node (r1) at (4,0.5) {\(\lrea_1\)};
    \node (r4) at (0,-0.5) {\(\lrea_{4,5}\)};
    \draw (r3) -- (r2);
    \draw (r2) -- (r0);
    \draw [dotted] (r0) -- (r1);
    \draw [dotted] (r3) -- (r4);
  \end{tikzpicture}
  & \lrea_3 \leR \lrea_1 
  & \stack{\lrea_3 \ltR \lrea_2 \\ \lrea_2 \ltR \lrea_0 \\ \lrea_0 \leR \lrea_1} 
  & \lrea_1 \ltR \lrea_0
  \\ \hline
  5th & \stack{\lrea_1 \ltR \lrea_0 \\ \lrea_2 \ltR \lrea_0 \\ \lrea_3 \ltR \lrea_0 \\ \lrea_4 \ltR \lrea_2} 
      & 
  \newcommand\safechild[1]{
      node {\(\lrea_{#1}\)}
      edge from parent[solid]
  }
  \newcommand\unsafechild[1]{
      node {\(\lrea_{#1}\)}
      edge from parent[dotted]
  }
  \begin{tikzpicture}[baseline=0,grow'=right,sibling distance=5mm,fill=white]
    \node {\(\lrea_1\)}
    child {
      node {\(\lrea_{0}\)}
      edge from parent[solid]
    }
child {
      node {\(\lrea_{2,3,4,5}\)}
      edge from parent[dotted]
    };
  \end{tikzpicture}
  & \lrea_1 \leR \lrea_4 
  & \lrea_1 \leR \lrea_4 
  & \lrea_4 \ltR \lrea_1 
  \\ \hline
  6th & \stack{\lrea_1 \ltR \lrea_0 \\ \lrea_2 \ltR \lrea_0 \\ \lrea_3 \ltR \lrea_0 \\ \lrea_4 \ltR \lrea_2 \\ \lrea_4 \ltR \lrea_1} 
      & 
  \begin{tikzpicture}[baseline=0,grow'=right,sibling distance=5mm,fill=white]
    \node {\(\lrea_4\)}
    child {
      node {\(\lrea_1\)}
      child {
        node {\(\lrea_0\)}
        edge from parent[solid]
      }
      child {
        node {\(\lrea_{2,3}\)}
        edge from parent[dotted]
      }
      edge from parent[solid]
    }
    child {
      node {\(\lrea_5\)}
      edge from parent[dotted]
    };
  \end{tikzpicture}
  & \dotso
  & \dotso
  & \dotso
  \\ \hline
\end{array}
  \]
  \begin{legend}
      {\bf Iter:} the iteration represented by the current row; 
      {\bf State:} the existential disjuncts witnessed by the state; 
      {\bf Least element:} the least element yielded by the \gref{thm:least_element}; 
      {\bf Used:} a falsifiable consequence of the \gref{thm:least_element} used in the proof; 
      {\bf Deduced from:} what we deduced the falsifiable consequence from; 
      {\bf Discovered:} the existential assumption we found a witness of. 
  \end{legend}
\end{figure}

\begin{changed}
  \subsection{Backtracking, Termination and Complexity}
In the iterations listed in \cref{fig:interactive_least_element}, we compute the following sequence of least element candidates: 
\[ \lrea_0, \lrea_3, \lrea_2, \lrea_3, \lrea_1, \lrea_4. \]  
The fact that \(\lrea_3\) appears two times may cause doubts regarding the termination of the backtracking algorithm. 
The termination of the backtracking algorithms in interactive realizability has been proven in general, see Theorem 2.15 in \cite{aschieriB10}. 

In this particular case we can understand why \(\lrea_3\) is computed two times by taking a closer look at the tree of the possible computations of the least element, 
which is shown in \cref{fig:least_element_computation_tree}. 
For reasons of space, we only show the tree for \(\na = 3\), which is enough to see what happens up to the fifth iteration in \cref{fig:interactive_least_element}.
\begin{figure}[!ht]
  \caption{The computation tree of the least element for \(\na = 3\)}. 
  \label{fig:least_element_computation_tree}
  \begin{center}
    \begin{tikzpicture}
      \newcommand\leftchoice[2]{
        node {\(\lrea_{#1}\)} 
        edge from parent[dotted] 
        node[draw,left=-4mm,inner sep=1pt] 
        {\(\lrea_{#1} \leR \lrea_{#2}\)}
      }
      \newcommand\rightchoice[2]{
        node {\(\lrea_{#2}\)} 
        edge from parent[solid] 
        node[draw,right=-4mm,inner sep=1pt] 
        {\(\lrea_{#2} \ltR \lrea_{#1}\)}
      }
      \tikzstyle{level 1}=[sibling distance=6.6cm]
      \tikzstyle{level 2}=[sibling distance=3.3cm]
      \tikzstyle{level 3}=[sibling distance=2cm]
      \tikzstyle{every node}=[fill=white]
      \node {\(\lrea_0\)}
      child {
        child {
          child {
            \leftchoice{0}{3}
          }
          child {
            \rightchoice{0}{3}
          }
          \leftchoice{0}{2}
        }
        child {
          child {
            \leftchoice{2}{3}
          }
          child {
            \rightchoice{2}{3}
          }
          \rightchoice{0}{2}
        }
        \leftchoice{0}{1}
      }
      child {
        child {
          child {
            \leftchoice{1}{3}
          }
          child {
            \rightchoice{1}{3}
          }
          \leftchoice{1}{2}
        }
        child {
          child {
            \leftchoice{2}{3}
          }
          child {
            \rightchoice{2}{3}
          }
          \rightchoice{1}{2}
        }
        \rightchoice{0}{1}
      };
    \end{tikzpicture}
  \end{center}
  \begin{legend}
    Each path represents a possible computation, proceeding from root to leaf, where
    non-leaf nodes are the current least element candidates and the leaf is the final result. 
    Each branching corresponds to an \(\EM\) instance, where the left branch is taken when we guess that the universal disjunct holds for lack of information and the right branch is taken when the state contains the relevant witness. 
  \end{legend}
\end{figure}
We can see that the first five iterations in \cref{fig:interactive_least_element} correspond to the computation paths ending with the first five leaves from the left in \cref{fig:least_element_computation_tree}, in order. 

Moreover, from the computation tree we can see that we never perform the same computation more than once. 
Indeed, assume we have just followed a particular computation path. 
When we backtrack we increment the state adding a witness of one of the \(\EM\) instances we encountered along the path, an instance we did not have a witness for. 
This means that in the next computation, when we arrive at the node corresponding to that \(\EM\) instance, instead of taking the left path as we did previously (since the state did not have a witness for that instance), we take the right path, because this time we do have a witness (since we just extended the state with it). 
Therefore, each time we backtrack, the computation path ends with a leaf that is more to the right in \cref{fig:least_element_computation_tree}. 
This gives a bound to the number of backtrackings, namely \(2^\na - 1\).

This is very different from what one could expect by a superficial look at the proof of the \gref{thm:least_element}. 
Indeed, if we ignore the undecidability of the order on the reals, this simple and very natural proof seems to be quite efficient, since its complexity is linear in \(\na\). 
However, its interactive interpretation has exponential complexity. 
This can be seen in the computation tree too: a single computation correspond to a path and paths have length \(\na\). 
On the other hand, since we have backtracking, in the worst case we may have to perform every possible computation. 
Naturally, the real situation is different since the order on the reals \emph{is} undecidable and thus an actual comparison is impossible. 

Moreover, while in the worst case the interactive interpretation needs a time that is exponential in \(\na\), in general it is hard to estimate the amount of backtracking that will be actually performed, for two different reasons. 
\begin{itemize}
  \item
    The first one is that the actual order of \(\lrea_0, \dotsc, \lrea_\na\) affects heavily the operation of the algorithm. 
Indeed, assume that \(\lrea_0\) is the least element: the interactive interpretation only performs \(\na\) dummy comparisons and immediately returns a least element candidate that, in this case, is the actual least element, so no backtracking can ensue later. 
\item
The second reason is that the backtracking is controlled by how the least element candidate returned by the interactive interpretation is used. 
It is possible for the interactive interpretation to return a candidate that is not a least element, but such that its use in an outer proof is does not cause backtracking. 
In other words, we only need to compute a least element candidate that is good enough instead of the correct one and this can translate to a faster computation, again depending on the situation.
\end{itemize}
In a sense, the second reason explains also how the interactive interpretation is effective even if an certainly correct least element cannot be found effectively.

\subsection{The Whole Proof Is Relevant}

  In \cref{rmk:program_shorter_than_proof}, we said that proofs contains both an algorithm (which may be trivial if no information is being computed) and the proof of its correctness. 
  This is also the case when we consider the computational content of a proof in the BHK interpretation: 
  we can separate the part that computes values and such (the informative computation) from the part that computes the evidence showing that the values are correct (the correctness computation). 
  The correctness computation does not affect the result of the informative computation and can be safely discarded when we are only interested in algorithm extraction. 

  This is not the case for the computational content in the interactive interpretation. 
  Here the correctness part of the computation affects the backtracking, which affects the state, which in turn affects the informative part of computation and thus the computed values. 
  Therefore, in interactive realizability both parts of the proof interact to produce the final result. 

  We have already seen an example of this interaction. 
  In the second iteration we chose \(\lrea_3\) as the least element and then we tried to instance \(\lrea_3 \leR \lrea_2\). 
  Then we realized that \(\lrea_3 \leR \lrea_2\) is false and that we had made a wrong assumption somewhere. 
  However \(\lrea_3 \leR \lrea_2\) is not one of the universal disjunct which we assumed by lack of information. 
  Therefore we have to look at the proof in order to find out which universal disjuncts we needed to deduce \(\lrea_3 \leR \lrea_2\) and to compute the witness which we need to extend the state. 
  This shows that in the interactive interpretation we cannot forget how we proved the correctness of our computations. 
\end{changed}

\begin{omitted}
  \begin{remark}
    In order to effectively check an instance of \(\op (\ib, \ia, \lpa)\), we need for \(\ib, \ia, \lpa\) to reduce to numerals. 
  \end{remark}

  \begin{lstlisting}[caption=The Least Element Program,label=prog:ir_min]
  data Answer = NoWitness | Witness Int
  type State = Int -> Int -> Answer

  rmin 0   = 0
  rmin n   = if s (rmin (n-1)) n == NoWitness
    then rmin (n-1)
    else n
  \end{lstlisting}

  \begin{remark}
    As it is common in logic, in order to provide justification for an axiom, we appeal to the informal principle formalized by the axiom, in this case the law of the excluded middle. 
    This may seem like cheating to an intuitionist, but we would like argue otherwise. 
    First of all a trivial remark: \(\EM\) really does not hold in intuitionistic logic, so it would be impossible to justify it appealing only to intuitionistic reasoning.
    Most importantly, we argue that our use of classical reasoning is sound: we do say that one of the disjunct holds, but we do not claim to know which one. 
    If we were do deduce a disjunction by classical means and then use its BHK interpretation, this would be unsound, since classical truth carries less information than intuitionistic truth. 
    In interactive realizability we do not presume to be able to decide \(\EM\) instances: when we choose a disjunct we make an educated guess based on the finite knowledge we have available at the moment, in a completely effective fashion. 
  \end{remark}

\end{omitted}

\section{The Real Plane}

In this section we introduce the real plane, points, lines and some relations between them. 
We use elementary analytic geometry: points are represented by coordinates, lines by equations and proofs are mostly computations with real numbers. 

We represent a point as a pair of real numbers, its coordinates. 
Formally we can say that a point is just a natural number \(\ia\) and that there is a primitive recursive function mapping indexes into pairs of real numbers. 
As we did for real numbers, in order to improve readability we add some sugar to the notation and 
use the metavariables \(\pa, \pb, \pc, \pd\) for arithmetic terms used as indexes of points. 
When we use the index of a point both as a number and as a point, we write it as \(\ia\) in the first case and as \(\pa_\ia\) in the second. 
We write the coordinates of a point \(\pa\) as \((x_\pa, y_\pa)\) and of a point \(\pa_\ia\) as \((x_\ia, y_\ia)\). 

A line passing through two points \(\pa\pb\) is written as \(\pa\pb\). 
The order of the points induces an orientation on the line. 

Before proceeding we need to introduce further infrastructure for the real numbers. 

\subsection{Operations on Real Numbers}

\newcommand\Rzero{0_\R}
\newcommand\Rplus{+_\R}
\newcommand\Rminus{-_\R}
\newcommand\Rprod{\cdot_\R}
Any rational number \(\lqa\) can be embedded in our coding of the real numbers: 
indeed we can represent \(\lqa\) as a real number by taking the nested interval sequence with the lower and higher extremes constantly equal to \(\lqa\).
In particular we assume that there is an index \(\Rzero\) such that \(\lfa^+\) and \(\lfa^-\) are constantly zero. 

We need to introduce the addition, subtraction and multiplication operations on the reals. 
In order to do this formally, we need to assume that for each pair of indexes \(\ia\) and \(\ib\) of real numbers, there is an index \(\ic\) which correspond to the nested interval sequence that is the result of their sum, difference or product. 
Again, instead of writing the index \(\ic\), we use the usual syntax \(\lrea_\ia \Rplus \lrea_\ib\) for the sum, \(\lrea_\ia \Rminus \lrea_\ib\) for the difference and \(\lrea_\ia \Rprod \lrea_\ib\) for the product. 

Now we define the actual sequences that represent the result of each operation and show that they satisfy the real number condition \eqref{eq:nested_cauchy}. 

We define addition on the nested interval sequences as:
\begin{align*}
  (\lrea_\ia \Rplus \lrea_\ib)^+(\lpa) &\equiv \lrea^+_\ia(\lpa+1) \Qplus \lrea^+_\ib(\lpa+1), \\ 
  (\lrea_\ia \Rplus \lrea_\ib)^-(\lpa) &\equiv \lrea^-_\ia(\lpa+1) \Qplus \lrea^-_\ib(\lpa+1). 
\end{align*}
It is immediate to check that the the sequences are a sequence of nested intervals; we only check that they converge with the required speed, which is the condition that requires the use of \(\lpa+1\) in the previous definition: 
\begin{align*}
  &\phantomrel{\eqQ} 
  (\lrea_\ia \Rplus \lrea_\ib)^+(\lpa) \Qminus (\lrea_\ia \Rplus \lrea_\ib)^-(\lpa) \eqQ \\
  &\eqQ (\lrea^+_\ia(\lpa+1) \Qplus \lrea^+_\ib(\lpa+1)) \Qminus 
  (\lrea^-_\ia(\lpa+1) \Qplus \lrea^-_\ib(\lpa+1)) \eqQ \\ 
  &\eqQ (\lrea^+_\ia(\lpa+1) \Qminus \lrea^-_\ia(\lpa+1))\Qplus  
  (\lrea^+_\ib(\lpa+1) \Qminus \lrea^-_\ib(\lpa+1)) \leQ \\ 
  &\leQ 2^{-\lpa+1} \Qplus 2^{-\lpa+1} \eqQ 2^{-\lpa}.
\end{align*}
We can define the difference by combining the sum and the opposite, which is defined as:
\begin{align*} 
  (-\lrea)^+(\lpa) &\equiv -_\Q \lrea^-(\lpa), \\ 
  (-\lrea)^-(\lpa) &\equiv -_\Q \lrea^+(\lpa).
\end{align*}
Defining the product is slightly more complicated. For simplicity we only show the case when the extreme of the intervals are always positive. 

So let \(\lrea^+_\ia\), \(\lrea^-_\ia\), \(\lrea^+_\ib\) and \(\lrea^-_\ib\) be sequences of positive rational numbers. 
We define their product as:
\begin{align*}
  (\lrea_\ia \Rprod \lrea_\ib)^+(\lpa) &\equiv \lrea^+_\ia(\lpb) \Qprod \lrea^+_\ib(\lpb), \\ 
  (\lrea_\ia \Rprod \lrea_\ib)^-(\lpa) &\equiv \lrea^-_\ia(\lpb) \Qprod \lrea^-_\ib(\lpb), 
\end{align*}
where \(\lpb\) depends on \(\lpa\). 
In order to determine \(\lpb\) we consider the convergence condition
and look for the smallest \(\lpb\) that satisfies it: 
\[ 
  (\lrea_\ia \Rprod \lrea_\ib)^+(\lpa) \Qminus (\lrea_\ia \Rprod \lrea_\ib)^-(\lpa) \leQ 2^{-\lpa}.
\]
We begin by finding a simple upper bound for the left-hand side: 
\begin{align*}\renewcommand\Qprod{}
  &\phantomrel{\eqQ} 
  (\lrea_\ia \Rprod \lrea_\ib)^+(\lpa) \Qminus (\lrea_\ia \Rprod \lrea_\ib)^-(\lpa) \eqQ \\ 
  &\eqQ
  \lrea^+_\ia(\lpb) \Qprod \lrea^+_\ib(\lpb) \Qminus \lrea^-_\ia(\lpb) \Qprod \lrea^-_\ib(\lpb) \eqQ \\ 
  &\eqQ
  \lrea^+_\ia(\lpb) \Qprod \lrea^+_\ib(\lpb) \Qminus \lrea^-_\ia(\lpb) \Qprod \lrea^-_\ib(\lpb) \Qminus \lrea^+_\ia(\lpb) \Qprod \lrea^-_\ib(\lpb) \Qplus \lrea^+_\ia(\lpb) \Qprod \lrea^-_\ib(\lpb) \eqQ \\ 
  &\eqQ
  \lrea^+_\ia(\lpb) \Qprod (\lrea^+_\ib(\lpb) \Qminus \lrea^-_\ib(\lpb)) \Qplus \lrea^-_\ib(\lpb) \Qprod (\lrea^+_\ib(\lpb) \Qminus \lrea^-_\ib(\lpb)) \leQ \\ 
  &\leQ
  \lrea^+_\ia(\lpb) 2^{-\lpb} \Qplus \lrea^-_\ib(\lpb) 2^{-\lpb} \eqQ (\lrea^+_\ia(\lpb) \Qplus \lrea^-_\ib(\lpb)) 2^{-\lpb}. 
\end{align*}
Thus the convergence condition is satisfied when:
\[
  (\lrea^+_\ia(\lpb) \Qplus \lrea^-_\ib(\lpb)) 2^{-\lpb} \leQ 2^{-\lpa}. 
\]
We define \(\lpa\) as the smallest natural number satisfying the previous inequality. 

\subsection{The Left and Right Predicates} 

\newcommand\leftP{\operatorname{\sf left}}
\newcommand\rightP{\operatorname{\sf right}}
In order to write the formal statement of the \gref{thm:bounding_angle}, we need a way to determine the position of a point with respect to a line. 

First of all consider two points \(\pa\) and \(\pb\). 
We can write the equation that a point \(\pc\) has to satisfy to be on the line going through them:
\begin{equation} \label{eq:line}
  (x_\pb - x_\pa)(y_\pc - y_\pa) - (x_\pc - x_\pa)(y_\pb - y_\pa) \eqR \Rzero. 
\end{equation}
If the left-hand side is zero then \(\pc\) is on the same line with \(\pa\) and \(\pb\). 
When left-hand side is not zero, we can use its sign to distinguish which side of \(\pa\pb\) \(\pc\) is on. 
We call these sides left and right. 
We write \( \leftP (\pa, \pb, \pc) \) (resp. \( \rightP (\pa, \pb, \pc) \)) and we say that \(\pc\) is to the \emph{left} (resp. \emph{right}) of the line passing through the points \(\pa\) and \(\pb\) when 
\begin{gather*}
  \leftP(\pa, \pb, \pc) \equiv (x_\pb - x_\pa)(y_\pc - y_\pa) - (x_\pc - x_\pa)(y_\pb - y_\pa) \gtR \Rzero, \\
  \rightP(\pa, \pb, \pc) \equiv (x_\pb - x_\pa)(y_\pc - y_\pa) - (x_\pc - x_\pa)(y_\pb - y_\pa) \ltR \Rzero,
\end{gather*}
as seen in \cref{fig:left_right}. 
\begin{figure}[!ht]
  \caption{\(\pc\) is to the left of \(\pa\pb\).}
  \label{fig:left_right}
  \begin{center}
    \begin{tikzpicture}[>=latex] 
      \draw[pattern=north west lines,pattern color=red,draw=white] (45:2.5) arc (45:225:2.5);
      \draw[pattern=north east lines,pattern color=green,draw=white] (225:2.5) arc (225:405:2.5);
      \draw[fill] (-135:1) circle (0.05);
      \node[left] at (-135:1) {\(\pa\)};
      \draw[fill] (45:2) circle (0.05);
      \node[left] at (45:2) {\(\pb\)};
      \draw (225:3) -- (45:3);
      \node[fill=white] at (135:1.5) {Left};
      \node[fill=white] at (-45:1.5) {Right};
      \draw[fill] (90:1) circle (0.05);
      \node[left] at (90:1) {\(\pc\)};
    \end{tikzpicture}
  \end{center}
\end{figure}
A few remarks on this definition:
\begin{itemize}
  \item
    \(\leftP\) and \(\rightP\) are positively decidable, since they are defined by means of \(\ltR\); 
  \item 
    since the definitions of \(\leftP\) and \(\rightP\) are almost the same and only the direction of the inequality changes, 
    \(\pc\) is to the left of \(\pa \pb\) if and only if \(\pb\) is to the right of \(\pa\pb\); 
  \item 
    the left side of \(\pa\pb\) corresponds to the right side of \(\pb\pa\) and the other way around, so the order of the points is significant; 
  \item
    the left-hand side of \eqref{eq:line} can also be thought as the scalar product of \((-(y_\pb - y_\pa), x_\pb - x_\pa)\), the orthogonal of the vector from \(\pa\) to \(\pb\), and \((x_\pc - x_\pa, y_\pc - y_\pa)\), the vector from \(\pa\) to \(\pc\). 
\end{itemize}

We say that \(\pa\) is \emph{above} \(\pb\) if \( y_\pa \geR y_\pb \) and that \(\pc\) is \emph{below} \(\pb\) when \( y_\pc \leR y_\pb \).

\section{The Geometric Part of the Proof}

Now we are ready to present the rest of the proof of the main statement. 
We divide the proof in two parts, the first given as a lemma. 
Since these proofs are more complex, for reason of readability and space 
we will not be as formal as we have we have been until now. 

From this point onward we assume that no three points are on the same line, formally:
\begin{equation} \label{eq:no_three_aligned}
  \qforall{\pa,\pb,\pc} \leftP (\pa, \pb, \pc) \lor \rightP (\pa, \pb, \pc). 
\end{equation}
This a strong assumption, even more so because we require this to hold constructively: since \(\leftP\) and \(\rightP\) are \(\Sigma^0_1\) formulas defined with \(\leR\), we assume that we have an effective map that given three points yields the precision we need to reach in order to check that \(\pc\) is not on the line \(\pa\pb\). 
In other words, we are assuming that we have a procedure that effectively decides instances of the \(\leftP\) and \(\rightP\) predicates. 
The effective computation we extract uses this procedure as a parameter. 

A further consequence is that all points must be distinct: when \(x_\pa \eqR x_\pb\) and \(y_\pa \eqR y_\pb\), the left-hand side in \eqref{eq:line} is always zero for any \(\pc\).

In the next lemma the points \(\pb_0, \pb_1, \pb_2\) are three generic points, 
that is, \(\pb_\ia\) is not necessarily the point indexed by the natural number \(\ia\). 
Moreover we assume that the index \(\ia\) in \(\pb_\ia\) is interpreted up to congruence modulo 3 and thus always falls in \(\{0,1,2\}\). 
For instance, when we write \(\pb_4\), we actually mean \(\pb_1\). 
We write the coordinates of \(\pb_\ia\) as \((x_\ia,y_\ia)\), with the same conventions for the index. 
We prove that when three points are one to the left of the other with respect to a central one, one of them is necessarily lower than the central point, as shown in \cref{fig:three_points}. 
\begin{lemma}[Three points] \label{thm:three_points}
  Assume \eqref{eq:no_three_aligned} and let \(\pa, \pb_0, \pb_1\) and \(\pb_2\) be four points in the real plane such that \(\pb_{\ia+1}\) is to the left (resp. right) of \(\pa \pb_\ia\) for any \( \ia < 3\).
  Then at least one of \(\pb_0, \pb_1, \pb_2\) is strictly below \(\pa\). 
  Formally:
  \[
    \qforall{\pa, \pb_0, \pb_1, \pb_2} 
    (\qforall{\ia < 3} \leftP(\pa,\pb_\ia,\pb_{\ia+1}))
    \limply
    \qexists{\ia < 3} y_\ia \ltR y_\pa.
  \]
\end{lemma}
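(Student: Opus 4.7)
The plan is to reduce the statement to an algebraic contradiction via a finite case analysis powered by the restricted excluded middle. Introduce the abbreviations $u_\ia \equiv x_\ia \Rminus x_\pa$ and $v_\ia \equiv y_\ia \Rminus y_\pa$ for $\ia \in \{0,1,2\}$; with these, the hypothesis $\leftP(\pa, \pb_\ia, \pb_{\ia+1})$ unfolds (indices taken modulo $3$) to
\[
  u_\ia \Rprod v_{\ia+1} \Rminus u_{\ia+1} \Rprod v_\ia \gtR \Rzero, \qquad \ia = 0, 1, 2,
\]
while the conclusion becomes the existence of some $\ia < 3$ with $v_\ia \ltR \Rzero$.

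Next, I would apply the totality instance of $\EM$, namely $y_\ia \ltR y_\pa \lor y_\pa \leR y_\ia$, three times, once for each $\ia < 3$. Any branch producing a disjunct $y_\ia \ltR y_\pa$ supplies the desired witness directly; the only case left to analyze is the branch in which $y_\pa \leR y_\ia$, equivalently $v_\ia \geR \Rzero$, holds for all $\ia < 3$. In that branch I will derive $\lfalse$ and close by \emph{ex falso quodlibet} (valid even under the restricted logic because the conclusion is simply existential).

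The contradictory branch rests on the algebraic identity
\[
  v_2 \Rprod (u_0 v_1 \Rminus u_1 v_0) \Rplus v_0 \Rprod (u_1 v_2 \Rminus u_2 v_1) \Rplus v_1 \Rprod (u_2 v_0 \Rminus u_0 v_2) \eqR \Rzero,
\]
which is a pure polynomial cancellation since every monomial $u_i v_j v_k$ appears twice with opposite signs. Assuming $v_\ia \geR \Rzero$ for all $\ia$, each of the three summands is a nonnegative factor times a strictly positive factor, hence nonnegative; but a sum of nonnegative reals equal to $\Rzero$ forces each summand to vanish, and a product $v_{\ia+2} \Rprod (u_\ia v_{\ia+1} \Rminus u_{\ia+1} v_\ia)$ whose second factor is strictly positive can vanish only when $v_{\ia+2} \eqR \Rzero$. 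Thus $v_0 \eqR v_1 \eqR v_2 \eqR \Rzero$, and the hypothesis at $\ia = 0$ then reads $\Rzero \gtR \Rzero$, the sought contradiction.

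The main obstacle I foresee is the amount of routine real-number arithmetic required to justify the auxiliary facts used above: the cancellation identity, the statement that a sum of three nonnegative reals equal to $\Rzero$ forces each summand to be $\Rzero$, and the implication $s \gtR \Rzero \land r \Rprod s \eqR \Rzero \limply r \eqR \Rzero$. These are all provable in $\HA$ at the level of the nested-interval representation, but unfolding $\gtR, \geR, \eqR$ through $\ltR$, $\leR$ and $\op$ and chasing the required precisions is tedious. Crucially, none of this arithmetic depends on $\EM$: the only nonconstructive step in the whole proof is the initial triple case analysis on $y_\ia$ versus $y_\pa$, which is precisely the place where the interactive interpretation will later insert its educated guesses and its backtracking.
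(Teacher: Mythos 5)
Your proof is correct, and its algebraic engine is in substance the same as the paper's: both arguments multiply the left-turn inequalities \(u_\ia v_{\ia+1} \Rminus u_{\ia+1} v_\ia \gtR \Rzero\) by the (assumed nonnegative) vertical coordinates and combine them so that the signs clash. Your three-term cyclic identity is the symmetrized packaging of the paper's computation, which adds only \emph{two} of the three products so as to isolate \(y_\ia (x_{\ia-1} y_{\ia+1} - x_{\ia+1} y_{\ia-1})\) and read off the sign of \(y_\ia\) directly. The logical scaffolding differs, though. The paper case-splits on the sign of each coordinate (zero, then negative versus positive for the two nonzero neighbours), handles the degenerate zero case separately via the ``at least two coordinates are nonzero'' step, and in the remaining case derives the positive statement \(y_\ia \ltR y_\pa\) outright, so every branch ends by exhibiting the witness index. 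You instead apply the \(\EM\) totality dichotomy \(y_\ia \ltR y_\pa \lor y_\pa \leR y_\ia\) three times and, in the residual all-nonnegative branch, derive \(\lfalse\) and conclude by ex falso (which is indeed available for arbitrary conclusions, since the atomic \emph{ex falso} rule of \(\HA\) lifts to all formulas). Both routes are sound and behave identically under the interactive interpretation: your \(\lfalse\)-branch forces an instantiation of one of the guessed universal disjuncts \(y_\pa \leR y_\ia\) at a finite precision where it fails, which is exactly the event that extends the state and triggers backtracking. What your version buys is symmetry and the elimination of the slightly awkward zero-coordinate preamble; what the paper's version buys is that the falsified coordinate is named explicitly in each branch, which matches its intuitionistic variant (working at a single computed rational precision \(\lpa\)) and makes the backtracking bookkeeping in the outer Convex Angle proof marginally more transparent. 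Your closing caveat is also accurate: the auxiliary order-theoretic facts you invoke (the cancellation identity, ``a sum of nonnegatives equal to \(\Rzero\) has vanishing summands,'' and \(s \gtR \Rzero \land r \Rprod s \eqR \Rzero \limply r \eqR \Rzero\) for \(r \geR \Rzero\)) are all \(\EM\)-free once unfolded on the nested-interval representation, using only that \(\lnot\qexists\lpa\) implies \(\qforall\lpa\lnot\) intuitionistically.
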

\begin{figure}[!ht]
  \caption{The three points lemma when \(\pb_2\) is the point below \(\pa\).}
  \label{fig:three_points}
  \begin{center}
    \begin{tikzpicture}[>=latex] 
      \draw[fill] (0,0) circle (0.05);
      \node[right] at (0,0) {\(\pa\)};
      \draw[fill] (45:2) circle (0.05);
      \node[left] at (45:2) {\(\pb_0\)};
      \draw (-135:0.5) -- (45:2.5);
      \draw[fill] (140:1.5) circle (0.05);
      \node[above] at (140:1.5) {\(\pb_1\)};
      \draw (-40:0.5) -- (140:2);
      \draw[fill] (240:1) circle (0.05);
      \node[right] at (240:1) {\(\pb_2\)};
      \draw (60:0.5) -- (240:1.5);
      \draw[dashed] (-2,0) -- (2,0);
    \end{tikzpicture}
  \end{center}
\end{figure}
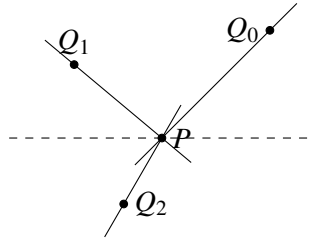

\begin{proof}[Classical proof]
  Without loss of generality we can assume that the coordinates of \(\pa\) are \((\Rzero,\Rzero)\). 
  Then, unfolding the definition of \(\leftP\), the hypothesis on the points can be written as:
  \[
    \qforall{\ia < 3} x_\ia y_{\ia+1} - x_{\ia+1} y_\ia \gtR \Rzero,
  \]
  The first step is showing that there at least two points whose vertical coordinate is not zero. 
  This follows from the fact that 
  if, for some \(\ia < 3\), \( y_\ia = \Rzero \) then \( y_{\ia+1} \neq \Rzero \) and \( y_{\ia-1} \neq \Rzero \). 
  This is the case since if \( y_\ia = \Rzero \) then
  \begin{gather*}
    x_{\ia-1} y_\ia - x_\ia y_{\ia-1} = - x_\ia y_{\ia-1} \gtR \Rzero, \\
    x_\ia y_{\ia+1} - x_{\ia+1} y_\ia = x_\ia y_{\ia+1} \gtR \Rzero,
  \end{gather*}
  and then \( y_{\ia+1} \neq \Rzero\) and \( y_{\ia-1} \neq \Rzero \). 
  Then we can assume that \(y_{\ia+1}\) and \(y_{\ia-1}\) are not zero. 
  If either of them is negative we can conclude. 

  Otherwise they are both positive. We show that, in this case, \(y_\ia\) is negative. 
  By hypothesis we know that: 
  \begin{gather*}
    x_{\ia-1} y_\ia - x_\ia y_{\ia-1} \gtR \Rzero, \\
    x_\ia y_{\ia+1} - x_{\ia+1} y_\ia \gtR \Rzero. 
  \end{gather*}
  Since \( y_{\ia+1} \) and \( y_{\ia-1} \) are positive we can multiply the previous inequalities:
  \begin{gather*}
    y_{\ia+1} (x_{\ia-1} y_\ia - x_\ia y_{\ia-1}) \gtR \Rzero, \\
    y_{\ia-1} (x_\ia y_{\ia+1} - x_{\ia+1} y_\ia) \gtR \Rzero.  
  \end{gather*}
  By adding them together we get:
  \[ 
    y_\ia (x_{\ia-1} y_{\ia+1} - x_{\ia+1} y_{\ia-1}) > 0. 
  \]
  Since the term in parenthesis is negative by hypothesis, \(y_\ia\) must be too. 
  Since \(\pb\), \(\pc\) and \(\pd\) are to the right of each other with respect to \(\pa\) if and only if \(\pd\), \(\pc\) and \(\pb\) are to the left of each other with respect to \(\pa\), the proof for \(\rightP\) is basically the same. 
\end{proof}
\begin{changed}
The previous proof can be made constructive. 
Since the proofs are very similar, we give the intuitionistic proof without explaining the how we obtained it from the classical one. 
The main difference is that, in the intuitionistic proof, we work directly on the rational intervals approximating the coordinates of the points and thus we use rational arithmetic which is decidable. 
For any precision\(\lpa\), let \(X_\ia(\lpa)\) and \(Y_\ia(\lpa)\) be the closed rational intervals \([x^-_\ia(\lpa),x^+_\ia]\) and \([y^-_\ia(\lpa),y^+_\ia]\) respectively. 
We write \( \lqa_\ia \in X_\ia(\lpa) \) as a compact notation for \( x^-_\ia \leQ x_\ia \land x_\ia \leQ x^+_\ia \). 
\begin{proof}[Intuitionistic proof]
  Without loss of generality we can assume that the coordinates of \(\pa\) are \((\Rzero,\Rzero)\).
  Then, unfolding the definition of \(\leftP\), the hypothesis on the points can be written as:
  \[
    \qforall{\ia < 3} x_\ia y_{\ia+1} - x_{\ia+1} y_\ia \gtR \Rzero. 
  \]
  By unfolding the definition of real numbers as sequences of nested intervals and these of the operations on real numbers, we can compute some precision \(\lpa\) such that:
  \begin{equation} \label{eq:three_point_assumption}
    \qforall{\ia < 3} \qforall{\lqa_\ia \in X_\ia(\lpa), y_\ia \in Y_\ia(\lpa), \lqa_{\ia+1} \in X_{\ia+1}(\lpa), y_{\ia+1} \in Y_{\ia+1}(\lpa)} \lqa_\ia y_{\ia+1} - \lqa_{\ia+1} y_\ia \gtQ \Qzero.
  \end{equation}
  The first step is showing that there at least two points whose vertical coordinate is not zero. 
  In order to show this, assume that for some \(\ia < 3\) we have \(\Qzero \in Y_\ia(\lpa)\). 
  Then we can take \(\lqb_\ia = \Qzero\) in \eqref{eq:three_point_assumption}, for \(\ia-1\) and \(\ia\):
  \begin{gather*}
    \qforall{\lqa_{\ia-1} \in X_{\ia-1}(\lpa), \lqa_\ia \in X_\ia(\lpa), \lqb_{\ia-1} \in Y_{\ia-1}(\lpa)} \lqa_{\ia-1} \lqb_\ia - \lqa_\ia \lqb_{\ia-1} = - \lqa_\ia \lqb_{\ia-1} \gtQ \Qzero, \\
    \qforall{\lqa_\ia \in X_\ia(\lpa), \lqa_{\ia+1} \in X_{\ia+1}(\lpa), \lqb_{\ia+1} \in Y_{\ia+1}(\lpa)} \lqa_{\ia-1} \lqb_\ia - \lqa_\ia \lqb_{\ia-1} \lqa_\ia \lqb_{\ia+1} - \lqa_{\ia+1} \lqb_\ia = \lqa_\ia \lqb_{\ia+1} \gtQ \Qzero. 
  \end{gather*}
  Therefore, for \( \ib \in \{\ia-1,\ia+1\} \), \(\Qzero \not\in Y_\ib(\lpa)\) and, since \(Y_\ib(\lpa)\) is an interval, it must be either completely positive or completely negative, namely, either \(x^-_\ib \gtQ \Qzero\) or \(x^+_\ib \ltQ \Qzero\). 
  If either one is completely negative then we have the conclusion. 
  
  Otherwise they are both completely positive and we show that, in this case, \( Y_{\ia}(\lpa) \) is completely negative. 
  For all \( \lqa_\ib \in X_\ib(\lpa)\) and all \(\lqb_\ib \in Y_\ib(\lpa)\) with \(\ib \in \{ 0,1,2\}\); we know by hypothesis that: 
  \begin{gather*}
    \lqa_{\ia-1} \lqb_\ia - \lqa_\ia \lqb_{\ia-1} \gtQ \Qzero, \\
    \lqa_\ia \lqb_{\ia+1} - \lqa_{\ia+1} \lqb_\ia \gtQ \Qzero.
  \end{gather*}
  Since \( \lqb_{\ia+1} \) and \( \lqb_{\ia-1} \) are positive we can multiply the previous inequalities:
  \begin{gather*}
    \lqb_{\ia+1} (\lqa_{\ia-1} \lqb_\ia - \lqa_\ia \lqb_{\ia-1}) \gtQ \Qzero, \\
    \lqb_{\ia-1} (\lqa_\ia \lqb_{\ia+1} - \lqa_{\ia+1} \lqb_\ia) \gtQ \Qzero.
  \end{gather*}
  By adding them together we get:
  \[
    \lqb_\ia (\lqa_{\ia-1} \lqb_{\ia+1} - \lqa_{\ia+1} \lqb_{\ia-1}) \gtQ \Qzero.
  \]
  Since the term in parenthesis is negative by hypothesis, \(\lqb_\ia\) must be too, for all \( \lqb_\ia \in Y_{\ia}\).

  Since \(\pb\), \(\pc\) and \(\pd\) are to the right of each other with respect to \(\pa\) if and only if \(\pd\), \(\pc\) and \(\pb\) are to the left of each other with respect to \(\pa\), the proof for \(\rightP\) is basically the same. 
\end{proof}
\end{changed}

We can now prove the main statement. 
\begin{theorem}[Convex Angle] \label{thm:bounding_angle}
  Assume \eqref{eq:no_three_aligned}. For any \(\na \ge 2\), we can select three points \(\pa\), \(\pb\) and \(\pc\) from \(\{\pa_0, \dotsc, \pa_\na\}\) 
  such that all the remaining points fall in the angle \(\widehat{\pb\pa\pc}\), that is, all points are to the left of \(\pa\pb\) and to the right of \(\pa\pc\). 
  \begin{multline*}
    \qforall{\na\ge 2} \qexists{\ia,\ib,\ic \le \na} 
    \qforall{\id \le \na} \id \neq \ia \land 
    (\id \neq \ib \limply \leftP (\pa_\ia, \pa_\ib, \pa_\id)) \land
    (\id \neq \ic \limply \rightP (\pa_\ia, \pa_\ic, \pa_\id)). 
  \end{multline*}
\end{theorem}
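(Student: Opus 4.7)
The plan is to reduce the theorem to two successive applications of the Least Element Lemma. First I would select a ``base'' vertex $\pa_\ia$ as a lowest point of the configuration; then, among the remaining indices, I would select the two ``extreme'' points $\pa_\ib$ and $\pa_\ic$ as the minimum and maximum under a custom ordering derived from $\leftP$ relative to $\pa_\ia$. Geometrically, $\pa_\ib$ will be the rightmost point as seen from $\pa_\ia$ and $\pa_\ic$ the leftmost, and the fact that $\pa_\ia$ sits below everything will guarantee that the angle swept from $\pa_\ib$ to $\pa_\ic$ through all other points is less than $\pi$.

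Concretely, I would first apply the Least Element Lemma to the sequence of $y$-coordinates $y_0,\dots,y_\na$ to obtain $\ia \le \na$ with $y_\ia \leR y_\id$ for all $\id \le \na$. Next, on the indices $\id \in \{0,\dots,\na\}\setminus\{\ia\}$, I would define $\id \preceq \id'$ to hold iff $\id = \id'$ or $\pa_{\id'}$ is to the left of the line $\pa_\ia\pa_\id$. Totality of $\preceq$ is immediate from hypothesis \eqref{eq:no_three_aligned}. Transitivity, which is the genuinely non-trivial property, comes from the Three Points Lemma: if $\preceq$ were not transitive one would obtain three indices cyclically ``to the left of each other'' as seen from $\pa_\ia$, and that lemma would force one of the three involved points to lie \emph{strictly} below $\pa_\ia$, contradicting minimality. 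Reflexivity is by construction. I would then replay the inductive proof of the Least Element Lemma — which relied only on reflexivity, semi-transitivity, and totality — for $\preceq$, obtaining a minimum index $\ib$ and, symmetrically, a maximum index $\ic$. Unfolding the definition of $\preceq$, every remaining index $\id$ satisfies both $\ib \preceq \id$ and $\id \preceq \ic$, which is exactly the statement that $\pa_\id$ is to the left of $\pa_\ia\pa_\ib$ and to the right of $\pa_\ia\pa_\ic$.

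The main obstacle I foresee is the mismatch between the weak inequality $y_\ia \leR y_\id$ supplied by the Least Element Lemma and the strict inequality $y_\cdot \ltR y_\ia$ that the Three Points Lemma's conclusion contradicts: if some $\pa_\id$ shares the minimum $y$-coordinate with $\pa_\ia$, the transitivity argument breaks at that tie. Since \eqref{eq:no_three_aligned} rules out three points on the same horizontal line, at most two points can share any given $y$-coordinate, and I would handle this either by lexicographic tie-breaking on $(y,x)$ when choosing $\pa_\ia$, or by a small separate case analysis for the unique possibly-tied point. A secondary, more routine, step is verifying that the Least Element Lemma's proof really does transfer verbatim to the abstract preorder $\preceq$, and identifying the $\EMG$ instances it invokes — these are precisely the comparisons that the interactive realizability interpretation of this whole proof will decide by educated guessing and backtracking, delivering the ``smart'' comparison behavior advertised in the introduction of the chapter.
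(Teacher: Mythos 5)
Your proposal is correct but reorganizes the argument differently from the paper. The paper's proof, after selecting the lowest point \(\pa\) via the \gref{thm:least_element}, is a single induction (presented as an iteration) that maintains a \emph{pair} of candidates \((\pb',\pc')\) and sweeps through the remaining points, invoking \cref{thm:three_points} twice per step: once to show that promoting a violating \(\pd\) to the new \(\pb'\) (or \(\pc'\)) preserves the bounding condition for the points already processed, and once to exclude the case where \(\pd\) violates both sides at once. You instead linearize the angular order around \(\pa_\ia\) into a total order \(\preceq\) on the remaining indices (totality from \eqref{eq:no_three_aligned}, semi-transitivity from \cref{thm:three_points} together with minimality of \(y_\ia\)) and then extract \(\ib\) and \(\ic\) as its minimum and maximum by replaying the induction of the \gref{thm:least_element}. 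The ingredients are identical; your version factors the geometry into a one-time verification that \(\preceq\) is a total order and then reuses an off-the-shelf induction, at the cost of running the least-element schema three times instead of one combined induction, which changes (and likely multiplies) the \(\EM\) instances and hence the backtracking structure of the interactive interpretation.

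The ``main obstacle'' you foresee is not one. The conclusion of \cref{thm:three_points} is the \emph{strict} statement that one of the three points satisfies \(y_\ib \ltR y_\pa\), while the \gref{thm:least_element} supplies \(y_\pa \leR y_\ib\), which by definition unfolds to \(\qforall\lpa \lnot\op(y_\ib, y_\pa, \lpa)\) --- the exact dual of \(y_\ib \ltR y_\pa \equiv \qexists\lpa \op(y_\ib, y_\pa, \lpa)\), so the two are contradictory even intuitionistically. A tie \(y_\ib \eqR y_\pa\) satisfies the weak inequality and falsifies the strict one, so the contradiction goes through with no lexicographic tie-breaking; accordingly, the paper's proof does none.
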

\begin{proof}[Classical proof]
  Let \(\pa\) be the point with the least vertical coordinate and choose other two points \(\pb'\) and \(\pc'\), which are our candidates for \(\pb\) and \(\pc\) respectively. 
  We want all points except \(\pa\) to be to the left of \(\pa\pb\) and to the right of \(\pa\pc\). 
  If \(\pb'\) is to the left of \(\pa\pc'\), we swap \(\pb'\) and \(\pc'\). 
  Thus we know that \(\pb'\) is to the right of \(\pa\pc'\) and \(\pc'\) is to the left of \(\pa\pb'\). 

  Now consider any point \(\pd\) except \(\pa\), \(\pb'\) and \(\pc'\). 
  We have four cases:
  \begin{itemize}
    \item 
      If \(\pd\) is to the left of \(\pa \pb'\) and if it is to the right of \(\pa \pc'\), then we keep \(\pb'\) and \(\pc'\) as candidates for \(\pb\) and \(\pc\). 
    \item 
      If \(\pd\) is to the right of \(\pa \pb'\), then we choose \(\pd\) as the new candidate for \(\pb\).
      \begin{center}
        \begin{tikzpicture}[>=latex] 
          \pointAt{A}{0,0}\pa{below}
          \pointAt{B}{60:1}{\pb'}{right}
          \lineBetween{A}{B}
          \pointAt{C}{-2,1}{\pc'}{left}
          \lineBetween{A}{C}
          \pointAt{D}{1,0.5}{\pd}{right}
          \lineBetween{A}{D}
          \pointAt{D'}{-1,1}{\pd'}{right}
          \draw[dashed] (-2,0) -- (2,0);
        \end{tikzpicture}
      \end{center}
      Clearly \(\pb'\) is to the left of \(\pa \pd\). 
      Moreover, any other point \(\pd'\), which we already checked to be to the left of \(\pa \pb'\), is to the left of \(\pa \pd\) too. 
      This is a consequence of \eqref{eq:no_three_aligned} and \cref{thm:three_points}. 
      
      Indeed, from \eqref{eq:no_three_aligned}, we know that \(\pd'\) is either to the left or to the right of \(\pa\pd\). 
      We already know that \(\pd\) is to the right of \(\pa \pb'\) and \(\pb'\) is to the right of \(\pa \pd'\).
      If \(\pd'\) were to the right of \(\pa \pd\), then by \cref{thm:three_points}, one of \(\pb'\), \(\pd\) or \(\pd'\) would have be strictly lower than \(\pa\) which would be a contradiction, since \(\pa\) is the lowest point.
      Thus \(\pd'\) is to the left of \(\pa \pd\). 
    \item 
      Symmetrically, if \(\pd\) is to the left of \(\pa \pc'\), then we choose \(\pd\) as the new candidate for \(\pc\). 
    \item 
      We shot that \(\pd\) cannot be to the right of \(\pa \pb'\) and to the left of \(\pa \pc'\): 
      \begin{center}
        \begin{tikzpicture}[>=latex] 
          \pointAt{A}{0,0}\pa{below}
          \pointAt{B}{30:1}{\pb'}{above}
          \lineBetween{A}{B}
          \pointAt{C}{160:2}{\pc'}{above}
          \lineBetween{A}{C}
          \pointAt{D}{-40:1.5}{\pd}{above}
          \lineBetween{A}{D}
          \draw[dashed] (-2,0) -- (2,0);
        \end{tikzpicture}
      \end{center}
      If this were the case, \(\pb'\) would be to the left of \(\pa \pd\) and \(\pd\) would be to the left of \(\pa \pc'\). 
      Since we know that \(\pc'\) is to the left of \(\pa\pb'\), 
      by \cref{thm:three_points}, one of \(\pd\), \(\pb'\) or \(\pc'\) would be strictly lower than \(\pa\). 
      This is a contradiction, since \(\pa\) is the lowest point by the \gref{thm:least_element}. 
  \end{itemize}
  We repeat this procedure for all the points except \(\pa\), \(\pb'\) and \(\pc'\) and we find the required points \(\pb\) and \(\pc\). 
\end{proof}

For convenience we have written the proof as an iterative algorithm. 
The proof is actually by induction on a slightly stronger version of the final statement, that adds the requirement for \(\pa\) to be lower than all the other points. 

\section{The Interactive Interpretation}

Before studying the interactive interpretation of the whole proof of the \gref{thm:bounding_angle} along with its lemmas, we need understand their computational significance. 
Thus we stop for a moment and recall some general considerations on the computational meaning of formulas in the BHK interpretation and, more specifically, in the Curry-Howard correspondence. 

As a consequence of the proof-as-programs and formulas-as-types interpretation, 
the conclusion of a proof (that is, the statement it proves) can be thought of as the specification of the program representing the proof. 

\begin{omitted}
\subsection{Informative Computations and Harrop Formulas}

From a logical point of view atomic formulas are just what their name suggests, that is, formulas that cannot be decomposed and analyzed further. 
The only relevant property we are interested in is their truth, which we assume to be able to check with an effective computation (which is reasonable, at least in an arithmetic context). 
Thus any proof of an atomic formula is represented by the same token program, which performs no computation and whose only significance is witnessing that there exists a proof. 
Essentially the proof of an atomic formula carries no computational content, at least from the point of view we follow here.

The situation does not change if we add conjunctions: if we pack together a pair of programs encoding trivial computations, we still get a program that performs a trivial computation. 

Things change slightly if we consider implication and universal quantification: we have a program that encodes a function. 
When this program is given in input another program in the case of the implication or a natural number in the case of the universal quantification, it performs a computation. 
If this computation is trivial so is the original program. 

These considerations can be summarized in the following definition. 
\begin{definition}[Harrop Formulas]
  We inductively define the class of \emph{Harrop formulas} as follows:
  \begin{itemize}
    \item atomic formulas are Harrop,
    \item \(\fa \land \fb\) is Harrop if \(\fa\) and \(\fb\) are,
    \item \(\fa \limply \fb\) is Harrop if \(\fb\) is,
    \item \(\qforall\lva \fa\) is Harrop if \(\fa\) is,
  \end{itemize}
\end{definition}
The program corresponding to the proof of an Harrop formula describes a trivial computation: its existence is significant but it produces no information. 

The real difference happens when we consider the informative connectives, namely disjunction and existential quantification. 
Then the proof encodes a non trivial computation, one that decides which disjunct holds or gives a witness. 
Therefore, a disjunction specifies a program describing the computation of a boolean value, either ``left'' or ``right''. 
Similarly, an existential quantified formula specifies a program describing the computation of a natural number (at least in arithmetic). 
Actually the both compute a pair, where the first entry is the value and the second entry is a (program describing a) computation that checks that the value is correct. 
As we noted in \cref{rmk:program_shorter_than_proof}, a proof can be ideally divided in two parts: one that describes the algorithm that decides disjunctions and computes witnesses and another one that proves its correctness. 
In order to distinguish them, we call the former the \emph{informative computation} and the second the \emph{correctness computation}
Of course these parts are usually intertwined in the proof. 
In the case of the proof of an Harrop formula, there is no informative computation and the proof describes only a correctness computation. 

When we are interested in extracting the computational content from a constructive proof, we are usually interested in the informative computation, which computes values. 
We know that the values are correct because the correctness computation exists, but usually we are not interested in performing the correctness computation itself. 
If the proof we consider is not constructive and we consider its interactive interpretation, the correctness computation starts affecting the informative computation and thus cannot be ignored anymore. 

Consider the statement of the \gref{thm:least_element} and note how the last part, the formula stating that \(\ia\) is a least element, is Harrop:
\[
  \qforall\na \qexists{\ia \le \na} \qforall{\ib \le \na} \lrea_\ia \leR \lrea_\ib. 
\]
Thus the informative computation of the proof is specified by the first two quantifiers, as evidenced by \cref{prog:min}. 
We have seen that in the interactive interpretation, the value of \(\ia\) we compute may very well be wrong. 
This fact can only become manifest if and when we look at the correctness computation. 
When the \gref{thm:least_element} is used in a bigger proof, the value it produces will be used along with its correctness computation.
When the correctness computation is performed, even partially, we may discover that the value was wrong. 
Then we have to backtrack, find out which wrong assumption caused the correctness computation to fail, extend the state and resume the computation from a suitable point. 
All of this behavior is governed by the correctness computation, which has the fundamental task of computing the correct extension to the state. 
In order to do this, it has to remember each ``educated guess'' we made about each \(\EM\) instance. 
Thus the correctness computation becomes essential because it controls the backtracking. 
At this point the distinction between the informative and correctness computations blurs, since the value of \(\ia\) in the interactive interpretation of the \gref{thm:least_element} can only be computed by the interaction of the informative and correctness computations through backtracking. 
This is a fundamental difference between the standard BHK interpretation and the interactive interpretation. 
\end{omitted}

\subsection{Subroutines, arguments and effective computations}

In order to understand how the interactive interpretation works, it is important to distinguish computations that can be carried out effectively from those that cannot. 
Consider a proof of a statement of the form:
\begin{equation} \label{eq:simple_function}
  \qforall\lva \qexists\lvb \fa. 
\end{equation}
If we read the previous formula as a specification, it calls for a program that describes a function, a subroutine.
It takes a natural number as an argument named \(\lva\) and returns a pair containing a natural number \(\lvb\) and a program/proof that \(\lvb\) satisfies \(\fa\). 
More generally, statements in mathematics have the following form:
\[
  \qforall{\lva_1, \dotsc, \lva_\na} \fa_1 \land \dotsb \land \fa_\nb \limply \fa. 
\]
This can be again seen as the specification of a subroutine, taking \(\na\) natural numbers and \(\nb\) programs as arguments. 
From this point of view, it becomes clear that such a program is not computing anything, at least in itself. 
An effective computation can only start once the subroutine is applied to an argument. 
\begin{omitted}
Moreover, if an argument is really needed in the subroutine, then it must be computed itself. 
In general, we can only effectively compute something if it contains no unspecified parameters.
Thus, when we think of a proof as a computation, we need to be aware that the computation can only be effectively carried out once we have instanced the outer universally quantified variables with concrete values. 

This is another way to explain why universally quantified atomic formulas are negatively decidable: they can be thought of subroutines with missing arguments, thus in a state of ``suspended computation''. 
Moreover since the possible arguments are all the natural numbers, it is unfeasible to verify them. 
This is not the case when the quantification is bounded: \(\qforall{\ia \le \na} \fa \) is effectively computable if \(\fa\) is. 
From a logical point of view, it behaves like the conjunction \( \fa\subst\ia{0} \land \dotsm \land \fa\subst\ia\na \). 

This is very important in the case of correctness computations. 
In \eqref{eq:simple_function}, we may be able to effectively check that \(\lvb\) satisfies the specification if the condition \(\fa\) is effectively computable, for instance when \(\fa\) is positively decidable. 
\end{omitted}

All of our theorems begin with universal quantifications and implications, that is, they are specification for programs that code functions with arguments. 
Thus, in order to have an actual computation we have to provide the program with the required arguments. 

\subsection{The Interactive Interpretation of the Whole Proof}

We can now explain the interactive interpretation of the whole proof, composed of the two lemmas and the final algorithmic proof. 
We focus on the interaction between these parts without analyzing each part in detail as we have done for the \gref{thm:least_element}. 

\begin{wip}
  Proof irrelevant formulas: (these are simply Harrop formulas)
  \begin{itemize}
    \item an atomic formula \(\lafa\) is p.i.,
    \item \(\fa \land \fb\) is p.i. if \(\fa\) and \(\fb\) are,
    \item \(\fa \limply \fb\) is p.i. if \(\fb\) is,
    \item \(\qforall\lva \fa \) is p.i. if \(\fa\) is,
  \end{itemize}
  Effectively computable formulas:
  \begin{itemize}
    \item an atomic formula \(\lafa\) is e.c.,
    \item \(\fa \land \fb\) and \(\fa \lor \fb\) are e.c. whenever \(\fa\) and \(\fb\) are,
    \item \(\lafa \limply \fa\) if \(\fa\) is,
    \item \(\fa \limply \fb\) if \(\fb\) is e.c. and \(\fa\) is Harrop?,
    \item \(\qexists\lva \fa\) if \(\fa\) is, 
    \item \(\qforall{\lva \le \na} \fa\) if \(\fa\) is. 
  \end{itemize}
\end{wip}

We start by considering the statement of the \gref{thm:bounding_angle}. 
Assume that we are given a natural number \(\na\). 
In the proof we work with the first \(\na+1\) points of the enumeration. 

The proof is an iterative procedure to select \(\pa\), \(\pb\) and \(\pc\) satisfying the following \emph{bounding condition}: 
\begin{equation} \label{eq:bounding_condition}
  \qforall{\id \le \na} \id \neq \ia \land 
  (\id \neq \ib \limply \leftP (\pa_\ia, \pa_\ib, \pa_\id)) \land
  (\id \neq \ic \limply \rightP (\pa_\ia, \pa_\ic, \pa_\id)). 
\end{equation}
The bounding condition specifies an informative computation, since \(\leftP\) and \(\rightP\) are defined by means of \(\ltR\), which is an existential quantification. 
Thus its proofs computes some witnesses, namely the precision of the comparisons we need to check that the bounding condition holds. 
While are mainly interested in the choice of the points \(\pa, \pb\) and \(\pc\) and not in the information needed to prove the bounding condition itself, 
the precision of the computation provided by \eqref{eq:bounding_condition} is actually used in interactive interpretation since it can cause backtracking. 

We claim that this bounding condition specifies an effective computation. 
First of all, the outer universal quantification is bounded, thus, in order to compute the condition, we have to compute the body of the quantification \(\na+1\) times. 
The same holds for the conjunctions. 
Thus the effectiveness of the whole condition follows from the effectiveness of the conjuncts. 
The implications are effective: 
their only argument, the proof of the antecedent, is arithmetical atomic, hence irrelevant, thus the computations they specify must be constant functions. 
Therefore, we can effectively compute them by applying them to any single argument. 
Finally their consequents specify effective computations, thanks to \eqref{eq:no_three_aligned}, the assumption that no three points are on the same line. 
Thus, proofs of the bounding condition describe effective computations.


Now we can start following the proof. 
In the beginning, the lowest point \(\pa\) is selected using the \gref{thm:least_element} on the vertical coordinate. 
Consider the statement of the \gref{thm:least_element}:
\[
  \qforall\na \qexists{\ia \le \na} \qforall{\ib \le \na} \lrea_\ia \leR \lrea_\ib. 
\]
As a specification, it calls for a program that, given \(\na\), yields the value \(\ia\) and the correctness computation that checks that \(\ia\) is the least element. 
Since the correctness computation cannot be carried out effectively (it is negatively decidable),
the interactive interpretation computes a trivial least element the first time.
If later in the proof we happen to partially compute the correctness computation, then we may discover new information and backtrack again to the least element computation. 
Since the \gref{thm:least_element} does not necessarily return a least element, but only a least element candidate, \(\pa\) is not the lowest point either, but just a lowest point candidate. 

The function of \cref{thm:three_points} is to prove that some point is strictly lower than \(\pa\), thus producing a contradiction. 
In the classical proof this ensures that undesirable situations never happen. 
In the interactive interpretation however, since \(\pa\) is not necessarily the lowest point, no contradiction occurs. 
Instead, what happens is that we actually are in one of the cases we had excluded in the classical proof. 
At this point, in order to deduce the contradictory statement, we have partially computed the correctness computation returned by the \gref{thm:least_element} and thus discovered which assumption was incorrectly guessed. 
We compute the relevant witness and extend the state accordingly. 
Then we compute a new lowest point candidate and continue again following the proof of the \gref{thm:bounding_angle} until either we can satisfy its conclusion or we backtrack again. 

We use \cref{thm:three_points} in two places in the proof of the \gref{thm:bounding_angle}. 
The first use takes place when, while iterating on the points, we discover that the bounding condition fails for some \(\pd\) and we choose \(\pd\) as the new candidate for \(\pb\) or \(\pc\). 
We use \cref{thm:three_points} to show that this choice satisfies the bounding condition for all the previous points we iterated over until now. 
More precisely we use \cref{thm:three_points} to prove that, if the bounding conditions fails for \(\pd\), then one of \(\pb\), \(\pc\) or \(\pd\) is strictly lower than \(\pa\).
As we described previously, this in turn starts the backtracking. 
Here is an example illustrating an interesting situation:
\begin{center}
  \begin{tikzpicture}[>=latex] 
    \pointAt{A}{0,0}\pa{below}
    \pointAt{B}{1,-0.5}\pb{above}
    \pointAt{C}{-0.2,1}\pc{left}
    \pointAt{D}{1,-1}\pd{below}
    \pointAt{D1}{0.8,1}{}{}
    \pointAt{D2}{1.3,0.5}{}{}
    \lineBetween{A}{B}
    \lineBetween{A}{C}
    \draw[dashed] (-2,0) -- (2,0);
  \end{tikzpicture}
\end{center}
Here \(\pd\) is to the right of \(\pa\pb\), so we replace \(\pb\) with \(\pd\).
From the picture we see that \(\pd\) is actually strictly lower than \(\pa\). 
On the other hand the bounding condition is satisfied by taking \(\pd\) as \(\pa\). 
In this situation, do we backtrack or not? 
We know that we can find a better candidate for the lowest point, but \(\pa\) seems to be good enough already, so there is no real need for a better candidate. 
Both options are sound and the choice depends on the exact formalization of the proof and the exact sequences of rationals representing the vertical coordinate of the points in question.

We also use \cref{thm:three_points} to claim that the bounding condition cannot fail because \(\pd\) is both to the right of \(\pa\pb\) and to the left of \(\pa\pc\): 
\begin{center}
  \begin{tikzpicture}[>=latex]
    \pointAt{A}{0,0}\pa{below}
    \pointAt{B}{1,0.5}\pb{above}
    \pointAt{C}{-0.5,1}\pc{left}
    \pointAt{D}{-1,-1}\pd{below}
    \pointAt{D1}{0.8,1}{}{}
    \pointAt{D2}{1.3,1.5}{}{}
    \lineBetween{A}{B}
    \lineBetween{A}{C}
    \draw[dashed] (-2,0) -- (2,0);
  \end{tikzpicture}
\end{center}
This case was excluded completely in the classical proof, since it always leads to contradiction. 
When it occurs in the interactive interpretation, we backtrack for sure since the bounding condition cannot be satisfied. 
More precisely, in this case \cref{thm:three_points} proves that one of \(\pb\), \(\pc\) or \(\pd\) is strictly lower than \(\pa\). 
Therefore, in order to get the contradiction, we instance the assumptions \(y_\pa \leR y_\pb\), \(y_\pa \leR y_\pc\) and \(y_\pa \leR y_\pd\) with enough precision to falsify at least one of them. 

As a last example, consider a situation where the state is empty and thus \(\pa\) is simply the first point in the enumeration. Assume that the points are arranged as shown:
\begin{center}
  \begin{tikzpicture}[>=latex]
    \pointAt{A}{0,0}\pa{below}
    \pointAt{B}{-30:1}\pb{above}
    \pointAt{C}{-140:2}\pc{above}
    \pointAt{D}{-60:1.5}\pd{below}
    \pointAt{D1}{-100:1}{}{}
    \pointAt{D2}{-130:1.2}{}{}
    \lineBetween{A}{B}
    \lineBetween{A}{C}
    \draw[dashed] (-2,0) -- (2,0);
  \end{tikzpicture}
\end{center}
Since the bounding condition is satisfied immediately, we never need to use \cref{thm:three_points}. Thus backtracking never ensues. 
This mean that \(\pa\), while certainly not the lowest point, is a good enough candidate and we do not need another one. 
This is one of the cases we mentioned where the interactive interpretation produces a fast computation, since the lowest point is only computed once and the proof ends with no backtracking. 
This shows how the behavior of interactive interpretation of the \gref{thm:least_element} depends heavily on the final statement of the proof. 



\begin{omitted}

The result produced by the interactive interpretation of a lemma are heavily dependent on the final conclusion the lemma 
The computations performed in the interactive interpretation are 


\begin{figure}[!ht]
  \caption{The interactive interpretation is guided by the last conclusion} \label{fig:example_three_points}
  \begin{center}
    \begin{tikzpicture}[>=latex]
      \pointAt{A}{0,0}\pa{below}
      \pointAt{B}{-1,-0.5}\pb{above}
      \pointAt{C}{0.5,-1}\pc{left}
      \pointAt{D1}{-0.8,-1}{}{}
      \pointAt{D2}{.3,-1.5}{}{}
      \lineBetween{A}{B}
      \lineBetween{A}{C}
      \draw[dashed] (-2,0) -- (2,0);
    \end{tikzpicture}
  \end{center}
  \begin{legend}
    In this case the assumption that \(\pa\) is the lowest point is clearly false, since \(\pa\) is actually the highest point. 
    However the conclusion of the \gref{thm:bounding_angle} holds and thus \(\pa\) is acceptable.

  \end{legend}
\end{figure}

\end{omitted}

\begin{omitted}
\section{Appendix}

The main inductive lemma
\begin{proof}
  We prove this by induction on the number of points \(\na\). 
  The minimum \(\na\) for which the statement is not trivial is \(3\). 
  \begin{description}
    \item[Base case]
      When \(\na\) 
    \item[Successor case]
      Assume that 
  \end{description}

\end{proof}

\end{omitted}

\begin{omitted}
\subsection{Cauchy sequence and equivalent definitions}
We start from a Cauchy sequence:
\[ 
  \qforall\lpa \qexists{\lpa_0} \qforall{\lpa_1, \lpa_2}
  | \lrea(\lpa_0 + \lpa_2) - \lrea(\lpa_0 + \lpa_1) | < \frac{1}{2^\lpa}. 
\]
If we had countable choice,  we could extract the modulus of convergence of \(\lrea\), that is, a function \(f : \N \to \N\) such that:
\[ 
  \qforall{\lpa, \lpa_1, \lpa_2}
  | \lrea(f(\lpa) + \lpa_2) - \lrea(f(\lpa) + \lpa_1) | < \frac{1}{2^\lpa}. 
\] 
Then, instead of \(\lrea\) we can consider \(\lrea' \equiv \lrea \circ f\), a sub-sequence of \(\lrea\) that converges much faster: 
\[ 
  \qforall{\lpa, \lpa_1, \lpa_2}
  | \lrea'(\lpa + \lpa_2) - \lrea'(\lpa + \lpa_1) | < \frac{1}{2^\lpa}. 
\]

Given rational sequence \(q : \N \to \Q\) we can extract a monotone sub-sequence. 
That is, there is a function \(f : \N \to \N\) such that it is strictly increasing: 
\[ \qforall\na f \na < f \na+1, \] 
and such that \(q \circ f\) is either weakly increasing:
\[ \qforall\na q (f \na) \leq q (f \na+1), \] 
or weakly decreasing:
\[ \qforall\na q (f \na) \geq q (f \na+1). \] 

\end{omitted}

\appendix

\chapter{Appendix}

\begin{wip}
  \section{Realizers for the Contrapositive of the Axiom of Choice}
  We use the letters \(x, y, z\) as variables for natural numbers and \(f, g, h\) for functions from natural numbers to natural numbers. 
  We write the axiom of choice in the language of \(\HA^\omega\):
  \[ \tag{AC} \label[axiom]{ac}
  \forall x \exists y.\ \fc(x,y) \limply \exists f \forall x.\ \fc(x,f(x)). \] 
  In intuitionistic logic this entails its contrapositive, that is:
  \[ \neg\exists f \forall x.\ \fc(x,f(x)) \limply \neg\forall x \exists y.\ \fc(x,y), \] 
  which is classically (but not intuitionistically) equivalent to:
  \[ \tag{CAC} \label[axiom]{cac}
  \forall f \exists x.\ \fc(x,f(x)) \limply \exists x \forall y.\ \fc(x,y). \] 
  We want to realize this classical principle. 

  We can read \cref{cac} by means of the BHK interpretation. 
  We begin from the antecedent of the implication, \( \forall f \exists x.\ \fc(x,f(x)) \), whose interpretation is a (higher order) function \(F\) that from a function \(f\) calculates a witness \(x\) satisfying \( \fc(x, f(x)) \). 
  The consequent interpretation should be a couple of a witness \(x\) and a function that for any \(y\) produces a proof of \( \fc(x,y) \). 
  The interpretation of the whole \cref{cac} should be a function mapping \(F\) into a witness \(x\) for the consequent. 
  The problem is that the witnesses given by \(F\) are partial, that is, they are required to satisfy \(\fc(x,y)\) only when \( y = f(x) \). 

  An interesting fact is that there is a function \(f\) such that \(F(f)\) is a witness for the consequent. 
  It is enough to require that, for any \(x\) which is no witness for the consequent, \(f(x)\) is a witness for \(\exists y. \ \neg \fc(x,y)\). 
  More formally the existence of such a function follows from \cref{ac} instanced for one of the following formulas\footnotemark: 
  \begin{align*}
    D_1(x,y) &\equiv (\forall z.\ \fc(x,z)) \lor \neg \fc(x,y), \\ 
    D_2(x,y) &\equiv (\fc(x,y) \limply \forall z.\ \fc(x,z)), \\ 
    D_3(x,y) &\equiv (\exists z.\ \neg \fc(x,z)) \limply \neg \fc(x,y), 
  \end{align*}
  that state that if there are counterexamples to \(\fc(x, \cdot)\) then \(y\) is one. 
  Then \cref{ac} proves that exists a function \(f\) that evaluates to a counterexample whenever possible. 
  We say that such an \(f\) is a worst counterexample. 
  Of course the antecedent is generally not provable constructively and therefore we cannot take advantage of \cref{ac} to produce a worst counterexample. 
  \footnotetext{While these are classically equivalent, intuitionistically \(D_1\) entails \(D_2\) and \(D_2\) entails \(D_3\).}

  We can use the same idea used in interactive realizability to craft a realizer of \cref{cac}. 
  We begin with an informal sketch. 

  Consider a proof in \(\HA^\omega+\)CAC and assume that in order to take advantage of an instance \cref{cac} the proof contains a derivation of its antecedent. 
  The derivation gives us a function \(F\) as above. 
  We describe now an iterative procedure to build approximations of a worst counterexample. 
  Given any \(f\) we take \(F(f)\) as a tentative witness for \(\exists x \forall y. \fc(x,y) \), that is we make the assumption that \( \forall y. \fc(F(f),y) \). 
  Note that we only know this to be true if \(f\) is a worst counterexample. 
  If in the proof this assumption is instanced for some natural number \(n\) such that \( \neg \fc(F(f),\num n) \), then our witness was wrong and we need a better witness to conclude the proof. 
  In order to get a better witness from \(F\) we can update \(f\): 
  we know that \( \fc(F(f),f(F(f))) \), but we also know that \( \neg \fc(F(f),\num n) \). 
  Then from \(f\) we can define \(f'\) as: 
  \[ f'(x) = \begin{cases}
      n &\text{ if } x = F(f), \\
      f(x) &\text{ otherwise}.
  \end{cases} \] 
  We can then repeat the same procedure on \(f'\). 
  If the witness \(F(f')\) is good enough (in general it does not need to be an actual witness of \( \exists x \forall y. \fc(x,y) \)) we then stop, otherwise we carry on with still another function. 

  Note that since \(f'\) gives the counterexample for \(F(f)\), necessarily \( F(f') \neq F(f) \) and since we update a function \(f\) only on the value of \(F(f)\), we are sure that we shall never update a value again. 
  Therefore there are no loops.

  We would like to know more about this computational interpretation of \cref{cac}, in particular whether this iteration eventually halts or not. 
  We can try to build an example where we never reach 
  \begin{align*}
    C(x,y) &\equiv x = 0 \lor y > 0, \\ 
    F(f) &\equiv \begin{cases}
    \text{the smallest } x > 0 \text{ such that } f(x) > 0 &\text{ if } \exists x.\ x > 0 \land f(x) > 0, \\ 
  0 &\text{ otherwise}. \end{cases} \\ 
    f_0 (x) &= 1 &&\text{forall } x
  \end{align*}
  In the proof: 
  \[ 
    \PrAss{\forall y.\ x = 0 \lor y > 0}{}
    \PrUn{x = 0 \lor 0 > 0}
    \PrAss{x = 0}{}
    \PrAss{0 > 0}{}
    \PrUn{x = 0}
    \PrTri{x = 0}
    \PrAx{\exists x \forall y.\ x = 0 \lor y > 0}
    \PrBin{x=0} 
    \DisplayProof
  \]

  Now we can proceed to formalize our reasoning by introducing a suitable monadic realizability notion. 
  We use a variation of the interactive realizability monad, with the state replaced by a function \( \Nat \tarrow \Nat \): 
  \[ M \ta \equiv (\Nat \tarrow \Nat) \tarrow \ta + \Ex. \]
  The type of the realizer is: 
  \[ \mm{\neg AC} = 
    M (((\Nat \tarrow \Nat) \tarrow \Nat \times \mm\fc) 
    \tarrow 
  M (\Nat \times (\Nat \tarrow \mm\fc))) \]

  The realizer is given as input a realizer \( \rr_1 : (\Nat \tarrow \Nat) \tarrow \Nat \times \mm\fc \) of \( \forall f \exists x.\ \fc(x,f(x)) \), that is, for any \( f : \Nat \tarrow \Nat \), if \( x = \prl (\rr_1 f) \) then \( \fc(x,f(x)) \). 
  The idea is that \(  x = \prl \rr_1 f \) is our tentative witness for \( \exists x \forall y \fc(x,y) \). 
  \[ \assert_\fc \num m \num n \equiv \case (\eval_\fc \num m \num n) (\inl \unit) (\inr ) \]
  \[ f' \equiv \abstr{x}\Nat \case (x = F(f)) \] 

  \[ \abstr{\rr_1}{(\Nat \tarrow \Nat) \tarrow \Nat \times \mm\fc}  \pair (\prl (\rr_1 f)) (\abstr{y}\Nat \assert_\fc(\prl (\rr_1 f)) y) \]

\end{wip}

\begin{wip}
  \newcommand\ff\fa
  \newcommand\eb{?}
  \section{Stateless Exception Semantics for $\EM$}
  We add a new ground type \(\Ex\). Its purpose is to store information about a witness for a \(\Sigma^0_1\) formula of the form \(\exists x. P(y_1, \dotsc, y_k, x)\).  
  We call terms of type \(\Ex\) \emph{exceptions}. 

  We add two families of terms indexed by a \(k+1\)-ary predicate symbol \(P\): 
  \begin{align*} 
    \throw_P &: \Nat^{k+1} \tarrow \Ex, \\
    \catch_P &: \Nat^k \tarrow \Ex \tarrow \Nat + \Ex. 
  \end{align*} 
  The first family are exception constructors: 
  we think of an exception \( \throw_P \num m_1 \dotsm \num m_k \num n \) as encoding the information that \(n\) is a witness for \(\exists x. P(\num m_1, \dotsc, \num m_k, x)\). 
  The second family are conditional exception destructors: \(\catch\) extracts the witness \(n\) from an exception \(e\) created by \(\throw\) if and only if they have the same index and the same parameters. 
  Formally we assume that:
  \begin{align*} 
    \catch_P \num m_1 \dotsm \num m_k (\throw_P \num m_1 \dotsm \num m_k \num n) &\leadsto \inl \num n \\ 
    \catch_P \num m_1 \dotsm \num m_k (\throw_Q \num m_1' \dotsm \num m_k' \num n) &\leadsto \inr (\throw_Q \num m_1' \dotsm \num m_k' \num n) 
  \end{align*} 
  \[ \catch_P \bar{\num m} (\throw_Q \bar{\num m'} \num n) \leadsto 
    \begin{cases}
      \inl \num n  & \text{ if } P \equiv Q \text{ and } \bar{\num m} = \bar{\num m'}, \\ 
      \inr (\throw_{Q \bar s} \num n) & \text{ otherwise}. 
  \end{cases} \] 
  \[ \catch_P \num m_1 \dotsm \num m_k e \leadsto 
    \begin{cases}
      \inl \num n  & \text{ if } e \leadsto \throw_P \num m_1 \dotsm \num m_k \num n, \\ 
      \inr e & \text{ otherwise}. 
  \end{cases} \] 
  for any natural numbers \(m_1, \dotsc, m_k, n\).

  \subsection{Exception Monad}
  \setmonad{\mathfrak{E}}
  \newcommand\ee{e}
  We define the syntactic monad for exceptions. 
  We define the type operator \(\T\ta = \ta + \Ex\) and two families of terms: 
  \begin{align*}
    \reg[\ta] &\equiv \inl[\ta,\Ex] : \ta \tarrow \T\ta, \\ 
    \exc[\ta] &\equiv \inr[\ta,\Ex] : \Ex \tarrow \T\ta. 
  \end{align*}
  \begin{align*} 
    \monUnit[\ta] \tta &\equiv \reg[\ta] \tta, \\ 
    \monStar[\ta,\tb] f (\reg[\ta] \tta) &\equiv f \tta, \\ 
    \monStar[\ta,\tb] f (\exc[\ta] \ee) &\equiv \exc[\tb] \ee, \\ 
    \monMerge[\ta,\tb] (\reg[\ta] \tta) (\reg[\tb] \ttb) &\equiv \reg[\ta\times\tb] (\pair[\ta,\tb] \tta \ttb), \\ 
    \monMerge[\ta,\tb] (\reg[\ta] \tta) (\exc[\ta] \ee) &\equiv \exc[\ta\times\tb] \eb, \\ 
    \monMerge[\ta,\tb] (\exc[\ta] \ee) (\reg[\tb] \ttb) &\equiv \exc \ee, \\ 
    \monMerge[\ta,\tb] (\exc[\ta] \ee_1) (\exc[\tb] \ee_2) &\equiv \exc[\ta \times \tb] (\exmerge \ee_1 \ee_2). 
  \end{align*}

  \begin{align*} 
    \monUnit[\ta] &\equiv \reg[\ta], \\ 
    \monStar[\ta,\tb] &\equiv \abstr{f}{\ta \tarrow \T\tb} \abstr{\mon\tta}{\T\ta} \case[\ta,\Ex,\tb+\Ex] \mon\tta f \exc[\tb], \\  
    \monMerge[\ta,\tb] &\equiv \abstr{\mon\tta}{\T\ta} \abstr{\mon\ttb}{\T\tb} \case[\ta,\Ex,(\ta \times \tb)+\Ex] \mon\tta \\ 
                       &\phantomrel\equiv (\abstr\tta\ta \case[\tb,\Ex,(\ta \times \tb)+\Ex] \mon\ttb (\abstr\ttb\tb \reg[\ta \times \tb] (\pair \tta \ttb)) \exc[\ta \times \tb]) \\ 
                       &\phantomrel\equiv (\abstr{e_1}\Ex \case[\tb,\Ex,(\ta \times \tb)+\Ex] \mon\ttb (\abstr\_\tb \exc[\ta \times \tb] e_1)(\abstr{e_2}\Ex \exc[\ta \times \tb] (\exmerge e_1 e_2) )). 
  \end{align*}

  The realizability relation: \( \mon\rr \monRe \ff \) if \( \mon\rr \leadsto \inl \rr \) and \( \rr \re \ff \) or \( \mon\rr \leadsto \inr n \). 

  We define some low level terms.
  The first is a family of terms indexed by a \((k+1)\)-ary predicate symbol \(P\) which enable us to evaluate primitive recursive predicates: 
  \begin{align*} 
    \eval_P &: \Nat^{k+1} \tarrow \Unit + \Unit \\ 
    \eval_P \num m_1 \dotsm \num m_k \num n &\leadsto \begin{cases}
    \inl \unit &\text{ if } P(\num m_1, \dotsc, \num m_k, \num n), \\ 
  \inr \unit &\text{ otherwise.} \end{cases}
  \end{align*} 

  We define a realizer of \( \Pi^0_1 \) formulas, called \(\assert\): 
  \begin{align*} 
    \assert_P &: \Nat^{k+1} \tarrow \Unit + \Ex \\ 
    \assert_P \num m_1 \dotsm \num m_k \num n &\equiv 
    \case (\eval_P \num m_1 \dotsm \num m_k \num n) \reg (\abstr\_\Unit \throw_P \num m_1 \dotsm \num m_k \num n) 
  \end{align*} 
  Equivalently we can just require:
  \[ \assert_P \num m_1 \dotsm \num m_k \num n \leadsto 
    \begin{cases}
      \inl \unit &\text{ if } P \num m_1 \dotsm \num m_k \num n \\
      \inr \throw_P \num m_1 \dotsm \num m_k \num n &\text{ otherwise} 
    \end{cases} 
  \] 

  In the same way we define:
  \begin{align*} 
    \handle_{P \bar t} &: ((\Nat \tarrow \Unit + \Ex) \tarrow \ta) \tarrow (\Nat \times \Unit \tarrow \ta) \tarrow \ta + \Ex \\ 
    \handle_{P \bar t} &\equiv \abstr{\rr_1}{\Nat \tarrow \Unit + \Ex} \abstr{\rr_2}{\Nat \times \Unit} 
    \case (\rr_1 \assert_{P \bar t}) \inl (\abstr{e}\Ex \\
    &\phantomrel\equiv \qquad \case (\catch_{P \bar t} e) (\abstr{x}\Nat \rr_2 (\pair x \unit) ) \inr)  
  \end{align*} 
  Or we can just require that: 
  \[ \handle_{P \bar t} \rr_1 \rr_2 \leadsto 
    \begin{cases} 
      \inl x &\text{ if } \rr_1 \assert_{P \bar t} \leadsto \inl x \\ 
      \inl (\rr_2 (\pair x \unit)) &\text{ if } \rr_1 \assert_{P \bar t} \leadsto e_{P \bar t} x \\ 
      \inr (e_{Q \bar s} x) &\text{ if } \rr_1 \assert_{P \bar t} \leadsto e_{Q \bar s} x 
    \end{cases} 
  \] 
  where \( P \neq Q \) or \( \bar t \neq \bar s \).

  The \(\EM\) rule is derived from the disjunction elimination rule: 
  \[
    \PrAx{\Gamma, \alpha_{k+1} : \forall x P \bar t x \monSeq \mon\rr_1 : \ff} 
    \PrAx{\Gamma, \alpha_{k+1} : \exists x \neg P \bar t x \monSeq \mon\rr_2 : \ff} 
    \PrLbl{\RuleName\EM{}} 
    \PrBin{\Gamma \monSeq \handle_{P \bar t} (\abstr{\alpha_{k+1}}{\Nat \tarrow \Unit} \mon\rr_1) (\abstr{\alpha_{k+1}}{\Nat \times \Unit} \mon\rr_2) : \ff}
    \DisplayProof 
  \] 
  Soundness:

\end{wip}

\begin{wip}
  \section{Monads in Category Theory}


  Let \(\ta\) and \(\tb\) be objects and \( f : \ta \tarrow \tb \) and \( g : \tb \tarrow \tc \) be morphisms. 

  Kleisli triple: 
  \begin{align*}
    \eta_\ta &: \ta \tarrow T\ta \\ 
    \_^{*_{\ta,\tb}} &: (\ta \tarrow T\tb) \tarrow T\ta \tarrow T\tb 
  \end{align*}
  Corresponding monad: 
  \begin{align*}
    T_{\ta,\tb} &\equiv f \mapsto (f ; \eta_\tb)^{*_{\ta,\tb}} &&: (\ta \tarrow \tb) \tarrow T\ta \tarrow T\tb \\ 
    \mu_\ta &\equiv i_{T\ta}^{*_{T\ta,\ta}} &&: TT\ta \tarrow T\ta
  \end{align*}

  Monad properties:
  \begin{align}
    \label[property]{cat_mon1} \eta_\ta^{*_\ta,\ta} = i_{T\ta} \\ 
    \label[property]{cat_mon2} \eta_\ta ; f^{*_\ta,\tb} = f \\ 
    \label[property]{cat_mon3} f^{*_\ta,\tb} ; g^{*_\tb,\tc}  = (f ; g^{*_\tb,\tc})^{*_\ta,\tc} 
  \end{align}

  \newcommand\cprod[2]{\langle #1, #2 \rangle}


  Product equalities and related definitions and their properties: 
  \begin{align}
    \cprod{f_1}{f_2} ; \pi^i &= f_i \\ 
    \cprod{f ; \pi^1}{f ; \pi^2} &= f \\
    f ; \cprod{g_1}{g_2} &= \cprod{f ; g_1}{f ; g_2} \\ 
    f_1 \times f_2 &\equiv \cprod{\pi^1 ; f_1}{\pi^2 ; f_2} \\ 
    \label{cross1} \cprod{f_1}{f_2} ; g_1 \times g_2 &= \cprod{f_1 ; g_1}{f_2 ; g_2} \\ 
    \label{cross2} f_1 \times f_2 ; g_1 \times g_2 &= (f_1 ; g_1) \times (f_2 ; g_2) \\ 
    c_{\ta,\tb} &: \ta \times \tb \tarrow \tb \times \ta \\ 
    c_{\ta,\tb} &\equiv \cprod{\pi^2}{\pi^1} \\ 
    \label{swap1} c_{\ta,\tb} ; f_1 \times f_2 &= f_2 \times f_1 ; c_{\ta,\tb} \\ 
    \label{swap2} c_{\ta,\tb} ; c_{\tb,\ta} &= i_{\ta \times \tb} \\ 
    \alpha_{\ta,\tb,\tc} &: (\ta \times \tb) \times \tc \tarrow \ta \times (\tb \times \tc) \\ 
    \alpha_{\ta,\tb,\tc} &\equiv \cprod{\pi^1_{\ta \times \tb,\tc} ; \pi^1_{\ta,\tb}}{\cprod{\pi^1_{\ta \times \tb,\tc} ; \pi^2_{\ta,\tb}}{\pi^2_{\ta \times \tb,\tc}}} \\ 
    \label{alpha1} (f_1 \times f_2) \times f_3 ; \alpha &= \alpha ; f_1 \times (f_2 \times f_3) 
  \end{align}

  \subsection{Strong monads}

  Terms: 
  \begin{align*}
    \tau_{\ta,\tb} &: \ta \times T\tb \tarrow T(\ta \times \tb) \\ 
    \phi_{\ta,\tb} & : T\ta \times T\tb \tarrow T(\ta \times \tb)
  \end{align*}

  Strong monad properties for \(\tau\): 
  \begin{align}
    \label{tau1} \tau_{1,\ta} ; T_{1 \times \ta, \ta} \pi^2_{1,\ta} &= \pi^2_{1,T\ta} \\ 
    \label{tau2} \tau_{\ta \times \tb, \tc} ; T_{(\ta \times \tb) \times \tc,\ta \times (\tb \times \tc)} \alpha_{\ta,\tb,\tc} &= 
    \alpha_{\ta,\tb,T\tc} ; i_\ta \times \tau_{\tb,\tc} ; \tau_{\ta,\tb \times \tc} \\ 
    \label{tau3} i_\ta \times \eta_\tb ; \tau_{\ta,\tb} &= \eta_{\ta \times \tb} \\ 
    \label{tau4} i_\ta \times \mu_\tb ; \tau_{\ta,\tb} &= 
    \tau_{\ta,T\tb} ; T_{\ta \times T\tb,T(\ta \times \tb)} \tau_{\ta,\tb} ; \mu_{\ta \times \tb} 
  \end{align}

  The morphism \(\tau\) and \(\phi\) can be defined in terms of each other: 
  \begin{align}
    \label{tau2phi} \tau_{\ta,\tb} &\equiv \eta_\ta \times i_{T\tb} ; \phi_{\ta,\tb} \\ 
    \label{phi2tau} \phi_{\ta,\tb} &\equiv c_{T\ta,T\tb} ; \tau_{T\tb,\ta} ; (c_{T\tb,\ta} ; \tau_{\ta,\tb})* 
  \end{align}
  We can verify this by the following equality: 
  \begin{align*}
    \tau_{\ta,\tb} &= \eta_\ta \times i_{T\tb} ; \phi_{\ta,\tb} \\ 
                   &= \eta_\ta \times i_{T\tb} ; c_{T\ta,T\tb} ; \tau_{T\tb,\ta} ; (c_{T\tb,\ta} ; \tau_{\ta,\tb})* \\ 
                   &= c_{T\ta,T\tb} ; i_{T\tb} \times \eta_\ta ; \tau_{T\tb,\ta} ; (c_{T\tb,\ta} ; \tau_{\ta,\tb})* \\ 
                   &= c_{T\ta,T\tb} ; \eta_{T\tb \times \ta} ; (c_{T\tb,\ta} ; \tau_{\ta,\tb})* && \text{by \cref{tau3}}\\ 
                   &= c_{T\ta,T\tb} ; c_{T\tb,\ta} ; \tau_{\ta,\tb} && \text{by \cref{mon2}}\\ 
                   &= \tau_{\ta,\tb} && \text{by \cref{swap2}}
  \end{align*}

  Now we can translate the strong monad properties for \(\tau\) in properties for \(\phi\) by replacing \(\tau\) as in \cref{tau2phi}. 
  We start with \cref{tau1}:
  \begin{align}
    \notag \tau_{1,\ta} ; T_{1 \times \ta, \ta} \pi^2_{1,\ta} &= \pi^2_{1,T\ta} \\ 
    \notag \eta_1 \times i_{T\ta} ; \phi_{1,\ta} ; T_{1 \times \ta, \ta} \pi^2_{1,\ta} &= \pi^2_{1,T\ta} \\ 
    \label{phi1} \eta_1 \times i_{T\ta} ; \phi_{1,\ta} ; (\pi^2_{1,\ta} ; \eta_\ta)^{*_{1 \times \ta,\ta}} &= \pi^2_{1,T\ta}
  \end{align}

  Now \cref{tau2} left-hand side:
  \begin{align*}
    \tau_{\ta \times \tb, \tc} ; T \alpha_{\ta,\tb,\tc} &= 
    \tau_{\ta \times \tb, \tc} ; T \alpha_{\ta,\tb,\tc} \\ 
    &= \eta_{\ta \times \tb} \times i_{T\tc} ; \phi_{\ta \times \tb,\tc} ; T_{(\ta \times \tb) \times \tc,\ta \times (\tb \times \tc)} \alpha_{\ta,\tb,\tc} \\ 
    &= (\eta_\ta \times \eta_\tb ; \phi_{\ta,\tb}) \times i_{T\tc} ; \phi_{\ta \times \tb,\tc} ; T_{(\ta \times \tb) \times \tc,\ta \times (\tb \times \tc)} \alpha_{\ta,\tb,\tc} && \text{by \cref{phi3}} \\ 
    &= (\eta_\ta \times \eta_\tb) \times i_{T\tc} ; \phi_{\ta,\tb} \times i_{T\tc} ; \phi_{\ta \times \tb,\tc} ; T_{(\ta \times \tb) \times \tc,\ta \times (\tb \times \tc)} \alpha_{\ta,\tb,\tc} && \text{by \cref{}} \\ 
  \end{align*}
  Now \cref{tau2} right-hand side:
  \begin{align*}
    \alpha_{\ta,\tb,T\tc} &; i_\ta \times \tau_{\tb,\tc} ; \tau_{\ta,\tb \times \tc} \\ 
                          &= \alpha_{\ta,\tb,T\tc} ; i_\ta \times (\eta_\tb \times i_{T\tc} ; \phi_{\tb,\tc}) ; \eta_\ta \times i_{T(\tb \times \tc)} ; \phi_{\ta,\tb \times \tc} \\ 
                          &= \alpha_{\ta,\tb,T\tc} ; (i_\ta  ; \eta_\ta) \times (\eta_\tb \times i_{T\tc} ; \phi_{\tb,\tc} ; i_{T(\tb \times \tc)}) ; \phi_{\ta,\tb \times \tc} && \text{by \cref{cross2}}\\ 
                          &= \alpha_{\ta,\tb,T\tc} ; \eta_\ta \times (\eta_\tb \times i_{T\tc} ; \phi_{\tb,\tc}) ; \phi_{\ta,\tb \times \tc} \\ 
                          &= \alpha_{\ta,\tb,T\tc} ; \eta_\ta \times (\eta_\tb \times i_{T\tc}) ; i_{T\ta} \times \phi_{\tb,\tc} ; \phi_{\ta,\tb \times \tc} \\ 
                          &= (\eta_\ta \times \eta_\tb) \times i_{T\tc} ; \alpha_{T\ta,T\tb,T\tc} ; i_{T\ta} \times \phi_{\tb,\tc} ; \phi_{\ta,\tb \times \tc} && \text{by \cref{alpha1}}\\ 
  \end{align*}
  The equality follows directly if we assume that the following stronger property holds: 
  \begin{equation*}
    \phi_{\ta,\tb} \times i_{T\tc} ; \phi_{\ta\times\tb,\tc} ; T \alpha_{\ta,\tb,\tc} = \alpha_{T\ta,T\tb,T\tc} ; i_{T\ta} \times \phi_{\tb\times\tc} ; \phi_{\ta,\tb\times\tc}. 
  \end{equation*}

  Now \cref{tau3}:
  \begin{align}
    \notag i_\ta \times \eta_\tb ; \tau_{\ta,\tb} &= \eta_{\ta \times \tb} \\ 
    \notag i_\ta \times \eta_\tb ; \eta_\ta \times i_{T\tb} ; \phi_{\ta,\tb} &= \eta_{\ta \times \tb} && \text{by \cref{tau2phi}}\\ 
    \notag (i_\ta  ; \eta_\ta) \times (\eta_\tb ; i_{T\tb}) ; \phi_{\ta,\tb} &= \eta_{\ta \times \tb} && \text{by \cref{cross2}} \\ 
    \label{phi3} \eta_\ta \times \eta_\tb ; \phi_{\ta,\tb} &= \eta_{\ta \times \tb} 
  \end{align}
  Now \cref{tau4} left-hand side:
  \begin{align*}
    i_\ta \times \mu_\tb ; \tau_{\ta,\tb} &= 
    i_\ta \times \mu_\tb ; \eta_\ta \times i_{T\tb} ; \phi_{\ta,\tb} && \text{by \cref{tau2phi}} \\ 
                                                                     &= (i_\ta  ; \eta_\ta) \times (\mu_\tb ; i_{T\tb}) ; \phi_{\ta,\tb} && \text{by \cref{cross2}} \\ 
                                                                     &= \eta_\ta \times \mu_\tb ; \phi_{\ta,\tb} \\ 
                                                                     &= \eta_\ta \times i_{T\tb}^{*_{T\tb,\tb}} ; \phi_{\ta,\tb} \\ 
  \end{align*}
  Now \cref{tau4} right-hand side:
  \begin{align*}
    \tau_{\ta,T\tb} &; T_{\ta \times T\tb,T(\ta \times \tb)} \tau_{\ta,\tb} ; \mu_{\ta \times \tb} \\ 
                    &= \eta_\ta \times i_{TT\tb} ; \phi_{\ta,T\tb} ; T_{\ta \times T\tb,T(\ta \times \tb)} (\eta_\ta \times i_{T\tb} ; \phi_{\ta,\tb}) ; \mu_{\ta \times \tb} \\ 
                    &= \eta_\ta \times i_{TT\tb} ; \phi_{\ta,T\tb} ; (\eta_\ta \times i_{T\tb} ; \phi_{\ta,\tb} ; \eta_{T(\ta \times \tb)})^{*_{\ta \times T\tb,T(\ta \times \tb)}} ; \mu_{\ta \times \tb} \\ 
                    &= \eta_\ta \times i_{TT\tb} ; \phi_{\ta,T\tb} ; (\eta_\ta \times i_{T\tb} ; \phi_{\ta,\tb} ; \eta_{T(\ta \times \tb)})^{*_{\ta \times T\tb,T(\ta \times \tb)}} ; i_{T (\ta \times \tb)}^{*_{T (\ta \times \tb),\ta \times \tb}} \\ 
                    &= \eta_\ta \times i_{TT\tb} ; \phi_{\ta,T\tb} ; (\eta_\ta \times i_{T\tb} ; \phi_{\ta,\tb} ; \eta_{T(\ta \times \tb)} ; i_{T (\ta \times \tb)}^{*_{T (\ta \times \tb),\ta \times \tb}})^{*_{\ta \times T\tb,\ta \times \tb}} && \text{by \cref{mon3}} \\ 
                    &= \eta_\ta \times i_{TT\tb} ; \phi_{\ta,T\tb} ; (\eta_\ta \times i_{T\tb} ; \phi_{\ta,\tb} ; i_{T (\ta \times \tb)})^{*_{\ta \times T\tb,\ta \times \tb}} && \text{by \cref{mon2}} \\ 
                    &= \eta_\ta \times i_{TT\tb} ; \phi_{\ta,T\tb} ; (\eta_\ta \times i_{T\tb} ; \phi_{\ta,\tb})^{*_{\ta \times T\tb,\ta \times \tb}} \\ 
  \end{align*}
\end{wip}

\section{Additional Reductions}
In this section we list the standard reductions (given in \cite{prawitz71}) that we did not need in order to prove \cref{thm:em_reduction}. 
The main reason is that in the proof we are only concerned with principal branches and some reductions only affect non-principal branches. 
We list them for completeness. 

\subsection{Permutative Reductions}\label{sec:permutative_reductions} 
We saw how the proper reductions are performed when the conclusion of an introduction rule instance is the major premiss of a elimination rule instance. 
The situation becomes less straightforward when the conclusion of an introduction rule instance \(\riA\) is a minor premiss of an \(\RuleName\lor{E}\) or \(\RuleName\exists{E}\) rule instance \(\riC\) whose conclusion is in turn the major premiss of an elimination rule instance \(\riB\). 
Also in this case the formula introduced by \(\riA\) is eliminated by \(\riB\), but we cannot apply a proper reduction since \(\riC\) is in the way. 
What we can do is to rearrange the derivation by moving \(\riB\) above (or ``inside'') \(\riC\), so that \(\riB\) is immediately below \(\riA\) and we can apply the suitable proper reduction. 
Therefore we have two \emph{permutative reductions}, depending on whether \(\riC\) is an instance of the \(\RuleName\lor{E}\) or the \(\RuleName\exists{E}\) rule.
Note that repeated application of the permutative reductions allows us to apply a proper reduction even when there is more than one instance of the \(\RuleName\lor{E}\) or the \(\RuleName\exists{E}\) between \(\riA\) and \(\riB\).  
Thus they can be thought of as auxiliary reductions that can eventually enable a suitable proper reduction. 
They are listed in \Cref{fig:permutative_reductions}. 

\begin{figure}[!ht]
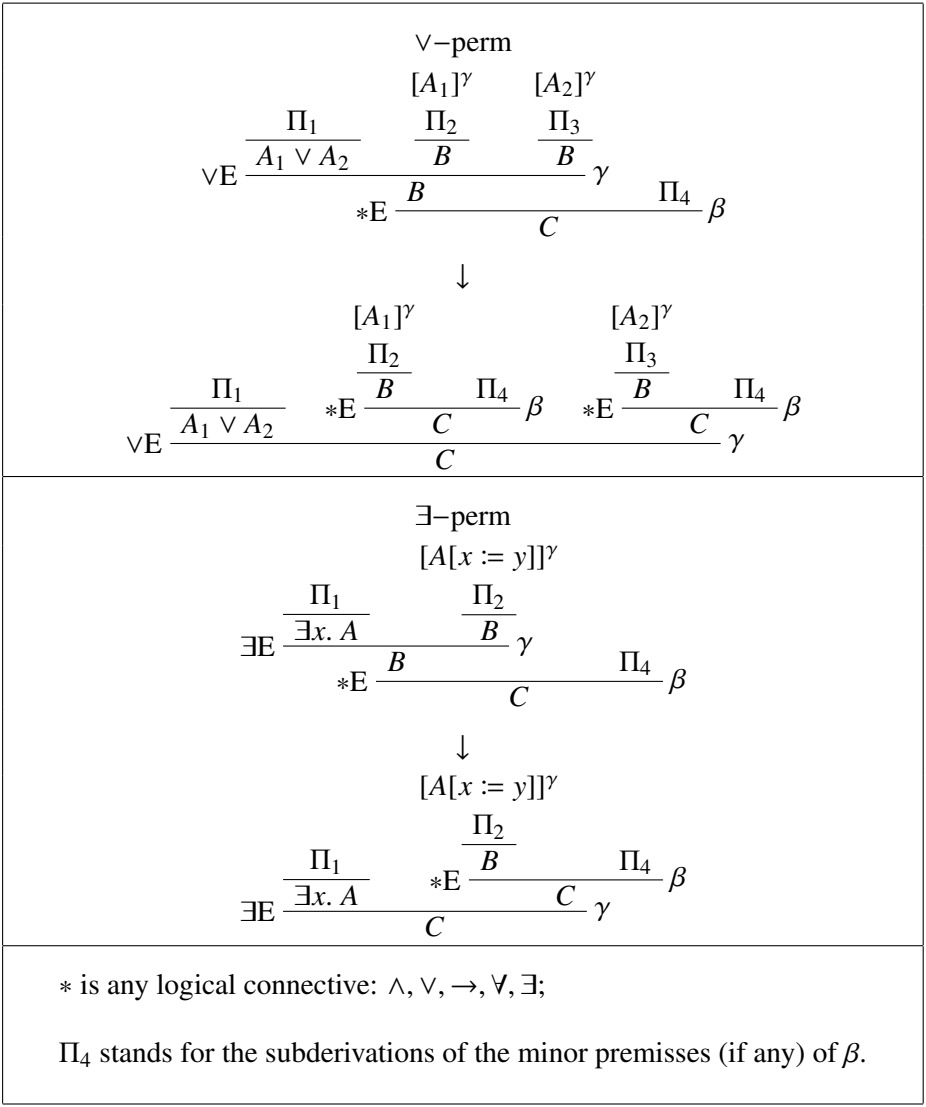

  \caption{The permutative reductions.}
  \label{fig:permutative_reductions}
  \begin{rulelisting}[]
    \header{\RedNamePerm\lor}
    &
    \PrAx{}
    \PrInf[\derA_1]
    \PrUn{\fa_1 \lor \fa_2}
    \PrAss{\fa_1}\riC
    \PrInf[\derA_2]
    \PrUn\fb
    \PrAss{\fa_2}\riC
    \PrInf[\derA_3]
    \PrUn\fb
    \PrLbl[\riC]{\RuleName\lor{E}}
    \PrTri\fb
    \PrAx{}
    \PrInf[\derA_4]
    \PrLbl[\riB]{\RuleName\ast{E}}
    \PrBin\fc
    \DisplayProof 
    &
    \\
    & \downarrow &
    \\
    &
    \PrAx{}
    \PrInf[\derA_1]
    \PrUn{\fa_1 \lor \fa_2}
    \PrAss{\fa_1}\riC
    \PrInf[\derA_2]
    \PrUn\fb
    \PrAx{}
    \PrInf[\derA_4]
    \PrLbl[\riB]{\RuleName\ast{E}}
    \PrBin\fc
    \PrAss{\fa_2}\riC
    \PrInf[\derA_3]
    \PrUn\fb
    \PrAx{}
    \PrInf[\derA_4]
    \PrLbl[\riB]{\RuleName\ast{E}}
    \PrBin\fc
    \PrLbl[\riC]{\RuleName\lor{E}}
    \PrTri\fc
    \DisplayProof
    &
    \newline
    \header{\RedNamePerm\exists} 
    &
    \PrAx{}
    \PrInf[\derA_1]
    \PrUn{\qexists\lva\fa}
    \PrAss{\fa\subst\lva\lvb}\riC
    \PrInf[\derA_2]
    \PrUn\fb
    \PrLbl[\riC]{\RuleName\exists{E}}
    \PrBin\fb
    \PrAx{}
    \PrInf[\derA_4]
    \PrLbl[\riB]{\RuleName\ast{E}}
    \PrBin\fc
    \DisplayProof 
    &
    \\ & \downarrow & \\
       &
    \PrAx{}
    \PrInf[\derA_1]
    \PrUn{\qexists\lva\fa}
    \PrAss{\fa\subst\lva\lvb}\riC
    \PrInf[\derA_2]
    \PrUn\fb
    \PrAx{}
    \PrInf[\derA_4]
    \PrLbl[\riB]{\RuleName\ast{E}}
    \PrBin\fc
    \PrLbl[\riC]{\RuleName\exists{E}}
    \PrBin\fc
    \DisplayProof
    &
    \newline
    &
    \begin{tabular}{l}
      \(\ast\) is any logical connective: \(\land, \lor, \limply, \forall, \exists\); \\
      \(\derA_4\) stands for the subderivations of the minor premisses (if any) of \(\riB\).
    \end{tabular}
    &
  \end{rulelisting}

\end{figure}

\subsection{Immediate Simplifications} 
Consider a different kind of avoidable complexity in derivation: 
an instance \(\riA\) of the \(\RuleName\lor{E}\) or the \(\RuleName\exists{E}\) rule such that the one of its minor premisses \(\foA\) is derived without using the assumption discharged by \(\riA\). 
More precisely this means that no occurrence of the assumption discharged by \(\riA\) appears in the subderivation of \(\foA\). 
Whenever this is the case we say that \(\riA\) is \emph{redundant} since we do not need the assumptions it provides in order to prove its conclusion. 
There are two reductions, called \emph{immediate simplifications}, depending on whether \(\riA\) is an instance of the \(\RuleName\lor{E}\) or the \(\RuleName\exists{E}\) rule. 
They are listed in \Cref{fig:immediate_simplifications}. 

\begin{figure}[!ht]
  \caption{The immediate simplifications.}
  \label{fig:immediate_simplifications}
  \begin{rulelisting}[\leadsto]
    \header{\RedNameSimpl\lor}
    \PrAx{\fa_1 \lor \fa_2}
    \PrAss{\fa_1}\riA
    \PrInf[\derA_1]
    \PrUn\fc
    \PrAss{\fa_2}\riA
    \PrInf[\derA_2]
    \PrUn\fc
    \PrLbl[\riA]{\RuleName\lor{E}}
    \PrTri\fc
    \DisplayProof
    \nextrule
    \PrAx{}
    \PrInf[\derA_i]
    \PrUn\fc
    \DisplayProof 
    \\
    \header{\text{ if \(\derA_i\) does not depend on \(\fa_i\)}}
    \hline
    \header{\RedNameSimpl\exists} 
    \PrAx{\qexists\lva \fa}
    \PrAss{\fa\subst\lva\lvb}\riA
    \PrInf[\derA]
    \PrUn\fb
    \PrLbl[\riA]{\RuleName\exists{E}}
    \PrBin\fb
    \DisplayProof
    \nextrule
    \PrAx{}
    \PrInf[\derA]
    \PrUn\fc
    \DisplayProof 
  \end{rulelisting}
\end{figure}

\begin{omitted}
  \subsection{Normal Derivations}
  When a derivation cannot be reduced by any of the reductions we listed, it is said to be a \emph{normal} derivation. 
  If we make the ``cannot be reduced'' part explicit, we get the following definition. 
  \begin{definition}
    A derivation is \emph{directly reducible} when one of the following holds: 
    \begin{itemize}
      \item it ends with an elimination rule instance whose major premiss is the conclusion of an introduction rule instance (proper reduction),
      \item it ends with an elimination rule instance whose major premiss is the conclusion of an \(\RuleName\lor{E}\) or \(\RuleName\lor{E}\) rule instance (permutative reduction), 
      \item it ends with an \(\RuleNameIndE\) rule instance whose main\fixme{define} term is either \(\num 0\) or \(\succ(\ltb)\) for some \(\ltb\) (\(\RedNameInd\) reduction), 
      \item it ends with an \(\EM\) rule instance \(\riA\), 
        in the subderivation of the leftmost premiss of \(\riA\)
        one occurrence of the assumption discharged by \(\riA\) is the premiss of a \(\RuleName\forall{E}\) rule instance \(\riB\) and
        the conclusion of \(\riB\) is an occurrence of a closed formula (\(\RedNameWitness\)), 
      \item it ends with an elimination or atomic rule instance \(\riA\) and one premiss of \(\riA\) is the conclusion of a \(\EM\) rule instance (\(\EM\) permutative reduction), 
      \item it ends with an \(\EM\) rule instance \(\riA\) and one of the premisses of \(\riA\) is derived without using the assumption discharged by \(\riA\) (\(\EM\) immediate simplification). 
    \end{itemize}
    A derivation is \emph{reducible} when it or one of its subderivations is directly reducible.
    A derivation is \emph{normal} when it is not reducible. 
  \end{definition}
  Since the reductions are meant to remove detours and simplify a derivation, we expect a normal derivation to have a particularly simple structure. 

\end{omitted}

\section{Witness Reduction in Two Steps} \fixme{rewrite}
Instead of giving the single reduction \(\RedNameWitness\), 
we can split it in two distinct reductions, 
one that looks for counterexamples and eliminates occurrences of the open assumptions of the \(\EM\) rule and 
one that eliminates instances of the \(\EM\) rule when their conclusion can be derived without the universal or existential assumption. 


\subsection{A Lighter Witness Extracting Reduction}
More precisely consider an instance \(\riA\) of the \(\EM\) rule for the quantifier-free formula \(\fa\):
\[
  \PrAss{\qforall\lva\fa}\riA
  \PrInf[\derA_1]
  \PrUn\fc
  \PrAss{\lnot\fa\subst\lva\lvb}\riA
  \PrInf[\derA_2]
  \PrUn\fc
  \PrLbl[\riA]\EM
  \PrBin\fc
  \DisplayProof
\]
Let \(\foA\) be an occurrence of the assumption \(\qforall\lva\fa\) discharged by \(\riA\) in \(\derA_1\)
In order to be able to perform the reduction we assume the following:
\begin{itemize}
  \item \(\foA\) is the premiss of a \(\RuleName\forall{E}\) instance \(\riB\), 
  \item the conclusion of \(\riB\) is the occurrence of a closed formula. 
\end{itemize}
Let \(\foB\) be the conclusion of \(\riB\). 
\(\foB\) is an occurrence of the closed quantifier-free formula \(\fa\subst\lva\lta\) for some term \(\lta\). 
Since closed quantifier-free formula are decidable, 
in the reduction we can distinguish two cases,  
depending on whether \(\fa\subst\lva\lta\) is true or false. 
\begin{itemize}
  \item \(\fa\subst\lva\lta\) is true. 

    Let \(\derA_1\) be a derivation of \(\fa\subst\lva\lta\) and 
    replace \(\riB\) with \(\derA_1\):
    \[ 
      \PrAss{\qforall\lva\fa}\riA
      \PrLbl{\RuleName\forall{E}}
      \PrUn{\fa\subst\lva\lta}
      \PrInf
      \DisplayProof \quad \leadsto \quad 
      \PrAx{}
      \PrInf[\derA_1]
      \PrUn{\fa\subst\lva\lta}
      \PrInf
      \DisplayProof
    \] 

  \item otherwise \(\lnot\fa\subst\lva\lta\) is true.

    Let \(\derA_2\) be a derivation of \(\lnot\fa\subst\lva\lta\) and 
    replace all the occurrences of the assumption \(\lnot\fa\subst\lva\lta\) discharged by \(\riA\) in the derivation of its rightmost premiss with \(\derA_2\):
    \[ 
      \PrAss{\lnot\fa\subst\lva\lvb}\riA
      \PrInf
      \DisplayProof \quad \leadsto \quad
      \PrAx{}
      \PrInf[\derA_2]
      \PrUn{\lnot\fa\subst\lva\lta}
      \PrInf
      \DisplayProof
    \] 
\end{itemize}
We denote this reduction as \(\RedNameWitness\). 

Whenever this reduction can be applied it removes one or more occurrences of one of the assumptions discharged by \(\riA\). 
If there are no more occurrences of such assumptions in either \(\derA_1\) or \(\derA_2\) then \(\riA\) is redundant and can be deleted by \(\RedNameSimpl\EM\). 

\subsection{Immediate Simplification}
Redundant \(\EM\) rule instances can be defined and reduced in the same way as redundant \(\RuleName\lor{E}\) instances. 
Consider an \(\EM\) rule instance \(\riA\) such that one of its premisses is derived without using the assumption discharged by \(\riA\).
Then we can reduce as follows:
\[
  \PrAss{\qforall\lva \fa}\riA
  \PrInf[\derA_1]
  \PrUn\fc
  \PrAss{\lnot \fa\subst\lva\lvb}\riA
  \PrInf[\derA_2]
  \PrUn\fc
  \PrLbl[\riA]\EM
  \PrBin\fc
  \DisplayProof
  \leadsto 
  \PrAx{} 
  \PrInf[\Sigma_i] 
  \PrUn\fc 
  \DisplayProof 
\]
depending on whether it is \(\derA_1\) or \(\derA_2\) that contains no occurrence of the assumption discharged by \(\riA\). 
We denote this reduction as \(\RedNameSimpl\EM\).

\begin{omitted}

\section{$\HA+\EM$ in Type Theory}

\renewcommand\EM { {\mathsf{EM}} }
\newcommand\Language {\mathcal{L}}
\newcommand\E[1]{{\mathsf{em}_{#1}}}
\newcommand\proj{ {\mathsf{p}} }
\newcommand\inj\upiota
\newcommand\emp[1]{\mathsf{P}_{#1}}
\newcommand\hyp {{\mathtt{H}_{\emp{}}}}
\newcommand\wit {{\mathtt{W}_{\emp{}}}}
\newcommand\red {\;\mathsf{r}\;}
\newcommand\cruno {{\textbf{(CR1)}}}
\newcommand\crdue {{\textbf{(CR2)}}}
\newcommand\crtre {{\textbf{(CR3)}}}
\renewcommand\num[1]{\overline{#1}}
\newcommand\nSystemT{\mathsf{T}^\star}
\newcommand\nlambda{\mathsf\Lambda^\star}
\newcommand\SystemT{\mathcal{T}}
\newcommand\SystemTG {\mathsf{T}}
\newcommand\ifn{{\mathsf{if}}}
\newcommand\ifthen [3]{ {\mathsf{if}\ {#1}\ \mathsf{then}\ {#2}\ \mathsf{else}\ {#3} } }
\renewcommand\Bool { {\tt Bool} }
\newcommand\sn{\mathsf{SN}}
\newcommand\rec {{\mathsf{R}}}
\newcommand\True { {\tt{True}} }
\newcommand\False { {\tt{False}} }
\newcommand\suc{\mathsf{S}}
\newcommand\itr{\mathsf{It}}
\renewcommand\pair[2]{\langle #1,#2\rangle}
\newcommand\prj[2]{{[#2]}\pi_{#1}}
\newcommand\const{\mathsf{c}}
\newcommand\stred {{\,\overset{\mathsf{st}}\mapsto\,}}
\newcommand\streds {{\,\overset{\mathsf{st}*}\mapsto\,}}
\newcommand\trans[1] {{#1}^{*}}
\newcommand\redcbn {\mapsto_{\mathsf{cbn}}}
\newcommand\sncbn {\mathsf{SN}_{\mathsf{cbn}}}
\newcommand\elcbn {\mathsf{E}_{\mathsf{cbn}}}
\newcommand\nf{\mathsf{NF}}
\newcommand\gn{\mathsf{GN}}
\newcommand\prel{\,\mathscr{P}\,}
\newcommand\beq{=_\beta}
\newcommand\seqt[3]{#1_{1}^{#3}\ldots #1_{#2}^{#3}}
\newcommand\redcbv {\mapsto_{\mathsf{cbv}}}
\newcommand\linea{\leavevmode\hrule\mbox{}}
\newcommand\dlinea{\leavevmode\hrule\vspace{1pt}\hrule\mbox{}}

$\HA+\EM_1$  is formally described in figure \ref{fig:D}. 
We define a standard natural deduction system for $\HA+\EM_1$ 
together with a term assignment  in the spirit of Curry-Howard correspondence for classical logic. 
We employ three distinct classes of variables in the proof terms: 
one for proof terms, denoted usually as $x, y,\ldots$; 
one for quantified variables of the formula language $\Language$ of $\HA+\EM_{1}$, denoted usually as $\alpha, \beta, \ldots$; 
one for hypotheses bound by $\EM_{1}$, denoted usually as $a, b, \ldots$.  
We assume atomic predicates to be denoted by $\emp{}, \emp{0}, \emp{1}, \ldots$. In the term $\E{a}\, u\, v $ all the occurrences of $a$ in $u$ and $v$ are bound. 

\begin{description}
  \item[Grammar of Untyped Terms]
    \[t,u, v::=\ c\ |\ x\  |\ tu\ |\ tm\ |\ \lambda x u\  |\ \lambda \alpha u\ |\ \langle t, u\rangle\ |\ \pi_0u\ |\ \pi_{1} u\ |\ \inj_{0}(u)\ |\ \inj_{1}(u)\ |\ (m, t)\ | t[x.u, y.v]\ |\ t[(\alpha, x). u]\]
    \[|\ [a]\hyp\ |\ [a]\wit\ |\ \True \ |\ \rec u v m \ |\ \mathsf{r}t_{1}\ldots t_{n}\]
    where $m$ ranges over terms of $\Language$. 
  \item[Contexts] 
    With $\Gamma$ we denote contexts of the form $e_1:A_1, \ldots, e_n:A_n$, where $e_{i}$ is either a proof-term variable $x, y, z\ldots$ or a $\EM_{1}$ hypothesis variable $a, b, \ldots$ 

  \item[Axioms] 
    $\begin{array}{c}   \Gamma, x:{A}\vdash x: A 
    \end{array}\ \ \ \ $
    $\begin{array}{c}   \Gamma, a:{\forall \alpha^{\Nat} \emp{}\beta \alpha}\vdash  [a] \hyp \beta:  \forall\alpha^{\Nat} \emp{}\beta \alpha
    \end{array}\ \ \ \ $
    $\begin{array}{c}   \Gamma, a:{\exists \alpha^{\Nat} \lnot\emp{}}\beta\alpha\vdash [a]\wit\beta:  \exists\alpha^{\Nat} \lnot \emp{}\beta\alpha
    \end{array}$\\

  \item[Conjunction] 
    $\begin{array}{c}  \Gamma \vdash u:  A\ \ \ \Gamma\vdash t: B\\ \hline \Gamma\vdash \langle
      u,t\rangle:
      A\wedge B
    \end{array}\ \ \ \ $
    $\begin{array}{c} \Gamma \vdash u: A\wedge B\\ \hline\Gamma \vdash\pi_0u: A
    \end{array}\ \ \ \ $
    $\begin{array}{c}  \Gamma \vdash u: A\wedge B\\ \hline \Gamma\vdash\pi_1 u: B
    \end{array}$\\\\

  \item[Implication] 
    $\begin{array}{c}  \Gamma\vdash t: A\rightarrow B\ \ \ \Gamma\vdash u:A \\ \hline
      \Gamma\vdash tu:B
    \end{array}\ \ \ \ $
    $\begin{array}{c}  \Gamma, x:A \vdash u: B\\ \hline \Gamma\vdash \lambda x u:
      A\rightarrow B
    \end{array}$\\\\
  \item[Disjunction Intro.] 
    $\begin{array}{c}  \Gamma \vdash u: A\\ \hline \Gamma\vdash \inj_{0}(u): A\vee B
    \end{array}\ \ \ \ $
    $\begin{array}{c}  \Gamma \vdash u: A\\ \hline \Gamma\vdash\inj_{1}(u): A\vee B
    \end{array}$\\\\

  \item[Disjunction Elim.] $\begin{array}{c} \Gamma\vdash u: A\vee B\ \ \ \Gamma, x: A \vdash w_1: C\ \ \ \Gamma, x:B\vdash w_2:
      C\\ \hline \Gamma\vdash  u [x.w_{1}, x.w_{2}]: C
    \end{array}$\\\\

  \item[Universal Quantification] 
    $\begin{array}{c} \Gamma \vdash u:\forall \alpha^{\Nat} A\\ \hline  \Gamma\vdash ut: A[t/\alpha]
    \end{array} $
    $\begin{array}{c}  \Gamma \vdash u: A\\ \hline \Gamma\vdash \lambda \alpha u:
      \forall \alpha^{\Nat} A
    \end{array}$\\

    where $t$ is a term of  the language $\Language$ and $\alpha$ does not occur
    free in any formula $B$ occurring in $\Gamma$.\\

  \item[Existential Quantification] 
    $\begin{array}{c}\Gamma\vdash  u: A[t/\alpha]\\ \hline \Gamma\vdash (
      t,u):
      \exists
      \alpha^\Nat. A
    \end{array}$ \ \ \ \
    $\begin{array}{c} \Gamma\vdash u: \exists \alpha^\Nat A\ \ \ \Gamma, x: A \vdash t:C\\
      \hline
      \Gamma\vdash u [(\alpha, x). t]: C
    \end{array} $\\
    where $\alpha$ is not free in $C$
    nor in any formula $B$ occurring in $\Gamma$.\\

  \item[Induction] 
    $\begin{array}{c} \Gamma\vdash u: A(0)\ \ \ \Gamma\vdash v:\forall \alpha^{\Nat}.
      A(\alpha)\rightarrow A(\suc(\alpha))\\ \hline \Gamma\vdash\lambda \alpha^{\Nat} \rec uv\alpha :
      \forall
      \alpha^{\Nat} A
    \end{array}\ \ \ \ $\\\\

  \item[Post Rules] 
    $\begin{array}{c}  \Gamma\vdash u_1: A_1\ \Gamma\vdash u_2: A_2\ \cdots \ \Gamma\vdash u_n:
      A_n\\ \hline\Gamma\vdash u: A
    \end{array}$

    where $A_1,A_2,\ldots,A_n,A$ are atomic
    formulas of  and the rule is a Post rule for equality, for a Peano axiom or for a classical propositional
    tautology or for booleans and if $n>0$, $u=\mathsf{r} u_{1}\ldots u_{n}$, otherwise $u=\True$.  \\

  \item[\(\EM_1\)]
    $\begin{array}{c} \Gamma, a: \forall \alpha^{\Nat} \emp{}\beta\alpha \vdash w_1: C\ \ \ \Gamma, a: \exists \alpha^{\Nat}\lnot \emp{}\beta \alpha \vdash w_2:
      C\\ \hline \Gamma\vdash  \E{a}\, w_{1}\, w_{2} : C
    \end{array}$\\\\

\end{description}
\end{omitted}
\bibliographystyle{plain} 
\bibliography{realizability} 
\end{document}